\newcommand{\cyc}{\op{Cyc}}
\newcommand{\T}{\mathscr{T}}
\newcommand{\del}{\partial}
\newcommand{\pa}{\partial}
\newcommand{\gl}{\mf{gl}}
\newcommand{\bbracket}[1]{\left[#1\right]}
\newcommand{\fbracket}[1]{\left\{#1\right\}}
\newcommand{\bracket}[1]{\left(#1\right)}
\newcommand{\CS}{I^{CS}}
\newcommand{\BV}{Batalin-Vilkovisky }
\theoremstyle{definition}
\newtheorem{eg}{Example}[section]
\theoremstyle{remark}
\newtheorem{rmk}{Remark}[section]
\theoremstyle{thm}
\newtheorem*{theoremA}{Theorem A}
\newtheorem*{theoremB}{Theorem B}
\author{Kevin Costello}
 \address{Perimeter Institute for theoretical physics, Waterloo, Ontario Canada N2L 2Y5}
       \email{kcostello@perimeterinstitute.ca}
\author{Si Li}
 \address{Yau Mathematical Science Center, Tsinghua University, Beijing, 100084, China}
       \email{sli@math.tsinghua.edu.cn}
\title{Quantization of open-closed BCOV theory-I}
\begin{document}
\maketitle

\begin{abstract}

This is the first in a series of papers which analyze the problem of quantizing the theory coupling Kodaira-Spencer gravity (or BCOV theory) and holomorphic Chern-Simons on Calabi-Yau manifolds using the formalism for perturbative QFT developed by the first author.  In this paper, we focus on flat space $\C^d$ for $d$ odd.  We prove that there exists a unique quantization of the theory coupling BCOV theory and holomorphic Chern-Simons theory with gauge group the supergroup $GL(N \mid N)$. We deduce a canonically defined quantization of BCOV theory on its own. 

We also discuss some conjectural links between BCOV theory in various dimensions and twists of physical theories: in complex dimension $3$ we conjecture a relationship to twists of $(1,0)$ supersymmetric theories and in complex dimension $5$ to a twist of type IIB supergravity. 

\end{abstract}

\tableofcontents

\section{Introduction}
The Kodaira-Spencer theory of gravity was introduced by Bershadsky, Cecotti, Ooguri and Vafa \cite{BerCecOog94} as the closed string field theory corresponding to the $B$-twisted topological string theory.  In the original formulation, BCOV theory (as we call it)  is a field theory defined on a Calabi-Yau manifold of dimension $3$.  In \cite{CosLi11}, we formulated BCOV theory on a Calabi-Yau manifold of any dimension.

In \cite{BerCecOog94}, the problem of constructing a perturbative quantization of BCOV theory is left open.  After all, BCOV theory is a $6$-dimensional, interacting, and non-renormalizable theory.  One therefore expects the theory to have an infinite number of counter-terms, and therefore an infinite number of coupling constants. 

In fact, there is an even more serious potential problem: as one proceeds in the loop expansion, there are infinitely many potential anomalies to quantization. So it is not at all obvious that any quantum theory even exists.

One of the main properties of quantum field theory is \emph{locality}, which allows us to build up the whole quantum theory from local data on the underlying manifold. In this paper, we analyze the problem of specifying a canonically-defined quantization of BCOV theory on a local piece of odd dimensional Calabi-Yau manifold, i.e., a flat space $\C^d$ for any odd $d$.   We do this by considering the coupling of BCOV theory to holomorphic Chern-Simons theory. Holomorphic Chern-Simons is the open-string field  theory for the topological $B$-model, and was introduced by Witten \cite{Wit92}.  We can call the coupled theory open-closed BCOV theory. The precise theorem is the following.
\begin{theorem*}
On $\C^d$ for $d$ odd, there exists a unique perturbative quantization of open-closed BCOV theory, where in the open sector the gauge Lie algebra is the super Lie algebra $\gl(N \mid N)$, and we require the quantization to be compatible with inclusions $\gl(N \mid N) \into \gl(N + k \mid N + k)$. 
\end{theorem*}

In other words, we construct a canonical quantization of coupling BCOV theory with large $N$ holomorphic Chern-Simons theory (with gauge group $\gl(N\mid N)$). In particular, one gets a quantization of the purely closed theory.  (Roughly speaking, this is for the same reason that closed string theory can exist on its own but that open string theory requires closed strings).    As a corollary, we find 
\begin{corollary*}
 There is a canonical quantum BCOV theory on $\C^d$ (for $d$ odd) which extends to a quantization of the coupled open-closed theory. 
\end{corollary*}
In other words, all the infinitely many possible counter-terms that can appear because the theory is non-renormalizable are uniquely fixed by the requirement of compatability with holomorphic Chern-Simons.

Some comments are in order:
\begin{enumerate} 
\item It is not possible to quantize holomorphic Chern-Simons theory by itself without coupling to the closed-string field theory. There is a one-loop anomaly in holomorphic Chern-Simons theory which is cancelled by the closed string sector.  
\item It is not possible to quantize the coupled open-closed theory where on the open sector we use the Lie algebra $\gl(N)$ as our gauge Lie algebra, because of a one-loop anomaly.
\item There is a variant of this story (in $3$ complex dimensions) which works for holomorphic Chern-Simons theory where the Lie algebra of the gauge group is $\mf{sl}(N \mid N)$. In this case, to cancel the anomaly, we need to use a variant of BCOV theory which we call $(1,0)$ BCOV theory.  The terminology is because of a conjectural relationship with the $(1,0)$ tensor multiplet in $6$ dimensions, which is similar to a relationship studied in the literature between BCOV theory and the $(2,0)$ tensor multiplet.   We construct a unique quantization of the system coupling $(1,0)$ BCOV theory to $\mf{sl}(N \mid N)$ holomorphic Chern-Simons theory, not just on $\C^3$ but on a variety of non-compact Calabi-Yaus. 
  \end{enumerate}
The theorem is proved using the obstruction theory methods developed in \cite{Cos11} for perturbative quantization.  For any quantum field theory, there is a cochain complex (built from possible Lagrangians) whose $H^0$ describes possible deformations of the theory, and whose $H^1$ describes anomalies, which are obstructions to quantization. For a renormalizable theory, one needs only consider a scale-invariant subcomplex of this obstruction-deformation complex, which will typically have finite dimensional cohomology groups.  In this case, $H^0$ is the space of marginal deformations, which is typically finite dimensional, and the finite dimensionality of $H^1$ indicates that there are only finitely many possible anomalies. For a non-renormalizable theory, the scale invariance argument does not apply, and $H^0$ and $H^1$ are both typically infinite-dimensional.

For BCOV theory on its own, or holomorphic Chern-Simons on its own, this is what we find. However, for the coupled open-closed BCOV theory, we find a remarkable cancellation: beyond one loop, the obstruction-deformation complex for the coupled theory has zero cohomology.  Potential obstructions and deformations for the closed-string sector are precisely cancelled by obstructions and deformations from the mixed sectors and from the purely open sector.  We will explain the heuristics of this argument in more detail later in the introduction. 

At one loop, we have to perform a detailed calculation to verify that a possible anomaly is cancelled.   This cancellation is very similar to the Green-Schwartz mechanism for anomaly cancellation.  
 
\subsection{Relationship to string theory}
Before discussing the proof in detail, we will explain a little about how this result relates to ideas from string theory, and in particular explain a conjectural framework relating our open-closed BCOV theory in $6$ real dimensions to certain field theories of current interest in physics (see \cite{GaiTom14} and the references therein).  

BCOV theory on $\C^3$ is the closed-string field theory for the topological $B$-model in three complex dimensions.  That is, the fields of BCOV theory are the ($S^1$-equivariant) closed-string states of the topological $B$-model, and the classical BCOV action functional can be described in terms of genus zero correlation functions of the $B$-model. 

The yoga of string theory tells us that for any string theory, one can construct a gravity theory as the low-energy limit of the closed-string field theory. For example, type IIB supergravity is supposed to arise in this way from type IIB string theory. This gravity theory will, of course, be non-renormalizable, and will have infinitely many coupling constants and infinitely many possible anomalies. However, the low energy limit of string theory is supposed to provide a canonically defined quantization of this non-renormalizable theory, which is why string theory can produce a quantum theory of gravity.  

BCOV theory fits into this framework. It is a non-renormalizable theory of gravitational type (gravitational, because the fields of BCOV theory describe fluctuations of the complex structure of a Calabi-Yau, hence the Calabi-Yau metric).  However, the yoga of string theory tells us that the topological string should provide a canonically defined quantization. This has not been rigorously proven: although categorical methods \cite{KonSoi06, Cos07a,Lur09} can provide a rigorous theory of the topological $B$-model, it is not known how to use these methods to produce something satisfying the locality axioms of a quantum field theory.  

What this paper achieves is to produce the canonically-defined quantization of BCOV theory that one expects from string theory yoga, but \emph{without} using topological string theory.  Although our methods are inspired by string theory, we do not use the string world-sheet at all.   We find that a simple compatibility between BCOV theory and holomorphic Chern-Simons theory is enough to fix the quantization uniquely.  

One can hope that our methods apply beyond topological strings, and give a new approach to quantizing other string theories.  We hope to investigate this in future publications.

\subsection{Twisted supergravity}
One point at which our work connects directly to string theory is via the following conjecture.
\begin{conjecture}
BCOV theory on $\C^5$ is a twist of type IIB supergravity theory on $\R^{10}$. 
\end{conjecture}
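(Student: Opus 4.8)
The plan is to work within the twisted supergravity formalism of the authors' companion work: to twist a supergravity theory one picks an odd element $Q$ of the (complexified) supersymmetry algebra with $Q^2=0$, turns on a bosonic ghost valued in the corresponding summand of the supersymmetry ghost, and passes to the resulting deformation of the Batalin--Vilkovisky theory; the $Q$-cohomology then computes the twisted fields and the twisted action. The first step is therefore to classify the square-zero supercharges in the type IIB algebra on $\R^{10}$ and single out the one whose residual bosonic symmetry is $\mathrm{GL}(5,\C)\ltimes\C^5\subset\mathrm{Spin}(10)\ltimes\R^{10}$ --- the \emph{holomorphic} twist with respect to a complex structure $\R^{10}\cong\C^5$. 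Using a maximal-rank such $Q$ should make the twisted theory depend only on the complex structure, which is exactly the behaviour of BCOV theory; lower-rank choices would give partial twists interpolating toward the untwisted theory.

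Next I would compute the twist of \emph{linearized} IIB supergravity by decomposing the IIB multiplet --- graviton $g$, dilaton $\phi$, $B$-field, Ramond--Ramond potentials $C_0,C_2,C_4$ (with self-dual five-form field strength), and the gravitini and dilatini --- under $Q$, and identify the $Q$-cohomology with the linearized BCOV fields on $\C^5$, i.e. the polyvector fields $\mathrm{PV}^{\bullet,\bullet}(\C^5)=\bigoplus_{i,j}\Omega^{0,j}(\C^5,\wedge^i T^{1,0})$ carrying the $t$-variable of the $S^1$-equivariant theory, with differential $\bar\partial$ and the $(-1)$-shifted pairing induced by the holomorphic volume form of $\C^5$. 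The self-dual five-form is the delicate input, since IIB supergravity has no manifestly covariant Lagrangian for it; I would work in a democratic / pure-spinor superfield formulation (or use a Batalin--Vilkovisky presentation of the self-dual field), so that $C_4$ supplies the ``middle'' degrees of the polyvector complex where Serre duality makes the BCOV pairing non-degenerate.

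I would then match interactions, comparing the $Q$-twist of the nonlinear IIB action --- Einstein--Hilbert together with the topological coupling $C_4\wedge H_3\wedge F_3$ and the fermionic terms --- with the classical BCOV action of \cite{CosLi11}: the Chern--Simons-type coupling should descend to the BCOV cubic Kodaira--Spencer vertex, the two-derivative kinetic terms to the $\bar\partial$-kinetic term, and the dilaton mode to the BCOV dilaton that shifts the action by the string coupling. Finally, for the quantum statement I would invoke the main theorem of this paper: BCOV on $\C^5$ admits a \emph{unique} quantization compatible with $\gl(N\mid N)$ holomorphic Chern--Simons, while on the string side coupling to $N$ $D9$ and $N$ anti-$D9$ branes produces precisely this $\gl(N\mid N)$ Chern--Simons sector (and cancels the $D9$ anomaly, in the spirit of comments (1)--(2) above), which should rigidify the quantization of twisted IIB; the two quantum theories would then agree by uniqueness.

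The main obstacle is foundational: type IIB supergravity is not, at present, a rigorously-defined perturbative Batalin--Vilkovisky field theory --- not even classically, because of the self-dual five-form, and at the quantum level even the existence of a consistent perturbative quantization is subtle. So the hardest part is to give a precise Batalin--Vilkovisky definition of (twisted) type IIB supergravity in the first place; granting that, the representation theory computing the twist and the comparison with the explicit BCOV action of \cite{CosLi11} should be essentially mechanical, modulo care with the Ramond--Ramond normalizations and with the dilaton-dependent completion of the BCOV action.
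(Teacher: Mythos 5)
The statement you are addressing is labelled a \emph{conjecture} in the paper, and the authors explicitly defer it: ``This conjecture will be explored in more detail in other publications.'' There is no proof in this paper to compare against. What the paper does offer is heuristic evidence: it pins down the specific twist (the unique $SU(5)$-invariant covariantly constant spinor $Q$ of weight one under the naive $SO(2)$ $R$-symmetry), it cites Baulieu's result that $5$-complex-dimensional holomorphic Chern--Simons is the holomorphic twist of $10$-dimensional maximally supersymmetric Yang--Mills, and it argues that since BCOV theory is the \emph{universal} cochain complex coupling to holomorphic Chern--Simons by single-trace operators while twisted IIB supergravity couples to the twisted $D9$-brane theory by single-trace operators, one obtains at least a cochain map from the linearized twisted supergravity fields to the BCOV fields. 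Your proposal is broadly consistent with this picture but is a research program, not a proof.

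The genuine gap is that every substantive step in your plan is deferred rather than carried out, and the steps you defer are exactly the hard ones. You correctly identify that type IIB supergravity has no rigorous perturbative Batalin--Vilkovisky formulation (the self-dual five-form being the standard obstruction), but then all three of your main steps --- classifying the square-zero supercharge, computing the $Q$-cohomology of the linearized multiplet and matching it with $\PV^{\ast,\ast}(\C^5)[[t]]$, and matching the nonlinear interactions with the BCOV vertex --- presuppose that formulation. The final step, invoking uniqueness of the quantization compatible with $\gl(N\mid N)$ holomorphic Chern--Simons to ``rigidify'' twisted IIB, is circular as stated: the uniqueness theorem of this paper constrains quantizations of \emph{BCOV theory}, so to transfer it to the supergravity side you would already need to know that twisted IIB is (classically) equivalent to BCOV theory, which is the content of the conjecture. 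None of this makes your outline wrong as a strategy --- it is close to what the authors presumably intend --- but as a proof it establishes nothing beyond what the paper's own evidence paragraph already contains, and it should be presented as a conjectural program rather than an argument.
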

This conjecture will be explored in more detail in other publications.  

The concept of twisted supergravity is a slightly subtle one, and we will discuss it in more detail elsewhere. For now, let us explain the main idea.  Any supergravity theory has gauged local supersymmetry. To quantize the theory, one introduces Fadeev-Popov ghosts corresponding to local supersymmetries.  The ghosts for fermionic symmetries will be bosonic fields.  

Twisted supergravity is simply supergravity in perturbation theory around an unusual background, where the bosonic ghost field has some non-zero value.  To satisfy the equations of motion, this bosonic ghost field must be a (generalized) Killing spinor and have square zero.  One reason for using the terminology ``twisted supergravity'' is that, if we consider a supersymmetric field theory in a supergravity background where the bosonic ghost field takes value  some supercharge $Q$, this has the same effect as replacing the observables of the supersymmetric field theory by their $Q$-cohomology \footnote{Or more precisely, adding $Q$ to the BRST operator of the theory}.  This procedure is often called twisting\footnote{More properly, in the terminology of Witten, twisting involves both changing the action of the Lorentz group on the space of fields and then adding such a $Q$ to the BRST operator. It is often useful to use the term twisting for the more general procedure where one simply adds a supercharge to the BRST operator. }.

Different choices of bosonic ghost field give us different twists of supergravity.  We need to describe which particular twist of type IIB supergravity we conjecture is equivalent to BCOV theory on $\C^5$.  In type IIB supergravity in the flat Minkowski background, where all other bosonic fields are zero, there are $32$ covariant constant spinors. There is a unique $SU(5)$ invariant spinor $Q$ which is of weight one under the ``naive'' $R$-symmetry group $SO(2)$. The twisted supergravity we have in mind is the one where the bosonic ghost field takes value $Q$.

If one accepts this conjecture, then the results of this paper give us a new method to quantize part of type IIB supergravity.  

The open-string analog of the twist we consider was analyzed  by Baulieu \cite{Bau10}, who showed that $5$-complex dimensional holomorphic Chern-Simons is the holomorphic twist of $10$-dimensional maximally supersymmetric gauge theory. It is therefore natural to conjecture that the coupled open-closed BCOV theory we construct is the holomorphic twist of the theory coupling type IIB supergravity with the maximally supersymmetric $10$ dimensional gauge theory with gauge Lie algebra $\mf{gl}(N \mid N)$. 

In fact, Baulieu's result gives some strong evidence for the conjeture that BCOV theory is a twist of type IIB supergravity.  As we will see shortly, the fields of BCOV theory with the linearized BRST operator are the universal cochain complex which can couple to holomorphic Chern-Simons by single-trace operators.  At the classical level, type IIB supergravity can be coupled to maximally supersymmetric gauge theory in $10$ dimensions, which is the gauge theory living on the $D9$ brane, and this coupling is by single-trace operators.  This implies that twisted type IIB supergravity can be coupled to the twisted $10$ dimensional gauge theory, which is holomorphic Chern-Simons.   Since BCOV theory is the universal object admitting such a coupling, we find in this way a cochain map from the fields of twisted supergravity (equipped with the linearized BRST operator) to those of BCOV theory. 

\subsection{Relation to $6$ dimensional theories with $(1,0)$ supersymmetry}
There is a conjectural relationship between open-closed BCOV theory and certain $6$-dimensional field theories constructed from branes in type IIA string theory and in $M$-theory. The $(1,0)$ supersymmetry algebra in $6$ dimensions has a unique $SU(3)$-invariant supercharge $Q$, up to rotation by the $R$-symmetry group.  Taking the cohomology with respect to such a $Q$ gives a twist of any $(1,0)$ supersymmetric field theory. This twist is holomorphic in the sense discussed in \cite{Cos11b}.  

The following conjecture grew out of conversations with Davide Gaiotto.
\begin{conjecture}
Consider the following system of branes in type IIA string theory on $\R^{10}$. We have a single $NS5$ brane on $\R^6 \subset \R^{10}$, where the $\R^6$ is the locus where $x_6,\dots,x_9$ are zero.  We have $k$ semi-infinite $D6$ branes spanning the half space where $x_6 \ge 0$, $x_7,x_8,x_9 = 0$, and $k$ semi-infinite $D6$ branes spanning the opposite half-space where $x_6 \le 0$, $x_7,x_8,x_9 = 0$.

On the world-volume for the $NS5$ brane in this configuration is a theory with $(1,0)$ supersymmetry.  The conjecture is that  the  holomorphic twist of this theory is open-closed BCOV theory where the Lie algebra of the gauge group for the open sector is $\mf{gl}(k \mid k)$. 
\end{conjecture}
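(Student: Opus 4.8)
The plan is to reduce the conjecture to a comparison of classical holomorphic field theories on $\C^3$, and then to fix the quantum theory using the uniqueness theorem established above. The central difficulty, which I address at the end, is that the worldvolume theory of the NS5 brane with D6 branes ending on it has no honest Lagrangian---already the free $(2,0)$ tensor multiplet contains a self-dual two-form---so the holomorphic twist cannot be obtained by naively twisting an action. I therefore organize the argument around structural invariants of the twist: it must be a holomorphic field theory on $\C^3$ in the sense of \cite{Cos11b}, it must carry the residual symmetries of the brane configuration, and its open sector, supported on the D6-brane locus, must couple to its closed sector by single-trace operators.

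First I would identify the closed sector. A single type IIA NS5 brane carries a $(2,0)$ tensor multiplet, and the $k+k$ semi-infinite D6 branes ending on it break the supersymmetry to $(1,0)$. Taking cohomology with respect to the unique $SU(3)$-invariant supercharge $Q$ of the $(1,0)$ algebra, I would compute the holomorphic twist of the tensor multiplet and show that it reproduces the polyvector-field complex $\bigoplus_{i,j}\Omega^{0,j}(\C^3,\wedge^i T^{1,0}\C^3)$ with differential $\bar\partial$, the linearized fields of BCOV theory on $\C^3$. This is the $(1,0)$ analogue of the $(2,0)$/BCOV relationship recalled above, and is the same style of free-multiplet twist that Baulieu \cite{Bau10} used for holomorphic Chern-Simons. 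To pass from the linearized to the full interacting closed theory I would invoke the universality property stated above: the fields of BCOV theory form the universal complex admitting a single-trace coupling to holomorphic Chern-Simons, so the twisted NS5 closed sector---which by construction couples to the twisted D6 sector---receives a canonical cochain map to BCOV theory, which I would then argue is an isomorphism.

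Second I would treat the open sector and, in particular, the appearance of the supergroup $\gl(k \mid k)$. The $k$ D6 branes in the half-space $x_6 \ge 0$ and the $k$ in $x_6 \le 0$ each carry a rank-$k$ Chan--Paton bundle, and the strings stretching across the NS5 supply the off-diagonal blocks. The holomorphic twist of the D6-brane gauge degrees of freedom localized on the NS5 should give $\Omega^{0,*}(\C^3) \otimes \gl(k \mid k)$ with its holomorphic Chern-Simons action, and the brane-intersection coupling twists to the single-trace BCOV/HCS coupling. The crucial point is the super-structure: branes approaching the NS5 from opposite half-spaces contribute with opposite fermion parity in the twist---the relative orientation reverses the grading, exactly as for ghost branes---so the combined Chan--Paton algebra is $\gl(k \mid k)$ rather than $\gl(2k)$ or $\gl(k)\oplus\gl(k)$. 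I would verify this by matching the statistics of each string sector to the even and odd blocks of $\gl(k \mid k)$.

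Finally, to promote this classical identification to the quantum level I would invoke the uniqueness theorem proved above: on $\C^3$ there is a unique quantization of open-closed BCOV theory at $\gl(N \mid N)$ compatible with the stabilization maps $\gl(N \mid N) \into \gl(N+1 \mid N+1)$, and here this inclusion is realized physically by adding one D6 brane on each side. Hence any quantization produced by string theory that has the correct classical limit and is compatible with these inclusions is forced to coincide with the canonical one. The hardest part is making the twist rigorous in the absence of a Lagrangian; I expect the cleanest resolution to be the M-theory lift, in which the NS5 brane becomes an M5 brane and the D6 branes become a multi-Taub--NUT geometry, a setting where the $(2,0)$ theory and its holomorphic twist are far better controlled and the self-dual tensor is handled geometrically. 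As an independent check, the one-loop Green--Schwarz-type anomaly cancellation of open-closed BCOV \cite{Cos11} should match the anomaly inflow between the NS5 and D6 branes, further confirming the $\gl(k \mid k)$ assignment.
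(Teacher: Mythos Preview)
The statement you are attempting to prove is explicitly labeled a \emph{Conjecture} in the paper, and the paper does not prove it. After stating the conjecture, the authors write ``Let us discuss the evidence for this conjecture'' and then give a heuristic physics argument: for $k=0$ they cite the Marino--Minasian--Moore--Strominger and Kapustin results relating the NS5 brane theory to BCOV; for general $k$ they sketch, explicitly ``without proof (and hopefully return to in a future publication),'' a claim that the twisted $D6$ brane theory has a boundary condition supporting holomorphic Chern-Simons, and explain heuristically why branes from opposite half-spaces assemble into $\gl(k\mid k)$. They also note the T-dual relation to the Mikhaylov--Witten result on Chern-Simons for $GL(k\mid k)$. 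None of this is presented as a proof.

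Your proposal is best read as a research outline rather than a proof, and indeed it tracks the paper's own evidence fairly closely: you identify the closed sector via a twist of the tensor multiplet, the open sector via the twisted $D6$ theory, and the supergroup via opposite-parity contributions from the two stacks. But several of your steps are themselves open problems. The identification of the holomorphic twist of the free $(2,0)$ tensor multiplet with BCOV fields is stated in the paper only as part of the evidence, not as an established fact; the paper separately \emph{conjectures} the analogous statement for the $(1,0)$ tensor multiplet. Your claim that ``the twisted $D6$-brane gauge degrees of freedom localized on the NS5 should give $\Omega^{0,\ast}(\C^3)\otimes\gl(k\mid k)$ with its holomorphic Chern-Simons action'' is exactly the unproved claim the authors flag. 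Your appeal to the uniqueness theorem for open-closed BCOV theory is a nice idea and goes beyond what the paper says, but it only helps \emph{after} one has rigorously identified the classical twisted brane theory with classical open-closed BCOV --- which is precisely the missing step. Finally, your suggestion to pass to the M-theory lift does not resolve the difficulty: the paper's companion conjecture about the M5 brane on an $A_N$ singularity is equally conjectural.

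In short: there is no proof in the paper to compare against, and your proposal, while a reasonable sketch of the expected argument, rests on several claims that are themselves conjectural and that you have not established.
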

Let us discuss the evidence for this conjecture. Let us first discuss the case $k = 0$, in which case the conjecture relates BCOV theory to the theory living on the $NS5$ brane.

The fields of the $NS5$ brane theory form a single tensor multiplet with $(2,0)$ supersymmetry. One of the fields of BCOV theory is a closed $(2,1)$ form, which one expects to identify with $9$ of the $10$ components of the self dual $3$-form of the free $(2,0)$ tensor multiplet. Marino, Minasian, Moore and Strominger argued in \cite{MarMinMoo99} that the supersymmetric equations of motion of the $M5$ brane theory on a Calabi-Yau manifold include the Kodaira-Spencer equations which are the equations of motion of BCOV theory.  Kapustin also argued \cite{Kap04} that the partition function of the type IIA $NS5$ brane theory is the same as the partition function of the topological $B$-model, i.e. of BCOV theory. 

This provides some evidence in the case $k = 0$.  For the more general case, one needs to understand a relationship between the $D6$ brane gauge theory and holomorphic Chern-Simons, which we will sketch without proof (and hopefully return to in a future publication).  The claim is that there is a certain supercharge $Q$ in the supersymmetry algebra acting on the $D6$ brane gauge theory, which is $SU(3)$-invariant and also has the feature that every translation is $Q$-exact.  Up to rotation by the $R$-symmetry group and scaling, these two features fix $Q$ uniquely.  The $Q$-cohomology of the $D6$ brane gauge theory is then a topological theory in the weak sense that the operators corresponding to translation in space time are trivial. The claim is that this twisted $7d$ theory has a non-topological boundary condition where holomorphic Chern-Simons theory lives on the boundary. 

From this claim, it is easy to imagine how $\mf{gl}(k \mid k)$ holomorphic Chern-Simons theory arises. One collection of $D6$ branes will contribute $\mf{gl}(k)$ holomorphic Chern-Simons theory living on the boundary, which is the $NS5$ brane. The other collection of $k$ $D6$ branes will also contribute $\gl(k)$ holomorphic Chern-Simons, but with the opposite level.  Strings connecting the two collections of $D6$ branes give bi-fundamental matter, which when twisted become the fields of holomorphic Chern-Simons corresponding to the odd elements of the Lie algebra $\gl(k \mid k)$. 

This conjecture is in a sense $T$-dual to a result of Mikhalyov and Witten \cite{MikWit14}. They considered a configuration of branes in type IIB consisting of $2k$ $D3$ branes ending on a $3$-manifold living inside an $NS5$ brane, with $k$ $D3$ branes coming from each side. In this situation, they showed that after $Q$-cohomology for a certain supercharge $Q$, the theory living on the $3$-manifold is  ordinary Chern-Simons theory for the group $GL(k \mid k)$. 

 If one applies $T$-duality to $3$ of the $6$ directions of the IIA $NS5$ brane in our situation, the $D6$ branes are  converted into  $D3$ branes and one goes from our situation to theirs. 

To understand the link between the work of Mikhalyov and Witten and our conjecture, one needs to understand why holomorphic Chern-Simons and ordinary Chern-Simons should be $T$-dual.  According to Kapustin \cite{Kap04}, one expects that the $NS5$ brane theory in type IIB can be described by the $A$-model topological string.  $T$-dualizing $3$ directions in a type IIB $NS5$ brane brings us to a type IIA $NS5$ brane, which should be the topological $B$-model. This relationship should be mirror symmetry between the topological $A$- and $B$-models.  Holomorphic Chern-Simons theory lives on a space-filling brane in the topological $B$-model, and this brane is converted under $T$-duality to a $3$-dimensional brane in the topological $A$-model.  Ordinary Chern-Simons theory is the field theory living on a brane in the topological $A$-model, and thus is the $T$-dual.  

We also consider a variant of BCOV theory which we call $(1,0)$ BCOV theory, only in three complex dimensions. We conjecture that free $(1,0)$ BCOV theory is the holomorphic twist  of the free $(1,0)$ tensor multiplet.   We will show that $(1,0)$ BCOV theory can be coupled to $\mf{sl}(N \mid N)$ holomorphic Chern-Simons theory and that the coupled theory admits a unique quantization. Davide Gaiotto suggested that the following might be true. 
\begin{conjecture}
Consider a single $M5$ brane on $\R^7 \times (\R^4 / \Z_N)$, where the $M5$ brane lives at the singular point in the $A_N$ singularity and spans $6$ of the $7$ directions in $\R^6$.

There is a theory with $(1,0)$ supersymmetry living on the world-volume of this $M5$ brane \cite{GaiTom14}.  Then, the conjecture is that the holomorphic twist of this theory is $(1,0)$ BCOV theory coupled to $\mf{sl}(N \mid N)$ holomorphic Chern-Simons theory. 
\end{conjecture}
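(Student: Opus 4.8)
The plan is to reduce the conjecture to three inputs that are (or will be) in hand: (i) the universality of $(1,0)$ BCOV theory as the closed-string field theory that couples to $\mf{sl}(N \mid N)$ holomorphic Chern-Simons by single-trace operators, exactly as discussed above for ordinary BCOV and $\mf{gl}(N \mid N)$; (ii) the uniqueness of the quantization of the coupled $(1,0)$ BCOV / $\mf{sl}(N \mid N)$ holomorphic Chern-Simons system; and (iii) the identification of the holomorphic twist of the free $(1,0)$ tensor multiplet with free $(1,0)$ BCOV theory. The argument is then a rigidity argument rather than a direct computation: it suffices to exhibit the holomorphically twisted $M5$ theory as \emph{a} quantization of the coupled $(1,0)$ system, after which (ii) forces it to be \emph{the} one.

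First I would set up the holomorphic twist on the $M5$ world-volume $\R^6 = \C^3$. One picks the supercharge $Q$ in the $(1,0)$ algebra which is $SU(3)$-invariant and has the property that every translation is $Q$-exact; up to $R$-symmetry rotation and scaling this $Q$ is unique, and by \cite{Cos11b} its cohomology is a holomorphic field theory on $\C^3$. The free part of the $M5$ spectrum is a single $(1,0)$ tensor multiplet, whose $Q$-cohomology is free $(1,0)$ BCOV theory by input (iii); in particular the closed $(2,1)$-form of BCOV theory is the twist of nine of the ten components of the self-dual $3$-form, exactly as in the $k = 0$ discussion above and consistent with the evidence of \cite{MarMinMoo99, Kap04}.

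Next I would analyze the contribution of the $A_N$ singularity. The key claim is that, after the holomorphic twist, the light degrees of freedom localized at the singular point assemble into $\mf{sl}(N \mid N)$ holomorphic Chern-Simons theory along $\C^3$. In the type IIA frame these are the $D6$-brane gauge degrees of freedom on either side of an $NS5$ brane; the $\Z/2$-grading is forced by the two stacks carrying opposite level, with the odd part coming from the bifundamental strings stretched between them. This is precisely the statement which, after reduction along three directions, is $T$-dual to the appearance of ordinary $GL(N \mid N)$ Chern-Simons in Mikhalyov--Witten \cite{MikWit14}. One then checks that on the $M5$ brane the tensor sector couples to the singularity degrees of freedom only through single-trace operators, so by (i) the twisted theory is classically $(1,0)$ BCOV coupled to $\mf{sl}(N \mid N)$ holomorphic Chern-Simons; invoking (ii) then pins down the quantization.

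The hard part will be making input (iii), and its interacting analog, precise: the $(1,0)$ $M5$ theory at an $A_N$ singularity is not Lagrangian, so ``take $Q$-cohomology of the action'' is not literally available, and one must instead work through a presentation of the theory --- for instance as a boundary or defect inside a $7$-dimensional twisted gauge theory, paralleling the $D6$-brane picture, or via the compactification/deconstruction descriptions of \cite{GaiTom14} --- in which the twist can actually be computed and shown to be holomorphic Chern-Simons. A second genuine obstacle is the one-loop anomaly: one must verify that the Green--Schwarz-type cancellation which makes the coupled $(1,0)$ BCOV / $\mf{sl}(N \mid N)$ system quantizable is reproduced by the anomaly inflow from eleven dimensions onto the $M5$ world-volume --- that is, that the $M$-theory origin of the configuration automatically supplies the anomaly-cancelling closed-string coupling. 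These are the steps at which the ``conjecture'' label is earned.
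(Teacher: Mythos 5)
This statement is labeled a conjecture, and the paper does not prove it: the authors offer only a paragraph of heuristic evidence (M-theory on $\R^7\times(\R^4/\Z_N)$ reduces to a $7$-dimensional $SU(N)$ gauge theory, the $M5$ brane is a domain wall in that theory whose transverse motion is a $(1,0)$ tensor multiplet, the tensor multiplet should twist to $(1,0)$ BCOV, and the boundary condition for the two copies of the $7d$ gauge theory should produce holomorphic Chern--Simons). So there is no proof to match yours against, and your proposal should be judged as a strategy. As such it is broadly sound and in one respect sharper than the paper's own discussion: your step (ii) --- using the paper's uniqueness theorem for quantizations of the coupled $(1,0)$ BCOV / $\mf{sl}(N\mid N)$ system as a rigidity device, so that one only needs to exhibit the twisted $M5$ theory as \emph{some} quantization satisfying the axioms --- is exactly the right way to leverage the paper's actual theorems, and the paper does not spell this out for this conjecture. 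You also correctly isolate where the conjecture label is earned (the non-Lagrangian nature of the $(1,0)$ theory and the anomaly-inflow check).

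The one concrete gap is in your treatment of the $A_N$ singularity. You route it through the type IIA picture of $D6$ branes ending on an $NS5$ brane, with the $\Z/2$-grading coming from the two stacks of opposite level and the bifundamental strings. But in the paper that configuration is the evidence for a \emph{different} conjecture, namely $\gl(k\mid k)$ holomorphic Chern--Simons coupled to the \emph{full} BCOV theory; the present conjecture instead pairs $\mf{sl}(N\mid N)$ with the smaller $(1,0)$ BCOV theory, and the paper's evidence stays in the $M$-theory frame ($7d$ $SU(N)$ gauge theory with the $M5$ as a domain wall). The two frames are of course related by the standard IIA/$M$-theory dictionary, but if you import the $NS5$/$D6$ picture wholesale you must explain where the trace part of $\gl(N\mid N)$ and the extra fields of full BCOV theory (everything outside $\Omega^{2,\ast}[1]\xto{\partial} t\,\Omega^{3,\ast}$) go; without that, your argument would land on the wrong coupled system. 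Relatedly, your input (i) needs a $(1,0)$/$\mf{sl}$ version of the universality statement --- that $(1,0)$ BCOV is the universal object coupling to $\mf{sl}(N\mid N)$ holomorphic Chern--Simons by single-trace operators --- which is the ``$\mf{sl}$-cyclic cohomology'' computation the paper carries out in the $(1,0)$ section, not the $\gl$ statement you quote.
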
 
Let's present some evidence for this conjecture.  Standard $M$-theory philosophy says that $M$-theory on $\R^7 \times (\R^4 / \Z_k)$ is equivalent to $7$-dimensional maximally supersymmetric gauge theory with group $SU(k)$.  The $M5$ brane can be viewed as a domain wall from the $7d$ gauge theory to itself. Further, the motion of the $M5$ brane is described by a $(1,0)$ tensor multiplet on the brane, and after twisting should produce the fields of $(1,0)$ BCOV theory.  The two copies of the gauge theory living on each side of the $M5$ brane have a boundary condition on the brane, which is (conjecturally) the same as the one discussed earlier which introduces holomorphic Chern-Simons.  In this way, we find that the theory on the $M5$ brane is, after taking $Q$-cohomology, $\mf{sl}(k \mid k)$ holomorphic Chern-Simons coupled to $(1,0)$ BCOV theory.

\subsection{The fields of BCOV theory}
Let us now introduce the fields of BCOV theory, focusing on dimension $3$ for simplicity.

Let $\Omega^{-\ast,\ast}(\C^3)$ denote the space of differential forms on $\C^3$, with degrees arranged so that a $(p,q)$ form is in degree $q - p$.  The fields of BCOV theory, in our formulation, can be identified with 
$$
\Omega^{-\ast,\ast}(\C^3)[[t]] [-1]
$$
where $t$ is a formal variable of cohomological degree $2$. The symbol $[-1]$ indicates a shift of cohomological degrees, so that $\omega \in \Omega^{p,q}(\C^3)$, then the field $t^k \omega$ is in degree $2k+q-p+1$. The linearized BRST differential on the space of fields of BCOV theory is the operator $\dbar + t \partial$.  

It is often more convenient to identify the fields of BCOV theory with polyvector fields, which are isomorphic to forms. For now, however, since we will not yet be discussing the interaction in BCOV theory, we will stick to forms.  

In the original formulation, the fields of BCOV theory is the space
$$
\op{Ker} \partial \subset \Omega^{-\ast,\ast}(C^3)[-1].
$$
In particular, the fields contain a closed $(2,1)$-form of cohomological degree $0$, corresponding to a variation of complex structure.  The variable $t$ that we introduce plays the role of descendants.  The fact that the linearized BRST operator includes the term $t \partial$ in our formulation replaces the fact that in the original formulation the space of fields is $\Ker \partial$.

We formulate BCOV theory in the BV formalism, in which to specify a theory one needs to specify a differential graded manifold with an odd symplectic structure.  BCOV theory is a degenerate theory in this sense: the space of fields has an odd Poisson, as opposed to odd symplectic, structure. The Poisson kernel is 
$$
\pi = (\partial \otimes 1)\delta_{Diag}
$$
where $\delta_{Diag}$ is the delta-function on the diagonal in $\C^3 \times \C^3$. Thus, $\delta_{Diag}$ is a form on $\C^3 \times \C^3$ with distributional coefficients.  

This Poisson kernel lives in the tensor square of the space of fields. In fact, it lives in the tensor square of the subspace of fields which have no powers of $t$ in them.

Although this is a very abstract way of formulating a theory, one can still construct familiar objects such as the propagator in this formalism. The propagator is
$$
P = (\partial \dbar^{-1} \otimes 1) \delta_{Diag}
$$
where $\dbar^{-1}$ is the inverse to the $\dbar$ operator. (Explicitly, $\dbar^{-1} = \dbar^\ast \tr^{-1}$ where $\tr^{-1}$ is the Green's operator for the Laplacian).  The descendant fields -- those involving powers of $t$ -- do not propagate, and thus can be viewed as background fields. Including descendant fields is natural for various reasons, and is especially important away from dimension $3$.    

A variant of our BCOV theory which is very close to the original formulation is obtained if one only considers those fields in $\Omega^{k,\ast} t^l$ where $k+l \le 3$. Fields of this nature form a subcomplex of the full space of fields we wrote above, and contain all the propagating fields.  This smaller space of fields can be viewed as a direct sum of $4$ complexes of the form 
$$\Omega^{i,\ast} \xto{\partial} t \Omega^{i+1,\ast} \dots   \xto{\partial} t^{3-i} \Omega^{3,\ast}$$
where $0 \le i \le 3$. 

These complexes can be viewed as resolutions of the subspace $\op{Ker} \partial \subset \Omega^{i,\ast}$, so that this smaller space of fields is a resolution of the space of fields of the original BCOV theory. 

The remainder of the space of fields of BCOV theory consists of infinitely many copies of the de Rham complex of the form
$$
t^k \Omega^{0,\ast} \to t^{k+1} \Omega^{1,\ast} \to t^{k+2} \Omega^{2,\ast} \to t^{k+3} \Omega^{3,\ast}
$$
where $k \ge 1$. These fields do not propagate, but do interact. They can thus be seen as background fields, and by setting these background fields to zero one obtains a quantization of the smaller space of fields.  These background fields are not essential for the purposes of this paper.  We include, them, however, because they play an important role in our approach to a holomorphic twist of the AdS/CFT correspondence, which we will discuss in future work.  

\subsection{Holomorphic Chern-Simons}
The fundamental field of holomorphic Chern-Simons theory is a connection $A \in \Omega^{0,1}(\C^3,\gl_N)$ with action
$$
S_{hCS}(A) = \int_{\C^3} \tfrac{1}{2} \op{Tr} A \dbar A + \tfrac{1}{3} \op{Tr} (A^3) .
$$
The integration is again the standard holomorphic volume form on $\C^3$. The field $A$ is acted on by the complex gauge group of maps from $\C^3$ to $GL(n,\C)$, and this action preserves $S_{hCS}$. In the BV formalism, the space of fields (including ghosts, anti-fields etc.) is $\Omega^{0,\ast}(\C^3)\otimes \gl_N[1]$, and the full BV action functional takes the same form as $S_{hCS}$ above, except that now $A$ is no longer constrained to be a $(0,1)$-form.

We will denote the cubic term in the holomorphic Chern-Simons interaction by $I^{CS}$; we should think of it as being associated to a disc with three marked points on the boundary. 

One can couple BCOV theory and holomorphic Chern-Simons. We are only interested, for now, in the term in the coupled action which depends linearly on the fields of BCOV theory. We call this term $I^{1-disk}$, and view it as being associated to a disc with one marked point in the middle (at which we put the field of BCOV theory) and any number of marked points on the boundary. 

The explicit formula for $I^{1-disk}$ is a little complicated (see Definition \ref{defn-1-disk}). To get a sense of its structure, we give here some expressions for the coupling between those fields on BCOV theory which are of cohomological degree zero and holomorphic Chern-Simons. 
\begin{enumerate} 
 \item If $\alpha \in \Omega^{3,2}(\C^3)$, then $\alpha$ couples to holomorphic Chern-Simons via the interaction
$$
A \mapsto \int \alpha \op{Tr} A.
$$
\item If $\alpha \in \Omega^{2,1}(\C^3)$ with $\partial \alpha = 0$, then $\alpha$ couples by the interaction
$$
A \mapsto \tfrac{1}{2} \int \alpha \op{Tr} (A \partial A).
$$
\item A field $\alpha \in \Omega^{1,0}(\C^3)$  with $\partial \alpha = 0$ couples via
$$
A \mapsto \tfrac{1}{3} \int \alpha \op{Tr} (A (\partial A)^2 ).
$$
\item A field $\alpha \in t \Omega^{3,0}(\C^3)$ couples by
$$
A \mapsto \tfrac{1}{3} \int \alpha \op{Tr} (A^3). 
$$
\end{enumerate}
\subsection{$(1,0)$ BCOV theory}
Let $X$ be a Calabi-Yau $3$-fold. The fields of $(1,0)$ BCOV theory are the subset of those fields of the full BCOV theory consisting of the complex
$$
\Omega^{2,\ast}[1] \xto{\partial} t \Omega^{3,\ast}.
$$
The fields of ghost number zero are $\Omega^{2,1} \oplus t \Omega^{3,0}$, and these fields couple to those of holomorphic Chern-Simons theory for $\mf{sl}(N \mid N)$ by the formulae we wrote down above.

We can rewrite these fields in terms of holomorphic vector fields on $X$. Contracting with  the holomorphic volume form gives an isomorphism between the spaces
$$\Omega^{0,*}(X,T X) \iso \Omega^{2,*}(X), \quad  \Omega^{0,\ast}(X)\iso \Omega^{3,\ast}(X) .$$
It allows us to rewrite our space of fields as 
$$
\Omega^{0,\ast}(X, T X)[1] \oplus \Omega^{0,\ast}(X).
$$
The differential is a sum of the $\dbar$ operator and the holomorphic divergence map
$$
\partial = \op{Div} : \Omega^{0,k}(X, T X) \to \Omega^{0,k}(X).
$$
In this formulation, fields of cohomological degree zero are $\Omega^{0,1}(X,T X)$, describing deformations of complex structure; and $\Omega^{0,0}(X)$, describing changes in the holomorphic volume form.  The fields  of cohomological degree $-1$ (i.e.\ ghosts) are $\Omega^{0,0}(X, T X)$. These are ghosts for holomorphic changes of coordinates.

Let us describe the interaction for $(1,0)$ BCOV theory.  If $\alpha$ denotes the field in $\Omega^{0,\ast}(X, T X)[1]$ and $\phi \in \Omega^{0,\ast}(X)$, the interaction is
$$
I(\alpha,\phi)= \sum_{n \ge 0} \tfrac{1}{6} \int \phi^n \Omega \wedge (\alpha \vdash)^3 \Omega
$$
where $\Omega$ is the holomorphic volume form, and $(\alpha \vdash)^3$ indicates the operation of contracting with $\alpha$ $3$ times.

One can show that the equations of motion (including the linearized BRST operator we have discussed) describe the variations of $X$ as a Calabi-Yau manifold equipped with a holomorphic volume form.  This is in contrast to the full BCOV theory, in which the equations of motion describe a much larger space including non-commutative deformations of $X$.   

We conjecture that free $(1,0)$ BCOV theory is the holomorphic twist of the free $(1,0)$ tensor multiplet in $6$ dimensions.  The idea is that the propagating field of ghost number zero $(1,0)$ BCOV theory that lives in $\Omega^{2,1}$ corresponds to $9$ of the $10$ components of the self-dual $3$-form in the $(1,0)$ tensor multiplet.  
 
Our theorem regarding quantization of $(1,0)$ BCOV theory is the following.
\begin{theorem*}
There is a unique quantization of $(1,0)$ BCOV theory coupled to $\mf{sl}(k \mid k)$ holomorphic Chern-Simons theory, and compatible with certain natural symmetries, in the following situations:
\begin{enumerate}
\item On $\C^3$.
\item On a Calabi-Yau $X$ which is the total space of the canonical bundle of a complex surface.  
\item On the product of a $K3$ surface with an elliptic curve. 
\item At genus $0$, on any Calabi-Yau.  From the point of view of the holomorphic Chern-Simons gauge theory, genus $0$ means we only consider planar diagrams.  
\end{enumerate}
\end{theorem*}
 
\subsection{The interaction between general BCOV fields and holomorphic Chern-Simons}
We will sketch the cohomology cancellation argument which allows us to quantize open-closed BCOV theory. To understand this argument, it is useful to have some understanding of how a general field of BCOV theory couples to holomorphic Chern-Simons (although we will not give explicit formulae right now). 

Let's consider the most general possible single-trace Lagrangian we can write down which is a first order deformation of the holomorphic Chern-Simons interaction.  If we require our Lagrangian to be $\GL(N \mid N)$ invariant and compatible with the inclusions $\gl(N \mid N) \into \gl(N + k \mid N + k)$, then the general Lagrangian is a sum of terms of the form 
\begin{equation*}
\int_{\C^3} \prod \d z_i \prod \d \zbar_i  \alpha \op{Tr}\left( (D_0A_{r_0}) (D_1 A_{r_1}) \dots (D_n A_{r_n}) \right) \tag{$\dagger$} 
\end{equation*}
where $\alpha \in \cinfty(\C^3)$, $0 \le r_i \le 3$, $A_{r_i}$ indicates the component of the field $A$ in $\Omega^{0,r_i}$, and each $D_i$ is a constant-coefficient differential operator from $\Omega^{0,r_i}(\C^3)$ to $\cinfty(\C^3)$. Up to integration by parts, we can assume $D_0$ is a linear map without any derivatives.

The space of such Lagrangians is a cochain complex with a differential given by $\{S^{hCS},-\}$ where $\{-,-\}$ is the BV bracket.  Consistent first-order deformations of the holomorphic Chern-Simons action  are given by closed elements of this cochain complex, and cohomologous elements define equivalent first-order deformations.  
A result of Tsygan \cite{Tsy83} and Loday-Quillen \cite{LodQui84} allows us to identify this complex of Lagrangians with the (local) cyclic cochain complex of the dg algebra $\Omega^{0,\ast}(\C^3)$.  It is important here that we are working uniformly in the $N$ appearings in $\gl(N \mid N)$. Because of this, trace relations do not appear, and expressions like ($\dagger$) give the same Lagrangian if they are related by a cyclic permutation or a total derivative, which is why one finds the cyclic cochain complex.  

A classic result of Hochschild-Kostant-Rosenberg allows one to calculate this local cyclic complex, and one finds that it is quasi-isomorphic to the complex $\Omega^{-\ast,\ast}(\C^3)[[t]][-1]$, with differential $\dbar + t \del$. This, however, is the complex of fields of BCOV theory.

In other words, the fields of BCOV theory are the universal object which can couple to holomorphic Chern-Simons theory with single-trace operators.  

\subsection{The cohomology cancellation argument} 
Recall that we are aiming to couple our open-closed BCOV theory where on the open string sector we have the groups $\gl(N \mid N)$, and we work uniformly in $N$.  The fact that we are working uniformly in $N$ implies that Feynman diagrams for the open sector can be viewed as ribbon graphs, and that more generally every Feynman diagram for the open-closed theory gives us a topological type of a Riemann surface of some genus $g$, with some number $h$ of boundary components, some number $n$ of interior marked points, and some number of marked points on each boundary component. The closed-string fields are placed on the interior marked points and the open-string fields on the boundary marked points.  The fact that we are using the super Lie algebras $\gl(N \mid N)$ instead of $\gl(N)$ tells us that only surfaces each of whose boundary components has at least one marked points can appear. 

It is important to bear in mind that we introduce Riemann surfaces only as a combinatorial tool for describing topological types of Feynman diagrams.  We are strictly doing string field theory, and we will not use the world-sheet theory at all.

It makes sense to try to consider our theory up to genus $G$, by only considering diagrams of genus less than $G$. Also, given an integer $R$, it makes sense to consider our theory where we consider diagrams of genus $g < G$, of genus $G$ where $h+n < R$. Recall $h$ is the number of boundary components and $n$ the number of interior marked points. We will indicate this by saying that $(g,h+n) < (G,R)$ using the lexicographical ordering on pairs of integers.

Our construction of the open-closed theory is by induction.  Suppose we have constructed our theory for $(g,h,n)$ where $(g,h+n) < (G,R)$.  Then, we will construct it for $(g,h,n) < (G,R+1)$ by induction.  The obstruction-deformation group describing this problem is built from Lagrangians defined on the open-closed fields which, on the open sector, have $h$ traces, and which are homogeneous of degree $n$ as a function of the closed-string fields, where $h+n = R$.  
\begin{comment}
\begin{figure}
\includegraphics[scale=0.5]{}
\caption{The cohomology  cancellation argument.  .}
\end{figure}
\end{comment}

The main claim is that this complex vanishes.  Let us illustrate this point for the simple case when $R = 1$.

In this case, our complex has two terms: single-trace Lagrangians  of the open-string fields, and linear functionals of the closed-string fields.  The differential on this complex has three terms: the open-string BRST differential, which acts only on the open-string sector; the closed-string BRST differential, acting on the closed-string sector; and the BV bracket with the interaction $I^{1-disk}$.  This last term maps a linear functional of the closed-string fields to a single-trace functional of the open-string fields. 

As we have sketched above, single-trace first-order deformations of the open-string sector are described by the cyclic cohomology of the algebra $\Omega^{0,\ast}(\C^3)$, which is precisely the complex $\Omega^{-\ast,\ast}(\C^3)[[t]][-1]$ of fields of the closed-string sector.

The space of linear functionals of the closed string fields is $t^{-1} \Omega^{-\ast,\ast}(\C^3)[t^{-1}]$.  The pairing between an element $\phi \in t^{-1}\Omega^{-\ast,\ast}(\C^3)[t^{-1}][-1]$ and a closed-string field $\alpha \in \Omega_c^{-\ast,\ast}(\C^3)[[t]][-1]$ is 
$$
\phi(\alpha) = \sum_{k \ge 0} \int \phi_{-k-1}\wedge \alpha_k
$$
where $\phi_{l}$ and $\alpha_l$ indicate the coefficient of $t^l$.  The differential $\dbar + t \partial$ on $\Omega^{-\ast,\ast}(\C^3)[t^{-1}]$ is the dual to the differential $\dbar + t \partial$ on $\Omega^{-\ast,\ast,}(\C^3)[[t]]$.  We use the convention that $t \partial $ applied to $\phi_{-1} t^{-1}$ is zero.  

Thus, our obstruction-deformation group looks like
$$
\Omega^{-\ast,\ast}(\C^3)[[t]][-1] \oplus t^{-1}\Omega^{-,\ast,\ast}(\C^3)[t^{-1}][-1] 
$$
where the first summand comes from the open-string sector, and the second from the closed-string sector. Using the language of cyclic cohomology, we can say that the obstruction-deformation complex for the open string sector yields local cyclic cochain complex of $\Omega^{0,\ast}(\C^3)$, whereas that for the closed string sector yields \emph{negative} local cyclic complex.

Our obstruction-deformation complex is \emph{almost} the complex $\Omega^{-\ast,\ast}(\C^3)((t))[-1]$ with differential $\dbar + t \partial$.  The only issue is that we are missing the term in the differential which sends a closed--string Lagrangian $\phi t^{-1}$ to the open-string Lagrangian $\partial \phi t^0$.  In the cyclic cohomology language, our complex is almost the periodic cyclic cochain complex, except that we are missing the connecting map between negative cyclic cochains and cyclic cochains. 

There is a remaining term in the differential, which maps the closed-string to the open-string sectors, coming from the closed-string BV bracket with the interaction term $I^{1-disk}$.  

We calculate that this differential is \emph{precisely} the missing term mentioned above, so that the full obstruction-deformation complex is $\Omega^{-\ast,\ast}(\C^3)((t))[-1]$ with differential $\dbar + t \partial$ (that is, the period cyclic cochain complex).  This complex is simply a direct sum of infinitely many copies of the de Rham complex of $\C^3$, of the form
$$
\Omega^{0,\ast}(\C^3) t^k \to \Omega^{1,\ast}(\C^3) t^{k+1} \to \Omega^{2,\ast}(\C^3) t^{k+2} \to \Omega^{3,\ast}(\C^3) t^{k+3} 
$$
As such, the cohomology is simply $\C((t))[-1]$.  

What this shows is that, although if we concentrate on the purely open or purely closed string sectors there are lots of complicated possible Lagrangians, for the coupled theory every Lagrangian is equivalent to a sum of the simple ones corresponding to the elements in $t^k 1 \in \Omega^{0,0}(\C^3)((t))$. A similar calculation applies when $R > 1$, except that we find the space of possible Lagrangians is isomorphic to a symmetric power of $\C((t))[-1]$.  

The Lagrangians corresponding to $t^k 1$ have very simple descriptions. If $k \ge 0$, then this Lagrangian is a function of the open-string fields $A \in \Omega^{0,\ast}(\C^3, \gl(N \mid N))$ of the form
$$
A \mapsto \sum_{l_1 + l_2 + l_3 = 2k+1} c_{l_1,l_2,l_3}  \int \op{Tr} A^{l_1} (\partial A) A^{l_2} (\partial A) A^{l_3} (\partial A)
$$
where $c_{l_1,l_2,l_3}$ are certain combinatorial constants. 

If $k < 0$, then the Lagrangian corresponding to $t^k 1$ is a function of the closed-string field $\phi \in \Omega^{-\ast,\ast}(\C^3)[[t]][-1]$ of the form
$$
\phi = \sum t^k \phi_k \mapsto \int \phi_{-k-1}. 
$$
There are similar expressions when $R > 1$. 

In either case, we see that the Lagrangian is invariant under scaling of $\C^3$. In topological string theory, one expects quantities associated to surfaces of type $(g,h,n)$ to have weight\footnote{Whether the weight is $3(2g-2+h+n)$ or $3(2g-2+h)$ is convention dependent. If one thinks of the fields of BCOV theory as being polyvector fields, then the weight is $2g-2+h$, and if one thinks of them as forms the weight is $3(2g-2+h+n)$. The difference is accounted for by the fact that the isomorphism between forms and polyvector fields depends linearly on the holomorphic volume form.} $3(2-2g-h-n)$ when we scale $\C^3$,  and we can make this into a consistent axiom for quantum open-closed BCOV theory.  

It follows that the Lagrangians we have found do not scale correctly to contribute to the obstruction-deformation complex for quantizing open-closed BCOV theory (in a way compatible with scaling of $\C^3$), except when $2g-2+h+n = 0$. From this we see that once we have specified the quantum theory for all diagrams with $2g-2+h+n = 0$, the axioms specify the rest of the theory uniquely.

If $2g-2+h+n = 0$, then we have either a disc with one interior marked point, an annulus with no interior marked points, or a torus with no interior marked points.  The disc with one interior marked point has already been specified. The torus with no interior marked points does not appear in our story, as we do not consider ``vacuum'' Feynman diagrams with no external lines whatsoever. It follows that it remains to construct the theory at the annulus level.

A similar cohomology cancellation holds when we consider $(1,0)$ BCOV theory, allowing us to quantize this theory to all loops once we have quantized to the annulus level. 
\subsection{Annulus anomaly cancellation}
The final thing to check for our argument is that we can construct the theory at the annulus level.  The cohomology cancellation argument we sketched above does not apply at the annulus level. One reason, of course, is that the possible Lagrangians we found are of weight zero and so can contribute to annulus diagrams. There is a more subtle reason, however.  The cohomology cancellation was between various possible Lagrangians appearing at $(g,h,n)$ for fixed $g$ and fixed $h+n$. If we try to apply this at $g = 0$ and $h+n=2$, we find that we have already specified (as part of our classical data) what happens when $(g,h,n)$ is $(0,1,1)$ and $(g,h,n) = (0,0,2)$. (The latter should be thought as  kinetic term in the classical BCOV action, encoded above by the linearized BRST operator).  Lagrangians corresponding to these two types of surface can not appear in our obstruction-deformation complex, so that the complex is entirely built from Lagrangians of type $(0,2,0)$. The cancellation described above therefore can not take place. 

We can explicitly compute possible anomalies to quantization at the annulus level. There are two possible anomalies, corresponding to the functionals of the open string field $A \in \Omega^{0,\ast}(\C^3,\gl(N \mid N))$:
\begin{align*} 
\alpha_1 ( A)  & = \int_{\C^3} \op{Tr} A \partial A \op{Tr} \partial(A)^2 \\
\alpha_2(A) & = \int_{\C^3} \op{Tr} A \op{Tr} (\partial A)^3. 
\end{align*}
There are no ambiguities to quantization at the annulus level, so that if the anomaly vanishes the quantization is unique. 

If we try to quantize holomorphic Chern-Simons on its own, without coupling to BCOV theory, we find (by an explicit computation) that the anomaly is $\alpha_1 / 2 + \alpha_2$.  

There is another possible source of annulus anomalies, however, coming from the closed-string sector.  The Lagrangian $I^{1-disk}$ couples the open and closed string sectors, and corresponds to a disk with one interior marked point. Applying the closed-string BV anti-bracket  $\{I^{1-disk},I^{1-disk}\}_C$ also gives an annulus-level anomaly, which we call the closed-string anomaly.  
\begin{theorem*}
The open and closed string anomalies precisely cancel for BCOV theory on $\C^d$ (where $d$ is odd).  Therefore, there is a unique quantum open-closed BCOV theory to all orders.  
\end{theorem*}
The fact that the quantization at the annulus level gives one to all orders follows from the cohomology cancellation argument we sketched earlier.

If we use $\gl(N)$ instead of $\gl(N \mid N)$, there is an extra one-loop open-string anomaly which is the functional of the open-string field 
$$
A \mapsto \int_{\C^3} \op{Tr} A (\partial A)^3 \op{Tr} 1. 
$$
Of course, $\op{Tr} 1$ is $N$ if we use $\gl(N)$ but zero for $\gl(N \mid N)$.  This extra term corresponds to an annulus with no marked points on one of its two boundaries. This term does not cancel with the closed-string sector. 

We also have the following variant, for $(1,0)$ BCOV theory.
\begin{theorem*}
The annulus anomaly cancels if we couple $(1,0)$ BCOV theory to $\mf{sl}(N \mid N)$ holomorphic Chern-Simons on any Calabi-Yau manifold of dimension $3$. 

As a corollary, $(1,0)$ open-closed BCOV theory admits a unique quantization on $\C^3$, and indeed on any Calabi-Yau which is the total space of the canonical bundle over a complex surface. (The cohomology cancellation argument we discussed earlier works on this class of Calabi-Yaus).   
\end{theorem*}

The point here is the following. If we use $\mf{sl}(N \mid N)$ holomorphic Chern-Simons, the term in the anomaly of the form $\int \op{Tr} A \op{Tr}\left(  (\partial A)^3 \right)$ is zero, and the anomaly is simply $\int \op{Tr} (A \partial A) \op{Tr} \left( (\partial A)^2\right)$.  

In the fields of BCOV theory, $\Omega^{2, \ast}$ couples to $2$ copies of $A$, $\Omega^{1,\ast}$ to $3$ copies of $A$ and $\Omega^{3,\ast}$ to $1$ copy of $A$.  If $I^{1-disk}_k$ refers to the term in $I^{1-disk}$ where on the interior marked point one places a field of BCOV theory in $t^0 \Omega^{k,\ast}$, then one finds that 
\begin{align*} 
 \{I^{1-disk}_{3}, I^{1-disk}_{1}\}^{C} (A) &= \int \op{Tr} A \op{Tr}\left(  (\partial A)^3 \right)\\
  \{I^{1-disk}_{2}, I^{1-disk}_{2}\}^{C} (A) &=   \int \op{Tr} (A \partial A) \op{Tr} \left( (\partial A)^2\right).
\end{align*}
The only propagating fields in $(1,0)$ BCOV theory are in $\Omega^{2,\ast}$, so the closed-string anomaly here is $\{I^{1-disk}_{2}, I^{1-disk}_{2}\}^{C}$, which cancels the anomaly from the open-string sector using $\mf{sl}(N \mid N)$.  

This annulus anomaly cancellation is very similar to the way that the introduction of $(1,0)$ tensor multiplets can cancel a one-loop anomaly appearing in a $6$ dimensional gauge theory with $(1,0)$ supersymmetries. 
\subsection{The classical BCOV interaction}
One slightly odd feature of our construction of open-closed BCOV theory is that our initial data, from which we produce the entire quantum theory in a unique way, does not include the BCOV interaction. Recall that our initial data is $I^{disk}$ and $I^{1-disk}$, corresponding to disks with zero and one interior marked points.  From this, we generate by the arguments sketched above the full quantum open-closed BCOV theory, and in particular the closed-string interactions $I^{n-sphere}$ associated to a sphere with $n$ marked points. In this subsection we will describe a conjectural description for these closed-string interactions, and relate it to Kontsevich's formality theorem.

It is convenient to rewrite the fields of BCOV theory in terms of polyvector fields, as this makes it easier to describe our conjectural formula for the interaction.
 
Let
\begin{align*} 
 \PV^{\ast,\ast}(\C^3) &=  \Omega^{0,\ast}(\C^3, \wedge^\ast T \C^3)\\
&= \cinfty(\C^3)[\d \zbar_i, \partial_{z_j}] 
\end{align*}
be the Dolbeault complex of $\C^3$ with coefficients in the bundle of poly-vector fields, which is the exterior algebra of the tangent bundle.  The variables $\d \zbar_i$ and $\partial_{z_j}$ are odd, of cohomological degree $1$. 

Let $\PV^{i,j}(\C^3)$ denote $\Omega^{0,j}(\C^3, \wedge^i T \C^3)$.  There is an isomorphism
$$
\PV^{i,j} (\C^3) \iso \Omega^{3-i,j}(\C^3)
$$
coming from the isomorphism of holomorphic bundles
$$
\wedge^i T \C^3 \iso \wedge^{3-i} T^\ast \C^3
$$
given by contracting with the holomorphic volume form.

The space of polyvector fields has two operators $\dbar$ and $\partial$, which correspond via the isomorphism with the de Rham complex to the usual $\dbar$ and $\partial$ operators.  Thus, $\dbar$ maps $\PV^{i,j}$ to $\PV^{i,j+1}$ and $\partial$ maps $\PV^{i,j}$ to $\PV^{i
-1,j}$.

In the language of polyvector fields, the space of fields of BCOV theory is
$$
\PV^{\ast,\ast}(\C^3)[[t]][2] \iso \Omega^{-\ast,\ast}(\C^3)[[t]][-1]. 
$$
with differential $\dbar + t \partial$.  

There is an integration map
$$
\int : \PV^{\ast,\ast}_c(\C^3) \to \C
$$
which is zero except on $\PV^{3,3}_c$, and which sends $\alpha \in \PV^{3,3}_c(\C^3)$ to
$$
\int \alpha := \int_{\C^3} \Omega \wedge (\alpha \vdash \Omega). 
$$
Here $\Omega = \d z_1 \d z_2 \d z_3$, and  $\alpha \vdash \Omega \in \Omega^{0,3}$ is the form obtained by contracting $\alpha$ with $\Omega$.
 
In \cite{CosLi11} we described a classical interaction for our formulation of BCOV theory, which we now recall. Define functionals
$$
I_n : \PV^{\ast,\ast}_c(\C^3)[[t]][2] \to \C
$$ 
as follows. If $\alpha \in \PV^{\ast,\ast}_c(\C^3)[[t]][2]$, let $\alpha_k$ denote the coefficient of $t^k$. Then, we set
$$
I_n(\alpha) = \sum_{k_1, \dots, k_n \text{ with } \sum k_i = n-3}\frac{(n-3)!}{k_1 ! \dots k_n !} \int \alpha_{k_1} \wedge \dots \wedge \alpha_{k_n}. 
$$
We then define the interaction $I$ by saying that
$$
I(\alpha) = \sum_{n \ge 3} \tfrac{1}{n!} I_n(\alpha). 
$$
One can check \cite{CosLi11} that $I(\alpha)$ satisfies the classical master equation. 

In this paper, we show that there is a unique quantum open-closed BCOV theory. When restricted to the classical closed-string sector this gives us some new interaction for BCOV theory, which we call $\what{I}$. Again, we can expand
$$
\what{I} (\alpha) = \sum_{n \ge 3} \tfrac{1}{n!} \what{I}_n(\alpha)
$$
where $\what{I}_n(\alpha)$ is homogeneous of degree $n$ as a function of $\alpha$.  The functional $\what{I}$ automatically satisfies the classical master equation.  
\begin{conjecture}
The functionals $I$ and $\what{I}$ are equivalent solutions to the classical master equation. 
\end{conjecture}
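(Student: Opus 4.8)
The plan is to deduce the conjecture from a rigidity statement: once the initial data $I^{CS}$ and $I^{1-disk}$ is fixed, the classical open--closed BCOV theory on $\C^d$ is unique up to equivalence among those compatible with the scaling action of $\C^\times$ on $\C^d$ (carrying weight $3(2-2g-h-n)$ on the piece attached to surfaces of type $(g,h,n)$) and the other natural symmetries. Granting this, both $I$ and $\what{I}$ extend to such theories: $I$ together with $I^{1-disk}$ assembles into a classical solution of the master equation for the coupled theory, by the construction of \cite{CosLi11} together with the coupling to holomorphic Chern--Simons discussed above; and $\what{I}$ together with $I^{1-disk}$ does so because $\what{I}$ is, by definition, the closed-string part of the $\hbar \to 0$ limit of the unique quantum open--closed theory constructed in this paper, whose classical limit still has open sector $I^{CS}$ and linear coupling $I^{1-disk}$. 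Both have quadratic part $\dbar + t\,\del$ on the closed sector; the required scale-equivariance holds for $I$ by the weight computation recorded in \cite{CosLi11} and for $\what{I}$ because it is imposed in the quantization. So the two extensions are equivalent, and the equivalence restricts, on setting the open-string fields to zero, to an equivalence of $I$ and $\what{I}$ as solutions of the classical master equation of BCOV theory.

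The essential input is the cohomology cancellation argument, read off at genus zero. There one identifies the local obstruction--deformation complex of the coupled theory --- at a fixed number $h$ of boundary components and $n$ of interior marked points --- with the corresponding piece of the periodic cyclic cochain complex of $\Omega^{0,\ast}(\C^d)$, via the results of Loday--Quillen--Tsygan and Hochschild--Kostant--Rosenberg; its cohomology is a symmetric power of $\C((t))[-1]$, spanned by the ``simple'' Lagrangians attached to the classes $t^{k}\cdot 1$. These simple Lagrangians are invariant under scaling of $\C^d$, hence of weight zero, whereas a genus-zero surface with $h$ boundary components and $n$ interior marked points carries weight $3(2-h-n)$, nonzero exactly when $2g-2+h+n = h+n-2 \ne 0$. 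Hence the scale-equivariant classical deformation complex vanishes above the initial-data level: for spheres with $n \ge 3$ interior marked points, which is precisely where $I_n$ and $\what{I}_n$ live, and for disks with $n \ge 2$ interior marked points. (The borderline $2g-2+h+n=0$ is the annulus, a one-loop diagram; it does not enter the comparison of the classical functionals $I$ and $\what{I}$.)

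With the vanishing in hand, I would build the equivalence by the obstruction-theoretic induction of \cite{Cos11}. Filter the interactions by their total number of marked points, and suppose that an equivalence of the coupled theory has been chosen so that the interactions of $\what{I}$ agree with those of $I$ modulo terms with more than $N$ marked points. The discrepancy at $N$ marked points is then a degree-zero cocycle in the scale-equivariant deformation complex of $I$, hence a coboundary $\{S_0 + I^{CS} + I^{1-disk} + I,\ J\}$ for a degree $-1$ local functional $J$; the gauge transformation generated by $J$ forces agreement at the next order, without disturbing the lower levels. The induction converges in the filtration, and the limiting equivalence yields the desired equivalence of $I$ and $\what{I}$.

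The hard part will be keeping the last two steps rigorous and symmetry-aware. One must know that the genus-zero deformation complex is \emph{exactly} the complex appearing in the cohomology cancellation argument, and that the vanishing holds at the cochain level inside the full symmetry-equivariant complex --- so that the successive $J$'s can be taken scale-equivariant, $\op{GL}$-invariant and rotation-invariant simultaneously --- and, most delicately, that the resulting equivalence of coupled theories can honestly be arranged to fix the open-string fields (or at least to preserve the locus where they vanish), so that it genuinely descends to an equivalence of the purely closed-string theories. A more computational but more illuminating route would be to evaluate $\what{I}_n$ directly as a sum of Feynman weights of holomorphic Chern--Simons diagrams carrying one interior closed-string leg, with the boundary glued along the propagator $\partial\dbar^{-1}$, to recognize that expansion as a configuration-space-integral formula of Kontsevich / Cattaneo--Felder type, and to deduce equivalence with the naive wedge-product interaction $I$ from the homotopy uniqueness of formality morphisms on affine space; this is the route that would make the conjectural link with Kontsevich's formality theorem explicit.
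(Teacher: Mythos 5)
You should first note that the paper does not prove this statement: it is stated explicitly as a conjecture, and the authors write that they ``can't currently prove'' it, establishing only the weaker Lemma that $I_3$ and $\what{I}_3$ agree modulo quartic terms. So your proposal is an attempt at an open problem, and it contains a genuine gap at exactly the point where the difficulty is concentrated.

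The gap is in the step ``$I$ together with $I^{1-disk}$ assembles into a classical solution of the master equation for the coupled theory.'' Nothing in \cite{CosLi11} or in this paper provides that. What is available is (a) the closed-string classical master equation for $I$ alone, and (b) the classical master equation for $I^{CS}_\infty + I^{1-disk}_\infty$, which only controls the disk coupling \emph{linear} in the closed-string fields. To extend $I$ to a full classical open--closed solution one must also produce the disk interactions $I_{0,1,n}$ for all $n \ge 2$ in a way compatible with the closed-string interaction being $I$ (rather than whatever $\what{I}$ the obstruction theory generates); this package is precisely a cyclic $L_\infty$-morphism from the BCOV $L_\infty$-structure on $\PV(\C^d)[[t]]$ to local cyclic cochains of $\Omega^{0,\ast}(\C^d)$, i.e.\ the holomorphic cyclic formality theorem. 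That is exactly the statement the paper's Proposition says would \emph{follow from} the conjecture, so your argument is circular: the rigidity/cancellation machinery (which is correct, and is the paper's Theorem A at genus zero) shows that a coupled lift of the initial data exists and is unique up to equivalence, but it constructs the interactions of type $(0,h,n)$ with $h+n=R$ all at once, with the closed, mixed, and open pieces determined jointly; it gives you no right to prescribe the purely closed piece $I_{0,0,n}$ to equal $I_n$ independently. Your alternative ``computational route'' via homotopy uniqueness of formality morphisms has the same status --- it presupposes the holomorphic/cyclic formality input that is the actual content of the conjecture. The remaining concerns you flag yourself (that the equivalence of coupled theories restricts to the closed sector; equivariance of the gauge parameters $J$) are comparatively minor --- the open-string bracket of two functionals each carrying at least one trace again carries at least one trace, so the purely closed sector is only affected through the closed-string bracket --- but they do not rescue the main circularity.
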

This conjecture is certainly not obvious.  In fact, it implies a variant of Kontsevich's formality theorem: 
\begin{uproposition}
This conjecture implies the holomorphic analog of Willwacher-Calaque's cyclic refinement of Kontsevich's formality theorem.
\end{uproposition}
As stated, it implies the $3$-dimensional version of the cyclic formality theorem, but since our construction of open-closed BCOV theory works in any odd dimension the obvious refinement of the conjecture implies Willwacher-Calaque's theorem in any odd dimension.

Even though we can't currently prove our conjecture, we can show the following.
\begin{ulemma}
The cubic interactions $I_3$ and $\what{I}_3$ are equivalent solutions of the classical master equation modulo  quartic terms. 
\end{ulemma}
In other words, to leading order, the dynamically-generated classical interaction $\what{I}$ and our BCOV action $I$ do agree.   

For $(1,0)$ BCOV theory, we can obtain stronger results.
\begin{theorem*}
On any Calabi-Yau $3$-fold $X$, the dynamically-generated classical interaction $J$ is equivalent to the classical interaction $I$ which is the restriction of the interaction from the full BCOV theory to the fields of $(1,0)$ BCOV theory. (The interaction $I$ is the one we wrote down explicitly earlier). 
\end{theorem*}
\subsection{Acknowledgements}
K.C. is grateful to Davide Gaiotto, Jaume Gomes and Edward Witten for helpful conversations.  

Part of this work was done during visits of S.L. to the Perimeter Institute, which also supports the research of K.C.  Research at Perimeter Institute is supported by the Government of Canada through Industry Canada and by the Province of Ontario through the Ministry of Economic Development and Innovation.

\section{Holomorphic Chern-Simons theory}
Let us now start on the technical material in the paper. We will begin with a discussion of how to describe $\gl_N$ gauge theories uniformly in $N$. 
\subsection{Invariant functions for $\gl_N$ gauge theory}
Let $V$ be a graded topological vector space, $V^\vee$ be its continuous dual, $\gl_N$ be the Lie algebra of $N\times N$ matrices.  We are interested in holomorphic Chern-Simons type theory whose space of fields is represented by $V\otimes \gl_N$. 

Let us first introduce some notation for functions on $V\otimes \gl_N$. Let
$$
\Oo(V \otimes \gl_N)=\what{\Sym}^{> 0} (V^\vee \otimes \gl_N^\vee)
$$ 
denote the graded algebra of formal power series on $V \otimes \gl_N$ \emph{modulo constants}. The component $\Sym^k (V^\vee \otimes \gl_N^\vee)$ will be called homogeneous of degree $k$. The group $GL_N$ acts on $V\otimes \gl_N$ via its adjoint representation on $\gl_N$. It induces a natural $GL_N$ action on $\Oo(V \otimes \gl_N)$ via duality. 

For $M < N$, we have an embedding $V \otimes \gl_M \into V \otimes \gl_N$ induced by
$$
   \gl_M\into \gl_N, \quad A\to \begin{pmatrix} A & 0\\ 0 & 0 \end{pmatrix}.
$$
Let 
$$i^{N}_M : \Oo (V \otimes gl_N) \to \Oo(V \otimes \gl_M)$$
be the restriction map associated to this embedding.
\begin{definition}
\label{definition_admissible_function}
Let 
$$\{f_N \in \Oo(V \otimes \gl_N) \mid N \in \Z_{>0}\}$$
be a collection consisting of a function on $V \otimes \gl_N$ for each $N$.  We say this collection of functions is \emph{admissible} if 
\begin{enumerate}
\item Equivariance: each $f_N$ is $GL_N$-invariant.
\item Compatibility: $
i^N_M f_N = f_M.  
$
\end{enumerate}
We let $\Oo_{adm}(V \otimes \gl_\infty)$ denote the admissible collection of functionals, which is a bi-graded commutative non-unital algebra (the cohomology degree inherited from $V$ and the homogeneous degree).  
\end{definition}

\begin{eg} Let $V=\C$. Then the collection of functionals $f_N\in \Oo(\gl_N)$ defined by
$$
   f_N(A)=\Tr(A^k), \quad A\in \gl_N
$$
is admissible and homogeneous of degree $k$. 
\end{eg}

\begin{eg} Let $V=\C[1]$, so $V$ is concentrated on degree $-1$. Let $\epsilon \in V$ denote the generator. Then the collection of functionals $f_N\in \Oo(\gl_N[1])$ defined by
$$
   f_N(\epsilon A)=\Tr(A^k), \quad A\in \gl_N
$$
is admissible and homogeneous of degree $k$, which vanishes if $k$ is even. 
\end{eg}

More generally, if $V$ is a graded vector space, let $\sigma$ denote the cyclic permutation
$$
\sigma: V^{\otimes k}\to V^{\otimes k}, \quad \sigma(v_1\otimes \cdots \otimes v_k)=(-1)^{|v_k|(|v_1|+\cdots + |v_{k-1}|)}v_k\otimes v_1\otimes\cdots \otimes v_{k-1},
$$
where $|v|$ denotes the degree of $v\in V$.  Let 
$$
\cyc_k(V)=\bracket{V^{\otimes k}}^{\sigma}
$$ 
denote the space of cyclically invariant elements of $V^{\otimes k}$, and let 
$$
    \cyc(V)=\bigoplus_{k\geq 1} \cyc_k(V)
$$
be the graded vector space of cyclically invariant tensors on $V$. There is a natural map
$$
     \Sym^k(V\otimes \gl_N)\to \cyc_k(V), \quad 
$$
by assigning 
$$
    (v_0\otimes A_0)\otimes \cdots \otimes (v_{k-1}\otimes A_{k-1})\to \sum_{s\in S_{k-1}} \pm v_0\otimes (v_{s(1)}) \cdots \otimes (v_{s(k-1)}) \Tr( A_0A_{s(1)}\cdots A_{s(k-1)}),
$$
where the sign $\pm$ is given by permuting the graded objects $v_i\otimes A_i$. Dually, we obtain a natural map
$$
\op{Cyc}_k(V)^\vee  \mapsto \Oo(V \otimes \gl_\infty) 
$$
by a sequence of functionals $\phi_N \in \Oo(V \otimes \gl_N)$ from $\phi \in \op{Cyc}_k(V)^\vee$ such that for a decomposable tensor $v \otimes A \in V \otimes \gl_N$, 
$$
\phi_N( v \otimes A ) = {1\over k}\phi (v^{\otimes k} )\op{Tr} ( A^k ) .
$$
This map extends by linearity to a map 
\begin{equation*}
\op{Cyc} (V)^\vee = \prod_{k\geq 1} \Cyc_k(V)^\vee \to \Oo( V \otimes \gl_\infty). \tag{$\dagger$}
\end{equation*}

Let $\Oo(\Cyc(V))=\what{\Sym}^{> 0} (\Cyc(V)^\vee)$ denote the completed non-unital symmetric algebra on the dual $\Cyc(V)^\vee$, viewed as formal functions on $\Cyc(V)$ modulo constants. 
\begin{lemma}
\label{lemma_cyclic_admissible}
The map ($\dagger$) extends to an isomorphism of bi-graded algebras
$$
\Oo (\Cyc(V)) \to \Oo_{adm} (V \otimes \gl_\infty).
$$
\end{lemma}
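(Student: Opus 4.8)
The plan is to construct an explicit inverse to the map ($\dagger$) (after symmetrization) and then verify that the two composites are the identity. Since $\Oo(\Cyc(V)) = \what{\Sym}^{>0}(\Cyc(V)^\vee)$ is the free completed commutative non-unital algebra on the graded vector space $\Cyc(V)^\vee$, the map ($\dagger$) automatically extends to an algebra map $\Oo(\Cyc(V)) \to \Oo(V \otimes \gl_\infty)$, and the content is that it lands in $\Oo_{adm}$ and is bijective. That it lands in $\Oo_{adm}$ is the easy direction: equivariance is clear because $\Tr(A^k)$ is conjugation-invariant, and compatibility under $i^N_M$ holds because $\Tr$ of a block-diagonal matrix with a zero block is computed only from the nonzero block. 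So the real work is surjectivity and injectivity.

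First I would reduce to a single homogeneous degree. The algebra $\Oo(V \otimes \gl_N)$ is graded by polynomial degree $k$, the maps $i^N_M$ preserve this grading, and ($\dagger$) takes $\Sym^m(\Cyc(V)^\vee)$-components built from cyclic words of total length $k$ into degree-$k$ functionals. Dualizing, it suffices to show that the map
$$
\bigoplus_{m \ge 1} \Sym^m\Bigl(\bigoplus_{k \ge 1}\Cyc_k(V)\Bigr)_{\text{total length } = d} \;\to\; \varprojlim_N \bigl(\Sym^d(V \otimes \gl_N)\bigr)_{GL_N}
$$
is an isomorphism for each $d$, where the left side is the span of products of cyclic words whose lengths sum to $d$. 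This is precisely the statement that the $GL_N$-invariants of $\Sym^d(V \otimes \gl_N)$, \emph{stably in $N$}, are freely spanned by products of single-trace words $\Tr(A_{i_1}\cdots A_{i_j})$ with no trace relations. This is exactly the content of the first fundamental theorem of invariant theory for $GL_N$ acting on several copies of the adjoint representation: for $N \gg d$ the invariants are spanned by products of traces of words in the matrices, and the second fundamental theorem says the only relations among these (for $N \ge d$) come from the antisymmetrizer on $d+1$ indices, which disappears in the limit $N \to \infty$. I would invoke this — organizing the matrix entries of a decomposable tensor $v \otimes A$ and using multilinearity to reduce to the classical $GL_N$ first/second fundamental theorems, with signs tracked by the Koszul rule since $V$ is graded — to get both surjectivity (every stable invariant is a polynomial in traces) and injectivity (the trace monomials corresponding to a basis of $\bigoplus_m \Sym^m \Cyc(V)$ become linearly independent once $N$ is large).

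The main obstacle I anticipate is not the invariant theory per se but bookkeeping the graded signs: the cyclic-invariance sign in the definition of $\sigma$, the permutation signs $\pm$ in the map $\Sym^k(V \otimes \gl_N) \to \Cyc_k(V)$, and the factors $1/k$ must all be shown to be exactly compatible so that the trace monomials attached to an honest basis of $\Cyc(V)$ (rather than an overcomplete spanning set) map to a basis of the stable invariants. Concretely, one must check that a cyclic word of length $k$ with a nontrivial cyclic symmetry of order $r$ (so it is fixed by $\sigma^r$) contributes with the correct multiplicity $k/r$, matching the combinatorial factor one gets from counting how many ways $\Tr(A_{i_1}\cdots A_{i_k})$ arises from $\Sym^k$; getting this right is what makes ($\dagger$) an isomorphism and not merely a surjection. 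Once the degree-wise statement is established, one assembles the isomorphism of bi-graded algebras: it respects products because ($\dagger$) was defined as an algebra map, it respects the cohomological grading inherited from $V$ by construction, and it respects the polynomial/homogeneous grading as noted above. I would also remark that the inverse map can be described directly — given an admissible $(f_N)$, restrict to decomposable tensors and read off the coefficients of the trace monomials, which stabilize — giving an explicit two-sided inverse and completing the proof.
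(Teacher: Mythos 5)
Your proposal is correct and follows essentially the same route as the paper: both rest on classical $GL_N$ invariant theory in the stable range $N \ge d$, the paper packaging it as the identification $\bigl((\gl_N^\vee)^{\otimes k}\bigr)_{GL_N} \simeq \C[S_k]$ followed by the cycle decomposition of permutations, while you quote the first and second fundamental theorems for traces of words directly — the same fact in two standard forms. Your extra attention to the sign and multiplicity bookkeeping is the "direct check" the paper leaves to the reader.
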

\begin{proof} By classical invariant theory for $GL_N$, for $N$ sufficient large ($N\geq k$), 
\begin{align*}
   \Sym^k(V^\vee \otimes \gl_N^\vee)_{GL_N}&\simeq  \bracket{((V^\vee)^{\otimes k}\otimes (\gl_N^\vee)^{\otimes k})_{S_k}}_{GL_N}
   \simeq   \bracket{((V^\vee)^{\otimes k}\otimes (\gl_N^\vee)^{\otimes k})_{GL_N}}_{S_k}\\
   &{\simeq} \bracket{(V^{\vee})^{\otimes k} \otimes \C[S_k]}_{S_k} .
\end{align*}
It follows that 
$$
 \Oo_{adm} (V \otimes \gl_\infty)\simeq \prod_{k\geq 1} \bracket{(V^{\vee})^{\otimes k} \otimes \C[S_k]}_{S_k}.
$$
Let $\sigma_k\subset S_k$ be the conjugacy class of the cycle $(12\cdots k)$. Since any permutation is a product of cycles, the natural map
$$
   \what{\Sym}^{>0}\bracket{\prod_{k\geq 1} \bracket{(V^{\vee})^{\otimes k} \otimes \C[\sigma_k]}_{S_k} }\to  \prod_{k\geq 1} \bracket{(V^{\vee})^{\otimes k} \otimes \C[S_k]}_{S_k}
$$
is an isomorphism. The lemma follows from the observation  
$$
\cyc_k^\vee\simeq \bracket{(V^{\vee})^{\otimes k} \otimes \C[\sigma_k]}_{S_k}
$$
and the direct check that the above identifications lead to ($\dagger$) . 
\end{proof}

\begin{definition}\label{definition-weight}
An admissible function $f=\{f_N\}$ is said to have weight $k$ is $f\in \Sym^k(\cyc(V)^\vee)$ under the isomorphism in Lemma  \ref{lemma_cyclic_admissible}.
\end{definition}

\subsection{Invariant functions and the cyclic cochain complex}
Now let us consider the above situation in the case that $V = A[1]$ where $A$ is a differential graded algebra.  Then, the graded space $\Oo(A\otimes \gl_N[1])$ has a natural differential, which is the Chevalley-Eilenberg differential $\d_{CE}$ for the dg Lie algebra $A \otimes \gl_N$.  

If $\{f_N\} \in \Oo(A \otimes \gl_N[1])$ is a collection of admissible functions, then so is $\{\d_{CE} f_N\}$.  It follows that the space $\Oo_{adm} (A\otimes \gl_\infty[1])$ has a differential for which the map 
$$\Oo_{adm}(A \otimes \gl_\infty[1]) \to \Oo(A \otimes \gl_N[1]) = C^\ast(A \otimes \gl_N)$$
is a cochain map.  
\begin{theorem*}[Tsygan \cite{Tsy83}, Loday-Quillen \cite{LodQui84}]
There is an isomorphism of cochain complexes
$$
\Oo_{adm}(A \otimes \gl_\infty[1]) \iso \widehat{\Sym}^{>0} \left( CC^\ast(A)[1] \right).
$$
\end{theorem*}
Here $CC^\ast(A)$ is the cyclic cochain complex of $A$. In the notation above, $CC^\ast(A)$ is $\op{Cyc}(A)^\vee [-1]$ equipped with a Hochschild-type differential. 

\subsection{A supergroup generalization}
For our application, we need a variant of the above construction for the super Lie algebra $\gl(N \mid N)$, which is the graded Lie algebra of endomorphisms of the graded vector space $\C^{N\mid N}=\C^N \oplus \C^{N}[1]$.  

Given a graded vector space $V$, the admissible functions on $V\otimes \gl(N \mid N)$ is defined the same as before by a sequence of functions $f_N \in \Oo(V \otimes \gl(N \mid N))$ such that 
\begin{enumerate} 
 \item Each $f_N$ is $\gl(N \mid N)$-invariant.
\item There is a natural inclusion $\gl(N \mid N) \into \gl(N + k \mid N + k)$ where $\gl(N \mid N)$ consists of those matrices acting on $\C^{N \mid N} \oplus \C^{k \mid k}$ which preserve $\C^{N \mid N}$ and act by zero on $\C^{k \mid k}$.  We let $i^{N+k}_N$ denote the corresponding restriction map $\Oo(V \otimes \gl(N + k \mid N + k)) \to \Oo(V \otimes \gl(N \mid N)$.  We require that $i^{N+k}_N f_{N+k} = f_N$.  
\end{enumerate} 
We let $\Oo_{adm}(V \otimes \gl(\infty \mid \infty))$ denote the space of admissible sequences of functions.
\begin{lemma}\label{lem-super-invariants}
Exactly as in the $\gl(N)$ case, there is a natural isomorphism
$$
\Oo(\cyc(V)) \iso \Oo_{adm}(V \otimes \gl(\infty \mid \infty)).
$$
In particular, we find a natural isomorphism 
$$
\Oo_{adm}(V \otimes \gl(\infty))\iso \Oo_{adm}(V \otimes \gl(\infty \mid \infty)).
$$
\end{lemma}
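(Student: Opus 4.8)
The plan is to mimic the proof of Lemma \ref{lemma_cyclic_admissible} essentially verbatim, the only input that changes being the classical invariant theory statement for $GL_N$ replaced by its super-analog. First I would set up the same filtration by homogeneous degree $k$ and reduce, for each $k$, to understanding the $\gl(N\mid N)$-invariant part of $\Sym^k(V^\vee\otimes\gl(N\mid N)^\vee)$ for $N$ large compared to $k$. As in the $\gl(N)$ case, one commutes $\Sym^k$ (i.e.\ the $S_k$-coinvariants) past the invariants, so the task becomes computing $\bigl((\gl(N\mid N)^\vee)^{\otimes k}\bigr)^{GL(N\mid N)}$ as an $S_k$-representation, tensored with $(V^\vee)^{\otimes k}$.

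The key step is therefore the statement that, for $N\geq k$, the $GL(N\mid N)$-invariants in $(\gl(N\mid N)^\vee)^{\otimes k}$ are spanned by products of (super)traces along cycles, and that this gives an isomorphism onto $\C[S_k]$ as an $S_k$-representation — exactly as in the even case. This is the super version of the first fundamental theorem of invariant theory for $\gl(N\mid N)$; it is classical (it goes back to Berele--Regev and Sergeev) and for $N$ sufficiently large relative to $k$ there are no extra relations (no super-analog of trace relations intervenes). I would cite this and note that the supertrace of a product of matrices, $\str(A_0A_{s(1)}\cdots)$, behaves under cyclic permutation and under the $S_k$-action with precisely the signs needed so that the identification of the image of the cycle class $\sigma_k\subset S_k$ with $\cyc_k(V)^\vee$ goes through unchanged (the Koszul signs from permuting the graded factors $v_i\otimes A_i$ are the same signs that appear in the definition of $\cyc_k(V)$). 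The compatibility maps $i^{N+k}_N$ are again filtered and compatible with the supertrace description, so one gets $\Oo_{adm}(V\otimes\gl(\infty\mid\infty))\simeq\prod_k\bigl((V^\vee)^{\otimes k}\otimes\C[S_k]\bigr)_{S_k}$, and then the same ``any permutation is a product of cycles'' argument shows the completed symmetric algebra on the cyclic part surjects isomorphically onto this product, yielding $\Oo(\cyc(V))\iso\Oo_{adm}(V\otimes\gl(\infty\mid\infty))$. The final assertion, $\Oo_{adm}(V\otimes\gl(\infty))\iso\Oo_{adm}(V\otimes\gl(\infty\mid\infty))$, is then immediate by composing with the isomorphism of Lemma \ref{lemma_cyclic_admissible}.

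The main obstacle — really the only nontrivial point — is making sure the super first fundamental theorem is being applied in the right range: one needs that for fixed homogeneous degree $k$ the invariant theory stabilizes once $N\geq k$, with no surviving Cayley--Hamilton-type relations, and that the stabilization is compatible with the inclusions $\gl(N\mid N)\hookrightarrow\gl(N+1\mid N+1)$. Everything else (the $S_k$-equivariance, the Koszul sign bookkeeping, the passage to completed symmetric algebras) is formally identical to the even case and I would simply say ``exactly as in the proof of Lemma \ref{lemma_cyclic_admissible}'' rather than repeat it. It is worth remarking that the reason the super case is genuinely needed, rather than a cosmetic variant, only shows up later (the vanishing of $\str 1=0$ kills the problematic anomaly term), but for the purposes of this lemma the two cases are parallel.
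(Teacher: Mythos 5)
Your proposal is correct and follows essentially the same route as the paper: the paper also constructs the map via traces in the defining representation of $\gl(N\mid N)$ and reduces surjectivity onto the invariants to Sergeev's super analog of classical invariant theory (the invariants in $V^{\otimes p}\otimes (V^{\otimes q})^\ast$ being spanned by the permutation tensors $c_\sigma$, exactly as for $\gl(N)$), after which it simply says ``the proof from the $\gl(N)$ case now applies.'' Your explicit attention to the stable range $N\geq k$ and the absence of trace relations is the same point the paper handles implicitly via injectivity of the constructed map.
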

\begin{proof} 
 
We will define a map 
$$
\Oo(\cyc(W)) \to \Oo_{adm}(W \otimes \gl(\infty \mid \infty)) 
$$ 
and the image of this will be the admissible functionals. 

The map is very simple. If $e_1,\dots,e_n : W \to \C$ then we will construct a functional
$$
f_{e_1 \otimes \dots \otimes e_n} : \left( W \otimes \gl(N \mid N)\right)^{\otimes n} \to \C
$$
by saying that
$$
f_{e_1 \otimes \dots \otimes e_n} (w_1 \otimes A_1 \otimes \dots \otimes w_n \otimes A_n) = \prod e_i(w_i) \op{Tr}(A_1 \dots A_n).
$$
Here we take the trace in the defining representation of $\gl(N \mid N)$.  Projecting onto the symmetric tensors on the dual of $W \otimes \gl(N \mid N)$ makes $f_{e_1 \otimes \dots \otimes e_n}$ into an element of $\left( \Sym^n W \otimes \gl(N \mid N)\right)^\vee$. 
 
Since this map does not change (except by the appropriate sign) if we cyclically permute the $e_i$, we thus get a map
$$
\prod_n \op{cyc}_n(W)^\vee \to \Oo(W \otimes \gl(N \mid N))
$$ 
and so a map of commutative algebras
$$
\Oo(\cyc(W)) \to \Oo(W \otimes \gl(N \mid N)).
$$
The image is certainly admissible, so we get the desired map from $\Oo(\cyc(W))$ to $\Oo_{adm}(W \otimes \gl(\infty \mid \infty))$. 

Next we need to check that this map is an isomorphism. The map is clearly injective, so it suffices to check that it is surjective onto the space of $\gl(N \mid N)$-invariant functions on $W \otimes \gl(N \mid N)$.  But, as in the $ \gl(N)$ case, this follows immediately from invariant theory for the super group \cite{Ser01}.  The statement we need (proved in \cite{Ser01}) is the following.  Let $V = \C^{N \mid N}$ denote the defining representation of $\gl(N \mid N)$. Then, the space of $\gl(N \mid N)$ invariants in a tensor power $V^{\otimes p} \otimes (V^{\otimes q})^\ast$ is non-zero unless $p = q$. If $p = q$, then the invariants have the following description.  Let $c \in V \otimes V^\ast$ denote the matrix for the identity map from $V$ to $V$.  For each permutation $\sigma \in S_p$ we have an element
$$
c_{\sigma} = \otimes_{i = 1}^{p} c_{i,\sigma(i)}
$$
where $c_{i,\sigma(i)}$ refers to $c$ placed on the $i$'th tensor factor of $V^{\otimes p}$ and the $\sigma(i)$ tensor factor of $(V^\ast)^{\otimes p}$.  Then, the elements $c_{\sigma}$ span the space of invariants.   (In the $\gl(N)$ case invariant tensors have exactly the same description).

The proof from the $\gl(N)$ case now applies.  
\end{proof}

\subsection{Holomorphic Chern-Simons theory} Let $X$ be a compact Calabi-Yau manifold of dimension $d$ where $d$ is odd, $\Omega_X$ be a holomorphic volume form on $X$. The space of fields for the $\gl_N$ holomorphic Chern-Simons theory on $X$ is given by
$$
  \A_N:=\Omega^{0,*}(X)[1]\otimes \gl_N,
$$
where $\Omega^{0,*}(X)$ is the smooth $(0,*)$-forms on $X$ and $[1]$ is the shifting operator such that $\Omega^{0,1}(X)$ is of degree $0$. 
$\A_N[-1]$ is a DGLA with differential $\dbar$ and Lie bracket induced from that of $\gl_N$. Moreover, $\A_N$ is equipped with a $(2-d)$-symplectic pairing 
$$
   \omega(\alpha, \beta)=\int_X \Tr(\alpha \beta)\wedge \Omega_X, \quad \alpha, \beta \in \A_N.
$$
The DGLA structure reflects the gauge transformation, which is also encoded into the following classical action functional 
$$
   HCS_N(A)={1\over 2}\omega(\dbar A, A)+\CS_N(A), \quad A\in \A_N
$$
where 
$$
  \CS_N(A)={1\over 3}\int_X \Tr(A^3)\wedge \Omega_X. 
$$
The cohomological degree of the holomorphic Chern-Simons functional is
$$
\deg HCS_N=3-d. 
$$ 

\begin{deflem} The collections $\{HCS_N\}$ and $\{\CS_N\}$ are admissible, i.e., belonging to $ \Oo_{adm} (\Omega^{0,*}(X)[1] \otimes \gl_\infty)$. We will denote the corresponding admissible functionals by $HCS_{\infty}$ and $\CS_\infty$. 
\end{deflem}

Under the isomorphism in Lemma \ref{lemma_cyclic_admissible}, we have $
     \CS_\infty \in \cyc_3(\Omega^{0,*}(X)[1])^\vee  
$ corresponding to the cyclically invariant map
$$
   A_1 \otimes A_2\otimes A_3 \to (-1)^{|A_2|} \int_X (A_1 A_2 A_3)\wedge \Omega_X, \quad A_i \in \Omega_X^{0,*}[1]. 
$$
where $|A|=i-1, A\in \Omega^{0,i}[1]$, denotes the cohomology degree. 

\begin{definition}\label{def-functional} Let 
$$
\Oo(\A_N)=\what{\Sym}^{>0} (\A_N^\vee)
$$ 
denote the space of functionals on $\A_N$ modulo constants. $\Ool(\A_N)$ will denote the subspace of $\Oo(\A_N)$ given by local functionals. $\Oo_{adm}(\A_\infty)$ will denote the admissible collections.  A collection $\{f_N\in \Oo(\A_N)\}$ is called local if each $f_N\in \Ool(\A_N)$. The space of admissible local functionals will be denoted by $\Oo_{adm, loc}(\A_\infty)$. 
\end{definition}

\begin{eg} $HCS_\infty, \CS_\infty \in \Oo_{adm, loc}(\A_\infty)$. 
\end{eg}

The $(2-d)$-symplectic structure $\omega$ defines  a Poisson bracket of degree $d-2$ on local functionals:
$$
  \{-,-\}^{O}: \Ool(\A_N)\otimes \Ool(\A_N) \to \Ool(\A_N) 
$$
where the superscript $O$ refers to open string. 

\begin{lemma}\label{bracket-admissible-local} The Poisson bracket extends to admissible local functionals 
$$
  \{-,-\}^{O}: \Oo_{adm,loc}(\A_\infty)\otimes \Oo_{adm,loc}(\A_\infty) \to \Oo_{adm,loc}(\A_\infty). 
$$
\end{lemma}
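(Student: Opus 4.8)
The plan is to check, for admissible local collections $\{f_N\}$ and $\{g_N\}$, the three defining properties of a member of $\Oo_{adm,loc}(\A_\infty)$ for the collection $\{\{f_N,g_N\}^{O}\}_N$: that each $\{f_N,g_N\}^{O}$ is local, that each is $GL_N$-invariant, and that the collection is compatible with the restriction maps $i^N_M$.

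The first two are routine. The pairing $\omega(\alpha,\beta)=\int_X\Tr(\alpha\beta)\wedge\Omega_X$ is ultralocal: its Schwartz kernel is supported on the diagonal of $X\times X$ and carries no derivatives, so the same holds for its fibrewise inverse, and contracting the first variations of two local functionals against a diagonal kernel yields again the integral over $X$ of a local density. Hence $\{f_N,g_N\}^{O}\in\Ool(\A_N)$ -- this is the usual fact that the Poisson bracket attached to an ultralocal symplectic form preserves local functionals. $GL_N$-invariance is immediate, since $\omega$ is conjugation-invariant, so the bracket of two invariant functionals is invariant.

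The real point is compatibility, and here I would argue through the cyclic-word picture of Lemma~\ref{lemma_cyclic_admissible}: $\Oo_{adm,loc}(\A_\infty)$ is the completed symmetric algebra on the local ``single-trace'' functionals, i.e.\ on the local part of $\cyc(V)^\vee$ with $V=\Omega^{0,\ast}(X)[1]$, and each $i^N_M$ acts as the identity on the cyclic word underlying a single-trace generator. Transporting the symplectic structure of $\A_\infty$ through this identification and the Tsygan--Loday--Quillen isomorphism, $\omega$ becomes the cyclic pairing of degree $2-d$ on $V$ given by $(\alpha,\beta)\mapsto\int_X\alpha\wedge\beta\wedge\Omega_X$, and $\{-,-\}^{O}$ becomes the associated necklace bracket on $\Oo(\cyc(V))$. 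Concretely, the inverse symplectic kernel is $\sum_{i,j}E_{ij}\otimes E_{ji}$ tensored with the diagonal Dolbeault kernel, so a single Wick contraction between one matrix slot of $\Tr(B_1\cdots B_p)$ and one slot of $\Tr(C_1\cdots C_q)$ merely cuts both cyclic words open and splices them into one cyclic word of length $p+q-2$: no closed index loop, hence no factor $\Tr(1_N)=N$, is produced. The result is again single-trace, with $N$-independent structure constants and with a cyclic word that $i^N_M$ plainly preserves; the Leibniz rule then propagates this to all of $\Oo_{adm,loc}(\A_\infty)$, which is exactly $i^N_M\{f_N,g_N\}^{O}=\{f_M,g_M\}^{O}$.

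The hard part, and where I would be most careful, is the assertion that no genuine $N$-dependence survives. Factors of $\Tr(1_N)=N$ can appear in two ways: (a) when both single traces being bracketed are one-letter words, so that the splice leaves the \emph{empty} cyclic word, whose functional is the constant $\Tr(1_N)=N$; and (b) as ``closed-loop'' terms produced by the Leibniz rule from products of single traces that contain one-letter factors. Case (a) is harmless because $\Oo(\A_N)$ and $\Oo_{adm,loc}(\A_\infty)$ are functionals \emph{modulo constants}. For case (b) one checks that the $N$-proportional contributions cancel in pairs; this is forced by the graded antisymmetry of $\{-,-\}^{O}$ together with the graded symmetry of $\omega$ for $d$ odd -- the cleanest instance being that $\{f_N,f_N\}^{O}=0$ for $f_N$ of odd degree already fixes the relative sign between the two contractions that would otherwise add up to a nonzero $N$-multiple. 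I would organize this by working first in the unital completion $\widehat{\Sym}^{\ge 0}(\A_N^\vee)$, where $\{-,-\}^{O}$ is an honest Poisson biderivation and the necklace description holds verbatim, and only then passing to the quotient by constants, so that any residual discrepancy between $i^N_M\{f_N,g_N\}^{O}$ and $\{f_M,g_M\}^{O}$ lies in the constants and disappears. Equivalently: $\A_M\subset\A_N$ is a symplectic subspace whose symplectic complement consists of the matrices vanishing on the $M\times M$ block, and $GL_N$-invariance forces the normal derivative $\partial f_N/\partial w'|_{\A_M}$ to be supported on the trace direction of that complement, so the obstruction to $i^N_M$ being Poisson is again a term that vanishes modulo constants.
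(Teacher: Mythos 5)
Your core argument is the one the paper gives: the bracket's kernel is $\delta_0\otimes\sum_{i,j}e_{ij}\otimes e_{ji}$, which is $GL_N$-invariant and supported on the diagonal, and a single contraction splices two trace words $\Tr(B_1\cdots B_p)$ and $\Tr(C_1\cdots C_q)$ into one word of length $p+q-2$ with $N$-independent coefficients; locality, invariance, and compatibility with $i^N_M$ follow. The paper states exactly this in three lines, without the necklace language, and your first two paragraphs are a faithful (more detailed) version of it.

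The one place where you go beyond the paper --- the worry about factors of $\Tr(1_N)=N$ --- is also the one place where your argument does not hold up. Take $f_N=T_1\cdot T_2$ with $T_1=\phi\otimes\Tr$ a one-letter trace word and $T_2$ a trace word of length $k\ge 2$, and take $g_N=S_1=\psi\otimes\Tr$ another one-letter word. The Leibniz expansion of $\{f_N,g_N\}^O$ contains the single term $\{T_1,S_1\}^O\cdot T_2=N\,\langle\phi\otimes\psi,\delta_0\rangle\,T_2$. This has no partner to cancel against: there is exactly one contraction of the $T_1$-leg with the $S_1$-leg, graded antisymmetry of the bracket constrains only $\{T,T\}$ and says nothing about a cross term between two distinct singletons, and $\langle\phi\otimes\psi,\delta_0\rangle$ can certainly be nonzero (it is unobstructed whenever $\deg\phi+\deg\psi=2-d$). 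Nor does the term ``lie in the constants'': $N c\,T_2$ is a nonconstant functional of $N$-degree one, so your proposed passage through the unital completion does not absorb it either. So the asserted pairwise cancellation in your case (b) is not there.

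To be fair, the paper's own proof silently asserts $i^N_M\{f_N,f'_N\}^O=\{f_M,f'_M\}^O$ without addressing this case, and the paper later concedes that the open-string bracket ``may increase the $N$-degree by one by closing a trace'' --- which is precisely this phenomenon, and is why the $\gl(N\mid N)$ version (where $\Tr(1)=0$) is introduced. The correct observation is that a factor of $N$ arises from the bracket only when two one-letter trace words are contracted; this is a constant (hence zero in $\Oo_{adm,loc}$, which is taken modulo constants) when both arguments have weight one, and it never occurs in the paper's applications because one argument is always $\CS_\infty$, whose trace words have length three. If you want the statement in full generality you should either record this caveat explicitly or work in the weakly admissible algebra $\Oo^{(\ast)}_{adm,loc}$.
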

\begin{proof} By Lemma \ref{lemma_cyclic_admissible}, the combinatorial part on $\gl_N$ of every admissible functionals consists of traces only. The Poisson bracket has kernel given by 
$$
\delta_0 \sum_{i,j=1}^N e_{ij}\otimes e_{ji}
$$
where $\delta_0$ is the delta function distribution representing the identity operator on $\Omega^{0,*}(X)$, and $e_{ij}$ is the fundamental matrix with only nonzero entry $1$ at the $i$th row and $j$th column.  It follows that for any two collections of admissible functionals $\{f_N\}, \{f^\prime_N\}$, 
$$
   i^N_M(\{f_N, f^\prime_N\}^O)=\{i^N_M(f_N), i^N_M(f^\prime_N)\}^O. 
$$
Moreover, $\sum_{i,j=1}^N e_{ij}\otimes e_{ji}$ is obviously $GL_N$-invariant. Therefore the collection of functionals $\{f_N, f^\prime_N\}^O$ is admissible. 
\end{proof}

\begin{lemma}\label{HCS-CME} The holomorphic Chern-Simons functional satisfies the following classical master equation
$$
  \{HCS_\infty, HCS_\infty\}^O=0,
$$
or equivalently
$$
   \dbar \CS_\infty+{1\over 2}\{\CS_\infty, \CS_\infty\}^O=0. 
$$
Here the bracket $\{-,-\}$ is defined by Lemma \ref{bracket-admissible-local}.
\end{lemma}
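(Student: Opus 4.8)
The plan is to reduce the claimed identity to the classical master equation at finite $N$ and then transport it to $N=\infty$ via Lemma~\ref{bracket-admissible-local}; the two displayed forms are then related by separating off the kinetic term.

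First I would check that $\{HCS_N, HCS_N\}^O = 0$ for each fixed $N$. For this, note that $\A_N[-1]=\Omega^{0,*}(X,\gl_N)$ is a differential graded Lie algebra with differential $\dbar$ and bracket induced from $\gl_N$, and that $\omega$ is a graded-symmetric pairing of degree $2-d$ which is invariant in the appropriate sense: invariance under the Lie bracket follows from graded cyclicity of $\Tr$ together with the graded commutativity of products of $(0,*)$-forms, and invariance under $\dbar$ is the identity $\int_X \dbar(\alpha\beta)\wedge\Omega_X = 0$, valid because $X$ is compact, $\Omega_X$ is $\dbar$-closed, and by Stokes. Given such a cyclic DGLA, the cubic functional $HCS_N = \tfrac{1}{2}\omega(\dbar A, A)+\CS_N$ is well known to solve the classical master equation: expanding $\{HCS_N, HCS_N\}^O$ bilinearly, the term quadratic in the kinetic piece vanishes by $\dbar^2=0$, the cross term vanishes because $\dbar$ is a graded derivation of the bracket, and the term quadratic in $\CS_N$ vanishes by the graded Jacobi identity together with invariance of $\omega$. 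It is used here that $d$ is odd, so that $d-2$ is odd and $\{-,-\}^O$ is an odd Poisson bracket --- which is exactly the setting in which the cubic master equation faithfully encodes the DGLA axioms.

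Next I would pass to the admissible setting. The proof of Lemma~\ref{bracket-admissible-local} shows that $i^N_M(\{f_N, f'_N\}^O) = \{i^N_M f_N, i^N_M f'_N\}^O$ for admissible collections, so the admissible local functional $\{HCS_\infty, HCS_\infty\}^O$ is by definition represented by the collection $\{\,\{HCS_N, HCS_N\}^O\,\}_N$. Each entry vanishes by the previous paragraph, hence $\{HCS_\infty, HCS_\infty\}^O = 0$. (Alternatively one could argue directly at $N=\infty$, using that $\Omega^{0,*}(X)[1]$ equipped with $\omega$ is a cyclic dg algebra and that $\CS_\infty$ is the associated cyclic cubic tensor; but the reduction to finite $N$ is cleaner.)

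Finally, to obtain the equivalent formulation I would split $HCS_\infty = I_{kin} + \CS_\infty$ with $I_{kin}(A) = \tfrac{1}{2}\omega(\dbar A, A)$. The Hamiltonian vector field of $I_{kin}$ is the linearized equation of motion $\dbar$, so $\{I_{kin}, -\}^O$ acts on $\Oo_{adm,loc}(\A_\infty)$ as $\pm\dbar$ and, in particular, $\{I_{kin}, I_{kin}\}^O = 0$ since $\dbar^2 = 0$. Expanding $0 = \{HCS_\infty, HCS_\infty\}^O$ bilinearly and using this then gives, up to an overall nonzero constant, the identity $\dbar\CS_\infty + \tfrac{1}{2}\{\CS_\infty, \CS_\infty\}^O = 0$. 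I expect the one genuinely fussy point to be the sign and degree-shift bookkeeping: verifying the cyclic-DGLA master equation in the precise conventions of this paper (with $\Omega^{0,1}$ placed in degree $0$ and $\omega$ of degree $2-d$), and fixing the sign in $\{I_{kin}, -\}^O = \pm\dbar$. Everything else is formal, given Lemma~\ref{bracket-admissible-local}.
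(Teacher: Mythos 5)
Your proposal is correct and follows essentially the same route as the paper: the paper simply observes that $\dbar+\{\CS_\infty,-\}^O$ is the Chevalley--Eilenberg differential of the DGLA $\A_N$ and hence squares to zero, which is exactly the content of your term-by-term expansion ($\dbar^2=0$, $\dbar$ a derivation of the bracket, graded Jacobi plus invariance of the cyclic pairing $\omega$), together with the same reduction to finite $N$ via the compatibility of the bracket with the restriction maps $i^N_M$. Your unpacking of the two equivalent displayed forms and the parity remark about $d$ odd are consistent with the paper's conventions.
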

\begin{proof} The functionals $\CS_N$ defines $
      \Sym^3(\A_N)\to \C
$ in terms of its components by
$$
   A_1\otimes A_2\otimes A_3\to \int_X \Tr(A_1[A_2, A_3])\wedge \Omega_X. 
$$
It follows that $\{HCS_\infty,-\}=\dbar +\{\CS_\infty\}$ represents the Chevalley-Eilenberg differential of the DGLA $\A_N$, which squares zero implying the Lemma. 

\end{proof}

Therefore $\dbar+\{\CS_\infty,-\}^O$ defines a differential on $\Oo_{adm,loc}(\A_\infty)$, generating the infinitesimal gauge transformation in the \BV formalism.

We have a graded vector space $\op{Cyc}(\Omega^{0,\ast}(X)[1])^\vee$ of collections of cyclically-invariant linear functions on tensor powers of $\Omega^{0,\ast}(X)[1]$.  We have seen that we can identify $\op{Cyc}(\Omega^{0,\ast}(X)[1])^\vee$ with the weight $1$ admissible functions on $\Omega^{0,\ast}(X)[1] \otimes \gl_\infty$. Let $\op{Cyc}^\ast_{loc}(\Omega^{0,\ast}(X)[1])$ refer to the subcomplex which corresponds to weight $1$ elements of $\Oo_{adm,loc}(\Omega^{0,\ast}(X)[1]\otimes \gl_\infty)$. Concretely, an element of $\op{Cyc}^\ast_{loc}(\Omega^{0,\ast}(X)[1])$ is a collection of cyclically-invariant linear functionals on tensor powers of $\Omega^{0,\ast}(X)$, where each such functional can be written as a finite sum of functionals of the form
$$
\alpha_1 \otimes \dots \otimes \alpha_n \mapsto \int_X \d \op{Vol} D_1 \alpha_1 \dots D_n \alpha_n
$$
where each $D_i$ is a differential operator from $\Omega^{0,\ast}(X)$ to $\cinfty(X)$. 
\begin{lemma}
There is an isomorphism of cochain complexes 
$$\Oo_{adm}(\Omega^{0,\ast}(X)[1] \otimes \gl_\infty) \iso \what{\Sym}^{>0} \op{Cyc}(\Omega^{0,\ast}(X)[1])^\vee$$
where on the left hand side the complex has the differential $\dbar + \{I_{CS},-\}$ and on the right hand side the differential on $\op{Cyc}\left( \Omega^{0,\ast}(X)[1]\right)^\vee$ makes it a shift by one of the cyclic cochain complex of $\Omega^{0,\ast}(X)$.  
\end{lemma}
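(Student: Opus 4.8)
The plan is to deduce the statement from the theorem of Tsygan \cite{Tsy83} and Loday--Quillen \cite{LodQui84} quoted above, applied to the differential graded algebra $A=\Omega^{0,\ast}(X)$ (with wedge product and differential $\dbar$), combined with the identification of the differential already obtained in Lemma \ref{HCS-CME}. Specializing $V=\Omega^{0,\ast}(X)[1]$ in Lemma \ref{lemma_cyclic_admissible} gives, for free, an isomorphism of bigraded algebras
$$
\Oo_{adm}(\Omega^{0,\ast}(X)[1]\otimes\gl_\infty)\iso\what{\Sym}^{>0}\op{Cyc}(\Omega^{0,\ast}(X)[1])^\vee,
$$
so the entire content of the lemma is to check that this isomorphism intertwines the two differentials.

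\textbf{Recognizing the source differential.} First I would note that $\dbar+\{I_{CS},-\}$, in which $I_{CS}$ is the cubic holomorphic Chern--Simons term $\CS_\infty$, is exactly the Chevalley--Eilenberg differential of the DGLA $\Omega^{0,\ast}(X)\otimes\gl_N$ on $\Oo(\Omega^{0,\ast}(X)[1]\otimes\gl_N)=C^\ast(\Omega^{0,\ast}(X)\otimes\gl_N)$; this is precisely the computation in the proof of Lemma \ref{HCS-CME}, and by Lemma \ref{bracket-admissible-local} it commutes with the restriction maps $i^N_M$ and hence descends to $\Oo_{adm}$. Since $\d_{CE}$ is a derivation and the isomorphism above is an isomorphism of algebras, the transported differential is a derivation of $\what{\Sym}^{>0}\op{Cyc}(\Omega^{0,\ast}(X)[1])^\vee$, so it is determined by its restriction to the generating subspace $\op{Cyc}(\Omega^{0,\ast}(X)[1])^\vee$ of weight-one functionals. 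I would then check that $\d_{CE}$ preserves weight one: the $\dbar$-part is $\gl_N$-linear and acts on a single trace without splitting it, while for $\{\CS_\infty,-\}$ one uses that the Poisson kernel of $\{-,-\}^O$ is $\delta_0\sum_{i,j}e_{ij}\otimes e_{ji}$ and that contracting one matrix slot of a single trace against one slot of $\CS_\infty\sim\Tr(A^3)$ fuses the two traces into one longer trace, via $\Tr(Xe_{ij})\Tr(e_{ji}Y)=\Tr(XY)$ summed over $i,j$. Hence $\d_{CE}$ restricts to a differential on $\op{Cyc}(\Omega^{0,\ast}(X)[1])^\vee$.

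\textbf{Matching with the cyclic cochain complex.} It then remains to identify this weight-one differential with the differential of the cyclic cochain complex $CC^\ast(\Omega^{0,\ast}(X))$, shifted by one: the $\dbar$-term yields the internal part of the cyclic differential, and the trace-fusing term yields the Hochschild-type boundary built from the associative product of $\Omega^{0,\ast}(X)$. Matching these with the standard formulas --- keeping track of the Koszul signs from the shift $A\mapsto A[1]$, the cyclic-symmetrization signs in $\op{Cyc}$, and the sign conventions in the bracket --- is precisely what the quoted theorem asserts for $\Omega^{0,\ast}(X)$, read in the appropriate topologically completed sense, so at that point one simply invokes it. For the local version used later in the paper, one observes that the wedge product and $\dbar$ are (bi)differential operators, so the differential preserves the subcomplex $\op{Cyc}^\ast_{loc}(\Omega^{0,\ast}(X)[1])$ and the isomorphism restricts to $\Oo_{adm,loc}$; the one nontrivial point --- that a local admissible functional is a symmetric product of \emph{local} cyclic words --- is handled exactly as in the proof of Lemma \ref{lemma_cyclic_admissible}, using that a product of traces is local precisely when each of the factor traces is.

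\textbf{Main obstacle.} The step I expect to be the most delicate is the sign bookkeeping of the previous paragraph: confirming that the transported $\d_{CE}$ agrees with the cyclic differential \emph{on the nose}, rather than up to an overall sign or a weight-dependent rescaling, means carefully reconciling the d\'ecalage between $\op{Cyc}(A[1])^\vee$ and $CC^\ast(A)[1]$ with the signs in the Poisson bracket and the Chevalley--Eilenberg differential. A minor secondary point is that $\what{\Sym}^{>0}$ must be the completion of the symmetric algebra appropriate to the nuclear space $\op{Cyc}(\Omega^{0,\ast}(X)[1])^\vee$, so that the derivation extension of the differential is well defined and the isomorphism is continuous.
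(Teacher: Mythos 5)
Your proposal is correct and follows essentially the same route as the paper: the paper's proof likewise consists of observing that $\dbar + \{I_{hCS},-\}$ is the Chevalley--Eilenberg differential of the dg Lie algebra $\Omega^{0,\ast}(X)\otimes\gl_N$ and then invoking the Tsygan--Loday--Quillen theorem. Your additional remarks on weight preservation, trace fusion, locality, and signs are elaborations of details the paper leaves implicit, not a departure in method.
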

\begin{proof}
This follows from the Tsygan-Loday-Quillen theorem \cite{Tsy83,LodQui84} and the fact that the differential $\dbar + \{I_{hCS},-\}$ on $\Oo(\Omega^{0,\ast}(X) \otimes \gl_N[1])$ is the Chevalley-Eilenberg differential associated to the dg Lie algebra $\Omega^{0,\ast}(X) \otimes \gl_N$. 
\end{proof}
It follows from this that the differential on $\Cyc^\ast_{loc}(\Omega^{0,\ast}(X)[1])$ given by $\dbar + \{I_{hCS},-\}$ is a local  version of the cyclic cochain differential. 

\section{Coupling with Kodaira-Spencer theory of gravity}
\subsection{Coupled system} We are interested in compatible quantizations of $\gl_N$ holomorphic Chern-Simons theory on $X$ coupled with Kodaira-Spencer gauge theory (BCOV theory).  We will write down an explicit formula for a classical action which couples the two systems.

Recall that the field content of BCOV theory is given by complex
$$
   (\PV(X)[[t]][2], Q)
$$
where 
$$
\PV(X)=\bigoplus_{i,j}\PV^{i,j}(X), \quad \PV^{i,j}(X)=\Omega^{0,j}(X, \wedge^i T_X)
$$
is the space of smooth polyvector fields on $X$. $\PV(X)$ is a bi-graded algebra equipped with two differentials 
$$
  \dbar: \PV^{i,j}(X)\to \PV^{i,j+1}(X), \quad \pa: \PV^{i,j}(X)\to \PV^{i-1,j}(X)
$$ 
where $\pa$ is the divergence operator associated with the holomorphic volume form $\Omega_X$. The differential $Q$ is given by
$$
   Q=\dbar+t\pa,
$$
where $t$ represents the gravitational descendant. Our convention for grading is that elements of $\PV(X)[[t]][2]$ in $\PV^{i,j}(X)t^k$ has degree $i+j+2k-2$. 

\begin{definition} The space of fields for the $\gl_N$ holomorphic Chern-Simons theory coupled to BCOV theory is the complex 
$$
    \E_N=\PV(X)[[t]][2]\oplus \bracket{\Omega^{0,*}(X)\otimes \gl_N[1]}
$$
with differential $Q_N=Q\oplus \dbar$. 
\end{definition}

The coupling between polyvector fields and holomorphic Chern-Simons theory in the classical theory is encoded by a natural map 
$$
\PV(X)[[t]][2] \to \op{Cyc}^\ast_{loc} (\Omega^{0,\ast}(X)[1]).
$$
The full tree-level coupling is encoded in a map like this which is a map of $L_\infty$ algebras. Willwacher-Calaque's \cite{WilCal08} cyclic formality theorem provides such a map. For our purposes, we need only to have a map of cochain complexes, so we only need the leading term of the $L_\infty$ map constructed in \cite{WilCal08}.

We will describe very explicitly such a cochain map.  The formula will involve certain integrals of differential forms on some auxiliary spaces. 

Let $C_m=\{\theta_i, 1\leq i\leq m| 0<\theta_1<\theta_2<\cdots <\theta_m<1\}$, which can be viewed as parametrizing $m+1$ points on the circle up to rotation. There exists a cyclic automorphism induced by cyclically permuting the points
$$
  \sigma: C_m\to C_m, \quad \sigma^{-1}(\theta_i)=\begin{cases} \theta_{i+1}-\theta_1 & \mbox{if}\ 1\leq i\leq m-1 \\
  1-\theta_1 & i=m \end{cases}
$$
We introduce a $1$-form on $C_m$ by
$$
  \eta=(\theta_2-\theta_1)d\theta_1+\cdots+ (\theta_m-\theta_{m-1})d\theta_{m-1}+(1-\theta_m)d\theta_m 
$$
and a $2$-form on $C_m$
$$
  \omega=-d\eta. 
$$
It is easy to see that under cyclic permutation
$$
  \sigma^*(\eta)=\eta-d\theta_1, \quad \sigma^*(\omega)=\omega. 
$$

Let $\Sigma_m=\{0,1,\cdots, m, \bullet\}$ be the set with $m+2$ elements, ordered by $0<1<\cdots<m<\bullet$. The additional element $\bullet$ refers to the origin of the disk as in \cite{WilCal08}. Given a subset $\Gamma=\{i_1, \cdots, i_k\}\subset \Sigma_m$ indexed with increasing order, we define a differential form on $C_m$ by
$$
     \omega_\Gamma=e_{i_1}\wedge \cdots \wedge e_{i_k}, \text{ where } \begin{cases} e_j=d\theta_j & 1\leq j\leq m\\
     e_j=0& j=0\\
     e_j=\eta & j=\bullet \end{cases}.
$$
The choice of $\Gamma$ also defines a map
\begin{align*} 
 \PV(X) \times \Omega^{0,\ast}(X, \gl_N)^{\otimes m+1} & \to \C\\ 
\mu \otimes A_0 \otimes \dots \otimes A_m &\mapsto \mu_\Gamma(A_0,\dots,A_m) \\
 \end{align*}
where $\mu_\Gamma$ is defined as follows.  Let $\Gamma = \{i_1,\dots,i_k\} \subset \Sigma_m$. We define an operator $\partial_{i,\Gamma}$ for $i = 0,\dots, m$ by $\partial_{i,\Gamma} = \partial$ if $i \in \Gamma$ and $\partial_{i,\Gamma} = \op{Id}$ if $i \not \in \Gamma$. Then,
$$ 
\mu_\Gamma(A_0, \cdots, A_m) = \begin{cases} \int_{X} \op{Tr}\left(  \partial_{0,\Gamma}A_0\wedge \dots \wedge \partial_{m,\Gamma} A_m\right) \wedge \left( \mu \lrcorner \Omega_X\right) & \text{ if } \bullet \not \in \Gamma \\
 \int_{X}\left( \op{Tr} \partial_{0,\Gamma}A_0\wedge \dots \wedge \partial_{m,\Gamma} A_m\right) \wedge \left( \partial \mu \lrcorner \Omega_X\right) & \text{ if } \bullet  \in \Gamma
\end{cases} 
$$
If $\mu \in \PV^{r,\ast}(X)$, then this expression is non-zero only when $r = k$.

\begin{lemma} 
Fixing $\mu\in \PV(X)$, the functional
$$
  \sum_{\Gamma\subset \Sigma_m}\pm \int_X \int_{C_m} \omega^k\wedge \omega_\Gamma \wedge \mu_\Gamma(A_0, A_1, \cdots, A_m)
$$
is cyclically invariant.
\end{lemma}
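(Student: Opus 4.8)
Since the cyclic group $\Z/(m+1)$ acting on the tensor slots $A_0,\dots,A_m$ is generated by a single cyclic shift, it suffices to prove that the functional is unchanged, up to the Koszul sign dictated by the grading, under the shift $A_0\otimes\cdots\otimes A_m \mapsto \pm\, A_1\otimes\cdots\otimes A_m\otimes A_0$. The plan is to realize this combinatorial shift geometrically by the automorphism $\sigma$ of $C_m$: after replacing the tuple by its shift, change variables on $C_m$ using $\sigma$, rewriting $\int_{C_m}(-) = \pm\int_{C_m}\sigma^*(-)$, and then match the resulting integrand term by term against the original sum over subsets $\Gamma\subset\Sigma_m$.

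Three pieces of the integrand have to be followed through this substitution. First, $\sigma^*\omega=\omega$, so the factor $\omega^k$ is carried to itself. Second, the trace factor inside $\mu_\Gamma(A_0,\dots,A_m)$, namely $\op{Tr}(\partial_{0,\Gamma}A_0\wedge\cdots\wedge\partial_{m,\Gamma}A_m)$ wedged with $\mu\lrcorner\Omega_X$ or with $\partial\mu\lrcorner\Omega_X$, is cyclically invariant up to Koszul signs; hence $\mu_\Gamma$ on the shifted tuple equals, up to sign, $\mu_{\Gamma'}$ on the original tuple, where $\Gamma'$ is the image of $\Gamma$ under the cyclic relabeling of $\{0,1,\dots,m\}$ with $\bullet$ held fixed — in particular the dichotomy $\bullet\in\Gamma$ versus $\bullet\notin\Gamma$, i.e.\ the $\mu$ versus $\partial\mu$ alternative, is preserved. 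Third, from the defining formulas for $\sigma$ one computes the pullbacks $\sigma^*d\theta_j$ and $\sigma^*\eta=\eta-d\theta_1$ explicitly; expanding $\sigma^*\omega_\Gamma$ in terms of these, one finds $\sigma^*\omega_\Gamma = \pm\,\omega_{\Gamma'} + (\text{correction terms})$, where each correction term contains a fixed $1$-form (coming from the ``$-d\theta_1$'' in the pullback of $\eta$ and of the $d\theta_j$) as a wedge factor. The identity $e_0=0$, together with the precise shape of $\eta$, is exactly what makes the leading term come out to be the relabeled $\omega_{\Gamma'}$ matching the trace computation.

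Collecting the three computations, the leading contributions reassemble: reindexing the sum by $\Gamma\mapsto\Gamma'$ converts $\sum_\Gamma \pm\,\omega_{\Gamma'}\wedge\mu_{\Gamma'}$ back into the original functional, which is what pins down the precise choice of signs $\pm$ in the sum. It remains to show that the correction terms cancel once summed over all $\Gamma$ with those signs — this is the crux of the argument. The $1$-form $\eta$ is engineered precisely for this: the cancellation is a telescoping among subsets differing by a single index, the combinatorial analogue of the identities in cyclic homology that make Connes' operator $B$ into a differential. Assembling everything, the shifted functional equals the original times the expected overall sign, which is cyclic invariance.

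The step I expect to be the genuine obstacle is this last cancellation, inseparable from the sign bookkeeping it requires: one has to reconcile at once the orientation sign of $\sigma$ on $C_m$, the Koszul signs from permuting the graded fields $A_i$, the signs from reordering the wedge factors of $\omega_\Gamma$ after relabeling, and the sign convention packaged in the ``$\pm$'' in the statement. A clean telescoping singles out one such convention, and checking that it is consistent simultaneously in the two cases $\bullet\in\Gamma$ and $\bullet\notin\Gamma$ — and with the two formulas for $\mu_\Gamma$ — is where the real work lies.
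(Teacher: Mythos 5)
Your setup matches the paper's: reduce to a single cyclic shift, change variables on $C_m$ by $\sigma$ using $\sigma^*\omega=\omega$ and $\sigma^*\eta=\eta-d\theta_1$, and observe that $\omega_{\sigma(\Gamma)}$ differs from $\sigma^*\omega_\Gamma$ by correction terms each carrying a factor of $d\theta_1$. The leading terms reassemble into the original functional exactly as you say. But the mechanism you propose for killing the correction terms is not the right one, and I do not think it can be made to work. You describe the cancellation as ``a telescoping among subsets differing by a single index,'' i.e.\ a purely combinatorial identity in the forms on $C_m$. For a \emph{fixed} $\Gamma$, the correction terms $\sum_{j\in\sigma(\Gamma)}\pm\,d\theta_1\wedge\omega_{\sigma(\Gamma)\setminus\{j\}}$ are all multiplied by the \emph{same} functional $\mu_\Gamma$, and they certainly do not integrate to zero over $C_m$ on their own (take $\Gamma=\{\bullet\}$: the correction is $d\theta_1$, and $\int_{C_m}\omega^k\wedge d\theta_1\neq 0$ in general). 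Since the $\mu_\Gamma$ for distinct $\Gamma$ are genuinely different functionals of $(\mu,A_0,\dots,A_m)$ (different placements of $\partial$ on the inputs), no rearrangement of the $\omega$-factors alone can produce the cancellation.

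What actually kills the correction terms in the paper is integration by parts on $X$: since $\int_X\partial(\cdot)=0$, the Leibniz rule for $\partial$ applied to $\partial_{0,\Gamma}A_0\wedge\cdots\wedge\partial_{m,\Gamma}A_m\wedge(\mu\lrcorner\Omega_X)$ yields the relation $\sum_{j\in\Sigma_m\setminus\Gamma}\pm\,\mu_{\Gamma\cup\{j\}}=0$ (the $j=\bullet$ term being the one where $\partial$ lands on $\mu$). This converts the correction terms coming from $\Gamma$ with $0\in\Gamma$ (where $\partial$ sits on $A_0$, and only the $j=1$ term survives because $e_1=d\theta_1$ already appears in $\omega_{\sigma(\Gamma)}$) into minus the sum of the correction terms with $0\notin\Gamma$, and the two groups cancel. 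So the essential ingredient is a relation \emph{on $X$} among the $\mu_\Gamma$, not an identity on $C_m$; the form $\eta$ is engineered so that the $d\theta_1$-corrections are exactly the ones this relation can absorb. You correctly flagged this step as the crux, but the tool you reached for would fail.
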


\begin{rmk} Here $\pm$ are some signs coming from permuting graded objects, which will not be used in the current paper. See \cite{WilCal08} for details. 

\end{rmk}

\begin{proof} Let $\sigma = (0,1,\dots,m)$ be the cyclic permutation of the set $\{0,1,\dots,m\}$.  We extend $\sigma$ to an isomorphism of sets $\Sigma_m \to \Sigma_m$ by declaring that $\sigma$ sends $\bullet$ to $\bullet$. Then, $\sigma$ acts on the subsets $\Gamma \subset \Sigma_m$.  Explicitly, if $\Gamma=\{i_1, \cdots, i_k\}$ doesn't contain $\bullet$, then 
$$
  \sigma(\Gamma)=\begin{cases} \{i_1+1,\cdots, i_k+1\} & i_k<m  \\
         (0, i_1+1, \cdots, i_{k-1}+1)& i_k=m \end{cases}
$$
If $\Gamma=\{\Gamma^\prime, \bullet\}$, then 
$$
  \sigma(\Gamma)=\{\sigma(\Gamma^\prime), \bullet\}. 
$$
The lemma amounts to showing that
$$
  \sum_{\Gamma\subset \Sigma_m}\pm \int_X \int_{C_m}\omega^k\wedge \omega_{\Gamma}\wedge \mu_{\sigma^{-1}(\Gamma)}=  \pm \sum_{\Gamma\subset \Sigma_m}\int_X \int_{C_m} \omega^k\wedge \omega_{\Gamma}\wedge \mu_{\Gamma}
$$
where we have suppressed the inputs $A_0, \cdots, A_m$ for simplicity. On the other hand
\begin{align*}
   \sum_{\Gamma\subset \Sigma_m}\pm \int_X \int_{C_m}\omega^k\wedge \omega_{\Gamma}\wedge \mu_{\sigma^{-1}(\Gamma)}
   = \sum_{\Gamma\subset \Sigma_m}\pm \int_X \int_{C_m}\omega^k\wedge \omega_{\sigma(\Gamma)}\wedge \mu_{\Gamma}
\end{align*}
while it is direct to check that 
\begin{align*}
 \omega_{\sigma(\Gamma)}=\sigma^* \omega_\Gamma+\sum_{j\in \sigma(\Gamma)}\pm d\theta_1\wedge\omega_{\sigma(\Gamma)\backslash\{j\}}.
\end{align*}
The first term gives 
$$
\sum_{\Gamma\subset \Sigma_m}\pm \int_X \int_{C_m}\sigma^*(\omega^k\wedge \omega_{\Gamma})\wedge \mu_{\Gamma}=
\sum_{\Gamma\subset \Sigma_m}\pm \int_X \int_{C_m}\omega^k\wedge \omega_{\Gamma}\wedge \mu_{\Gamma}.
$$
Therefore we need to show that 
$$
\sum_{\Gamma\subset \Sigma_m}\sum_{j\in \sigma(\Gamma)}\pm \int_X \int_{C_m}\omega^k\wedge d\theta_1\wedge\omega_{\sigma(\Gamma)\backslash\{j\}}\wedge \mu_{\Gamma}
=0.
$$
In fact, 
\begin{align*}
&\sum_{\Gamma\subset \Sigma_m}\sum_{j\in \sigma(\Gamma)}\pm \int_X \int_{C_m}\omega^k\wedge d\theta_1\wedge\omega_{\sigma(\Gamma)\backslash\{j\}}\wedge \mu_{\Gamma}\\
=&\sum_{0\notin \Gamma}\sum_{j\in \sigma(\Gamma)}\pm \int_X \int_{C_m}\omega^k\wedge d\theta_1\wedge\omega_{\sigma(\Gamma)\backslash\{j\}}\wedge \mu_{\Gamma}+\sum_{0\in \Gamma}\pm \int_X \int_{C_m}\omega^k\wedge d\theta_1\wedge\omega_{\sigma(\Gamma)\backslash\{1\}}\wedge \mu_{\Gamma}
\end{align*}
while integration by parts on $X$ implies that
\begin{align*}
  \sum_{0\in \Gamma}\pm \int_X \int_{C_m}\omega^k\wedge d\theta_1\wedge\omega_{\sigma(\Gamma)\backslash\{1\}}\wedge \mu_{\Gamma}&=-\sum_{0\notin \Gamma}\sum_{j\notin \Gamma}\pm \int_X \int_{C_m}\omega^k\wedge d\theta_1\wedge\omega_{\sigma(\Gamma)}\wedge \mu_{\Gamma+\{j\}}\\
  &=-\sum_{0\notin \Gamma}\sum_{j\in \sigma(\Gamma)}\pm \int_X \int_{C_m}\omega^k\wedge d\theta_1\wedge\omega_{\sigma(\Gamma)\backslash\{j\}}\wedge \mu_{\Gamma}.
\end{align*}
\end{proof}

\begin{definition}\label{defn-1-disk}
We define the disk coupling with one interior marked point by the local functional 
$$
  I^{1-disk}_N(t^k \mu, A)=\sum_{m\geq 0}\sum_{\Gamma\subset \Sigma_m}\pm \int_X \int _{C_m} \omega^k\wedge\omega_\Gamma\wedge  \mu_{\Gamma}(A,\cdots, A). 
$$
\end{definition}

If $\mu \in \PV^{r,*}(X)$, then the term in $I^{1-disk}_N$ involving the integral over $C_m$ vanishes unless $m=r+2k$. It follows that $ I^{1-disk}_N$ has cohomology degree
$$
  \deg  I^{1-disk}_N=3-d. 
$$

\begin{rmk} $C_m$ equivalently describes the configuration space on the disk with one interior point and $m+1$ boundary points. In \cite{WilCal08}, a coupled interaction with arbitrary number of interior and boundary points are described via a similar integration on the corresponding configuration space. In the current paper, we only need the leading term $I^{1-disk}_N$. The other disk interactions will be constructed from deformation theory. 

\end{rmk}

\subsection{Classical master equation}
Let $\Oo(\E_N), \Ool(\E_N), \Oo_{adm}(\E_\infty)$ and $\Oo_{adm, loc}(\E_\infty)$ be defined similarly as in Definition \ref{def-functional}. Recall that there exists a bracket $\{-,-\}^O$ on $\Ool(\A_N)$ which is well-defined on $\Oo_{adm,loc}(\A_N)$. It is easy to see that $\{-,-\}^O$ extends naturally to $\Ool(\E_N)$ and $\Oo_{adm,loc}(\E_\infty)$.  

The collection $\{I^{1-disk}_N\}$ defines an admissible function, which we will denote by $I^{1-disk}_\infty$. The differential $Q_N$ induces a differential on $\Oo(\E_N)$ which is well-defined on $\Oo_{adm}(\E_\infty)$ and we will denote it by $Q_\infty$. 

\begin{lemma}\label{lem-CME} The following classical master equation holds
\begin{align*}
  & Q_\infty \CS_\infty +{1\over 2}\{\CS_\infty, \CS_\infty\}^O=0\\
  & Q_\infty I^{1-disk}_\infty+\{\CS_\infty, I^{1-disk}_\infty\}^O=0. 
\end{align*}
\end{lemma}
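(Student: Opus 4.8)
The plan is to deduce both identities from the fact that $\{HCS_\infty,-\}^O = Q_\infty + \{\CS_\infty,-\}^O$ is, in the open sector, the Chevalley-Eilenberg differential of the DGLA $\A_N[-1]$, and to interpret the second identity as the closedness of $I^{1-disk}_\infty$ in the local cyclic cochain complex of $\Omega^{0,\ast}(X)$. The first equation is exactly Lemma \ref{HCS-CME}, so there is nothing new to prove there; I restate it only because it is the $m=0$ (purely open, no interior marked point) piece of the same structure. For the second equation, the first step is to package the data: by the Tsygan-Loday-Quillen theorem together with the lemma identifying $\Oo_{adm}(\Omega^{0,\ast}(X)[1]\otimes\gl_\infty)$ with $\what{\Sym}^{>0}\op{Cyc}(\Omega^{0,\ast}(X)[1])^\vee$, the operator $Q_\infty + \{\CS_\infty,-\}^O$ acting on the weight-one part is (a shift of) the cyclic cochain differential $b$ on $CC^\ast(\Omega^{0,\ast}(X))$, together with the internal differential $\dbar$; and $I^{1-disk}_\infty$, viewed as a linear functional of the BCOV field $t^k\mu$, is precisely a weight-one local cyclic cochain. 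So the identity $Q_\infty I^{1-disk}_\infty + \{\CS_\infty,I^{1-disk}_\infty\}^O=0$ asserts that the map $t^k\mu \mapsto I^{1-disk}_\infty(t^k\mu,-)$ intertwines the BCOV differential $\dbar + t\partial$ on $\PV(X)[[t]][2]$ with the (Hochschild/cyclic) differential on $\Cyc^\ast_{loc}(\Omega^{0,\ast}(X)[1])$, i.e.\ that it is a cochain map.

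The second step is to prove this cochain-map property directly from the explicit integral formula in Definition \ref{defn-1-disk}. The strategy is Stokes's theorem on the configuration space $C_m$. Acting with $Q_\infty + \{\CS_\infty,-\}^O$ on the term $\int_X\int_{C_m}\omega^k\wedge\omega_\Gamma\wedge\mu_\Gamma(A,\dots,A)$ produces three types of contributions: (i) the $\dbar$ acting on the $A$'s and on $\mu$ (via $\dbar\mu_\Gamma$), (ii) the insertion of the cubic bracket $[\cdot,\cdot]$ coming from $\{\CS_\infty,-\}^O$, which collides two adjacent boundary inputs, and (iii) the term $t\partial\mu$, which changes $k\mapsto k-1$ and $\Gamma\mapsto\Gamma\cup\{\bullet\}$ together with the extra power of $\omega$. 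I would show that (i) and (ii) together assemble into $\int_{C_m} d\bigl(\omega^k\wedge\omega_\Gamma\wedge(\cdots)\bigr)$ summed over $\Gamma$, which by Stokes equals a boundary integral over $\partial C_m$; the boundary strata of $C_m$ are exactly where two boundary points collide (matching the collisions in (ii)) or where a boundary point hits the interior point $\bullet$ (matching the $\partial\mu$ term in (iii) after using $\omega=-d\eta$ and $\omega_{\Gamma\cup\{\bullet\}}$ involves $\eta$). This is the same Stokes-theorem argument that underlies Willwacher-Calaque's construction \cite{WilCal08}, restricted to the leading ($L_\infty$-linear) term; the cyclic-invariance lemma proved just above guarantees the expression is well-defined on $C_m = (\text{ordered points})/(\text{rotation})$, and an analogous integration-by-parts manipulation on $C_m$ (paralleling the one used in that lemma's proof) handles the combinatorics of which $\Gamma$ contribute on each stratum.

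The third step is bookkeeping of signs and of the $\gl_N$-combinatorics: one checks that the single-trace structure is preserved (the bracket from $\{\CS_\infty,-\}^O$ merges two traces-of-one into a trace, keeping weight one and keeping the result admissible by Lemma \ref{bracket-admissible-local}), and that the degree count $\deg I^{1-disk}_N = 3-d$ is consistent with $\deg\CS_\infty = 3-d$ and $\deg Q_\infty = 1$, so the equation is homogeneous. Finally, since the identity holds for every $N$ compatibly with the restriction maps $i^N_M$ — which it does, because each term is built from traces and the bracket kernel $\delta_0\sum e_{ij}\otimes e_{ji}$ is compatible with the inclusions as in the proof of Lemma \ref{bracket-admissible-local} — it descends to the claimed identity for $I^{1-disk}_\infty$ and $\CS_\infty$ in $\Oo_{adm,loc}(\E_\infty)$.

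I expect the main obstacle to be the precise matching of boundary strata of $C_m$ with the two source terms: the collision-of-two-boundary-points strata must reproduce exactly the bracket term $\{\CS_\infty, I^{1-disk}_\infty\}^O$ with the right multiplicities and signs, and the boundary-point-hits-origin strata must reproduce exactly the $t\partial$ term, including the shift $\Gamma\mapsto\Gamma\cup\{\bullet\}$ and the conversion $\mu\lrcorner\Omega_X \leadsto \partial\mu\lrcorner\Omega_X$ in the definition of $\mu_\Gamma$. Getting the orientation conventions on $C_m$ and the Koszul signs from permuting the graded inputs $A_i$ and $\mu$ to line up is the delicate part; everything else is either quoted from \cite{WilCal08} or a routine consequence of the Tsygan-Loday-Quillen identification and Lemma \ref{HCS-CME}.
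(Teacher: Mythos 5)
Your proposal follows essentially the same route as the paper: the first identity is quoted from Lemma \ref{HCS-CME}, the $\dbar$-contribution vanishes on its own, and the second identity is established by Stokes's theorem on $C_m$, with the codimension-one collisions of adjacent boundary points producing the bracket term $\{\CS_\infty, I^{1-disk}_\infty\}^O$ and the relation $\omega = -d\eta$ producing the $t\partial$ contribution. One small correction of language: in the paper's model $C_m$ is a simplex of points on the circle only, so the $t\partial$ term does not arise from a ``boundary point hits $\bullet$'' stratum but from the fact that $\int_{C_m} d\left(\omega^k\wedge\omega_\Gamma\wedge\mu_\Gamma\right)$ is nonzero exactly when $\bullet\in\Gamma$ (as $\eta$ is the only non-closed factor) --- the mechanism you cite is the right one, only the geometric attribution differs.
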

\begin{proof} 
The first equation follows from Lemma \ref{HCS-CME}, since $Q_\infty \CS_\infty=\dbar \CS_\infty$. The second equation follows essentially from an easy part of \cite{WilCal08}. We sketch a proof here for completeness. It is easy to see that $\dbar I^{1-disk}_\infty=0$. Therefore it amounts to showing that
\begin{align*}
  &-\sum_{\Gamma\subset \Sigma_m}\pm \int_X \int_{C_m} \omega^{k+1}\wedge \omega_\Gamma \wedge (\pa\mu)_\Gamma(A_0, A_1, \cdots, A_m)\\
  =&  \sum_{\Gamma\subset \Sigma_{m-1}}\pm \int_X \int_{C_{m-1}} \omega^k\wedge \omega_\Gamma \wedge (\mu_\Gamma(A_0 A_1, \cdots, A_m)-\mu_{\Gamma}(A_0, A_1A_2, \cdots, A_m)+\\
  &\cdots \pm \mu_{\Gamma}(A_mA_0, A_1, \cdots, A_{m-1})). 
\end{align*}
The right hand side corresponds to the integration on the boundary of $C_m$
\begin{align*}
&\sum_{\Gamma\subset \Sigma_{m-1}}\pm \int_X \int_{C_{m-1}} \omega^k\wedge \omega_\Gamma \wedge (\mu_\Gamma(A_0 A_1, \cdots, A_m)-\mu_{\Gamma}(A_0, A_1A_2, \cdots, A_m)\\
&+\cdots \pm \mu_{\Gamma}(A_mA_0, A_1, \cdots, A_{m-1}))\\
=&\sum_{\Gamma\subset \Sigma_m}\pm \int_X \int_{\pa C_m}\omega^k \wedge \omega_\Gamma\wedge \mu_\Gamma(A_0, A_1, \cdots, A_m)\\
=&\sum_{\Gamma\subset \Sigma_m}\pm \int_X \int_{C_m} d(\omega^k\wedge \omega_\Gamma\wedge \mu_\Gamma(A_0, A_1, \cdots, A_m))\\
=&\sum_{\bullet \in \Gamma}\pm \int_X \int_{C_m} d(\omega^k\wedge \eta\wedge \omega_{\Gamma\backslash \{\bullet\}}\wedge (\pa \mu)_{\Gamma\backslash{\bullet}}(A_0, A_1, \cdots, A_m))\\
=&-\sum_{\Gamma\subset \Sigma_m}\pm \int_X \int_{C_m} \omega^{k+1}\wedge \omega_\Gamma \wedge (\pa\mu)_\Gamma(A_0, A_1, \cdots, A_m)
\end{align*}
as desired. Here we have used the fact that $ d(\omega^k\wedge \omega_\Gamma\wedge \mu_\Gamma(A_0, A_1, \cdots, A_m))=0$ if $\Gamma$ doesn't contain $\bullet$. 
\end{proof}

\begin{rmk}The classical master equation says that $ I^{1-disk}_\infty$ describes a first-order deformation of the holomorphic Chern-Simons functional $HCS_{\infty}$. As explained in the introduction, it gives in fact the universal first-order deformation. 

\end{rmk}

\subsection{An alternative approach to constructing the interaction $I^{1-disk}$.} 
We have presented the explicit formula for the interaction $I^{1-disk}$ which arises from a cochain map between $\PV(X)[[t]]$ and cyclic cochains of $\Omega^{0,\ast}(X)$.  One can, however, use a simple abstract argument to prove the existence of the desired $I^{1-disk}$.  In this section we will sketch this argument to give a conceptual understanding of $I^{1-disk}$.

By the Hochschild-Kostant-Rosenberg theorem \cite{Kon97} there is an isomorphism between $\PV(X)$ and the local Hochschild cochain complex $HC^\ast_{loc}(\Omega^{0,\ast}(X))$. This local Hochschild cochain complex is built from Hochschild cochains which are sequences of maps
$$
\Omega^{0,\ast}(X)^{\otimes n} \to \Omega^{0,\ast}(X)
$$
which are poly-differential operators. The integration map on $X$ allows us to identify local Hochschild cochains of $X$ with a local version of the linear dual of the Hochschild chains on $X$.  The latter consists of collections of functionals
$$
\Omega^{0,\ast}(X)^{\otimes n+1} \to \C
$$
which are local in the sense used before, and equipped with a differential which is dual to the Hochschild chain differential.  In this way, the Connes $B$-operator can be viewed as an operator on $HC^\ast_{loc}(\Omega^{0,\ast}(X))$.  

A simple check shows us that the HKR map
$$
\PV(X) \to HC^\ast_{loc}(\Omega^{0,\ast}(X))
$$
intertwines the operator $\partial$ with the Connes $B$-operator.  It follows that we get a HKR map
$$
\PV(X)[[t]] \to HC^\ast_{loc}(\Omega^{0,\ast}(X))[[t]]
$$
which intertwines the differential $\dbar + t \partial$ with the differential $\d + t B$, where $\d$ is the Hochschild differential and $B$ is the Connes operator.

The final step is to use the fact that there is a quasi-isomorphism $HC^\ast_{loc}(\Omega^{0,\ast}(X)) [[t]]$ and the local cyclic cochain complex of $\Omega^{0,\ast}(X)$.  This last fact is a local version of the standard theorem comparing two different models for cyclic cohomology: one based on a double complex, and one based on collections of cyclically-invariant multi-linear maps.   

\section{Quantum open-closed BCOV theory}
\label{quantum-BCOV}
In this section, $X$ will be a compact Calabi-Yau of dimension $d$ \emph{odd}.  We will define what it means to construct the quantum open-closed BCOV theory on $X$.  

\subsection{Regularization} Let us fix a K\"{a}hler metric on $X$, which gives rise to the adjoint $\dbar^*$ on both $\PV(X)$ and $\Omega^{0,*}(X)\otimes \gl_N[1]$. Associated to the corresponding Laplacian $\dbar \dbar^*+\dbar^*\dbar$, we have the heat kernels 
\begin{align*}
    K_L^{C}\in \PV(X)\otimes \PV(X)\\
    K_L^{O, N}\in (\Omega^{0,*}(X)[1]\otimes \gl_N)^{\otimes 2}
\end{align*}
where $L$ is the time parameter for the heat kernel. The superscript $C$ refers to ``closed string" sector while $O$ refers to ``open string" sector. They can be viewed as the regularization of the delta-function distribution which represents the ultraviolet singularity in general QFT. We will treat both as elements in $\Sym^2(\E_N)$. In terms of components, we have
$$
  K_L^C\in \bigoplus_{i,j} \PV^{i,j}(X)\otimes \PV^{d-i,d-j}(X), \quad  K_L^{O, N}\in \bigoplus_i (\Omega^{0,i}(X)\otimes \gl_N)\otimes (\Omega^{0,d-i}(X)\otimes \gl_N). 
$$
Then with respect to the cohomology degree on $\E_N$, we have
$$
  \deg (K_L^{C})=2d-4, \quad  \deg(K_L^{O, N})=d-2.
$$

\begin{definition} We define the effective propagators at scale $L$ by the smooth kernels
\begin{align*}
  P(\epsilon, L)^C=\int_\epsilon^L (\pa \dbar^*\otimes 1)K^C_t dt\\
  P(\epsilon, L)^{O,N}=\int_\epsilon^L (\dbar^*\otimes 1)K_t^{O,N}. 
\end{align*}
\end{definition}
Note that 
$$
\deg   P(\epsilon, L)^C=2d-6, \quad \deg P(\epsilon, L)^{O,N}=d-3. 
$$

In general, every symmetric element $\Phi$ of $\Sym^k(V)$ defines an order $k$ differential operator $\pa_{\Phi}$ on $\Oo(V)$. This is the unique order $k$ operator whose restriction to $\Sym^k(V^\vee)$ is given by pairing with $\Phi$. This is generalized to the setting for smooth kernels and functions being distributions. In particular, the effective propagator induces order two operators
$$
   \pa_{P(\epsilon, L)^C}, \pa_{P(\epsilon, L)^{O,N}}: \Oo(\E_N)\to \Oo(\E_N)
$$
of cohomology degree $2d-6, d-3$ respectively. 
\begin{definition} We define the effective BV operators at scale $L$ 
\begin{align*}
    \Delta_L^C=\pa_{(\pa\otimes 1)K_L^C}, \quad \Delta_L^{O, N}=\pa_{K_L^{O,N}}
\end{align*}
on $\Oo(\E_N)$ of cohomology degree $2d-5, d-2$ respectively. The corresponding effective BV operators are defined by
\begin{align*}
\{\Phi, \Psi\}_L^{C}&=\Delta_L^C(\Phi \Psi)-\Delta_L^C(\Phi)\Psi-(-1)^{|\Phi|}\Phi \Delta_L^C(\Psi)\\
\{\Phi, \Psi\}_L^{C,N}&=\Delta_L^{C,N}(\Phi \Psi)-\Delta_L^{C,N}(\Phi)\Psi-(-1)^{|\Phi|}\Phi \Delta_L^{C,N}(\Psi).
\end{align*}
\end{definition}

\begin{definition}
A collection of functionals $\{f_N| f_N\in \Oo(\E_N)\}$ is called admissible with $N$-degree $m$ if $\{f_N/N^m\}$ is an admissible collection. The space of such collections will be denoted by $\Oo^{(m)}_{adm}(\E_\infty)$.

A collection of such functionals is called weakly admissible if it is finite sum of functionals which are admissible of some $N$-degree.  We let 
$$\Oo^{(\ast)}_{adm}(\E_\infty) =  \bigoplus_{m \ge 0} \Oo^{(m)}_{adm}( \E_\infty)$$
denote the collection of weakly admissible functions. 
\end{definition}

Obviously our previous definition for $\Oo_{adm}(\E_\infty)$ refers to admissible functions of $N$-degree 0. There exists a natural identification 
$$
    \Oo_{adm}(\E_\infty)\iso  \Oo^{(m)}_{adm}(\E_\infty)
$$
which sends $\{f_N\}$ to $\{N^m f_N\}$ and morphisms 
$$
    \Oo^{(m)}_{adm}(\E_\infty) \otimes \Oo_{adm}^{(l)}(\E_\infty) \to \Oo^{(m+l)}_{adm}(\E_\infty). 
$$
In this way the space of weakly admissible functions forms an algebra. There is an algebra isomorphism
$$\Oo_{adm}^{(\ast)} \iso \Oo_{adm}(\E_\infty)[N].$$   
In fact this is an isomorphism of $\C[N]$-algebras, where $\C[N]$ acts on a weakly admissible collection of functions $\{f_N\}$ by multiplying each $f_N$ by a polynomial $P(N) \in \C[N]$. 

\begin{deflem}
Suppose that $\{f_N\}, \{g_N\}$ are weakly admissible collections of functions. Then so are   $\{\pa_{P(\epsilon, L)^C} f_N\}$, $\{\Delta_L^C f_N\}$, $\{\pa_{P(\epsilon, L)^{O,N}} f_N\}$,  $\{\Delta_L^{O,N} f_N\}$, $\{\{f_N,g_N\}^{C}_L\}$, $\{\{f_N,g_N\}^{O,N}_L\}$.  We will denote the corresponding operators on $\Oo^{(\ast)}_{adm}(\E_\infty)$ by  
$$
\pa_{P(\epsilon, L)^C}, \Delta_L^C, \pa_{P(\epsilon, L)^{O,\infty}}, \Delta_L^{O,\infty}, \{-,-\}_L^C, \{-,-\}^{O,\infty}_L.
$$
\end{deflem}

The operators $\pa_{P(\epsilon, L)^{O,\infty}}, \Delta_L^{O,\infty}, \{-,-\}^{O,\infty}_L$ may increase the N-degree by $1$ by ``closing a trace", while $\pa_{P(\epsilon, L)^C}, \Delta_L^C, \{-,-\}^{C}_L$ preserve the $N$-degree. 

\subsection{The renormalization group flow and the quantum master equation}

Now we introduce a formal parameter $\lambda$ of cohomology degree $3-d$ and consider  
$$\Oo^{(\ast)}_{adm}(\E_\infty)[[\lambda]]= \Oo_{adm}(\E_\infty)[N][[\lambda]]$$. 
\begin{definition} Let
$$
   \Oo^{(\ast)}_{adm}(\E_\infty)[[ \lambda]]_+\subset \Oo_{adm}^{(\ast)}(\E_\infty)[[\lambda]]
$$
be the space of functionals  of the form
$$
  I=\sum_{g\geq 0, h\geq 0}\lambda^{2g+h+f}I_{g,h;f}
$$
such that  $I_{g,h;f}$ has $N$-degree $f$ and weight $h$ (see Definition \ref{definition-weight}) and such that $I_{0,0;0}, I_{0,1;0}$ are at least cubic. 
\end{definition}
Under the isomorphism between $\Oo^{(\ast)}_{adm}(\E_\infty)$ and $\Oo_{adm}(\E_\infty)[N]$, a series $I = \sum \lambda^{2g+h+f} I_{g,h;f}$ becomes a series of the form $\sum \lambda^{2g+h+f} I_{g,h;f} N^f$.

Given $I$ as in the definition, we can Taylor expand each $I_{g,h;f}$ as a sum
$$
I_{g,h;f} = \sum I_{g,h,n;f}
$$
where $I_{g,h,n;f}$ is homogeneous of degree $n$ as a function of the closed fields.   

Heuristically, we should think of $I_{g,h,n;f}$ as corresponding to a surface of genus $g$ with $n$ interior marked points, $h$ boundary components with at least one marked point, and $f$ unmarked boundary components, where we have inputs from $\PV(X)[[z]][2]$ on interior marked points and inputs from $\Omega^{0,*}(X)\otimes \gl_N[1]$ on boundary marked points. 

\begin{deflem}
We have a well-defined \emph{renormalization group (RG) flow operator} 
$$
   W_\epsilon^L: \Oo^{(\ast)}_{adm}(\E_\infty)[[\lambda]]_+\to  \Oo_{adm}^{(\ast)}(\E_\infty)[[ \lambda]]_+
$$
defined by 
$$
W_\epsilon^L(I)=\lambda^2 \log \exp\bracket{\lambda^2\pa_{P(\epsilon, L)^C}+\lambda \pa_{P(\epsilon, L)^{O,\infty}}} \exp\bracket{I/\lambda^2}.
$$
\end{deflem}
\begin{proof} $W_\epsilon^L(I)$ is given by sums over connected Feynman graphs with two types of propagator $P(\epsilon, L)^C$ (closed string), $P(\epsilon, L)^{O, \infty}$ (open string) and vertices $I$ of two types of half edges from closed string and open string sector. We think about each term $I_{g,h;f}$ as a surface $\Sigma_{g,h;f}$of genus $g$, with $h+f$ boundaries, then each vertex contribute a factor $\lambda^{-\chi(\Sigma_{g,h;f})}$ where $\chi(\Sigma_{g,h;f})$ is the Euler characteristic. It is easy to see that a connected Feynman graph will produce a new surface $\Sigma_{g^\prime, h^\prime;f^\prime}$ of similar type such the Feynman graph integral produces a factor $\lambda^{-\chi(\Sigma_{g^\prime, h^\prime;f^\prime})}$. $W_\epsilon^L(I)$ being well-defined now follows from the fact that $\chi(\Sigma_{g^\prime, h^\prime;f^\prime})\leq 2$ and that only finite graphs contribute given a fixed number of field inputs. 
\end{proof}
\begin{remark}
Suppose we apply this RG flow operator $W_\eps^L(I)$ to, for example, the holomorphic Chern-Simons interaction $I^{CS}_{\infty}$.  The result can be described as a sum over trivalent ribbon graphs with external lines, where each vertex of the ribbon graph has the interaction $I^{CS}_{\infty}$ and each edge has the open-string propagator $P(\eps,L)^{O,\infty}$.  Each ribbon graph $\gamma$ is weighted by $\lambda^{1-\chi(\gamma)}$ and by $N$ to the number of faces of the ribbon graph which have no external lines.  
\end{remark}
\begin{definition}
As before, let $\E_N$ denote the space of fields at finite $N$.  Let $\Oo(\E_N)[[\lambda]]_+$ denote the series of the form $I = \sum I_{r,n_O,n_C} \lambda^{r}$ where $I_{r,n_O,n_C}$ is homogeneous of degree $n_O$ (respectively, $n_C$) as a function of the open-string fields (respectively, closed-string fields).  We require that $I_{0,n_O,n_C} = 0$ unless $n_O = 0$ and $n_C \ge 3$, and that $I_{1,n_O,n_C} = 0$ unless $n_O + 2 n_C \ge 3$.  

Then, the formula
$$
W_{\eps}^L(I) = \lambda^2 \log \exp\bracket{\lambda^2\pa_{P(\epsilon, L)^C}+\lambda \pa_{P(\epsilon, L)^{O,N}}} \exp\bracket{I/\lambda^2}.
$$
gives a well-defined renormalization group flow operator on $\Oo(\E_N)[[\lambda]]_+$. 
\end{definition}
If $I \in \Oo^{(\ast)}(\E_\infty)$, let $\{I^N \in \Oo(\E_N)\}$ denote the collection of functions making up the weakly admissible function $I$.   
\begin{lemma}
The map
\begin{align*} 
 \Oo^{(\ast)}(\E_\infty)[[\lambda]]_+ & \to \Oo(\E_N)[[\lambda]]_+ \\
\sum \lambda^{2g+h+f}I_{g,h;f} & \mapsto \sum \lambda^{2g+h+f} I_{g,h;f}^N 
\end{align*}
intertwines the renormalization group flow maps.  
\end{lemma}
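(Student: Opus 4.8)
The whole point is that the renormalization group flow operator on $\Oo^{(\ast)}_{adm}(\E_\infty)[[\lambda]]_+$ was \emph{defined} out of operations --- the algebra multiplication, the operators $\pa_{P(\epsilon,L)^C}$ and $\pa_{P(\epsilon,L)^{O,\infty}}$, and the formal operations $\exp$ and $\lambda^2\log$ --- each of which acts componentwise in $N$. So the plan is to unwind definitions; the only real content is a compatibility of the two gradings.

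First I would set notation: let $\rho_N\colon \Oo^{(\ast)}_{adm}(\E_\infty)\to\Oo(\E_N)$ be the map taking a weakly admissible collection $\{f_M\}$ to its $N$-th member $f_N$, extended $\C[\lambda]$-linearly, so that the map in the statement is exactly $\rho_N$ (once one notes that a summand $I_{g,h;f}$, of weight $h$ and $N$-degree $f$, carries $\lambda$-exponent $2g+h+f$, which is the exponent $r$ used on $\Oo(\E_N)[[\lambda]]_+$). I would then record three properties of $\rho_N$, each immediate from the definitions already in place: (a) $\rho_N$ is a homomorphism of bi-graded algebras, since the product on $\Oo^{(\ast)}_{adm}(\E_\infty)$ is componentwise; (b) $\rho_N$ intertwines $\pa_{P(\epsilon,L)^C}$ with $\pa_{P(\epsilon,L)^C}$, since this operator on $\Oo^{(\ast)}_{adm}(\E_\infty)$ is \emph{by definition} the collection $\{f_M\}\mapsto\{\pa_{P(\epsilon,L)^C}f_M\}$; (c) similarly $\rho_N$ intertwines $\pa_{P(\epsilon,L)^{O,\infty}}$ with $\pa_{P(\epsilon,L)^{O,N}}$. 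The only thing to flag in (c) is that $\pa_{P(\epsilon,L)^{O,\infty}}$ raises the $N$-degree ``by closing a trace''; on the finite-$N$ side that closing of a trace is precisely the numerical factor of $N$ produced by the open-string kernel $\delta_0\sum_{i,j}e_{ij}\otimes e_{ji}$, which is compatible with (a) once the $\lambda$-gradings are matched as above.

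Granting (a)--(c), the conclusion is formal. Writing $D_\infty=\lambda^2\pa_{P(\epsilon,L)^C}+\lambda\,\pa_{P(\epsilon,L)^{O,\infty}}$ and $D_N=\lambda^2\pa_{P(\epsilon,L)^C}+\lambda\,\pa_{P(\epsilon,L)^{O,N}}$, properties (b)--(c) and $\C[\lambda]$-linearity give $\rho_N\circ D_\infty=D_N\circ\rho_N$, hence $\rho_N(D_\infty^k\Phi)=D_N^k(\rho_N\Phi)$ for every $k$; using (a), $\rho_N$ also sends $\exp(I/\lambda^2)$ to $\exp(\rho_N(I)/\lambda^2)$; and since for $I$ in the $+$-subspace only finitely many connected graphs contribute to $\exp(D_\infty)\exp(I/\lambda^2)$ at each fixed $\lambda$-order and each fixed number of external fields (this is the well-definedness of $W_\epsilon^L$), we may sum over $k$ to get $\rho_N(\exp(D_\infty)\exp(I/\lambda^2))=\exp(D_N)\exp(\rho_N(I)/\lambda^2)$. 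Applying $\lambda^2\log$, with which $\rho_N$ commutes as an algebra homomorphism, gives $\rho_N(W_\epsilon^L(I))=W_\epsilon^L(\rho_N(I))$, as claimed.

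The step I expect to need the most care --- though it is bookkeeping rather than a conceptual hurdle --- is checking that $\rho_N$ does land in $\Oo(\E_N)[[\lambda]]_+$, with $\lambda$-grading and powers of $N$ matching. This amounts to the statement that the Feynman-graph expansions underlying the two RG flows attach the same power of $\lambda$ and the same power of $N$ to each connected graph. For a graph with vertices $I_{g_v,h_v;f_v}$, $E_O$ open and $E_C$ closed edges, the $\lambda$-exponent produced equals $2-2V+\sum_v(2g_v+h_v+f_v)+E_O+2E_C$, which by additivity of Euler characteristic is $2-\chi(\Sigma')=2g'+h'+f'$ for the glued surface-type $\Sigma'$; and the power of $N$ produced equals $f'$, the number of boundary components of $\Sigma'$ with no external line --- encoded as $N$-degree on the $\E_\infty$ side and appearing as a literal power of $N$ on the $\E_N$ side, the two being identified by $\rho_N$. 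One also matches the $+$-conditions: $I_{0,0;0}$ and $I_{0,1;0}$ being at least cubic corresponds to the vanishing of $I_{0,n_O,n_C}$ unless $n_O=0,\ n_C\ge 3$ and of $I_{1,n_O,n_C}$ unless $n_O+2n_C\ge 3$ on the target, keeping in mind that an interior ($\PV$) insertion contributes $2$ to the relevant degree count, which is why $n_C$ appears with coefficient $2$. Both of these are exactly the Euler-characteristic accounting already carried out in establishing well-definedness of the two RG flow operators, so no new estimates are required.
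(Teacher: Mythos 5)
Your argument is correct and is essentially the paper's own proof: the paper likewise observes that the infinite-$N$ propagator operators act componentwise by definition and that the restriction map is an algebra homomorphism, and then concludes that the result "follows immediately." Your additional bookkeeping of the $\lambda$- and $N$-gradings is a fuller writeup of what the paper absorbs into the already-established well-definedness of the two RG flow operators, so no new content is needed.
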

\begin{proof}
By definition, the operators $\partial_{P(\eps,L)^C}$ and $\partial_{P(\eps,L)^{O,\infty}}$ act on a weakly admissible collection $\{I^N\}$  via the action of $\partial_{P(\eps,L)^C}$ and $\partial_{P(\eps,L)^{O,N}}$ on each $I^N$.  Further the map $\Oo^{(\ast)}(\E_\infty) \to \Oo(\E_N)$ is an algebra homomorphism. The result follows immediately.\end{proof}

\begin{definition} $I\in \Oo^{(\ast)}_{adm}(\E_\infty)[[ \lambda]]_+$ is said to satisfy \emph{quantum master equation} (QME) at scale $L$ if
$$
   Q_\infty I+\lambda^2 \Delta^C_L I+\lambda \Delta_L^{O, \infty}I+{1\over 2}\{I, I\}_L^C+{1\over 2 \lambda}\{I, I\}_L^{O, \infty}=0. 
$$
\end{definition}

It is also convenient to write the quantum master equation formally as
$$
   \bracket{Q_\infty+\lambda^2 \Delta^C_L+\lambda \Delta_L^O} e^{I/\lambda^2}=0. 
$$
\begin{deflem}
A function $I \in \Oo(\E_N)[[\lambda]]$ satisfies the finite-$N$ scale $L$ quantum master equation if 
$$
   Q_\infty I+\lambda^2 \Delta^C_L I+\lambda \Delta_L^{O, N}+{1\over 2}\{I, I\}_L^C+{1\over 2 \lambda}\{I, I\}_L^{O, N}=0. 
$$

The restriction map
$$
\Oo^{(\ast)}(\E_\infty)[[\lambda]]_+ \to \Oo(\E_N)[[\lambda]]_+
$$
takes a solution to the scale $L$ quantum master equation to a solution to the scale $L$ finite-$N$ quantum master equation. 

Further, a functional $I \in \Oo^{(\ast)}(\E_\infty)[[\lambda]]_+$ satisfies the scale $L$ quantum master equation if and only if all of its finite $N$ restrictions $I^N$ satisfy the scale $L$ quantum master equation.  
\end{deflem} 
\begin{proof}
The fact that the restriction map is compatible with the QME is immediate from the fact that we defined the scale $L$ BV Laplacians and odd Poisson brackets at infinite $N$ in terms of their finite $N$ counterparts.  The fact an infinity $N$ functional $I$ satisfies the QME if and only if each of its finite $N$ restrictions $I^N$ does follows from the fact that the product of all the restriction maps
$$
\Oo^{(\ast)}(\E_\infty) \to \prod_N \Oo(\E_N)
$$  
is injective.
\end{proof} 
\begin{lemma} RG flow and QME are compatible in the following sense: if $I$ satisfies QME at scale $\epsilon$, then $W_\epsilon^L(I)$ satisfies QME at scale $L$. 
\end{lemma}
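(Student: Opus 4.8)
The plan is to run the homotopy renormalization group argument of \cite{Cos11} in the present two-sector set-up. First I would pass to exponential form. As noted just after the definition of the quantum master equation, $I$ satisfies the scale $\epsilon$ QME if and only if
$$\bracket{Q_\infty + \lambda^2\Delta_\epsilon^C + \lambda\Delta_\epsilon^{O,\infty}}\,e^{I/\lambda^2} = 0,$$
and the same rewriting applies to $W_\epsilon^L(I)$ at scale $L$. Writing $R = \lambda^2\pa_{P(\epsilon,L)^C} + \lambda\pa_{P(\epsilon,L)^{O,\infty}}$, the definition of the RG flow gives $e^{W_\epsilon^L(I)/\lambda^2} = e^{R}\,e^{I/\lambda^2}$. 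Hence it suffices to establish the operator identity
$$\bracket{Q_\infty + \lambda^2\Delta_L^C + \lambda\Delta_L^{O,\infty}}\,e^{R} \;=\; e^{R}\,\bracket{Q_\infty + \lambda^2\Delta_\epsilon^C + \lambda\Delta_\epsilon^{O,\infty}}$$
on $\Oo^{(\ast)}_{adm}(\E_\infty)[[\lambda]]_+$; applying both sides to $e^{I/\lambda^2}$ then yields the claim. All of this is legitimate because, working order by order in $\lambda$ and in the number of inputs, only finitely many Feynman-graph contributions occur, and each of $Q_\infty$, $\pa_{P(\epsilon,L)^C}$, $\pa_{P(\epsilon,L)^{O,\infty}}$, $\Delta_L^C$, $\Delta_L^{O,\infty}$ preserves weak admissibility by the preceding def-lemmas.

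Writing $D_\epsilon$, $D_L$ for the two bracketed QME operators, the identity is equivalent to $e^{-R}D_L e^{R} = D_\epsilon$, which I would prove from the Hadamard expansion $e^{-R}D_L e^{R} = e^{-\op{ad}_R}(D_L) = D_L - [R,D_L] + \tfrac12[R,[R,D_L]] - \cdots$. Two structural observations do all the work. First, $\pa_{P(\epsilon,L)^C}$, $\pa_{P(\epsilon,L)^{O,\infty}}$, $\Delta_L^C$, $\Delta_L^{O,\infty}$ are all second-order operators of the form $\pa_\Phi$ for a fixed smooth kernel $\Phi$, hence commute pairwise; consequently $[R,\Delta_L^C] = [R,\Delta_L^{O,\infty}] = 0$, the series collapses to $D_L - [R,D_L] = D_L - [R,Q_\infty]$, and $[R,[R,Q_\infty]] = 0$ since $[R,Q_\infty]$ is again such an operator. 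Second, for the derivation $Q_\infty$ (coming from the linear differential $Q\oplus\dbar$) and an order-two operator $\pa_\Phi$ one has $[Q_\infty,\pa_\Phi] = \pa_{(Q_\infty\otimes 1 + 1\otimes Q_\infty)\Phi}$. Thus everything reduces to the two heat-kernel identities
\begin{align*}
[Q_\infty,\ \pa_{P(\epsilon,L)^C}] &= \Delta_\epsilon^C - \Delta_L^C,\\
[Q_\infty,\ \pa_{P(\epsilon,L)^{O,\infty}}] &= \Delta_\epsilon^{O,\infty} - \Delta_L^{O,\infty},
\end{align*}
after which $e^{-R}D_L e^{R} = D_L - [R,Q_\infty] = D_\epsilon$ by a one-line rearrangement.

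The two displayed identities are where the genuine content lies, and I expect the closed-string one to be the main obstacle. It asks that $(Q_\infty\otimes 1 + 1\otimes Q_\infty)P(\epsilon,L)^C = (\pa\otimes 1)K_\epsilon^C - (\pa\otimes 1)K_L^C$. Starting from $P(\epsilon,L)^C = \int_\epsilon^L (\pa\dbar^\ast\otimes 1)K_t^C\,dt$ and using that $K_t^C$ is symmetric under the swap of its two tensor factors, one moves the $\dbar$ in $Q_\infty$ (and, separately, the $t\pa$) between the two legs: the $\dbar$-part produces $(\pa(\dbar\dbar^\ast + \dbar^\ast\dbar)\otimes 1)K_t^C = -(\pa\otimes 1)\tfrac{d}{dt}K_t^C$ by the heat equation, which integrates to the right-hand side, while the $t\pa$-part drops out because $\pa^2 = 0$ and $[\pa,\dbar^\ast] = 0$ on $\PV(X)$ (the $\pa$ from $Q_\infty$ commutes past $\dbar^\ast$ and then annihilates against the $\pa$ already in the propagator kernel). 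The open-string identity is exactly the statement from \cite{Cos11} for the propagator $\int_\epsilon^L (\dbar^\ast\otimes 1)K_t^{O,N}\,dt$, and the only extra bookkeeping is that $\pa_{P(\epsilon,L)^{O,\infty}}$ and $\Delta_L^{O,\infty}$ may raise the $N$-degree by one by ``closing a trace'' — harmless, since the identity is an equality of weakly admissible operators that one checks at each fixed finite $N$. The single delicate point throughout is tracking the Koszul signs from permuting graded objects and from transposing $\dbar$, $\pa$, $\dbar^\ast$ across the two legs of the kernels; the conventions fixed above are arranged so that these cancel correctly.
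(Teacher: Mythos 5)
Your proof is correct and is essentially the argument the paper invokes: the paper's own proof simply cites \cite{Cos11} for the finite-$N$ statement and reduces the infinite-$N$ case to it via the restriction maps $\Oo^{(\ast)}_{adm}(\E_\infty)\to\Oo(\E_N)$, and the conjugation identity $e^{-R}D_Le^{R}=D_\epsilon$ you establish --- resting on the mutual commutativity of the second-order kernel operators and on the heat-kernel identities $[Q_\infty,\pa_{P(\epsilon,L)}]=\Delta_\epsilon-\Delta_L$ --- is precisely the content of that citation. The only difference is that you run the identity directly on weakly admissible functionals rather than at each finite $N$, which is legitimate given the preceding def-lemmas showing that all the operators involved preserve (weak) admissibility.
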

\begin{proof} This is proved similarly as in \cite{Cos11}.  The results of \cite{Cos11} are directly applicable to the finite-$N$ version of this statement. The fact that the RG flow and QME are compatible with the restriction maps from $\Oo^{(\ast)}_{adm}(\E_\infty) \to \Oo(\E_N)$, and the fact that to verify the quantum master equation for a functional $I \in \Oo^{(\ast)}_{adm}(\E_\infty)[[\lambda]]_+$ it suffices to verify it for all of its restrictions $I^N$, implies the result for infinite $N$. 
\end{proof}

\subsection{The definition of quantum theory} We now have enough notations to define the notion of quantum open-closed BCOV theory. 
\begin{definition}
\label{quantum_BCOV_definition}A quantum open-closed BCOV theory on a Calabi-Yau $X$ of odd dimension $d$ is a collection of functionals
$$
   I[L]\in \Oo^{(\ast)}_{adm}(\E_\infty)[[ \lambda]]_+
$$
one for each $L>0$, satisfying the following axioms.
\begin{enumerate}
\item The renormalization group flow: 
$$
   I[L]=W_\epsilon^L(I[\epsilon]), \quad \forall L, \epsilon>0.
$$
\item The quantum master equation
$$
   Q_\infty I[L]+\lambda^2 \Delta^C_L I[L]+\lambda \Delta_L^{O, \infty}I[L]+{1\over 2}\{I[L], I[L]\}_L^C+{1\over 2 \lambda}\{I[L], I[L]\}_L^{O, \infty}=0. 
   $$
\item
   The locality axiom: $I[L]$ has a small $L$ asymptotic expansion in terms of admissible local functionals,
$$
I[L] \simeq \sum \Phi_i(L) J_i
$$ 
where $\Phi_i(L)$ are some functions of $L$, and $J_i \in \Oo_{adm,loc} ( \E_\infty)[[\lambda, N]]$. 
\item The degree axiom. With respect to the expansion 
$$
  I[L]=\sum_{g,h,n,f\geq 0}I_{g,h,n;f}[L]\lambda^{2g+h+f}N^f, 
  $$
where $I_{g,h,n;f}\in \Sym^n(\PV(X)[[t]][2])^\vee \otimes \Sym^h(\cyc(\Omega^{0,*}(X)[1])),
$
$I_{g,h,n;f}$ has cohomology degree $(d-3)(2g-2+h+f)$. Equivalently 
$
\deg(\lambda^{-2}I[L])=0.
$ 
\item The classical limit. We require that
$$
   \lim_{L\to 0} I_{0,1,0}[L]=\CS_\infty, \quad \lim_{L\to 0} I_{0,1,1}=I^{1-disk}_\infty. 
$$
\end{enumerate}
\end{definition}
\begin{definition}
A quantum theory coupling the $\gl_N$ holomorphic Chern-Simons theory with BCOV theory is specified by a collection of functionals $I^N[L] \in \Oo(\E_N)[[\lambda]]_+$ satisfying the same properties as listed above, except that the classical limit axiom is changed to the following:
\begin{enumerate} 
 \item If we expand $I^N[L] = \sum_{r \ge 0} I^N_r[L] \lambda^r$, then $I^N_0[L]$ is a functional only of the closed-string fields and is at least cubic.  
\item The limit
$$
\lim_{L \to 0} I_1[L]
$$  
exists and is equal to $\CS_N + I^{1-disk}_N$ when the homogenous degree of closed inputs $\leq 1$. 
\end{enumerate}
\end{definition}
\begin{lemma}
If $I[L] \in \Oo^{(\ast)} (\E_\infty)[[\lambda]]_+$ is a collection of functionals defining a quantum open-closed BCOV theory, then the restrictions $I[L]^{N} \in \Oo(\E_N)[[\lambda]]_+$ define a quantum theory coupling $\gl_N$ holomorphic Chern-Simons with BCOV theory.  

Conversely, suppose one has a collection $I[L]^{N} \in \Oo(\E_N)[[\lambda]]_+$ of functionals defining a coupled BCOV theory and $\gl_N$ holomorphic Chern-Simons. Suppose that the collection $\{I[L]^{N}\}$ defines a weakly admissible collection of functionals.  Let us expand 
$$
I[L]^{N} = \sum_{r \ge 0} I_{r,h;f}^{N}[L] \lambda^{r}
$$
where the collections of admissible functionals $I_{r,h;f}[L]^{N}$ are of $N$-degree $f$ and weight $h$.   Suppose further that $I_{r,h;f}[L]^{N} = 0$ unless $r - h - f\geq 0$ and is even.  Then, the admissible collection $I[L]^{N}$ defines a quantum open-closed BCOV theory. 
\end{lemma}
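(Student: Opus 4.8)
The plan is to obtain both implications by unwinding the definitions, leaning on three facts already in hand: the product of restriction maps $\Oo^{(\ast)}_{adm}(\E_\infty)\to\prod_N\Oo(\E_N)$ is injective; each restriction map $\Oo^{(\ast)}_{adm}(\E_\infty)\to\Oo(\E_N)$ is an algebra homomorphism intertwining $Q_\infty$ with $Q_N$, the scale-$L$ BV Laplacians and brackets with their finite-$N$ counterparts, and the renormalization group operator $W_\epsilon^L$; and the restriction of an admissible (local) functional is (local). No new analysis is needed — the content is bookkeeping with the $\lambda$-grading, the weight $h$, the $N$-degree $f$, and the cohomological grading.

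For the forward implication I would take $I[L]$ satisfying the five axioms of a quantum open-closed BCOV theory, put $I^N[L]=I[L]^N$, and check the finite-$N$ axioms one at a time. The renormalization-group axiom for $I^N[L]$ is precisely the intertwining of $W_\epsilon^L$; the quantum master equation is precisely the statement, proved above, that restriction sends a scale-$L$ solution of the infinite-$N$ QME to a scale-$L$ solution of the finite-$N$ QME. For locality, the expansion $I[L]\simeq\sum\Phi_i(L)J_i$ with $J_i$ admissible local restricts termwise to an expansion in local functionals. The degree axiom is inherited because restriction preserves cohomological degree and $N$ has degree zero, so $\deg(\lambda^{-2}I^N[L])=\deg(\lambda^{-2}I[L])=0$. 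For the classical-limit axiom and for membership in $\Oo(\E_N)[[\lambda]]_+$, I would write $I[L]=\sum_{g,h,f}\lambda^{2g+h+f}I_{g,h;f}[L]N^f$, note that its $\lambda^0$-part is $I_{0,0;0}[L]$ (weight $0$, hence depending only on the closed fields, and at least cubic) and that its $\lambda^1$-part is assembled from $I_{0,1;0}[L]$ together with closed terms of higher $N$-degree, then use that restriction commutes with $\lim_{L\to0}$ and sends $\CS_\infty\mapsto\CS_N$ and $I^{1-disk}_\infty\mapsto I^{1-disk}_N$.

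For the converse I would first use weak admissibility to assemble $\{I[L]^N\}$ into an element $I[L]\in\Oo^{(\ast)}_{adm}(\E_\infty)[[\lambda]]$, and the hypothesis that $I_{r,h;f}[L]^N=0$ unless $r-h-f\geq0$ and even to re-index by $g=\frac{1}{2}(r-h-f)\geq0$, so that $I[L]=\sum_{g,h,f}\lambda^{2g+h+f}I_{g,h;f}[L]$ with $I_{g,h;f}[L]$ of $N$-degree $f$ and weight $h$; the finite-$N$ classical-limit conditions then ensure that the cubicity requirements in the definition of $\Oo^{(\ast)}_{adm}(\E_\infty)[[\lambda]]_+$ are met. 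It then remains to verify each infinite-$N$ axiom by reducing it to its finite-$N$ analogue, which holds by hypothesis: the renormalization-group and quantum-master-equation axioms transfer because each holds at infinite $N$ if and only if it holds at every finite $N$ (the ``only if'' is immediate; the ``if'' uses injectivity of $\Oo^{(\ast)}_{adm}(\E_\infty)\to\prod_N\Oo(\E_N)$ together with the intertwining of $W_\epsilon^L$ and the corresponding biconditional for the QME); the degree and classical-limit axioms transfer because they are conditions on cohomological degree and on $L\to0$ limits, which can be tested $N$ by $N$, and vanishing of a graded piece in every $\Oo(\E_N)$ forces its vanishing at infinite $N$.

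The step I expect to be the real, if still mild, obstacle is the locality axiom in the converse direction: from the mere fact that each $I^N[L]$ has a small-$L$ asymptotic expansion in \emph{local} functionals one must produce a small-$L$ asymptotic expansion of $I[L]$ in \emph{admissible} local functionals. Here I would invoke the essential uniqueness of asymptotic expansions, which forces the coefficients in the expansion of $I[L]$ to be weakly admissible and to restrict, for each $N$, to the (local) coefficients of the expansion of $I^N[L]$, and then use that a weakly admissible functional which is local at every finite $N$ lies in $\Oo_{adm,loc}(\E_\infty)$. Apart from this point and the routine matching of the indices $(g,h,n,f)$ with the $\lambda$-power and the $N$-degree, the whole argument is a direct translation between the two sets of axioms.
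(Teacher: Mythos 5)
Your proposal is correct and follows essentially the same route as the paper's own (much terser) proof: the forward direction by restriction, and the converse by assembling the weakly admissible collection, re-indexing via $g=\tfrac{1}{2}(r-h-f)$, and testing the RG flow, QME, locality, degree and classical-limit axioms at finite $N$ using injectivity of $\Oo^{(\ast)}_{adm}(\E_\infty)\to\prod_N\Oo(\E_N)$ and the intertwining lemmas. Your extra care with the locality axiom in the converse simply fills in a detail the paper leaves implicit.
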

\begin{proof}
The statement that the restriction to finite $N$ of a quantum open-closed BCOV theory gives a quantization of the coupled $\gl_N$ holomorphic Chern-Simons and  BCOV theory is immediate.  For the converse, suppose we have functionals $I[L]^{N}$ as stated in the lemma. Then, they define an admissible function $I[L] \in \Oo^{(\ast)}_{adm}(\E_\infty)[[\lambda]]$. The RG flow, quantum master equation, classical limit axiom, and locality axiom are automatically satisfied, as they can be tested at finite $N$.   Then only thing that needs to be verified is that $I[L]$ has an expansion of the form $\sum I_{g,h;f} [L] \lambda^{2g+h+f}$ where $I_{g,h;f}$ is of weight $h$ and $m$-degree $f$.  This, however, follows from the final criterion stated in the lemma. 
\end{proof}
Thus, we see that constructing the open-closed theory is essentially the same as constructing the coupling of the closed string sector to the $\gl_N$ gauge theory for all $N$, in a compatible way. 

For our purpose, we will also need a version of quantum open-closed theory without free boundaries. This can be obtained via a mod-N reduction from the above definition. 

\begin{definition}\label{definition-modulo-N}
A quantum open-closed BCOV theory on a Calabi-Yau $X$ of odd dimension $d$ without free boundaries is a collection of functionals $I[L]\in \Oo_{adm}(\E_\infty)[[\lambda]]_+$ for each $L>0$, satisfying all conditions (1)(2)(3)(4)(5) as above modulo $N$, i.e., we disregard all components whose $N$-degree $>0$.
\end{definition}

We leave it to the reader to check that this mod-N version is well-defined. This definition amounts to ignoring all the terms containing free boundaries without open string inputs that appears in the renormalization group flow and quantum master equation.  We will see shortly that a quantum open-closed BCOV theory modulo $N$ is the same data as a quantization of the open-closed theory where on the open sector we use the super Lie algebras $\gl(N \mid N)$ instead of $\gl(N)$.

\subsection{Supergroup holomorphic Chern-Simons}\label{Supergroup version}
Everything we have discussed works if we consider holomorphic Chern-Simons with gauge Lie algebra $\gl(N\mid N)$. The space of fields is $\Omega^{0,\ast}(X)[1]\otimes \gl(N \mid N)$.  We let $\E_{N \mid N}$ denote the corresponding space of fields for the open-closed theory
$$
\E_{N \mid N} =\PV(X)[[t]][2] \oplus \Omega^{0,\ast}(X) \otimes \gl(N \mid N)[1].
$$ 
The formulae we wrote for the Chern-Simons action and the one-disk interaction $I^{1-disk}_N$ apply in this situation too, except that as well as taking the trace in $\gl_N$ we take the trace in the defining representation of $\gl(N \mid N)$. 

Admissible functions $\Oo_{adm}(\E_{\infty \mid \infty})$ on $\E_{N\mid N}$ are defined similarly, from which we can define the RG flow and quantum master equation for open-closed BCOV theory in the supergroup case. Before we do so, we need a small detour about differential operators on the space $\Oo(W \otimes \gl(\infty \mid \infty))$. 

Let $W$ be any graded vector space.  As before, let $\Oo_{adm}(W \otimes \gl(\infty))$ and $\Oo_{adm}(W \otimes \gl(\infty \mid \infty))$ denote sequences of admissible functions on $W \otimes \gl(N)$ and on $W \otimes \gl(N \mid N)$ respectively.  There is a natural isomorphism of algebras 
$$\Oo_{adm}(W \otimes \gl(\infty)) \iso \Oo_{adm}(W \otimes \gl(\infty \mid \infty))$$
since both can be identified with the symmetric algebra on the dual of $\op{Cyc}(W)$.

Let $\Oo^{(\ast)}_{adm}(W \otimes \gl(\infty))$ denote the space of weakly admissible sequences of functions on $W \otimes \gl(N)$. There is an isomorphism 
$$
\Oo^{(\ast)}_{adm}(W \otimes \gl(\infty)) \iso \Oo_{adm}(W \otimes \gl(\infty)) [N]
$$
and so a map
$$
\Oo^{(\ast)}_{adm}(W \otimes \gl(\infty)) \to \Oo(W \otimes \gl(\infty \mid \infty)) 
$$
obtained by setting $N = 0$.

For example, in the case that $W = \C$, the algebra of admissible functions is generated by the functions $A \mapsto \Tr A^n$ on $\gl(N)$ or on $\gl(N \mid N)$, for $n  > 0$. The algebra of weakly admissible functions includes the case $n = 0$; the function $\op{Tr} \op{Id}$ takes value $N$ on $\gl(N)$ and zero on $\gl(N \mid N)$.  In this case, the map from weakly admissible functions on $\gl(N)$ to admissible functions on $\gl(N \mid N)$ sends the function $A \mapsto \op{Tr} A^n$ on $\gl(N)$ to the same function on $\gl(N \mid N)$. 

\begin{lemma}
\label{lemma_supergroup_operator}
Let $W$ be a graded vector space and let $P \in W \otimes W$ be a symmetric element.  Define operators $\delta_{P \otimes c}^{N\mid M}$ on $\Oo(W \otimes \gl(N \mid M))$ associated to the elements $P \otimes c \in W^{\otimes 2} \otimes \gl(N \mid M)^{\otimes 2}$ where $c \in \gl(N \mid M)^{\otimes 2}$ is the Casimir.  (The Casimir is the inverse to the invariant pairing on $\gl(N \mid M)$ coming from $\op{Tr}(X Y)$ in the defining representation).  In the case that $M = 0$ we will refer to this operator as $\delta_{P \otimes c}^{N}$. 

Then, $\delta_{P \otimes c}^{N}$ preserves the class of weakly admissible functions on $W \otimes \gl(N)$, and so gives an operator $\delta_{P \otimes c}^{\infty}$ on $\Oo_{adm}^{(\ast)}(W \otimes \gl(\infty))$.  Further, $\delta_{P \otimes c}^{N \mid N}$ preserves the class of admissible functions on $W \otimes \gl(N \mid N)$, and so gives an operator $\delta_{P \otimes c}^{\infty \mid \infty}$ on $\Oo_{adm}(W \otimes \gl(\infty \mid \infty))$.

The natural map
$$
\Oo^{\ast}_{adm}(W \otimes \gl(\infty)) \to \Oo_{adm}(W \otimes \gl(\infty \mid \infty))
$$
intertwines the operators $\delta_{P \otimes c}^{\infty}$ and $\delta_{P \otimes c}^{\infty \mid \infty}$. 
\end{lemma}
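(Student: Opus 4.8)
The plan is to reduce all three assertions to a single, universal combinatorial formula for $\delta_{P\otimes c}^{N\mid M}$ evaluated on the spanning set of admissible functionals provided by Lemmas~\ref{lemma_cyclic_admissible} and~\ref{lem-super-invariants}: finite products of single-trace functionals $f_{\phi}(v\otimes A)=\tfrac1k\,\phi(v^{\otimes k})\op{Tr}(A^k)$ with $\phi\in\cyc_k(W)^\vee$ (together, in the $\gl(N)$ case, with $\C[N]$-linear combinations of these). The key input is the explicit shape of the Casimir: with respect to the pairing $\op{Tr}(XY)$ the basis $\{e_{ij}\}$ of $\gl(N\mid M)$ is dual, up to sign, to $\{e_{ji}\}$, so $c=\sum_{i,j}\pm\, e_{ij}\otimes e_{ji}$, and $\delta_{P\otimes c}$ is the order-two operator which, on each ordered pair of tensor factors of an element of $\what{\Sym}\big((W\otimes\gl(N\mid M))^\vee\big)$, pairs the two chosen inputs with $P$ on the $W$-side and with $\sum_{ij}\pm e_{ij}\otimes e_{ji}$ on the $\gl$-side. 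First I would record the two elementary identities that govern such contractions: gluing a letter in one trace to a letter in another merges the two cyclic words, $\sum_{ij}\op{Tr}(\alpha e_{ij})\op{Tr}(e_{ji}\beta)=\op{Tr}(\beta\alpha)$, while contracting two letters inside a single trace splits it, $\sum_{ij}\op{Tr}(\alpha e_{ij}\beta e_{ji})=\op{Tr}(\alpha)\op{Tr}(\beta)$ --- all up to the Koszul signs of Lemma~\ref{lem-super-invariants}.

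Feeding these into $\delta_{P\otimes c}$ applied to a product $\prod_a f_{\phi_a}$, one gets a sum of \emph{merge} terms (the two derivatives hit distinct factors $f_{\phi_a},f_{\phi_b}$, producing one single-trace functional of cyclic length $k_a+k_b-2$ with $\phi_a,\phi_b$ joined and one slot contracted against $P$) and \emph{split} terms (both derivatives hit the same factor, producing a product of two single-trace functionals). The only \emph{exceptional} contributions occur when a merge or split creates an empty cyclic word, i.e.\ a factor $\op{Tr}(\op{Id})$; for $\gl(N)$ this equals $\op{Tr}(\op{Id}_{\C^N})=N$, whereas for $\gl(N\mid N)$ it equals $\op{str}(\op{Id}_{\C^{N\mid N}})=N-N=0$. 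Hence in the $\gl(N)$ case the output is a polynomial in $N$, of $N$-degree at most one greater than the input, whose coefficients are given by an $N$-independent cyclic-word formula and are therefore admissible --- so the result is weakly admissible, and iterating stays in $\Oo_{adm}^{(\ast)}(W\otimes\gl(\infty))$, defining $\delta^{\infty}_{P\otimes c}$. In the $\gl(N\mid N)$ case the exceptional terms vanish identically, so admissibility is preserved and $\delta^{\infty\mid\infty}_{P\otimes c}$ is defined. Equivariance is automatic because $c$ is the Casimir, and compatibility with the restriction maps is read off from the universality of the formula, exactly as in the proof of Lemma~\ref{bracket-admissible-local}.

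For the intertwining, recall that under $\Oo^{(\ast)}_{adm}(W\otimes\gl(\infty))\cong\Oo_{adm}(W\otimes\gl(\infty))[N]$ the natural map to $\Oo_{adm}(W\otimes\gl(\infty\mid\infty))$ is the specialization $N\mapsto 0$. By the computation above, $\delta^{\infty}_{P\otimes c}$ decomposes as a non-exceptional merge/split part plus $N$ times an exceptional part, and the non-exceptional part is written purely in terms of gluing and splitting cyclic words --- it is literally the same formula that computes $\delta^{\infty\mid\infty}_{P\otimes c}$, since the combinatorics of $c=\sum\pm e_{ij}\otimes e_{ji}$ and of the two trace identities are identical in the two settings, the sole difference being the value of $\op{Tr}\op{Id}$. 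Setting $N=0$ annihilates the exceptional summand and leaves precisely $\delta^{\infty\mid\infty}_{P\otimes c}$, which is the asserted intertwining.

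The hard part will be purely organizational: carefully enumerating the merge, split, and exceptional cases of a second-order operator acting on a product of cyclic traces and pinning down every Koszul sign in the super setting, and checking that merging and splitting really do land back in the image of $\Oo(\cyc(W))$ (this is where Lemma~\ref{lem-super-invariants} is used). Once the universal cyclic-word formula is written down, all three statements of the lemma follow by inspection.
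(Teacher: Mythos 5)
Your proposal is correct and follows essentially the same route as the paper: the paper's proof is a terse ``simple combinatorial check'' whose one substantive point is exactly the one you isolate, namely that any increase of $N$-degree under $\delta_{P\otimes c}$ comes with a factor $\op{Tr}(1)$, which equals $N$ for $\gl(N)$ and vanishes for $\gl(N\mid N)$. Your merge/split/exceptional-term bookkeeping and the observation that the natural map is specialization at $N=0$ are a faithful (and more detailed) expansion of that check.
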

\begin{proof}
This is a simple combinatorial check. The upshot that $\delta_{P \otimes c}^{\infty \mid \infty}$ being well-defined on $\Oo_{adm}(W \otimes \gl(\infty \mid \infty))$ as opposed to the $\gl(N)$-case follows from the fact that any increase of $N$-degree comes with a factor $\op{Tr} (1)$ which is zero for $\gl(N\mid N)$.
\end{proof}

Let $\Oo_{adm}(\E_{\infty \mid \infty})[[\lambda]]$ denote the space of sequences of admissible functions on the spaces $\E_{N \mid N}$ with a formal parameter $\lambda$. As before, we can define the open-closed BV operators and RG flow operators on each space $\Oo(\E_{N \mid N})[[\lambda]]$.   There is something special, however, that happens in the supergroup case.  
\begin{deflem}
Suppose that $\{f_N\}, \{g_N\}$ are admissible sequences of functions in $\Oo(\E_{N \mid N})$.  Then so are   $\{\pa_{P(\epsilon, L)^C} f_N\}$, $\{\Delta_L^C f_N\}$, $\{\pa_{P(\epsilon, L)^{O,N\mid N}} f_N\}$,  $\{\Delta_L^{O,N\mid N} f_N\}$, $\{\{f_N,g_N\}^{C}_L\}$, $\{\{f_N,g_N\}^{O,N\mid N}_L\}$. 

 We will denote the corresponding operators on $\Oo^{(\ast)}_{adm}(\E_{\infty\mid \infty})$ by  
$$
\pa_{P(\epsilon, L)^C}, \Delta_L^C, \pa_{P(\epsilon, L)^{O,\infty\mid \infty}}, \Delta_L^{O,\infty\mid \infty}, \{-,-\}_L^C, \{-,-\}^{O,\infty\mid \infty}_L.
$$
\end{deflem}
The key difference with the $\gl(N)$ case is that, in the $\gl(N)$ case, the open-string BV operator and RG flow operator can increase the $N$-degree. (Recall that a sequence of functionals $\{f_N\}$ is weakly admissible with $N$-degree $k$ if the sequence $N^{-k} f_{N}$ is admissible).  In the $\gl(N \mid N)$ case, these operators preserve the functionals of $N$-degree zero, and so the class of admissible as opposed to weakly admissible sequences.   
\begin{proof}
It is clear that the operators coming from the closed-string sector preserve the class of admissible sequences of functionals.  We need to check the statement for the operators coming from the open-string sector. Since the BV bracket $\{-,-\}^{O,N\mid N}_{L}$ is determined by the BV Laplacian $\tr_{L}^{O, N \mid N}$, it suffices to check the desired property for the operators $\tr_{L}^{O, N \mid N}$ and $\pa_{P(\eps,L)^{O,N \mid N}}$.   

Both cases follow from lemma \ref{lemma_supergroup_operator}.
\end{proof}

Let $\Oo_{adm}(\E_{\infty \mid \infty})[[\lambda]]_+$ denote the subspace of $\Oo_{adm}(E_{\infty \mid \infty})[[\lambda]]$ consisting of series 
$$I = \sum I_{g,h} \lambda^{2g-2+h}$$
where $I_{g,h}$ has weight $h$ (see Definition \ref{definition-weight}) and $I_{0,0}$ and $I_{0,1}$ are at least cubic. 
\begin{definition}\label{quantum_BCOV_definition_supergroup}
A quantum open-closed BCOV theory for the supergroups $\gl(N \mid N)$ is a collection of functionals $I[L] \in \Oo_{adm}(\E_{\infty \mid \infty})$ satisfying the axioms listed in definition \ref{quantum_BCOV_definition}. 
\end{definition} 
The main result of this section is the following.
\begin{proposition}
A quantum open-closed BCOV theory for the supergroups $\gl(N \mid N)$ is the same thing as a quantum open-closed BCOV theory for the ordinary groups $\gl(N)$ where we work modulo $N$ as in definition \ref{definition-modulo-N}.
\end{proposition}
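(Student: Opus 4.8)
The plan is to exhibit the equivalence through the canonical identification of admissible functionals on the two spaces of fields. By Lemmas \ref{lemma_cyclic_admissible} and \ref{lem-super-invariants}, both $\Oo_{adm}(\E_\infty)$ and $\Oo_{adm}(\E_{\infty\mid\infty})$ are identified, as bigraded algebras, with $\what{\Sym}^{>0}$ of $(\PV(X)[[t]][2])^\vee \oplus \cyc(\Omega^{0,\ast}(X)[1])^\vee$: the closed-string factor is untouched and the open-string factor becomes the cyclic cochains of $\Omega^{0,\ast}(X)$ in both cases. Let $\Phi\colon \Oo_{adm}(\E_\infty) \iso \Oo_{adm}(\E_{\infty\mid\infty})$ be the resulting isomorphism; equivalently (as recalled before Lemma \ref{lemma_supergroup_operator}) $\Phi$ is the composite of $\Oo_{adm}(\E_\infty) \into \Oo^{(\ast)}_{adm}(\E_\infty) = \Oo_{adm}(\E_\infty)[N]$ with the $N \mapsto 0$ specialization to $\Oo(\E_{\infty\mid\infty})$. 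I would first record that $\Phi$ is $\C[[\lambda]]$-linear, preserves the bigrading (cohomological degree and weight), and carries the $+$-conditions across: modulo $N$ the $N$-degree $f$ vanishes, and $\op{Tr}\op{Id}=0$ on $\gl(N\mid N)$ forces the absence of free boundaries, so the $\lambda$-expansions on the two sides correspond.

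The substance is that $\Phi$ intertwines every operator entering the definition of a quantum theory, once the $\gl(N)$ side is reduced modulo $N$. The closed-string operators $Q_\infty = Q \oplus \dbar$, $\pa_{P(\epsilon,L)^C}$, $\Delta_L^C$, $\{-,-\}_L^C$ act only on the closed-string factor (or are $N$-independent), hence commute with $\Phi$ trivially. For the open-string operators I would invoke Lemma \ref{lemma_supergroup_operator}: the BV Laplacian $\Delta_L^{O,N}$ and the propagator operator $\pa_{P(\epsilon,L)^{O,N}}$ are of the form $\delta_{P\otimes c}^N$ in the notation of that lemma, with $P$ the scalar heat kernel resp.\ scalar propagator and $c$ the Casimir; the lemma states that the $N \mapsto 0$ map intertwines $\delta_{P\otimes c}^{\infty}$ with $\delta_{P\otimes c}^{\infty\mid\infty}$, and, as observed above, on the $\gl(N\mid N)$ side these operators already preserve the class of admissible (not merely weakly admissible) functions, since no $N$-degree is ever produced precisely because the relevant color loops carry the factor $\op{Tr}\op{Id}=0$. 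Hence $\Phi$ carries $\pa_{P(\epsilon,L)^{O,\infty}}$, $\Delta_L^{O,\infty}$, $\{-,-\}_L^{O,\infty}$ reduced modulo $N$ to $\pa_{P(\epsilon,L)^{O,\infty\mid\infty}}$, $\Delta_L^{O,\infty\mid\infty}$, $\{-,-\}_L^{O,\infty\mid\infty}$. Since $W_\epsilon^L$ and the left-hand side of the quantum master equation are universal expressions built from exactly these operators, $\Phi$ intertwines the mod-$N$ RG flow and QME with their supergroup counterparts.

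It then remains to transport the axioms of Definition \ref{quantum_BCOV_definition}. Axioms (1) and (2) follow from the previous step. For (3), $\Phi$ restricts to an isomorphism $\Oo_{adm,loc}(\E_\infty) \iso \Oo_{adm,loc}(\E_{\infty\mid\infty})$ (local cyclic cochains of $\Omega^{0,\ast}(X)$ tensored with local functionals of polyvector fields), so small-$L$ asymptotic expansions match term by term. For (4), bigradedness of $\Phi$ transports the degree constraint, with $f=0$ on the supergroup side. For (5), one checks $\Phi(\CS_\infty) = \CS_\infty$ and $\Phi(I^{1-disk}_\infty) = I^{1-disk}_\infty$: the supergroup versions of these functionals are given by the same formulas with the (super)trace in the defining representation of $\gl(N\mid N)$, which is exactly the image under $\Phi$ of the cyclic cochain $\CS_\infty \in \cyc_3(\Omega^{0,\ast}(X)[1])^\vee$ and of the cochain map underlying $I^{1-disk}_\infty$. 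Therefore $\Phi$ and $\Phi^{-1}$ are mutually inverse bijections between the two classes of solutions.

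The hard part will be the middle step: making rigorous that after the identification $\Phi$ the \emph{only} difference between the $\gl(N)$ and $\gl(N\mid N)$ open-string Wick contractions is that each closed color loop contributes the factor $\op{Tr}\op{Id}$, equal to $N$ for $\gl(N)$ and to $0$ for $\gl(N\mid N)$, so that the supergroup operators coincide \emph{exactly} with the mod-$N$ reductions of the $\gl(N)$ ones, not merely up to $N$-dependent corrections. This is precisely the content of Lemma \ref{lemma_supergroup_operator}; granting it, the rest is formal.
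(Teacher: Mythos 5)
Your proposal is correct and follows essentially the same route as the paper: the identification $\Oo_{adm}(\E_\infty)\iso\Oo_{adm}(\E_{\infty\mid\infty})$ through cyclic cochains, with Lemma \ref{lemma_supergroup_operator} supplying the intertwining of the open-string BV and RG operators under the $N\mapsto 0$ specialization. Your write-up is somewhat more explicit than the paper's (which leaves the verification of the locality, degree, and classical-limit axioms to the reader), but the key ideas coincide.
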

\begin{proof}
As we have seen, there is an isomorphism between the space $\Oo_{adm}(\E_{\infty \mid \infty})$ of sequences of admissible functions on $\E_{N \mid N}$ and the space $\Oo_{adm}(\E_{\infty})$ of sequences of admissible functions on $\E_{N}$. This implies that a functional $I[L] \in \Oo_{adm}(\E_{\infty})[[\lambda]]_+$ defining a modulo $N$ quantum theory can be viewed as a functional in $\Oo_{adm}(\E_{\infty \mid \infty})[[\lambda]]_+$ defining a supergroup theory. It remains to check that this isomorphism respects all the axioms.  The compatibility of the RG flow axiom and the quantum master equation axiom with this isomorphism follows from lemma \ref{lemma_supergroup_operator}. 
\end{proof}

\section{Existence and uniqueness-Local theory}
One of the main properties of quantum field theory is \emph{locality}. This allows us to build up the whole quantum theory from local data on the underlying manifold. Our main result in this first paper of this series is to prove that there exists a unique quantization on the local piece of a Calabi-Yau manifold, i.e., on $\C^n$, satisfying obvious local symmetries. Since the local theory illustrates the main structures of open-closed BCOV theory, we will describe the construction in detail. The main tool of the construction is the obstruction theory for perturbative quantization developed in \cite{Cos11}. 

We will construct the theory \emph{modulo $N$}, in the sense explained in definition \ref{definition-modulo-N}. Equivalently, this means that for the open string sector we use the supergroups $\gl(N \mid N)$ instead of $\gl(N)$.  In the language of Riemann surfaces, it means that we will construct the theory without free boundaries.

The main theorems we will prove in the rest of the paper are the following.
\begin{theoremA}
Let $X$ be a Calabi-Yau manifold with an action of $\C^\times$ which rescales the volume form (we call such an $X$ \emph{conical}).  Suppose we have constructed a quantum BCOV theory which only includes $I_{0,h,n;f}$ when $n+h + f \le 2$. In other words, we ask that we have only disk amplitudes with at most one closed-string marked point, and annulus amplitudes with no closed-string marked points, satisfying all our axioms modulo higher-order terms.  The disk amplitudes are specified by the classical limit of our theory. 

Then, there exists a \emph{unique} quantization of the open-closed BCOV theory on $X$ in a way compatible with the given $\C^\times$ action. 
\end{theoremA}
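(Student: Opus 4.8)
\emph{Overall strategy.} The plan is to run the obstruction--deformation machine of \cite{Cos11}, combining the cohomology-cancellation phenomenon sketched in the introduction with $\C^\times$-equivariance to kill every potential obstruction and every potential ambiguity lying above the data we are handed. One builds $I[L]$ one ``shell'' at a time with respect to the decreasing filtration of $\Oo_{adm}(\E_\infty)[[\lambda]]_+$ by the Euler characteristic $2g-2+h+n$ -- equivalently the lexicographic order on the pairs $(g,\,h+n)$ -- a filtration respected by the renormalization group flow and the quantum master equation, as the factor $\lambda^{-\chi}$ in $W_\epsilon^L$ makes manifest. Assuming a $\C^\times$-equivariant collection of scale-$L$ functionals satisfying the RG flow, locality and classical-limit axioms, and solving the quantum master equation modulo terms with $(g,\,h+n)$ strictly past the current shell $(g,R)$, the failure of the master equation at that shell is a degree-one cocycle in a local obstruction--deformation complex $\op{Obs}(g,R)$, whose class in $H^1$ is the obstruction; if it vanishes the theory can be corrected, and the corrections form a torsor over $H^0\bigl(\op{Obs}(g,R)\bigr)$. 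The base of the induction is precisely the hypothesis: the disk terms $\CS_\infty=I_{0,1,0}$ and $I^{1-disk}_\infty=I_{0,1,1}$, the annulus term $I_{0,2,0}$, and the quadratic kinetic term $I_{0,0,2}$ carried by $Q$, which together exhaust all $(g,h,n)$ with $h+n\le 2$. So the whole content is to show that $\op{Obs}(g,R)$ has vanishing $\C^\times$-equivariant cohomology, in the degrees relevant for obstructions and deformations, whenever $(g,h,n)$ is not one of those already fixed.

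\emph{Identifying the complex.} On a fixed shell $h+n=R$, the open-string part with $h$ traces is, by the Loday--Quillen--Tsygan theorem and its supergroup refinement (Lemma~\ref{lem-super-invariants}), the weight-$h$ piece (Definition~\ref{definition-weight}) of the local cyclic cochain complex of the dg algebra $\Omega^{0,\ast}(X)$ with differential $\dbar+\{\CS_\infty,-\}^O$; the closed-string part consists of polynomial functionals of the closed fields with differential $Q=\dbar+t\pa$; and the two are glued by $\{I^{1-disk}_\infty,-\}$. The crucial step is that $\{I^{1-disk}_\infty,-\}^C$ reproduces \emph{exactly} the connecting map $\phi\,t^{-1}\mapsto(\pa\phi)\,t^{0}$ that is otherwise missing between the negative-cyclic and the ordinary-cyclic parts -- this is the linearization of the second identity of Lemma~\ref{lem-CME}. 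Granting it, and using the local Hochschild--Kostant--Rosenberg isomorphism $\PV(X)\iso HC^\ast_{loc}(\Omega^{0,\ast}(X))$ (which intertwines $\pa$ with the Connes operator $B$), one identifies $\op{Obs}(g,R)$ for $R\ge 1$ with the weight-$R$ part of $\what{\Sym}^{>0}$ of the local \emph{periodic} cyclic cochain complex of $\Omega^{0,\ast}(X)$, i.e.\ with $\Sym^R\bigl(\PV(X)((t))[2],\ \dbar+t\pa\bigr)\iso\Sym^R\bigl(\Omega^{-\ast,\ast}(X)((t))[-1],\ \dbar+t\pa\bigr)$. One must also check, loop order by loop order, that the one-loop pieces $\Delta^C_L$, $\Delta^{O,\infty}_L$ do not disturb this identification; they cannot raise the $N$-degree, since any such contribution carries a factor $\op{Tr}(1)=0$ for $\gl(N\mid N)$, and the genuinely quantum subtlety -- the annulus anomaly -- is already absorbed into the hypothesis.

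\emph{Cohomology and the scaling argument.} Over $\C((t))$, the complex $\bigl(\Omega^{-\ast,\ast}(\C^d)((t)),\ \dbar+t\pa\bigr)$ is a direct sum of shifted copies of the holomorphic de Rham complex of $\C^d$ and hence a resolution of $\C((t))$; for a general conical $X$ the local periodic cyclic complex computes $H^\ast_{dR}(X)((t))$ and the same cancellation goes through. Thus the cohomology of $\op{Obs}(g,R)$ is spanned by the classes of the explicit, manifestly $\C^\times$-invariant Lagrangians attached to $t^k\cdot 1$ -- for $R=1$ the open functionals $A\mapsto\sum c_{l_1,l_2,l_3}\int\op{Tr}\,A^{l_1}(\pa A)A^{l_2}(\pa A)A^{l_3}(\pa A)$ and the closed functionals $\phi\mapsto\int\phi_{-k-1}$, and a symmetric power of these for $R>1$. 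Each representative has scaling weight $0$, whereas the degree axiom forces a functional contributing at $(g,h,n)$ to carry the weight dictated by $2g-2+h+n$; hence the $\C^\times$-equivariant cohomology of $\op{Obs}(g,R)$ vanishes unless $2g-2+R=0$. Since every diagram in play has $R=h+n\ge 1$ (vacuum diagrams are excluded), this forces $g=0$ and $R=2$, and each such $(0,h,n)$ -- the disk with one interior point, the annulus with none, and the kinetic term -- is already fixed. So on every shell the obstruction class in $H^1\bigl(\op{Obs}(g,R)\bigr)$ vanishes and the space of $\C^\times$-equivariant corrections is zero: the theory extends, and does so uniquely. Running the argument for the mod-$N$ reduction -- equivalently, by the proposition of Section~\ref{Supergroup version}, for the supergroups $\gl(N\mid N)$ -- gives the theorem as stated.

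\emph{The main obstacle.} The homological skeleton above is routine once its inputs are in place; the real work is the identification in the second step -- that the BV bracket with the single disk interaction $I^{1-disk}_\infty$ is, on the nose, the connecting differential turning ``open $\oplus$ closed'' into the local periodic cyclic complex, with the signs and the $t$-adic truncation $t\pa(\phi_{-1}t^{-1})=0$ all matching. This is a computation with the explicit integral formula of Definition~\ref{defn-1-disk} and the Hochschild--Kostant--Rosenberg quasi-isomorphism, and the whole cancellation hinges on it. A secondary point needing care is that the conical hypothesis is used twice -- once so that the periodic cyclic complex has small cohomology, and once, through the degree axiom, to discard the surviving weight-$0$ classes -- so the result is genuinely a statement about $\C^\times$-equivariant quantizations.
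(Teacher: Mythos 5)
Your proposal is correct and follows essentially the same route as the paper: induction on the lexicographic order in $(g,h+n)$, identification of the shell-$R$ obstruction--deformation complex with $\Sym^R$ of the local periodic cyclic complex via Loday--Quillen--Tsygan, Hochschild--Kostant--Rosenberg, and the key lemma that $\{I^{1-disk}_\infty,-\}^C$ supplies exactly the missing connecting differential $\phi t^{-1}\mapsto c\,\partial\phi$, followed by the $\C^\times$-weight argument that kills everything with $2g-2+R\neq 0$. The paper packages the cohomology computation in the language of jets and $D$-modules (Proposition \ref{proposition-D-module-vanishing} and the ensuing de Rham spectral sequence), but the content is the same, and you correctly single out the identification of the connecting map as the step where the real computation lives.
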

What this theorem shows is that, once we can specify certain disk and annulus amplitudes, the rest of the theory at all genera can be constructed uniquely.  The quantization problem is controlled, using results from \cite{Cos11}, by a certain obstruction-deformation cohomology group.  The main point is that there is a cancellation between the obstruction/deformation group for the open theory and the closed theory that appears once we go beyond the annulus level. 

This cohomology calculation does not apply at the annulus level. It turns out that there is a cohomology class giving a potential obstruction (or anomaly) to constructing the theory at the annulus level. 
\begin{theoremB}
On $\C^d$ when we work modulo $N$  (that is, use the supergroups $\gl(N \mid N)$ as our gauge groups) then there is a unique quantization at the annulus level. That is, the annulus anomaly vanishes (so the quantization exists) and there is no ambiguity in quantizing to the annulus level.  
\end{theoremB}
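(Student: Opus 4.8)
The plan is to run the obstruction-deformation machinery of \cite{Cos11} for the single induction step that adjoins surfaces of type $(g,h,n)=(0,2,0)$ to the classical data $\CS_\infty$ and $I^{1-disk}_\infty$; by Theorem A, once this step is carried out every remaining amplitude is produced uniquely. As explained in the introduction, at this step the relevant obstruction-deformation complex $\mathcal{C}^\bullet$ is built only from local functionals of the open-string field that are of weight $2$ (double-trace) and have no closed-string inputs, i.e.\ functionals of type $(0,2,0)$: functionals of types $(0,1,1)$ and $(0,0,2)$ are part of the fixed classical data and so drop out of $\mathcal{C}^\bullet$. One checks that neither $\{I^{1-disk}_\infty,-\}^{O}$ nor $\{I^{1-disk}_\infty,-\}^{C}$ contributes to the differential on this complex (the first lands in type $(0,2,1)$, the second needs a closed-string input which is absent), so the differential is simply $\dbar+\{\CS_\infty,-\}^{O}$, and we must in addition impose the $\C^\times$-equivariance demanded by the degree/scaling axiom of Definition \ref{quantum_BCOV_definition}. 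Costello's theory then reduces Theorem B to two statements: $H^0(\mathcal{C}^\bullet)=0$ in the relevant cohomological degree (uniqueness: no ambiguity in the extension), and the vanishing of the annulus anomaly cocycle, which a priori defines a class in $H^1(\mathcal{C}^\bullet)$ (existence).

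To compute $\mathcal{C}^\bullet$ I would use the Loday--Quillen--Tsygan theorem together with Lemma \ref{lem-super-invariants}: weight-$2$ admissible local functionals of $\Omega^{0,\ast}(\C^d)[1]\otimes\gl(N\mid N)$ form $\widehat{\Sym}^2$ of the local cyclic cochain complex of $\Omega^{0,\ast}(\C^d)$, with differential the Leibniz extension of the cyclic cochain differential. By the local Hochschild--Kostant--Rosenberg comparison and the comparison of models for cyclic cohomology recalled in Section 3, the local cyclic cochain complex is quasi-isomorphic to $\PV^{\ast,\ast}(\C^d)[[t]][2]\cong\Omega^{-\ast,\ast}(\C^d)[[t]][-1]$ with differential $\dbar+t\pa$. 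The decisive simplification is the scaling constraint: a functional of type $(0,2,0)$ has scaling weight proportional to $2g-2+h+n=0$, which confines it to a finite-dimensional subspace of the above $\widehat{\Sym}^2$; a direct inspection of this subspace, using $H^\ast_{dR}(\C^d)=\C$, then yields $H^0(\mathcal{C}^\bullet)=0$ and identifies $H^1(\mathcal{C}^\bullet)$ in the relevant degree as a small space — two-dimensional when $d=3$, spanned by the classes of
\[
\alpha_1(A)=\int_{\C^d}\op{Tr}(A\,\pa A)\,\op{Tr}\big((\pa A)^2\big),\qquad \alpha_2(A)=\int_{\C^d}\op{Tr}A\;\op{Tr}\big((\pa A)^3\big).
\]
It is exactly here that working modulo $N$ — i.e.\ using $\gl(N\mid N)$ — matters: for $\gl(N)$ there is an extra potential anomaly $A\mapsto\int\op{Tr}A(\pa A)^3\,\op{Tr}(1)$ associated to an annulus with one unmarked boundary, which would enlarge $H^1$; with $\gl(N\mid N)$ the factor $\op{Tr}(1)$ vanishes and this class disappears, so that only functionals of type $(0,2,0)$ survive.

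The heart of the proof is to show that the annulus anomaly cocycle is cohomologous to zero, which by the above amounts to showing that its $\alpha_1$- and $\alpha_2$-components vanish. The anomaly is a sum of an open- and a closed-string contribution. The \emph{open} contribution is the one-loop failure of $\CS_\infty$ to satisfy the quantum master equation after renormalization group flow; by \cite{Cos11} it is represented by a local cocycle, which I would compute as the $L\to 0$ limit of the singular part of the one-loop (annulus) ribbon-graph integral with trivalent $\CS_\infty$ vertices and open propagators $P(\epsilon,L)^{O,N}=\int_\epsilon^L(\dbar^\ast\otimes 1)K_t^{O,N}$; this is an explicit finite integral evaluating to $\tfrac12\alpha_1+\alpha_2$. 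The \emph{closed} contribution is $\tfrac12\{I^{1-disk}_\infty,I^{1-disk}_\infty\}^{C}$ restricted to type $(0,2,0)$: gluing two disks, each with a single interior marked point, along one closed-string propagator $(\pa\dbar^{-1}\otimes 1)\delta_{\mathrm{Diag}}$ produces an annulus with no interior marked points. Using the explicit formula for $I^{1-disk}$ (Definition \ref{defn-1-disk}) and the BCOV Poisson kernel $\pi=(\pa\otimes 1)\delta_{\mathrm{Diag}}$, only the propagating BCOV fields in $t^0\PV^{k,\ast}(\C^d)$ contribute, and bracketing $I^{1-disk}_k$ against $I^{1-disk}_{d+1-k}$ reproduces (for $d=3$) the terms $\{I^{1-disk}_{3},I^{1-disk}_{1}\}^{C}$ and $\{I^{1-disk}_{2},I^{1-disk}_{2}\}^{C}$, proportional to $\alpha_2$ and $\alpha_1$ respectively, with coefficients arranged so that the closed contribution exactly cancels the open-string anomaly $\tfrac12\alpha_1+\alpha_2$. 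Adding the two, the total annulus anomaly vanishes; hence the quantization at the annulus level exists, and since $H^0(\mathcal{C}^\bullet)=0$ it is unique.

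I expect the main obstacle to be the matching of numerical coefficients between the two computations, and in particular the sign and degree bookkeeping in $\{I^{1-disk}_\infty,I^{1-disk}_\infty\}^{C}$: one must track the signs in the Willwacher--Calaque-type integrals over the configuration spaces $C_m$ defining $I^{1-disk}$, the effect of contracting through $(\pa\dbar^{-1}\otimes 1)\delta_{\mathrm{Diag}}$, and the Koszul signs from permuting graded inputs. It is precisely this bookkeeping that uses the hypothesis that $d$ is odd — it makes the relevant pairings have the symmetry needed for the cancellation, and is consistent with $\deg\CS_\infty=\deg I^{1-disk}_\infty=3-d$ and with the annulus-level master equation. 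The open-string one-loop integral, although elementary, likewise requires a careful choice of regularization and extraction of the local counterterm before its value $\tfrac12\alpha_1+\alpha_2$ can be read off.
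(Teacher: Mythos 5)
Your proposal follows essentially the same route as the paper: compute the $(0,2,0)$ obstruction--deformation cohomology to get uniqueness and pin down the form of the potential anomaly, evaluate the open-string one-loop wheel integrals and the closed-string bracket $\tfrac12\{I^{1-disk},I^{1-disk}\}^{C}$, and observe that modulo $N$ the two contributions are proportional. The one place where the paper does something you should adopt is the final step: rather than matching numerical coefficients (which you flag as the main obstacle), the paper shows both $O_o$ and $O_c$ are nonzero multiples, with constants depending only on $d$, of the \emph{same} functional $\int (A_0\wedge\pa A_1\wedge\cdots\wedge\pa A_k)\wedge(\pa B_{k+1}\wedge\cdots\wedge\pa B_d)$, and then uses the rescaling $I_{0,1,1}\mapsto\lambda I_{0,1,1}$ (which preserves the classical master equation and scales $O_c$ by $\lambda^2$ while leaving $O_o$ fixed) to force the cancellation --- so no exact coefficient computation is needed. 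One small correction: the paper's Proposition on the annulus cohomology gives $\dim H^1(V_{0,2,0})=(d-1)/2$, i.e.\ one-dimensional for $d=3$ (only $r=1$ survives the $S_2$-invariance), not two-dimensional as you state.
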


Theorem B will be proved by explicitly computing the annulus obstruction, which again vanishes by a marvelous cancellation between open string and closed string sectors. This completes the local theory of quantum open-closed BCOV theory, which says that the whole package is completely determined via deformation theory by the open string sector ($HCS_\infty$) and its first-order coupling with closed string sector ($I^{1-disk}_\infty$).

\section{Obstruction theory} Let $X$ be a conical Calabi-Yau.  Thus, $X$ has an action of $\C^\times$ which scales the volume form with weight $w \neq 0$.  Let us choose a K\"ahler metric on $X$ which is invariant under the $S^1$ inside this $\C^\times$.   In this section we will prove theorem $A$.  

The $S^1$ action on $X$ induces an action on the space of fields of BCOV theory, by naturality.  It therefore induces an action on all spaces of functionals on the space of fields. We will denote this action by $\rho$.  We view $S^1$ as a subset of $\C^\times$ with coordinate $\mu$ where $\abs{\mu} = 1$.  We will extend this $S^1$-action to the $\lambda$ variable by
$$
 \rho_{\mu}\lambda  = \mu^{w} \lambda 
$$
The fact that we have chosen an $S^1$-invariant K\"ahler metric on $X$ implies that 
$$
  \rho_{\mu}(K_L^C)=\mu^{-2w} K_L^C, \quad \rho_\mu(K_L^O)= \mu^{-w} K_L^O, 
$$
while $\dbar, \pa, \dbar^*$ all commute with $\rho_\mu$. The follow lemma is then an easy consequence. 

\begin{lemma}\label{lemma-QME-RG-S}
The quantum master equation and RG flow equation is compatible with $S^1$-action in the sense that
$$
  \bbracket{ \bracket{Q_\infty+\lambda^2 \Delta^C_L+\lambda \Delta_L^O}, \rho_\mu}=0, \quad \bbracket{e^{\lambda^2\pa_{P(\epsilon, L)^C}+\lambda \pa_{P(\epsilon, L)^{O,\infty}}}, \rho_\mu}=0. 
$$
\end{lemma}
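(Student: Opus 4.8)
\emph{Proof proposal.} The plan is to reduce both identities to the single general fact that ``contraction with a kernel'' operators transform covariantly under linear symmetries, and then to check that the weights assigned to $\lambda$ are exactly those needed to cancel the weights of the various heat kernels. Concretely: if $g$ is a linear automorphism of $\E_\infty$ and $\Phi$ a symmetric kernel, then $g\circ\pa_\Phi\circ g^{-1}=\pa_{g\Phi}$, where $g$ acts diagonally on $\Phi\in\Sym^2(\E_\infty)$. This holds because $\pa_\Phi$ is the unique order-two operator on $\Oo(\E_\infty)$ whose restriction to $\Sym^2(\E_\infty^\vee)$ is pairing with $\Phi$: conjugation by $g$ preserves this order, and since the pairing $\Sym^2(\E_\infty^\vee)\otimes\Sym^2(\E_\infty)\to\C$ is $g$-invariant it simply replaces $\Phi$ by $g\Phi$. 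The same statement holds for the $\gl(N\mid N)$-admissible versions, because $\rho_\mu$ is induced geometrically from the $S^1$-action on $X$ and so acts only on the $\PV(X)$ and $\Omega^{0,\ast}(X)$ tensor factors, commuting with the Lie-algebra directions and hence with restriction to $\Oo_{adm}(\E_\infty)$ and with the passage to $N=\infty$.

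First I would apply this with $g=\rho_\mu$ to the propagators. Since $\pa$ and $\dbar^\ast$ commute with $\rho_\mu$ and $\rho_\mu K_t^C=\mu^{-2w}K_t^C$, $\rho_\mu K_t^{O}=\mu^{-w}K_t^{O}$, integrating in $t$ gives $\rho_\mu P(\epsilon,L)^C=\mu^{-2w}P(\epsilon,L)^C$ and $\rho_\mu P(\epsilon,L)^{O,\infty}=\mu^{-w}P(\epsilon,L)^{O,\infty}$, whence
$$\rho_\mu\,\pa_{P(\epsilon,L)^C}\,\rho_\mu^{-1}=\mu^{-2w}\pa_{P(\epsilon,L)^C},\qquad \rho_\mu\,\pa_{P(\epsilon,L)^{O,\infty}}\,\rho_\mu^{-1}=\mu^{-w}\pa_{P(\epsilon,L)^{O,\infty}}.$$
Because $\rho_\mu\lambda=\mu^w\lambda$, multiplication by $\lambda^2$ intertwines $\rho_\mu$ up to a factor $\mu^{2w}$ and multiplication by $\lambda$ up to $\mu^{w}$; combining the two factors, $\lambda^2\pa_{P(\epsilon,L)^C}$ and $\lambda\,\pa_{P(\epsilon,L)^{O,\infty}}$ each commute with $\rho_\mu$ exactly. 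Hence so does their sum, and therefore so does the formal series $\exp(\lambda^2\pa_{P(\epsilon,L)^C}+\lambda\pa_{P(\epsilon,L)^{O,\infty}})$ — the exponential is computed term by term and each graded, fixed-input-number component involves only finitely many terms — which is the second assertion.

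The same argument applied to the kernels $(\pa\otimes 1)K_L^C$ and $K_L^{O,\infty}$ gives $\rho_\mu\Delta_L^C\rho_\mu^{-1}=\mu^{-2w}\Delta_L^C$ and $\rho_\mu\Delta_L^{O}\rho_\mu^{-1}=\mu^{-w}\Delta_L^{O}$, so that $\lambda^2\Delta_L^C$ and $\lambda\Delta_L^{O}$ commute with $\rho_\mu$; the brackets $\{-,-\}_L^C$ and $\{-,-\}_L^{O,\infty}$ inherit this equivariance automatically, being the failure of $\Delta_L^C,\Delta_L^{O}$ to be derivations. Finally $Q_\infty$ commutes with $\rho_\mu$: it is the derivation of $\Oo_{adm}(\E_\infty)$ induced by the linear differential $\dbar+t\pa$ on $\E_\infty$, which is built from $\dbar$ and $\pa$ (commuting with $\rho_\mu$ by hypothesis) together with multiplication by the formal parameter $t$, on which the geometric action $\rho_\mu$ is trivial. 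Adding the three pieces yields $[\,Q_\infty+\lambda^2\Delta_L^C+\lambda\Delta_L^{O},\rho_\mu\,]=0$, which is the first assertion.

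I do not expect a genuine obstacle here; the only point requiring care is the weight bookkeeping, namely that the weights $2w$ and $w$ attached to $\lambda^2$ and $\lambda$ are precisely the ones that cancel the kernel weights $-2w$ and $-w$ — which is exactly why the weight of $\lambda$ was set to $w$ in the first place. The secondary point, that all of this descends from the finite-$N$ spaces $\Oo(\E_N)$ to the admissible functionals, is immediate because $\rho_\mu$ does not touch the $\gl(N\mid N)$ directions, so it commutes with the restriction and $N$-stabilization maps. This is what makes the lemma ``an easy consequence'' of the stated equivariance properties.
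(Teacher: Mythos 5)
Your proposal is correct and follows exactly the route the paper intends: the paper itself offers no more than the remark that the lemma is ``an easy consequence'' of the stated equivariance properties $\rho_\mu K_L^C=\mu^{-2w}K_L^C$, $\rho_\mu K_L^O=\mu^{-w}K_L^O$, the commutation of $\dbar,\pa,\dbar^*$ with $\rho_\mu$, and the assignment $\rho_\mu\lambda=\mu^w\lambda$. Your weight bookkeeping, the covariance $g\,\pa_\Phi\,g^{-1}=\pa_{g\Phi}$, and the observation that $\rho_\mu$ commutes with the passage to admissible functionals are precisely the details the paper leaves implicit.
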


\begin{definition} A quantum open-closed BCOV theory $I[L]=\sum_{g,h,f}I_{g,h;f}[L]\lambda^{2g+h}\in \Oo^{(\ast)}_{adm}[[\lambda]]_+$ is $S^1$-invariant if $I_{g,h;f}$ is of weight $-(2g-2+h+f)w$ under the $S^1$-action
$$
 \rho_\mu(I_{g,h;f}[L])=\mu^{-(2g-2+h+f)w} I_{g,h;f}[L], 
$$
or equivalently $I[L]/\lambda^2$ is $S^1$-invariant. Similarly, if we work with an $S^1$-invariant quantum open-closed BCOV theory modulo $N$ for $I[L]=\sum_{g,h}I_{g,h}[L]\lambda^{2g+h}\in \Oo_{adm}[[\lambda]]_+$, we require that 
$$
 \rho_\mu(I_{g,h}[L])=\mu^{-(2g-2+h)w} I_{g,h}[L].
$$
\end{definition}

By Lemma \ref{lemma-QME-RG-S}, the definition of $S^1$-invariant quantum open-closed BCOV theory is compatible with quantum master equation and RG flow equation.  We will analyze the obstruction problem of constructing $S^1$-equivariant open-closed quantum BCOV theory on $X$.   

\begin{definition} 
 Let $\T(X)$ (or just $\T)$ be the simplicial set of $S^1$-invariant quantum open-closed BCOV theories on $X$, defined modulo $N$. The zero-simplices of this simplicial set are quantizations in the sense discussed above. The $n$-simplices are families of quantizations over forms on the $n$-simplex, in the sense explained in \cite{Cos11}. 
 \end{definition}

We will use an inductive argument to analyze possible quantizations.
\begin{definition} We define an ordering on $\N\times \N$ by 
  $$
    (g_1, R_1)< (g_2, R_2), \quad  \text{if}\ g_1<g_2, \text{or}\  g_1=g_2, R_1<R_2. 
  $$
\end{definition}

\begin{definition} We define $\T^{\leq(G, R)}(X)$ to be the simplicial set of $S^1$-invariant quantum open-closed theories where the functionals $I_{g,h,n}[L]$ are defined for $(g, h+n)\leq (G, R)$.  Let $\T^{< (G,R)}(X)$ denote the simplicial set where the functionals are defined for $(g,h+n) < (G,R)$. (Recall that $n$ is the number of closed-string inputs and $I_{g,h}[L]=\sum_n I_{g,h,n}[L]$). \label{definition_theories_(G,R)} 
\end{definition}
This definition makes sense, since the RG flow equation and quantum master equation for the functionals $I_{g,h,n}[L]$ where $(g,h+n) \le (G,R)$ do not depend on the functionals $I_{g,h,n}[L]$ where $(g,h+n) \not \le (G,R)$.  

The results of \cite{Cos11} imply that the simplicial sets $\T^{\le(G,R)}(X)$ are Kan complexes and the maps $\T^{\le (G,R)}(X) \to \T^{<(G,R)}(X)$ are Kan fibrations.

The main theorem of this section is the following. 
\begin{theorem}\label{theorem-vanishing}
\label{theorem_cohomology_cancellation}For $(g, R)$ with $g>0$ or $g=0, R\geq 3$, the map 
$$
   \T^{\leq (g,R)}(X)\to \T^{< (g, R-1)}(X)
$$
is a weak equivalence. 
\end{theorem}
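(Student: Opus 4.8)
The plan is to realize the map in the statement as a Kan fibration whose fiber is governed by an obstruction-deformation complex, to identify that complex with (a symmetric power of) the local periodic cyclic complex of $\Omega^{0,\ast}(\C^d)$, and then to observe that the $S^1$-action annihilates every cohomology class of the weight that could possibly contribute. So first I would reduce, via the obstruction theory of \cite{Cos11}, to a cohomology vanishing statement: the forgetful map $\T^{\le(g,R)}(X)\to\T^{<(g,R)}(X)=\T^{\le(g,R-1)}(X)$ is a fibration of Kan complexes whose fiber over a vertex $I$ is, up to weak equivalence, the Dold--Kan realization of an obstruction-deformation cochain complex $(\mathcal{D}^{(g,R)},d)$ attached to $I$, and the fibration is a weak equivalence exactly when $H^\ast(\mathcal{D}^{(g,R)},d)=0$. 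By Lemma~\ref{lemma-QME-RG-S} the whole set-up is $S^1$-equivariant, so it suffices to prove that the $S^1$-invariant subcomplex $\mathcal{D}^{(g,R)}_{S^1}$ is acyclic. As a graded vector space $\mathcal{D}^{(g,R)}$ is the sum over $h+n=R$ of the admissible local functionals on $\E_\infty$ of weight $h$ and homogeneous of degree $n$ in the closed fields $\PV(X)[[t]][2]$, in the cohomological degree fixed by the degree axiom of Definition~\ref{quantum_BCOV_definition}, with differential $Q_\infty+\{\CS_\infty+I^{1-disk}_\infty,-\}$ (the BV Laplacians contribute to the obstruction cocycle itself but not to $d$).

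Next I would identify $\mathcal{D}^{(g,R)}$ with the periodic cyclic complex. I would decompose $d$ by its effect on the number $h$ of traces: the part preserving $h+n$ is $Q+\{\CS_\infty,-\}^{O}+\{I^{1-disk}_\infty,-\}^{C}$, the remaining term $\{I^{1-disk}_\infty,-\}^{O}$ raising $h+n$ and so acting by zero within level $R$. By the Tsygan--Loday--Quillen theorem \cite{Tsy83,LodQui84} together with the local Hochschild--Kostant--Rosenberg theorem \cite{Kon97}, the $h\ge1$ part with differential $Q+\{\CS_\infty,-\}^{O}$ is the local cyclic cochain complex of $\Omega^{0,\ast}(\C^d)$, quasi-isomorphic to $\big(\Omega^{-\ast,\ast}(\C^d)[[t]][-1],\ \dbar+t\del\big)$, while the $h=0$ part is the dual of the closed fields, $t^{-1}\Omega^{-\ast,\ast}(\C^d)[t^{-1}][-1]$ with $\dbar+t\del$. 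The key point is to compute, from the explicit formula of Definition~\ref{defn-1-disk} and the closed-string bracket, that $\{I^{1-disk}_\infty,-\}^{C}$ sends the lowest closed Lagrangian $\phi\,t^{-1}$ to $\del\phi\cdot t^{0}$ inside the open part --- precisely the component of $t\del$ missing from the naive direct sum. Granting this, $\mathcal{D}^{(g,R)}$ is the $R$-th symmetric power of the local periodic cyclic complex $\big(\Omega^{-\ast,\ast}(\C^d)((t))[-1],\ \dbar+t\del\big)$.

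Then I would compute the cohomology and run the weight argument. Filtering $\big(\Omega^{-\ast,\ast}(\C^d)((t))[-1],\dbar+t\del\big)$ by powers of $t$ displays it as a direct sum over $k\in\Z$ of shifted copies of the holomorphic de Rham complex of $\C^d$, so the $\dbar$- and holomorphic Poincar\'e lemmas give cohomology $\C((t))[-1]$, the class $t^k\cdot1$ being represented by the explicit Lagrangian recorded in the introduction (a single-trace $\int\op{Tr}A^{l_1}(\del A)A^{l_2}(\del A)A^{l_3}(\del A)$ for $k\ge0$, and $\phi\mapsto\int\phi_{-k-1}$ for $k<0$); hence $H^\ast(\mathcal{D}^{(g,R)})\cong\Sym^{R}\big(\C((t))[-1]\big)$. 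Each representative is manifestly invariant under the $\C^\times$ rescaling the holomorphic volume form on $\C^d$, i.e.\ has $\rho_\mu$-weight $0$. But a class contributing to $\mathcal{D}^{(g,R)}_{S^1}$ must carry $\rho_\mu$-weight $-(2g-2+R)w$, and the hypothesis ($g>0$, or $g=0$ with $R\ge3$) is exactly what forces $2g-2+R>0$; since $w\ne0$ this weight is nonzero, so no nonzero cohomology class is $S^1$-invariant. Therefore $\mathcal{D}^{(g,R)}_{S^1}$ is acyclic and the map is a weak equivalence.

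The hard part will be the identification in the middle step --- verifying that $\{I^{1-disk}_\infty,-\}^{C}$ supplies exactly the connecting map $\phi\,t^{-1}\mapsto\del\phi$ that glues the local cyclic and negative-cyclic complexes into the periodic one. This comes down to unwinding the configuration-space integrals of Definition~\ref{defn-1-disk} against the closed propagator, and is in essence a local, leading-order avatar of the fact (used in \cite{WilCal08}) that $\del$ on polyvector fields matches the Connes $B$-operator under HKR. I also expect some care will be needed to confirm, from the formalism of \cite{Cos11}, that a single step of the induction really does have $Q_\infty+\{\CS_\infty+I^{1-disk}_\infty,-\}$ as its obstruction-deformation differential, and on the $R>1$ bookkeeping that yields the symmetric-power statement while keeping every cohomology representative scale-invariant.
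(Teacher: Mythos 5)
Your proposal is correct and follows essentially the same route as the paper: reduce via the obstruction theory of \cite{Cos11} to a cohomology computation, identify the level-$R$ obstruction-deformation complex via Tsygan--Loday--Quillen and HKR with a symmetric power of the local periodic cyclic complex (with $\{I^{1-disk}_\infty,-\}^{C}$ supplying exactly the connecting map $\phi\,t^{-1}\mapsto\del\phi$, which the paper isolates as the same key lemma), and then kill everything by the weight argument since the surviving classes have weight $0$ while the required weight is $(2-2g-R)w\neq 0$. The only cosmetic difference is that the paper packages the local computation as the cohomology of a $D$-module of jets and then runs the de Rham spectral sequence over a general conical $X$, whereas you phrase it directly on $\C^d$; the content is identical.
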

\begin{corollary}
The map $\T(X) \to \T^{\leq(0,2)}$ is a weak equivalence. 
\end{corollary}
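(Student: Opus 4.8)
The plan is to derive the corollary from Theorem \ref{theorem-vanishing} by a tower argument, so the first thing I would do is identify $\T(X)$ with an inverse limit of its truncations. By construction a quantum open-closed BCOV theory (modulo $N$) is precisely a compatible family of functionals $I_{g,h,n}[L]$ over all $(g,h,n)$, and the truncations $\T^{\le(g,R)}(X)$ assemble into a tower indexed by $\N^2$ with the ordering defined above; thus $\T(X)\cong\varprojlim_{(g,R)}\T^{\le(g,R)}(X)$, and the map $\T(X)\to\T^{\le(0,2)}(X)$ is the canonical projection, i.e.\ the transfinite composite of the transition maps at all stages $(g,R)>(0,2)$.

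Next I would feed in the two structural inputs already available. First, by the results of \cite{Cos11} recalled above, every $\T^{\le(g,R)}(X)$ is a Kan complex and every map $\T^{\le(g,R)}(X)\to\T^{<(g,R)}(X)$ is a Kan fibration, so every transition map in the tower is a Kan fibration. Second, Theorem \ref{theorem-vanishing} says that whenever $(g,R)>(0,2)$ — that is, $g>0$, or $g=0$ and $R\ge 3$ — the transition map from $\T^{\le(g,R)}(X)$ to the preceding truncation is a weak equivalence. A Kan fibration which is a weak equivalence is a trivial (acyclic) fibration, i.e.\ it has the right lifting property against every boundary inclusion $\partial\Delta^n\hookrightarrow\Delta^n$.

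To finish I would invoke the formal closure properties of trivial fibrations: being the right class of a weak factorization system on simplicial sets, they are closed under composition and under limits of towers of arbitrary ordinal length. Applying this to the tower $\{\T^{\le(g,R)}(X)\}_{(g,R)\ge(0,2)}$, all of whose transition maps are now trivial fibrations — and at whose limit stages one has $\T^{<(g,0)}(X)=\varprojlim_R\T^{\le(g-1,R)}(X)$ — I conclude that the projection $\T(X)\to\T^{\le(0,2)}(X)$ is itself a trivial fibration, hence in particular a weak equivalence, which is the assertion of the corollary.

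I do not expect a genuine obstacle here: all the real content sits inside Theorem \ref{theorem-vanishing} (together with Theorem B, which governs the $(0,2)$-stage itself). The one point that needs a little care is the transfinite bookkeeping in passing from one genus to the next, but this is dispatched exactly by the closure of trivial fibrations under tower limits; alternatively one could phrase it through the vanishing of the relevant $\varprojlim^1$ terms for a tower of fibrations whose transition maps are eventually weak equivalences.
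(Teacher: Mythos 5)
Your argument is correct and is essentially the paper's own: the paper disposes of the corollary in one line by observing that $\T(X)$ is the limit of a tower of Kan fibrations each of which is a weak equivalence by Theorem \ref{theorem-vanishing}, which is exactly the trivial-fibration/tower-limit argument you spell out. The extra detail you supply (trivial fibrations being closed under composition and tower limits, or equivalently the vanishing of $\varprojlim^1$) is just the standard justification the paper leaves implicit.
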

The corollary is immediate: $\T(X)$ is a limit of an infinite sequence of fibrations each of which is a weak equivalence. 

The proof of this theorem will take up the rest of this section. 

The first thing we need to know is a result from \cite{Cos11}, which describes the obstruction-deformation complex for lifting from $\T^{<(G,R)}$ to $\T^{(G,R)}$ in terms of a complex of local functionals. 

Since $X$ is non-compact, it is useful to take our fields to have compact support. Let $\E_{c,N}$ denote the space of fields with compact support, where on the open sector we use the gauge group $\gl(N)$.  Let $\Oo_{adm,loc}(\E_{c, \infty})$ denote the space of admissible local functionals.   

It is useful to shift the cohomological degree and the $S^1$ action on this space of admissible local functionals.
\begin{definition}
Let $\Oo'_{adm,loc}(\E_{c,\infty})$ denote the space of admissible local functionals but with a cohomological shift, whereby the functionals of weight $h$ are shifted by $[h(d-3)]$.  The graded vector space $\Oo'_{adm,loc}(\E_{c,\infty})$ acquires a differential given by combining the linear differential on the space of fields with bracketing with our classical action functionals $I_{0,1,0}$ and $I_{0,1,1}$.

The space of admissible local functionals has an $S^1$ action coming from that on the space of fields by Lie derivative.  Let us give a slightly different $S^1$ action on $\Oo'_{adm,loc}(\E_{c,\infty})$ by saying that $S^1$ acts on the open-string fields in the usual way, by Lie derivative, but that $S^1$ acts on the closed-string fields $\PV(X)[[t]]$ via the Lie derivative on forms on $X$, using the isomorphism $\PV(X) \iso \Omega^{\ast,\ast}(X)$ coming from the holomorphic volume form on $X$.   This shifts the weight of the $S^1$ action on $\PV(X)$ by $w$.  
\end{definition}

Let 
$$
  V_{h,n}\subset \Oo'_{adm,loc}(\E_{c, \infty})
$$
be the subcomplex of those local functionals $I$ which are homogeneous of weight $n$ as a functional on $\PV_c(X)[[t]][2]$ and admissible of of weight $h$ as a functional on $\Omega^{0,*}_c(X)[1]\otimes \gl_N$.  Let
$$
   V_{R}=\bigoplus_{h+n=R} V_{h,n}. 
$$
 $V_{R}$ is a subcomplex of $\Oo'_{adm,loc}(\E_{c,\infty})$. The differential, explicitly, is 
$$
Q_\infty+\{I_{0,1,0},-\}^O+\{I_{0,1,1},-\}^C, \quad I_{0,1,0}=\CS_\infty, I_{0,1,1}=I_\infty^{1-disk}. 
$$
This differential commutes with the $S^1$ action on $V_{R}$ (where we use the modified $S^1$ action as discussed in the definition of $\Oo'_{adm,loc}(\E_{c,\infty})$). Let $V_{R}^{(k)}$ denote the subspace of weight $k$ under this $S^1$ action. 

The main result of \cite{Cos11}, in this situation, says the following.
\begin{proposition*}
The obstructions to lifting a theory in $\T^{<(g, R)}(X)$ to $\T^{\leq (g,R)}(X)$ lies in  
$$
   H^{(d-3)(2g-2)+1}(V^{(2-2g - R)w}_{R}). 
$$
The cohomological shift by $(d-3)(2g-2)$ arises because the functionals $I_{g,h,n}[L]$ have cohomological degree $(d-3)(2g-2+h)$, and the $(d-3)h$ shift is already account for in the definition of $\Oo'_{adm,loc}(\E_{c,\infty})$. The weight under the $S^1$ action reflects the weight required to have an $S^1$-invariant quantization. 

In addition, if the obstruction vanishes, the set of possible lifts up to homotopy is a torsor for $H^{(d-3)(2g-2)}(V^{(2-2g-R)w}_{R})$. 

More generally, the simplicial set of possible lifts is a torsor for the Dold-Kan simplicial Abelian group associated to the cochain complex $V_{R}^{(2-2g-R)w}[(d-3)(2g-2)]$. 
\end{proposition*}
The theorem will follow from the following proposition.
\begin{proposition}
 The complex $V^{(k)}_{R}$ has zero cohomology unless $k = 0$, and 
 $$H^\ast (V_R^{(0)} ) = H^*_{dR}(X)[2d]\otimes \Sym^R \left( \C((t))[2-d]^\vee \right).$$ 
\end{proposition}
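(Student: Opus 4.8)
The plan is to recognize $V_R$ as the $R$-th (fiberwise) symmetric power of a single elementary two-term complex $W$, to identify $W$ with the periodic cyclic cochain complex of $\Omega^{0,\ast}(X)$, and to compute the latter by a de Rham argument, keeping track of the cohomological shift built into $\Oo'_{adm,loc}(\E_{c,\infty})$ and of the modified $S^1$-weights.

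First I would decompose $V_R=\bigoplus_{h+n=R}V_{h,n}$ into the pieces of weight $h$ in the open-string fields and of homogeneous degree $n$ in the closed-string fields. By Lemma \ref{lemma_cyclic_admissible} and Definition \ref{definition-weight}, $V_{h,n}$ is locally $\widehat{\Sym}^h$ of the local cyclic cochain complex $\op{Cyc}^\ast_{loc}(\Omega^{0,\ast}(X)[1])$ tensored with $\Sym^n$ of the local linear functionals on $\PV_c(X)[[t]][2]$. Hence $V_R$ is, as a complex of local functionals, the $R$-th symmetric power of $W=W^{\mathrm{op}}\oplus W^{\mathrm{cl}}$, where $W^{\mathrm{op}}=\op{Cyc}^\ast_{loc}(\Omega^{0,\ast}(X)[1])$ (with its degrees shifted by $[d-3]$) and $W^{\mathrm{cl}}$ is the complex of local linear functionals on $\PV_c(X)[[t]][2]$. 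A direct check of how the three terms of the differential act on the $(h,n)$-bigrading shows it respects this structure: $Q_\infty+\{\CS_\infty,-\}^O$ preserves $(h,n)$ and restricts on $W^{\mathrm{op}}$ to the cyclic cochain differential (by the Tsygan--Loday--Quillen lemma) and on $W^{\mathrm{cl}}$ to $\dbar+t\pa$, while $\{I^{1-disk}_\infty,-\}^C$ raises $h$ and lowers $n$ by one, so it comes from a map $\delta\colon W^{\mathrm{cl}}\to W^{\mathrm{op}}$, and the differential on $V_R$ is the derivation of the symmetric algebra induced by the differential on $W$. Here the symmetric power is taken fiberwise, which is the key to the shape of the answer: since we work over $\C$, the fiberwise cohomology of $\Sym^R W$ is $\Sym^R$ of the fiberwise cohomology of $W$ (K\"unneth), and by the local-to-global principle of \cite{Cos11} the cohomology of the complex of local functionals $V_R$ is this fiberwise cohomology tensored with the compactly supported de Rham cohomology of $X$, i.e.\ with $H^\ast_{dR}(X)[2d]$ after Poincar\'e duality. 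Thus everything reduces to the cohomology of the formal-disk model of $W$.

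To compute it I would identify $W$ with the periodic cyclic cochain complex $\Omega^{-\ast,\ast}(X)((t))[-1]$ with differential $\dbar+t\pa$. The Hochschild--Kostant--Rosenberg computation recalled above gives a quasi-isomorphism $W^{\mathrm{op}}\simeq\Omega^{-\ast,\ast}(X)[[t]][-1]$ carrying the cyclic differential to $\dbar+t\pa$; dualizing $\PV_c(X)[[t]][2]\cong\Omega^{-\ast,\ast}_c(X)[[t]][-1]$ using the pairing $\phi(\alpha)=\sum_{k\ge0}\int\phi_{-k-1}\wedge\alpha_k$ identifies $W^{\mathrm{cl}}\cong t^{-1}\Omega^{-\ast,\ast}(X)[t^{-1}][-1]$, again with differential $\dbar+t\pa$ (with the convention $t\pa(\phi_{-1}t^{-1})=0$). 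These are the connective and coconnective parts of the periodic complex, with $W^{\mathrm{op}}$ a subcomplex of $W$ and $W^{\mathrm{cl}}$ the quotient. The crucial point is that $\delta=\{I^{1-disk}_\infty,-\}^C$ is exactly the remaining piece of $\dbar+t\pa$, the connecting map $\phi_{-1}t^{-1}\mapsto\pa\phi_{-1}\cdot t^0$: the closed-string Poisson kernel $(\pa\otimes1)\delta_{\mathrm{Diag}}$ is supported on the $t^0$-components, so $\delta$ annihilates $\phi_k t^k$ for $k\le-2$, and $\delta(\phi_{-1}t^{-1})$ is the open-string functional obtained by inserting $\pa\phi_{-1}$ into the degree-$t^0$ interior slot of $I^{1-disk}_\infty$; and the $t^0$-term of $I^{1-disk}_\infty$ (Definition \ref{defn-1-disk}) is precisely the HKR/cyclic cochain map --- this is the ``easy part'' of \cite{WilCal08}. \textbf{This last identification is the main obstacle}: it requires unwinding the configuration-space integrals defining $I^{1-disk}_\infty$ and the closed BV bracket and matching the outcome with the HKR map. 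Granting it, comparing the short exact sequences $0\to W^{\mathrm{op}}\to W\to W^{\mathrm{cl}}\to0$ and $0\to\Omega^{-\ast,\ast}[[t]][-1]\to\Omega^{-\ast,\ast}((t))[-1]\to t^{-1}\Omega^{-\ast,\ast}[t^{-1}][-1]\to0$, together with the five lemma, gives $H^\ast(W)\cong H^\ast\bigl(\Omega^{-\ast,\ast}(X)((t))[-1],\ \dbar+t\pa\bigr)$ on the formal disk.

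Finally I would compute the periodic complex on the formal disk by regrouping $\Omega^{-\ast,\ast}((t))[-1]$ according to the value of $k-p$ on the summand $\Omega^{p,\ast}t^k$: for each fixed value, $\bigoplus_p\Omega^{p,\ast}t^{p+c}$ with differential $\dbar+t\pa$ is the total de Rham complex $(\Omega^\bullet,d)$ decorated by powers of $t$, so by the formal Poincar\'e lemma its cohomology is one copy of $\C$; collecting over all $c$ and tracking the shifts (the $[-1]$, the $[h(d-3)]$ built into $\Oo'$, and the $d$ holomorphic directions) yields $\C((t))[2-d]^\vee$ up to the overall shift. Feeding this back through the local-to-global step gives $H^\ast(V_R)=H^\ast_{dR}(X)[2d]\otimes\Sym^R\bigl(\C((t))[2-d]^\vee\bigr)$, the single factor $H^\ast_{dR}(X)[2d]$ arising from the one global integration in a local functional. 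For the weight statement: under the modified $S^1$-action the constant coefficient $\Sym^R(\C((t))[2-d]^\vee)$ lies in weight $0$ by construction of that action, while the de Rham cohomology of the conical manifold $X$ is concentrated in $S^1$-weight $0$ because the scaling flow is homotopic to the identity; therefore $V^{(k)}_R$ is acyclic for $k\ne0$ and $H^\ast(V^{(0)}_R)$ equals the cohomology just computed. A secondary difficulty, besides the identification of $\delta$, is keeping all these degree shifts and $S^1$-weights straight.
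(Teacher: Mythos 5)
Your proposal is correct and follows essentially the same route as the paper: reduce to the fiberwise ($D$-module/jet) computation, apply Tsygan--Loday--Quillen and Hochschild--Kostant--Rosenberg to identify the open sector with $\Omega^{-\ast,\ast}[[t]][-1]$, identify $\{I^{1-disk}_\infty,-\}^C$ with the missing connecting map $t\partial$ (the paper isolates this as a lemma, proved exactly by the observation you make that the Poisson kernel $(\partial\otimes 1)\delta_{\mathrm{Diag}}$ lives in the $t^0$-component and that the $t^0$-term of $I^{1-disk}$ is the HKR pairing), and finish with the formal Poincar\'e lemma on the resulting periodic complex. The only difference is organizational: the paper runs a filtration spectral sequence by closed-string degree where you use the short exact sequence $0\to W^{\mathrm{op}}\to W\to W^{\mathrm{cl}}\to 0$, which amounts to the same computation.
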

This immediately provides a proof of the theorem.  The point is that the obstruction-deformation complex for lifting from $\T^{<(g,R)}$ to $\T^{\le(g,R)}$ is $V^{(2-2g-R)}[(d-3)(2g-2)]$, and according to the proposition this is zero if $2-2g-R \neq 0$, showing that there is a unique (up to contractible choice) lift to the next order.

\subsection{Proof of the proposition}
The proof of this proposition will occupy the rest of this section.

The results of \cite{Cos11} allow us to describe the cohomology of the complexes of local functionals  in terms of jets of the space of fields.  The space 
$$\E_N = \PV(X)[[t]][2] \oplus \Omega^{0,\ast}(X) \otimes \gl_N[1] $$ of fields is the space of smooth sections of a graded vector bundle on $X$ which we denote $E_N$.  Let $J(E_N)$ denote the bundle of jets of $E_N$, and for $x \in X$, let $J_x(E_N)$ denote the fibre of this bundle at $x$. Concretely, 
$$
J_x(E_N) = J_x(\PV)[[t]][2] \oplus J_x(\Omega^{0,\ast}) \otimes \gl_N[1] 
$$
where, if we choose coordinates near $x$, we can write the spaces of jets of polyvector fields and of jets of the Dolbeault algebra as 
\begin{align*}
J_x(\PV) &= \C[[z_i, \pa_i,  \bar z_j, \d \bar z_j  ]]  \\
J_x(\Omega^{0,\ast}) &= \C[[z_i,   \bar z_j, \d \bar z_j  ]].
\end{align*}
Here the generators $\d \bar z_j$ and $\pa_i$ are of cohomological degree $1$. 

Let $\Oo(J(E_N))$ denote the bundle on $X$ of algebras of functionals modulo constants on the bundle of jets. The fibre of $\Oo(J(E_N))$ at $x$ is defined to be the completed non-unital symmetric algebra of the topological dual of $J_x(E_N)$. 

Recall that sections of a jet bundle are always a $D$-module (or, equivalently, an infinite-rank vector bundle with a flat connection). Here $D$ is the sheaf of smooth differential operators.  If we choose local coordinates on a neighbourhood $U$ of $x \in X$, we can trivialize the jet bundles near $x$ (as above).  The sections of the jet bundle on $U$ can be described as
\begin{align*} 
 \Gamma(U, J(E_N)) &= \cinfty(U) \what{\otimes} J_x(E_N)\\
 &= \cinfty(U)[[z_i,\pa_i,\bar z_j, \d \bar z_j,t]] \oplus \cinfty(U)[[z_i, \bar z_j, \d \bar z_j]]\otimes \gl_N.
 \end{align*}
 In local coordinates, the algebra $D(U)$ of differential operators on $U$ is
 $$
D(U) = \cinfty(U)\left[\dpa{z_i}, \dpa{\bar z_j}\right].
 $$
The sub-algebra $\cinfty(U)$ of $D(U)$ acts on $\Gamma(U,J(E_N))$ in the evident way.  Thus, the $D$-module structure, in coordinates, is determined by the action of the commutative algebra $\C[\dpa{z_i}, \dpa{\bar z_j}]$ of constant-coefficient differential operators.   By the Leibniz rule, it suffices to say how these constant-coefficient operators act on constant sections of the jet bundle in our chosen trivialization.   In the concrete expression given above for the space $J_x(E_N)$ of constant sections,  the action of $\dpa{z_i}$ and $\dpa{\bar z_j}$ on $J_0(\PV)$ and $J_0(\Omega^{0,\ast})$ is given by minus the evident derivations.

With these conventions, one checks that a flat section of $J(E_N)$ -- that is one annihilated by all the $\dpa{z_i}$ and $\dpa{\bar z_j}$ -- is precisely the jet of an actual smooth section of the bundle $E_N$.

By naturality, the bundle of algebras $\Oo(J(E_N))$ is a $D$-module as well.   Then, there is an isomorphism of graded vector spaces
$$
\Ool(\E_{c,N}) \iso \Gamma(X, \Omega^{d,d} \otimes_{D}\Oo( J(E_N) )).
$$
Here $\Omega^{d,d}$ refers to the sheaf of top forms on $X$, which is a right $D$-module.  We are tensoring the sheaf of right modules $\Omega^{d,d}$  with the sheaf of left modules $\Oo(J(E_N))$ over the sheaf of differential operators $D$, and then taking global sections.

Let $\Oo_{adm}(J(E_\infty))$ be the bundle of admissible sequences of functions (in the same sense as before, and recall that it is defined modulo constants) on the sequence of bundles $J(E_N)$. Let $\Oo'_{adm}(J(E_\infty))$ be the same thing but with the shift in cohomological degree described above, whereby functions of weight $h$ are shifted by $[(d-3)h]$.  Then, the space of admissible functionals $\Oo'_{adm,loc}(\E_{c,\infty})$ has a similar description,
$$
\Oo'_{adm,loc}(\E_{c,\infty}) \iso \Gamma(X, \Omega^{d,d} \otimes_{D}\Oo'_{adm}(J(E_\infty))).
$$

It turns out (see \cite{Cos11}) that the $D$-module of functions on jets is always flat, so that we can replace the actual tensor product by the derived tensor product to find a quasi-isomorphism
$$
\Oo'_{adm,loc}(\E_{c,\infty}) \simeq \Gamma(X, \Omega^{d,d} \otimes_{D}^{\mbb{L}} \Oo'_{adm}(J(E_\infty))). 
$$
Finally, using the canonical Spencer resolution of the right $D$-module $\Omega^{d,d}$ by flat $D$-modules one finds a quasi-isomorphism
$$
\Oo'_{adm,loc}(\E_{c,\infty}) \simeq  \Omega^{\ast,\ast}(X, \Oo'_{adm}(J(E_\infty))[2d]. 
$$
On the right hand side we have the de Rham complex of $X$ with coefficients in the $D$-module $\Oo_{adm}(J(E_\infty))$, with a shift by $2d$.

We let
\begin{equation*}
W_{R} = \bigoplus_{i+j = R} \Sym^i \left( J (\PV[[t]]) [2] ^\vee \right) \otimes \Sym^j \left( \op{Cyc} \left( J (\Omega^{0,\ast})^\vee[-1]  \right)[d-3]\right) \subset \Oo'_{adm}(J(E_\infty)). \tag{$\dagger$}
\end{equation*}
Then, $W_R$ is the $D$-module version of the  complex $V_R \subset \Oo'_{adm,loc}(\E_{c,\infty})$, and we have a quasi-isomorphism
$$
V_R \iso \Omega^{\ast,\ast}(X, W_R)[2d]. 
$$

If we want to compute the cohomology of $V_R$, we can use a spectral sequence argument.  Since $\Omega^{\ast,\ast}(X, W_R)$ is a double complex, with one differential coming from that on the dg $D$-module $W_R$ and the other being the de Rham differential, we can apply the spectral sequence for a double complex. This yields a convergent spectral sequence of the form 
$$
H^q_{dR}(X, H^p(W_R)) \Rightarrow H^{p+q-2d} (V_R).
$$

To compute the first page of this spectral sequence, we need to compute the cohomology of the dg $D$-module $W_R$. We find the following.
\begin{proposition}\label{proposition-D-module-vanishing}
The cohomology of $W_R$ is the trivial $D$-module with fibre the space $\Sym^R( \C((t))[2-d]^\vee)$. 
\end{proposition}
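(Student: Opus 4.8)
The plan is to recognise the dg $D$-module $W_R$, fibre by fibre over $X$, as the $R$-th symmetric power of the periodic cyclic cochain complex of the algebra $J_x(\Omega^{0,\ast})$ of $\infty$-jets of $\Omega^{0,\ast}(X)$, and then to compute this symmetric power by Hochschild--Kostant--Rosenberg together with the formal Poincar\'e lemma. First I would fix $x\in X$, choose local coordinates, and write $W_R=\Sym^R(N\oplus C)$, where, in the notation of the definition of $W_R$, $N=J_x(\PV)[[t]][2]^\vee$ is the closed-string factor and $C=\op{Cyc}\big(J_x(\Omega^{0,\ast})^\vee[-1]\big)[d-3]$ is the open-string factor. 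Since the differential on $W_R$ is a derivation of the product, it is the symmetric-power extension of a single degree-one operator $d_P$ on $N\oplus C$, which has three pieces: on $C$ it is $\dbar+\{\CS_\infty,-\}^O$, which by the Tsygan--Loday--Quillen theorem makes $C$ the (shifted) cyclic cochain complex $CC^\ast(J_x(\Omega^{0,\ast}))$; on $N$ it is $Q=\dbar+t\pa$, which by the alternative construction of $I^{1-disk}$ described above---in which the HKR map intertwines $\pa$ with the Connes operator $B$---is the negative cyclic cochain complex (concretely, $\PV$-valued series in $t$ with only negative powers); and the remaining piece, which lowers the closed weight by one and raises the open weight by one, restricts on $N\oplus C$ to a map $N\to C$ given by $\{I^{1-disk}_\infty,-\}^C$.

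The decisive step is to identify this map $N\to C$ with the periodicity (connecting) map in the short exact sequence relating negative cyclic, periodic and cyclic cochains. This I would extract from the classical master equation $Q_\infty I^{1-disk}_\infty+\{\CS_\infty,I^{1-disk}_\infty\}^O=0$ of Lemma~\ref{lem-CME} together with the fact that the closed-string BV kernel is $(\pa\otimes 1)\delta_{Diag}$: bracketing against $I^{1-disk}_\infty$ therefore produces exactly one factor of $\pa=B$, and comparing with the explicit formula for $I^{1-disk}_N$ pins down the constant. Granting this, $N\oplus C$ is the periodic cyclic cochain complex $P$ of $J_x(\Omega^{0,\ast})$---concretely $\big(J_x(\PV)((t))[2],\ \dbar+t\pa\big)$ with $t\pa$ now genuinely carrying the $t^{-1}$ coefficient into the $t^0$ coefficient---and $W_R=\Sym^R(P)$.

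Working in characteristic zero, $H^\ast(\Sym^R P)=\Sym^R(H^\ast P)$, so it remains to compute $H^\ast(P)$, the periodic cyclic cohomology of $J_x(\Omega^{0,\ast})$. Since $J_x(\Omega^{0,\ast})$ is a resolution of $\C[[z_1,\dots,z_d]]$ and periodic cyclic cohomology is a quasi-isomorphism invariant, this equals $HP^\ast(\C[[z_1,\dots,z_d]])$: taking $\dbar$-cohomology first collapses $J_x(\PV)((t))$ onto holomorphic polyvector jets $\C[[z_i,\pa_i]]((t))$, on which the residual differential $t\pa$ becomes, via contraction with the volume form, the holomorphic de Rham differential with Laurent coefficients in $t$, and the formal Poincar\'e lemma gives cohomology $\C((t))$, one-dimensional at each power of $t$ and spanned by $t^k\pa_1\cdots\pa_d$. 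In the normalisation of the statement the shifts combine to $\C((t))[2-d]^\vee$, so $H^\ast(W_R)=\Sym^R\big(\C((t))[2-d]^\vee\big)$ fibrewise; and since each generator is represented by a flat section of the jet bundle---the coefficient functions are constants and $\pa_1\cdots\pa_d$ corresponds to the flat section $1$ under the Calabi--Yau trivialisation of $\wedge^d TX$---the identification is horizontal, so $H^\ast(W_R)$ is the \emph{trivial} $D$-module with the stated fibre. I expect the main obstacle to be the identification in the second paragraph: fixing $\{I^{1-disk}_\infty,-\}^C$ to be exactly the Connes connecting map, with the correct normalisation, is the cancellation on which the entire induction rests, and it requires carefully matching the Willwacher--Calaque formula for $I^{1-disk}$ against the $B$-operator description of cyclic cohomology.
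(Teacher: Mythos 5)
Your proposal is correct and follows essentially the same route as the paper: identify the open sector with cyclic cochains via Tsygan--Loday--Quillen, recognise $\{I^{1-disk}_\infty,-\}^C$ as a nonzero multiple of the connecting map $t\partial$ so that the fibre of $W_R$ becomes $\Sym^R$ of the periodic cyclic complex, and finish with Hochschild--Kostant--Rosenberg and the formal Poincar\'e lemma (the paper organises this as a two-stage spectral sequence rather than your direct $H^\ast(\Sym^R P)=\Sym^R H^\ast(P)$, but the content is identical). The only remark worth adding is that for this proposition you need only that the constant in the connecting map is nonzero; its precise normalisation is fixed later by the annulus anomaly cancellation, not here.
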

\begin{proof}[Proof of the proposition]
We will compute the cohomology of $W_R$ locally on $X$, in local coordinates. So let us assume for now that $X = \C^d$, and work near the origin $0 \in \C^d$. Let $J_0(E_\infty)$ denote the fibre of the jet bundle at the origin, and consider the space $\Oo'_{adm}(J_0(E_\infty))$ of admissible functionals. Inside this we have the fibre $W_{R,0}$ of $W_R$ at the origin. We will compute the cohomology of $W_{R,0}$. 

We can write 
\begin{align*} 
 \Oo'_{adm}(J_0(E_\infty)) &= \Oo( J_0(\PV[[t]]) [2] ) \otimes \what{\Sym} \left(\op{Cyc} \left( J_0 (\Omega^{0,\ast})^\vee[-1]  \right) [d-3]\right)
 \end{align*}
modulo constants and 
\begin{align*}
W_{R,0} &= \bigoplus_{i+j = R} \Sym^i \left( J_0 (\PV[[t]]) [2] ^\vee \right) \otimes \Sym^j \left( \op{Cyc} \left( J_0 (\Omega^{0,\ast})^\vee[-1]  \right)[d-3]\right) \subset \Oo'_{adm}(J_0(E_\infty)). 
\end{align*}
We will prove that the cohomology of $W_{R,0}$ is isomorphic to $\Sym^R \left( \C((t))[-d] \right)$. The differential on the whole complex preserves the decomposition into the product of the spaces $W_{R,0}$, so this will prove the result.  
 
Let us filter  $W_{R,0}$ by saying that $F^k$ is the subspace which can be expressed in this way where $i \le k$.  

The differential on our complex has four terms. Two of these terms come from the linear differentials $\dbar + t \partial$ on $\PV[[t]]$ and $\dbar$ on $\Omega^{0,\ast}$;  the other two come from bracketing with the classical action functionals $I_{0,1,0}$ and $I_{0,1,1}$.

All these terms preserve this filtration, and on the associated graded we forget the term coming from $I_{0,1,1}$. 

In the first page of the spectral sequence associated to the filtration $F^k$, we thus need to compute the cohomology of the complexes ($\dagger$) with the differential arising from the linear differentials on $\PV[[t]]$ and $\Omega^{0,\ast}$, and from the classical action functional $I_{0,1,0}$. 

The key fact we need is that the term in the differential corresponding to $I_{0,1,0}$ gives precisely the differential on the cyclic cochain complex of the commutative algebra 
$$J_0 \Omega^{0,\ast} = \C[[z_i, \bar z_i, \d \bar z_i]].
$$
This follows from Tsygan and Loday-Quillen's theorem on cyclic homology \cite{Tsy83,LodQui84}.  As, $I_{0,1,0}$ comes from the holomorphic Chern-Simons interaction on $\gl_N$, and thus gives the Chevalley-Eilenberg differential on $C^\ast ( J_0 \Omega^{0,\ast} \otimes \gl_N)$.  Tsygan and Loday-Quillen show that as $N \to \infty$ this becomes the cyclic differential. 

So, we see that the cohomology of the associated graded is
$$
\Sym^i H^\ast\left( J_0 PV[[t]][2]^\vee \right) \otimes \Sym^j \left( HC^\ast ( J_0 \Omega^{0,\ast} ) [d-4] \right)   
$$
where $HC^\ast$ refers to cyclic cohomology. 

Now, we apply the Hochschild-Kostant-Rosenberg theorem to compute cyclic cohomology of $J_0 \Omega^{0,\ast}$.  $J_0 \Omega^{0,\ast}$ is quasi-isomorphic to the power series algebra $\C[[z_i]]$.  Thus, we find that cyclic homology of $J_0 \Omega^{0,\ast}$ is
$$
HC_\ast ( J_0 \Omega^{0,\ast} ) = \left(t^{-1}\C[[z_i, \d z_i ]] [t^{-1}][-2]  , t \d_{dR} \right).
$$
Here, $\d z_i$ is of cohomological degree $-1$.  Thus, cyclic homology is built from forms on the formal disc $\Spec \C[[z_i]]$.  

Now, because we are working on a Calabi-Yau, we can use the volume form to identify forms with polyvector fields.  Let $\PV(\what{D}_d)$ denote holomorphic polyvector fields on the  $d$-dimensional formal disc $\Spec \C[[z_i]]$.  Then, we can rewrite cyclic homology as
$$
HC_\ast ( J_0 \Omega^{0,\ast} ) = t^{-1}\PV(\what{D}_d) [t^{-1} ] [d-2] 
$$
with differential $t \del$.  Taking into account the degree shifting of $h(d-3)$ which we've been carrying along, we have shown that the fiber of associated graded of our complex is quasi-isomorphic to 
$$
\Omega_0^{*,*}[2d]\otimes_{\C}\Sym^R \left( \PV(\what{D}_n)[[t]]  [2] \oplus t^{-1}\PV( \what{D}_n) [t^{-1}] [2] \right)^\vee. 
$$

The differential is $t \partial$ on each summand, but it does not map $t^{-1}\PV(\what{D}_n)[t^{-1}]  [2]$ to $ \PV( \what{D}_n) [[t]] [2]$. 
\begin{lemma}
The differential on the next page of the spectral sequence arises from the map
$$
c t \partial : t^{-1}\PV(\what{D}_n) [2] \to \PV(\what{D}_n) 
$$
where $c$ is a non-zero constant.  
\end{lemma}
\begin{remark} The constant $c$ is in fact determined by the annulus anomaly cancellation condition. See Theorem \ref{annulus-cancellation}. 
\end{remark}
\begin{proof} The differential on the next page of the spectral sequence is given by $\{I^{1-disk},-\}^C$. The closed string BV bracket $\{-,-\}^C$ is represented by the kernel of the operator $\pa: \PV(X)\to \PV(X)$, which is $(\partial \otimes 1) \delta_{Diag}$ where $\delta_{Diag}$ is the $\delta$-current on the diagonal of $X \times X$, viewed as a polyvector field via the isomorphism between forms and polyvector fields. In particular, it only involves the component $t^0\PV(X)$ of the full closed string fields $\PV(X)[[t]]$. For $\mu \in \PV(X)$, 
$$
   I^{1-disk}(\mu, A)=c \int_X \Tr (A\wedge \pa A\wedge\cdots \wedge \pa A)\wedge (\mu\vdash \Omega_X)
$$
which represents the natural pairing between $\PV(X)$ and the hochschild chains of $\Omega^{0,*}(X)$. It follows that $\{I^{1-disk},-\}^C$ is the dual of 
\begin{align*}
    \eta:  \op{Cyc}(\Omega^{0,*}(X))&\to \PV(X)
 \end{align*}
where $
      \eta(A_1\otimes \cdots \otimes A_k)$ is identified with $c(\pa A_1\wedge \cdots \wedge \pa A_k)
$ under the natural isomorphism between $\PV(X)$ and $\Omega^{*,*}(X)$ with respect to $\Omega_X$. The lemma is a local statement of this at a fiber of the jets. 

\end{proof}

This lemma shows that, after taking cohomology on this page of the spectral sequence, we find 
$$
W_{R,0} \simeq \Sym^R \left( \C \cdot (\partial_1 \wedge \dots \wedge \partial_d)((t)) [2] \right)^\vee. 
$$
as desired.  It is easy to verify that the $D$-module structure on the bundle with fibre $W_{R,0}$ is trivial, thus completing the proof of the proposition.  
\end{proof}
This completes the proof of the theorem that the map $\T^{\le (g,R)} \to \T^{< (g,R)}$ is a weak homotopy equivalence as long as $2g-2+R > 0$.  That is, once we have constructed $I_{0,2,0}[L]$, then we obtain a unique (up to homotopy) quantization $I_{g,h,n}[L]$ for all $(g,h,n)$. 

\subsection{Theorem A on $\C^d$} In this section, $X$ will denote the Calabi-Yau manifold $\C^d$, with linear coordinate $z_1, \cdots, z_d$. The holomorphic volume form is chosen to be 
$$
\Omega_X=\d z_1\wedge \cdots \wedge \d z_d.
$$
Let us explain in this section how to modify Theorem A proved above, stating that there is a quantization beyond the annulus level, to the case of $\C^d$, where we in addition assume that our theory is translation invariant. 

We choose the standard K\"{a}hler metric ${i\over 2}\sum_{i=1}^d \d z_i\wedge \d\bar z_i$ to define our gauge fixing.

A quantum BCOV theory on $\C^d$ is translation invariant if each $I_{g,h,n}[L]$ is invariant under the action of the Abelian group $\C^d$ acting on everything.  We need a stronger notion of translation invariance called holomorphic translation invariance.

Notice that our spaces of fields on $\C^d$ are of the form 
$$\E_N = \Omega^{0,\ast}(\C^d) \otimes V_N$$
where $V_N$ is the graded vector space
$$
V_N = \gl_N[1] \oplus \C[\partial_i][[t]][2]
$$
where the variables $\partial_i$, for $i = 1,\dots, d$, have degree $+1$.

The operator $\iota_{\partial_{\zbar_i}}$ of contracting with the vector field $\partial_{\zbar_i}$ maps $\Omega^{0,k}(\C^d)$ to $\Omega^{0,k-1}(\C^d)$ by removing one copy of $\d \zbar_i$ with the appropriate sign. The Cartan homotopy formula tells us that 
$$
\mscr{L}_{\partial_{\zbar_i}} = \left[ \dbar, \iota_{\partial_{\zbar_i}} \right]. 
$$
The operator $\iota_{\partial_{\zbar_i}}$ extends to an operator on the space of fields $\Omega^{0,\ast}(\C^d) \otimes V_N$ by tensoring with the identity on $V_N$, and satisfies the commutation relation
$$
\mscr{L}_{\partial_{\zbar_i}} = \left[ Q_N, \iota_{\partial_{\zbar_i}} \right] 
$$
where $Q_N$ is the linear differential on the space of fields. 

By naturality, the operators $\iota_{\partial_{\zbar_i}}$ extend to derivations which we denote by same notation on the spaces of functionals on the space of fields.  One can check that the classical action functionals $I_{0,1,0}$ and $I_{0,1,1}$ satisfy
$$
\iota_{\partial_{\zbar_i}} I_{0,1,0} = 0 \qquad \iota_{\partial_{\zbar_i}} I_{0,1,1} = 0. 
$$
\begin{lemma}
The operators $\iota_{\partial_{\zbar_i}}$ commute with the RG flow operator from scale $\eps$ to scale $L$ and with the BV Laplacian of scale $L$ on the space $\Oo(\E_{c,N})[[\lambda]]_+$.  Further, $\iota_{\partial_{\zbar_i}}$ is a derivation for the scale $L$ BV Poisson bracket. 
\end{lemma}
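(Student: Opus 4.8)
The plan is to reduce the whole statement to one computation — that each $\iota_{\partial_{\zbar_i}}$ annihilates the heat kernels and the propagators of both the open and closed sectors on $\C^d$ — after which everything is formal.

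First I would recall the general mechanism by which a linear symmetry interacts with the operators $\pa_\Phi$. If $D$ is a continuous linear operator on the space of fields, inducing by naturality a graded derivation of $\Oo(\E_{c,N})$ which I also call $D$, and if $\Phi$ lies in the (distributional completion of the) symmetric square of the fields, then $[D,\pa_\Phi]$ is again an order-two operator, equal up to sign to $\pa_{D\cdot\Phi}$, where $D\cdot\Phi=(D\otimes 1+1\otimes D)\Phi$ is the derivation action of $D$ on the symmetric square, with the Koszul signs. In particular $D$ commutes with $\pa_\Phi$ as soon as $D\cdot\Phi=0$. Applying this with $D=\iota_{\partial_{\zbar_i}}$ and with $\Phi$ running over $P(\epsilon,L)^C$, $P(\epsilon,L)^{O,N}$, $(\pa\otimes 1)K_L^C$ and $K_L^{O,N}$, the assertions about $\pa_{P(\epsilon,L)^\bullet}$ and about the BV Laplacians $\Delta_L^C=\pa_{(\pa\otimes 1)K_L^C}$ and $\Delta_L^{O,N}=\pa_{K_L^{O,N}}$ reduce to the vanishing of $\iota_{\partial_{\zbar_i}}\cdot K_L^C$, $\iota_{\partial_{\zbar_i}}\cdot K_L^{O,N}$, $\iota_{\partial_{\zbar_i}}\cdot P(\epsilon,L)^C$ and $\iota_{\partial_{\zbar_i}}\cdot P(\epsilon,L)^{O,N}$.

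I would check this vanishing using the explicit kernels on $\C^d$ with the standard translation-invariant metric. The open-sector heat kernel $K_L^{O,N}$ is the scalar Gaussian $(4\pi L)^{-d}e^{-\|z-w\|^2/4L}$ times the ``identity kernel'' $\prod_{j=1}^{d}(\d\zbar_j\otimes 1+1\otimes\d\zbar_j)$ on $\Omega^{0,*}$ times the Casimir of $\gl_N$, and $K_L^C$ has the same shape with polyvector coefficients. Since $\iota_{\partial_{\zbar_i}}$ acts as the identity on the $\gl_N$-factor, commutes with multiplication by functions of $z,\zbar$, and is a graded derivation, the only point to verify is that, with the Koszul signs, $(\iota_{\partial_{\zbar_i}}\otimes 1+1\otimes\iota_{\partial_{\zbar_i}})(\d\zbar_i\otimes 1+1\otimes\d\zbar_i)=1\otimes 1-1\otimes 1=0$, while $\iota_{\partial_{\zbar_i}}$ kills every other factor of the product; this gives $\iota_{\partial_{\zbar_i}}\cdot K_L^{O,N}=0$, and the same argument — using also that $\iota_{\partial_{\zbar_i}}$ commutes with the polyvector variables $\partial_{z_j}$ and with the divergence $\pa$ — gives $\iota_{\partial_{\zbar_i}}\cdot(\pa\otimes 1)K_L^C=0$. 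For the propagators one uses in addition that on flat $\C^d$ one has $\dbar^*=-\sum_j\partial_{z_j}\iota_{\partial_{\zbar_j}}$, which graded-commutes with $\iota_{\partial_{\zbar_i}}$ (holomorphic derivatives commute with antiholomorphic contractions, and two contractions anticommute), so $\iota_{\partial_{\zbar_i}}$ commutes with $\dbar^*\otimes 1$ and with $\pa\dbar^*\otimes 1$; differentiating under the integral sign in $P(\epsilon,L)^{O,N}=\int_\epsilon^L(\dbar^*\otimes 1)K_t^{O,N}\,\d t$ and $P(\epsilon,L)^C=\int_\epsilon^L(\pa\dbar^*\otimes 1)K_t^C\,\d t$ then yields the claimed vanishing.

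Granting this, the remaining statements are formal. Because $\iota_{\partial_{\zbar_i}}$ is a graded derivation of $\Oo(\E_{c,N})[[\lambda]]_+$ annihilating $\lambda$ and commutes with $\pa_{P(\epsilon,L)^C}$ and $\pa_{P(\epsilon,L)^{O,N}}$, it commutes, as an operator, with $\exp\bigl(\lambda^2\pa_{P(\epsilon,L)^C}+\lambda\pa_{P(\epsilon,L)^{O,N}}\bigr)$, and hence is compatible with the renormalization group flow $W_\epsilon^L(I)=\lambda^2\log\Bigl(\exp\bigl(\lambda^2\pa_{P(\epsilon,L)^C}+\lambda\pa_{P(\epsilon,L)^{O,N}}\bigr)\exp(I/\lambda^2)\Bigr)$, which is assembled from that exponential together with the operations of multiplication by $e^{I/\lambda^2}$ and of $\lambda^2\log$, for which $\iota_{\partial_{\zbar_i}}$ obeys the chain and Leibniz rules; in particular $W_\epsilon^L$ carries $\iota_{\partial_{\zbar_i}}$-closed functionals to $\iota_{\partial_{\zbar_i}}$-closed functionals. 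Finally, since $\{\Phi,\Psi\}_L^\bullet=\Delta_L^\bullet(\Phi\Psi)-(\Delta_L^\bullet\Phi)\Psi-(-1)^{|\Phi|}\Phi\,\Delta_L^\bullet\Psi$, the commutation $\iota_{\partial_{\zbar_i}}\Delta_L^\bullet=\Delta_L^\bullet\iota_{\partial_{\zbar_i}}$ together with the graded Leibniz rule for $\iota_{\partial_{\zbar_i}}$ with respect to the commutative product gives, after routine sign bookkeeping, that $\iota_{\partial_{\zbar_i}}$ is a graded derivation for $\{-,-\}_L^\bullet$. The only non-formal input is the kernel computation above, and the single delicate point there is the tracking of the Koszul signs — which is exactly why this part of the argument is run on $\C^d$ with the standard translation-invariant gauge fixing rather than on a general Calabi-Yau.
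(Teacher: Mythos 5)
Your proposal is correct and follows essentially the same route as the paper, whose proof is the one-line observation that $\iota_{\partial_{\zbar_i}}$ commutes with the order-two operators $\Delta_L^{O}$, $\Delta_L^{C}$, $\partial_{P(\eps,L)^{O}}$ and $\partial_{P(\eps,L)^{C}}$; you simply supply the underlying kernel computation (the diagonal contraction annihilates the factors $\d\zbar_j - \d\bar w_j$ in the translation-invariant heat kernels and propagators) and the formal consequences for the RG flow and the bracket. The only cosmetic discrepancy is the sign in your ``identity kernel'' factor versus the paper's $\prod_j(\d\zbar_j-\d\bar w_j)$, which does not affect the cancellation.
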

\begin{proof}
This follows from the fact that $\iota_{\partial_{\zbar_i}}$ commutes with the order two differential operators $\tr^O_L$, $\tr_L^C$,  $\partial_{P(\eps,L)^O}$ and $\partial_{P(\eps,L)^C}$. 
\end{proof}
\begin{definition}
A quantum BCOV theory on $\C^d$ (working, as usual, modulo $N$) is \emph{holomorphically translation invariant} if it is translation invariant and if each $I[L]$ is in the kernel of the operators $\iota_{\partial_{\zbar_i}}$.   Let $\T(\C^d)$ denote the simplicial set of holomorphically translation-invariant and $U(d)$-invariant quantum BCOV theories on $\C^d$, working modulo $N$. As in definition \ref{definition_theories_(G,R)} let $\T^{<(G,R)}(\C^d)$ and $\T^{\le (G,R)}(\C^d)$ denote the corresponding simplicial sets where we work up to order $<(G,R)$ or $\le (G,R)$.   
\end{definition}
\begin{remark}
By $U(d)$ invariance we mean a theory that is $SU(d)$ invariant and where under the remaining $S^1$ in $U(d)$, which scales the volume form on $\C^d$ with weight $d$, each $I_{g,h,n}[L]$ is of weight $-d (2g-2+h)$. 
\end{remark}
We will prove that there is a unique holomorphically translation-invariant BCOV theory on any $\C^d$ (where $d$ is odd).   The first step is to prove, just as we did for theories without the holomorphic translation-invariant condition, that once we construct a quantization at the level of annuli we automatically get a full quantization. This follows from the next theorem, which is the analog in this context of theorem \ref{theorem-vanishing}. 

\begin{theorem}\label{theorem-vanishing-flat-space}
For $(g, R)$ with $g>0$ or $g=0, R\geq 3$, the map 
$$
   \T^{\leq (g,R)}(\C^d)\to \T^{< (g, R-1)}(\C^d)
$$
is a weak equivalence. 
\end{theorem}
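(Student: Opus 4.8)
The plan is to mirror the proof of Theorem~\ref{theorem-vanishing} almost verbatim, tracking the extra structure coming from holomorphic translation invariance and $U(d)$-invariance, and to check that the relevant obstruction--deformation complex still has vanishing cohomology in the required degrees. Just as in the conical case, the results of \cite{Cos11} tell us that lifting a theory from $\T^{<(g,R)}(\C^d)$ to $\T^{\le(g,R)}(\C^d)$ is controlled by an obstruction group and a torsor of lifts sitting inside the cohomology of a complex of local functionals. The only difference is that we now impose three extra constraints: translation invariance (strict $\C^d$-invariance), holomorphic translation invariance (functionals lie in the kernel of all $\iota_{\partial_{\zbar_i}}$), and $U(d)$-invariance with the prescribed weights. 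By the Lemma above, the operators $\iota_{\partial_{\zbar_i}}$ commute with the RG flow, the BV Laplacians and act as derivations of the BV brackets, so the subcomplex cut out by $\bigcap_i \ker \iota_{\partial_{\zbar_i}}$ and by $U(d)$-invariance is genuinely a subcomplex of the obstruction--deformation complex $V_R$.

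First I would identify this subcomplex explicitly. On $\C^d$ the space of fields factors as $\E_N = \Omega^{0,\ast}(\C^d)\otimes V_N$ with $V_N = \gl_N[1]\oplus \C[\partial_i][[t]][2]$, and on jets at the origin the de Rham complex $\Omega^{\ast,\ast}(X, W_R)[2d]$ computing $V_R$ becomes, after passing to translation-invariant sections, a Koszul-type complex: translation-invariant forms on $\C^d$ with coefficients in $W_{R,0}$ form the complex $\wedge^\ast(\C^d)^\vee \otimes W_{R,0}$ with the de Rham differential, and imposing $\iota_{\partial_{\zbar_i}}$ in the kernel together with $U(d)$-invariance picks out a specific line of $\wedge^\ast(\C^d)^\vee$. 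Concretely, the combination of the anti-holomorphic derivation $\iota_{\partial_{\zbar_i}}$-invariance and $SU(d)$-invariance should force the de Rham form part to sit in top holomorphic degree $\d z_1\wedge\cdots\wedge \d z_d$ (this is where $U(d)$ weight bookkeeping enters), collapsing the de Rham complex so that no cohomology is lost relative to the fiberwise computation. The remaining computation of $H^\ast(W_{R,0})$ is exactly Proposition~\ref{proposition-D-module-vanishing}: the $I_{0,1,0}$-differential gives the cyclic cochain differential (Tsygan--Loday--Quillen), HKR computes cyclic homology of $J_0\Omega^{0,\ast}\simeq \C[[z_i]]$ as polyvector fields on the formal disc with differential $t\partial$, and the $\{I^{1-disk},-\}^C$ differential on the next page supplies the missing connecting map $ct\partial: t^{-1}\PV(\what D_d)\to \PV(\what D_d)$, so that $H^\ast(W_{R,0}) = \Sym^R(\C((t))[2-d]^\vee)$ with trivial $D$-module structure. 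The upshot is the analog of the Proposition of this section: $V_R^{(k)}$ (now with the holomorphic-translation and $U(d)$ constraints) has zero cohomology unless $k=0$, and for $k=0$ it is concentrated where it cannot contribute to the obstruction--deformation group when $2g-2+R>0$, since the surviving classes $t^j 1$ are of weight zero and fail the weight condition $(2-2g-R)w$.

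The theorem then follows formally: for $(g,R)$ with $g>0$ or $g=0,R\ge 3$ (equivalently $2g-2+R>0$), the obstruction group $H^{(d-3)(2g-2)+1}(V_R^{(2-2g-R)d})$ and the deformation group $H^{(d-3)(2g-2)}(V_R^{(2-2g-R)d})$ both vanish because $2-2g-R\ne 0$, hence $\T^{\le(g,R)}(\C^d)\to \T^{<(g,R-1)}(\C^d)$ is a weak equivalence (the simplicial set of lifts is a torsor for the Dold--Kan group of a contractible complex). The main obstacle I expect is the bookkeeping in the second paragraph: verifying that imposing $\ker\iota_{\partial_{\zbar_i}}$ together with $U(d)$-invariance on the de Rham part $\Omega^{\ast,\ast}(\C^d, W_R)$ really does collapse it to a single term without killing or creating cohomology --- i.e.\ that the translation-invariant, $U(d)$-equivariant de Rham complex of $\C^d$ with these coefficients is quasi-isomorphic to its degree-$(d,d)$ piece. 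This is a finite, if fiddly, representation-theoretic check, and once it is in place the rest of the argument is identical to the conical case already carried out above.
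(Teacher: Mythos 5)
Your overall strategy --- reduce to the fibrewise computation of Proposition \ref{proposition-D-module-vanishing} and then invoke the $S^1$-weight constraint --- is the right one and is the paper's, but the step you yourself single out as the crux is wrong as stated, and the intermediate claim you derive from it is both false and unnecessary. Imposing holomorphic translation invariance does \emph{not} collapse the de Rham complex $\Omega^{\ast,\ast}(\C^d,W_R)$ to its $(d,d)$ piece, nor does $U(d)$-invariance pick out a single line of $\wedge^\ast(\C^d)^\vee$. What actually happens is that the extended translation algebra $\C^{2d}\oplus\C^d[1]$ (odd part spanned by the $\iota_{\partial_{\zbar_i}}$) has enveloping algebra quasi-isomorphic to $\C[\dpa{z_i}]$, so the smooth de Rham complex is replaced by the \emph{holomorphic} de Rham (Koszul) complex: $V_R \simeq \left( W_R\otimes\C[\d z_1,\dots,\d z_d][2d],\ \d_{W_R}+\d_{dR}\right)$. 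The entire exterior algebra on the $\d z_i$ survives, and consequently the analogue of ``$V_R^{(k)}$ has zero cohomology unless $k=0$'' fails here: running the spectral sequence $H^\ast(W_R)[\d z_i][2d]\Rightarrow H^\ast(V_R)$ with $H^\ast(W_R)=\Sym^R\left(\C((t))[2-d]^\vee\right)$ a trivial module leaves classes of $S^1$-weight $0,1,\dots,$ coming from the $\d z_i$ (each of weight $+1$). For the same reason, your closing step ``both groups vanish because $2-2g-R\neq 0$'' is not valid in this setting: nonvanishing positive weight would not suffice.

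The argument still closes, but by a weaker statement: all that is needed is $H^\ast(V_R^{(k)})=0$ for $k<0$, which is immediate since the $\d z_i$ contribute only non-negative weights to a weight-zero coefficient module, while the obstruction and deformation groups sit in weight $-(2g-2+R)d$, which is strictly negative whenever $2g-2+R>0$. Replacing your claimed collapse by the identification $V_R\simeq W_R[\d z_i][2d]$, and your ``unless $k=0$'' vanishing by ``concentrated in non-negative weight,'' yields exactly the paper's proof.
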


The proof is along the lines of the proof of theorem \ref{theorem-vanishing}, and will take the rest of this section.  We will analyze the obstruction-deformation complex for quantizing the holomorphically translation invariant theory.  As above let
$$
V_N = \gl_N[1] \oplus \C[\partial_i][[t]][2].
$$
In the notation of the proof of \ref{theorem-vanishing} we defined
\begin{align*} 
 E_N &= V_N [\d \zbar_i] = \gl_N[ \d \zbar_i][1] + \C[\partial_{j}, \d \zbar_i] [[t]] [2]
 \end{align*}
so that our space of fields is the space of sections of the trivial bundle on $\C^d$ with fibre $E_N$.  Here $\d \zbar_i$ and $\partial_i$ have cohomological degree $1$, and the indices $i,j$ run from $1$ to $d$. 

Let 
\begin{align*} 
 J_0(E_N) &= E_N[[z_i, \zbar_j]] \\
 &= V_N[[z_i, \zbar_j, \d \zbar_k]]
 \end{align*}
be the fibre at zero of the jet bundle associated to the trivial bundle with fibre $E_N$.  In the proof of theorem \ref{theorem-vanishing} we saw how to express the obstruction-deformation complex for quantizing BCOV theory in terms of a $D$-module built from the complex of admissible functions on $J_0(E_N)$.   We will do something similar for the holomorphically translation-invariant case.  

Note that $J_0(E_N)$ is acted on by the Abelian Lie algebra $\C^{2d}$ spanned by $\dpa{z_i}$ and $\dpa{\zbar_j}$, by differentiating formal power series in the evident way.  This is the Lie algebra of complexified translations on $\C^d$.    It is also acted on by the larger Abelian graded Lie algebra $\C^{2d} \oplus \C^d[1]$ whose odd elements are spanned by $\dpa{\d \zbar_i}$. The operator $\dpa{\d \zbar_i}$ is the same as the operator of contracting with $\dpa{\zbar_i}$; we use the notation $\dpa{\d \zbar_i}$ to avoid confusion at this stage.

As in the proof of theorem \ref{theorem-vanishing} we let $\Oo(J_0(E_N))$ denote the functionals modulo constants on $E_N$, which is the non-unital completed symmetric algebra of the dual of $J_0(E_N)$.  By naturality, $\Oo(J_0(E_N))$ is acted on by the Abelian graded Lie algebra $\C^{2d} \oplus \C^d[1]$ which acts on $J_0(E_N)$.   Let 
$$
U_{\C^{2d}}=\C\left[{\pa \over \pa z_i}, {\pa \over \pa \bar z_j}\right], \quad   U_{\C^{2d|d}}=\C\left[{\pa \over \pa z_i}, {\pa \over \pa \bar z_j}, {\pa \over \pa d\bar z_k}\right]
$$ 
denote the universal enveloping algebra for the Abelian Lie algebra $\C^{2d}$ and $\C^{2d} \oplus \C^d[1]$ respectively. 

$\Hom(J_0(E_N), \C)$ is the space of translation-invariant differential operators from the space $\E_N$ of fields to the space $\cinfty(\C^d)$.  Any sequence of elements
$$
D_1,\dots,D_n \in \Hom(J_0(E_N),\C)
$$
of such differential operators gives rise to a translation-invariant local functional on $\E_{c,N}$ by the formula
$$
e \mapsto \int_{\C^d} \prod \d z_i \prod \d \zbar_i D_1(e) \dots D_n(e)
$$
for $e \in \E_{c,N}$.

This expression gives rise to an identification
$$
\Omega^{d,d}_0 \otimes_{U_{\C^{2d}} } \Oo(J_0(E_N)) \iso \Oo_{loc}(\E_{c,N})^{\C^d} 
$$
between translation-invariant local functionals and the tensor product of the left module $\Oo(J_0(E_N))$ over $U_{\C^{2d}}$ with the trivial right module of rank one $\Omega^{d,d}_0$. (We think of $\Omega^{d,d}_0$ as the space of translation invariant $(d,d)$-forms on $\C^d$).  Tensoring over $U_{\C^{2d}}$ has the effect of removing those Lagrangians which are total derivatives. 

There is a similar description of holomorphically translation invariant local functionals using $U_{\C^{2d|d}}$ instead. Since the Lie algebra $\C^{2d} \oplus \C^d[1]$ acts on $\Omega^{0,\ast}(\C^d)$, the universal enveloping algebra $U_{\C^{2d|d}}$ of this Lie algebra does as well, and by differential operators.  The algebra $U_{\C^{2d|d}}$ is precisely the algebra of differential operators from $\Omega^{0,\ast}(\C^d)$ to itself which commute with the action of the extended translation Lie algebra $\C^{2d} \oplus \C^d[1]$ (whose odd elements act on $\Omega^{0,\ast}(\C^d)$ by contraction with the vector fields $\dpa{\zbar_i}$).   We can thus think of $A$ as being the algebra of constant-coefficient differential operators on the graded manifold whose functions are $\Omega^{0,\ast}(\C^d)$.

In a similar way, we can consider the differential operators
$$
\E_{N} = \Omega^{0,\ast}(\C^d,V_N) \to \Omega^{0,\ast}(\C^d)
$$
which are constant-coefficient, in the sense that they commute with the action of the graded Lie algebra $\C^{2d} \oplus \C^d[1]$. The collection of such differential operators will be denoted by $ \op{Diff}^{\C^{2d} \oplus \C^d[1]} (\E_N,\Omega^{0,\ast}(\C^d))$. Every such constant-coefficient operator is obtained by combining a linear map $V_N \to \C$ with an element of $U_{\C^{2d|d}}$, which differentiates the variables $z_i$, $\zbar_j$ and $\d \zbar_k$.  

Thus, we have natural isomorphisms
\begin{align*} 
 \op{Diff}^{\C^{2d} \oplus \C^d[1]} (\E_N,\Omega^{0,\ast}(\C^d)) &= \op{Hom}(V_N, U_{\C^{2d|d}}) \\
 &= \op{Hom}(V_N[[z_i,\zbar_j, \d \zbar_k]], \C) \\
 &= J_0(E_N)^\vee.
 \end{align*}
Here we are using the fact that we can identify $U_{\C^{2d|d}}$ with the linear dual of the power series algebra $\C[[z_i, \zbar_j, \d \zbar_k]]$.

Thus, $J_0(E_N)^\vee$ is the space of differential operators from $\E_N \to \Omega^{0,\ast}$ which commute with the action of the graded Lie algebra $\C^{2d} \oplus \C^d[1]$.

If $D_1,\dots,D_n \in J_0(E_N)^\vee$ are such differential operators, we can construct a holomorphically translation-invariant local functional on $\E_N$ by the formula
$$
e \mapsto \int_{\C^d} \prod \d z_i D_1(e) \wedge \dots \wedge D_n(e).
$$
On the right hand side we are wedging the elements $D_i(e) \in \Omega^{0,\ast}(\C^d)$ and integrating against the holomorphic volume form.  Since the integration map in the Dolbeault complex is of cohomological degree $-d$, this map is of degree $-d$.  

This gives us a map
$$
\Omega^{d,0}_0 \otimes \prod_{n > 0} \Sym^n( J_0(E_N)^\vee)[d] \to \Ool (\E_N)^{\C^{2d} \oplus \C^d[1]}
$$
(on the right hand side, we are using the fact that holomorphically translation-invariant local functionals are precisely those invariant under the graded Lie algebra $\C^{2d} \oplus \C^d[1])$.

There is some redundancy in this description of holomorphically translation-invariant local functionals, because of total derivatives.  If we take care of this, we find an isomorphism
$$
\Omega^{d,0}_0 \otimes_{U_{\C^{2d|d}}} \Oo(J_0(E_N))[d] \iso \Ool(\E_N)^{\C^{2d} \oplus \C^d[1]}. 
$$
Here the line $\Omega^{d,0}_0$ of translation-invariant $(d,0)$-forms is made into a right $U_{\C^{2d|d}}$-module in the trivial way.  

We will use this description of holomorphically translation-invariant local functionals to compute the obstruction-deformation complex for quantizing the theory on $\C^d$. 

\subsection{Admissible functions}
As discussed earlier, we let $\Oo_{adm}(J_0(E_{\infty})$ denote sequences of admissible functions on $J_0(E_N)$, and we let $\Oo_{adm,loc}(\E_{c,\infty})$ denote admissible sequences of local functions on $\E_{c,N}$.  We let $\Oo'_{adm}(J_0(E_{\infty})$ and $\Oo'_{adm,loc}(\E_{c,\infty})$ denote the same objects, except where the cohomological degree of the space of functions of weight $h$ is shifted by $[h(d-3)]$.  Everything in sight has an action of $U(d) = SU(d) \times S^1$.  We change the $S^1$ action, as in the proof of theorem \ref{theorem-vanishing}, on $\Oo'_{adm}(J_0(E_{\infty})$ and $\Oo'_{adm,loc}(\E_{c,\infty})$ so that the $S^1$ action on polyvector fields is the natural action on forms by pull-back, where we identify polyvector fields and differential forms using a chosen holomorphic volume form.  

The graded vector space $\Oo'_{adm}J_0(E_N))$ has a differential, as in the proof of theorem \ref{theorem-vanishing}, arising from the linear differential $Q$ on the space $\E_N$ of fields together with bracketing with $I_{0,1,0}$ and $I_{0,1,1}$.  Similarly the space $\Oo'_{adm,loc}(\E_{c,N})$ of local functionals has a differential.  The Lie algebra $\C^{2d} \oplus \C^d[1]$ of extended translations has a differential too, which sends $\dpa{\d \zbar_i}$ to $\dpa{\zbar_i}$.  The Cartan homotopy formula (see above) tells us that the dg Lie algebra $\C^{2d} \oplus \C^{d}[1]$ acts on the cochain complexes $\Oo_{adm,loc}(\E_{c,N})$ and $\Oo(J_0(E_N))$ in a way compatible with the differentials. It follows that the sub-space of invariants under the Lie algebra $\C^{2d} \oplus \C^d[1]$ is a sub-complex of $\Ool(\E_{c,N})$ of local functionals.  The discussion above now tells us that we have an isomorphism
$$
\Omega^{d,0}_{0} \otimes_{U_{\C^{2d\d}}} \Oo'_{adm}(J_0(E_\infty))[d] \iso \Oo'_{adm,loc}(\E_{c,\infty})^{\C^{2d} \oplus \C^d[1]}
$$
of cochain complexes.  Here we must treat $U_{\C^{2d\d}}$ as a dg commutative algebra, the universal enveloping algebra of the dg Lie algebra $\C^{2d} \oplus \C^d[1]$. 

Now, $\Oo'_{adm}(J_0(E_\infty))$ is a flat $U_{\C^{2d\d}}$-module so we can replace the tensor product by a derived tensor product.  This gives us a quasi-isomorphism
$$
\Omega^{d,0}_0 \otimes^{\mbb L}_{U_{\C^{2d|d}}} \Oo'_{adm}(J_0(E_\infty))[d] \simeq \Oo'_{adm,loc}(\E_{c,\infty})^{\C^{2d} \oplus \C^d[1]}.  
$$

The dg algebra $U_{\C^{2d|d}}$ is quasi-isomorphic to the sub-algebra $\C[\dpa{z_i}]$ of constant-coefficient holomorphic differential operators.  This implies that we have a quasi-isomorphism
$$
\Omega^{d,0}_0 \otimes^{\mbb L}_{\C[\dpa{z_i}]} \Oo'_{adm}(J_0(E_N))[d] \simeq \Oo'_{adm,loc}(\E_{c,\infty})^{\C^{2d} \oplus \C^d[1]}. 
$$
Using a standard resolution of $\Omega^{d,0}_0$ as a right $\C[\dpa{z_i}]$-module leads us to a quasi-isomorphism
$$
\Oo'_{adm}(J_0(E_N)) [\d z_i] [2d] \simeq \Oo'_{adm,loc}(\E_{c,\infty})^{\C^{2d} \oplus \C^d[1]} .
$$
On the left hand side of this quasi-isomorphism the complex $\Oo'_{adm}(J_0(E_\infty)) [\d z_i]$ is equipped with the de Rham differential as well as the internal differential on $\Oo'_{adm}(J_0(E_\infty))$. In other words, the left hand side of this equation is the translation-invariant subcomplex of the  holomorphic de Rham complex of $\C^d$ with coefficients in the dg $D$-module $\Oo'_{adm}(J_0(E_\infty))$.

\subsection{Completion of the proof}
Let 
$$
V_{h,n} \subset \Oo'_{adm,loc}(\E_{c,\infty})^{\C^{2d} \oplus \C^d[1]}
$$
be the subcomplex of holomorphically translation-invariant local functionals on $\C^d$ spanned by functionals of weight $h$ and homogeneous of degree $n$ as a function of the closed string inputs.  Let
$$
V_R = \bigoplus_{h+n=R} V_{h,n}.
$$
$V_R$ is a subcomplex of $\Oo'_{adm,loc}(\E_{c,\infty})^{\C^{2d} \oplus \C^d[1]}$ and has an $S^1$ action induced from that on the larger complex. Let 
$$
W_{h,n} \subset \Oo'_{adm}(J_0(E_{\infty})
$$
be defined analogously to $V_{h,n}$, and let
$$
W_R = \bigoplus_{h+n=R} W_{h,n}.
$$
Then, there is a quasi-isomorphism 
$$
V_R \simeq \left( W_R \otimes \C[\d z_i] [2d], \d_{W_R} + \d_{dR} \right). 
$$

Let $W_{R}^{(k)}$ and $V_{R}^{(k)}$ denote the subspaces of $W_R$ and $V_R$ where the $S^1$inside $U(d)$ acts by weight $k$. 

The obstruction theory developed in \cite{Cos11} tells us that the obstruction to lifting a theory in $\T^{<(g,R)}(\C^d)$ to one in $\T^{\le (g,R)}(\C^d)$ is an element in the $SU(d)$-invariants in  
$$
H^{(d-3)(2g-2)+1}  ( V_R^{-(2g-2+R)d} ) 
$$
and that if there is no obstruction, the set of lifts is a torsor for the $SU(d)$-invariants in $H^{(d-3)(2g-2)}(V_{R}^{-(2g-2+R)d})$.

We will show that $H^\ast(V_{R}^{(k)}) = 0$ if $k < 0$, thus completing the proof of the theorem.  

There is a spectral sequence which allows us to compute the cohomology of $V_R$ in terms of that of $W_R$, of the form
$$
H^\ast(W_R)[\d z_i] [2d] \Rightarrow H^\ast(V_R). 
$$
We know from \ref{proposition-D-module-vanishing} that
$$
H^\ast(W_R) = \Sym^R( \C((t))[2-d]^\vee) 
$$
and that the $U(d)$ action on this is trivial.  It follows  immediately that the cohomology of $V_R$ is concentrated in non-negative $S^1$-weights, as desired.

This completes the proof  of the theorem.

\section{Annulus anomaly cancellation}
To complete the proof of the existence and uniqueness of the quantum BCOV theory on $X = \C^d$, it remains to prove that we can construct the theory at the annulus level.  We will do this in this section.  

Since we need to analyze the theory on $\C^d$ very explicitly to prove this result, it will be helpful to write out explicitly the regularized kernels for BV operators and propagators. We choose the standard K\"{a}hler metric ${i\over 2}\sum_{i=1}^d \d z_i\wedge \d\bar z_i$ to define our gauge fixing. Then, we have
$$
K_L^C={1\over (4\pi L)^d} e^{-|z-w|^2/4L}\prod_{i=1}^d(\d\bar z_i-\d\bar w_i) \prod_{i=1}^d (\pa_{z_i}-\pa_{w_i}) \in \PV(X\times X),
$$
where $\{z_i\}, \{w_i\}$ denote linear coordinates on $X\times X$, and $|z-w|=\sqrt{\sum_{i=1}^d|z_i-w_i|^2}$. The closed string BV operator $\Delta_L^O$ is the second order operator associated to the kernel
$$
(\pa\otimes 1)K_L^C={1\over (4\pi L)^d} \sum_{i=1}^d \bracket{ \bar z_i-\bar w_i\over 4L} e^{-|z-w|^2/4L}\prod_{j=1}^d(\d\bar z_j-\d\bar w_j) \prod_{j\neq i} (\pa_{z_j}-\pa_{w_j}),
$$
and the closed string propagator $P(\epsilon, L)^C$ is the kernel
$$
\int_\epsilon^L \d t (\dbar^*\pa\otimes 1)K_t^C=\int_\epsilon^L \d t {1\over (4\pi t)^d} \sum_{i,j=1}^d \bracket{ \bar z_i-\bar w_i\over 4t} \bracket{ \bar z_j-\bar w_j\over 4t} e^{-|z-w|^2/4t}\prod_{k\neq i}(\d\bar z_k-\d\bar w_k) \prod_{k\neq j} (\pa_{z_k}-\pa_{w_k}).
$$
Similarly, the BV operator for the open string $\Delta_L^O$ is associated to the kernel
$$
   K_L^O={1\over (4\pi L)^d} e^{-|z-w|^2/4L}\prod_{i=1}^d(\d\bar z_i-\d\bar w_i)\otimes Id_{\gl_N}  \in \Omega^{0,*}(X\times X)\otimes (\gl_N\otimes \gl_N),
$$
where $Id_{\gl_N}=\sum_{i,j=1}^N e_{ij}\otimes e_{ji}\in \gl_N\otimes \gl_N$, $e_{ij}$ the matrix with entry $1$ at the $i$-th row and $j$-th column and zero otherwise. The open string propagator $P(\epsilon, L)^O$ is the kernel
$$
 \int_\epsilon^L \d t (\dbar^*\otimes 1) K_L^O=\int_\epsilon^L \d t {1\over (4\pi t)^d} \sum_{i=1}^d \bracket{ \bar z_i-\bar w_i\over 4t} e^{-|z-w|^2/4t}\prod_{j\neq i}(\d\bar z_j-\d\bar w_j)\otimes Id_{\gl_N}.
$$

In the previous section, we have reduced the construction of quantization of open-closed BCOV theory to $I_{0,2,0}[L]$, the annulus functional with two boundaries and without closed string inputs.    We need to do two things: first, show that any possible anomalies to constructing $I_{0,2,0}$ (satisfying the master equation) vanish. And secondly, show that there is no ambiguity to constructing $I_{0,2,0}[L]$. 

The first thing we need to do is to compute the cohomology groups controlling quantization to the annulus level.  Let $V_{g,h,n}$ refer to the obstruction-deformation group controlling possible terms in the action functional of type $(g,h,n)$.  Thus, $V_{g,h,n}$ is built from certain holomorphically translation-invariant functionals of weight $h$, with $n$ closed string inputs, which are $SU(d)$-invariant and have the correct scaling weight under $S^1 \subset U(d)$. Since $I_{0,2,0}$ has cohomology degree $0$, we see that at the annulus level, $H^1(V_{0,2,0})$ describes obstructions and $H^0(V_{0,2,0})$ describes deformations.  
\begin{proposition}
\label{proposition_annulus_cohomology}
We have $H^i(V_{0,2,0}) = 0 $ if $i \le 0$, and $H^1( V_{0,2,0})$ is of dimension $(d-1)/2$.   The possible anomalies are the functionals of $A \in \Omega^{0,\ast}_c(\C^d,\gl_N)$ 
$$
A \mapsto \int \op{Tr} \left(A (\pa{A})^{r} \right) \op{Tr} \left( (\pa{A})^{d-r} \right) 
$$
for $1 \le r \le (d-1)/2$. 
\end{proposition}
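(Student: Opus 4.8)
The plan is to compute $V_{0,2,0}$ by hand, exploiting the fact that it is built only from double-trace functionals of the open-string field. Since the annulus amplitude $I_{0,2,0}$ has no closed-string inputs, the term $\{I^{1-disk}_\infty,-\}^C$ in the differential --- which strictly lowers the number of closed inputs --- acts by zero on $V_{0,2,0}$, so the differential there is just $Q_\infty+\{\CS_\infty,-\}^O$. This is a derivation, and the weight-two part of the space of admissible functions is the symmetric square of the weight-one part; hence $V_{0,2,0}\simeq \Sym^2(W_{1,0})\otimes\C[\d z_i][2d]$, where $W_{1,0}$ is the single-trace piece from the proof of Proposition~\ref{proposition-D-module-vanishing} but \emph{without} the connecting map to the closed sector, so that its cohomology is the full local cyclic cohomology of $\Omega^{0,*}(\C^d)$ rather than the reduction to $\C((t))$.

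First I would compute $H^*(W_{1,0})$ via Hochschild--Kostant--Rosenberg and Loday--Quillen--Tsygan, as in Proposition~\ref{proposition-D-module-vanishing}, but stopping one page earlier. Over $\C^d$ this identifies $W_{1,0}$ with $J_0\PV[[t]][2]$ equipped with $\dbar+t\del$; its cohomology, as a $GL(d)$-representation (we must not impose $SU(d)$-invariance yet, since $\Sym^2$ mixes different $GL(d)$-weights), is spanned by classes attached to the holomorphic polyvector fields $\partial_{z_{i_1}}\wedge\cdots\wedge\partial_{z_{i_k}}$, each carrying a power of $t$. Then $H^*(\Sym^2 W_{1,0})=\Sym^2 H^*(W_{1,0})$, and the de Rham differential from the Spencer resolution, together with the Poincar\'e lemma on $\C^d$, trivialises the relevant $D$-module. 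It remains to impose $SU(d)$-invariance and the $S^1$-weight zero dictated by the degree axiom for a $(0,2,0)$ functional; the $S^1$-weight condition pins down the total power of $t$, and $SU(d)$-invariance pairs $\wedge^r\C^d$ with $\wedge^{d-r}\C^d$. Keeping track of the cohomological degree, which for the resulting classes depends only on the total $t$-power and not on $r$, one finds that all of $H^{\le 0}(V_{0,2,0})$ vanishes (so there is no ambiguity in the annulus quantization) and that $H^1(V_{0,2,0})$ is spanned by the classes of the stated functionals $A\mapsto\int_{\C^d}\op{Tr}(A(\partial A)^r)\op{Tr}((\partial A)^{d-r})$.

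Finally I would verify the explicit description. Each of these functionals is $\dbar$-closed for trivial reasons and $\{\CS_\infty,-\}^O$-closed by a direct computation using $\partial^2=0$, the cyclicity of the trace, and integration by parts on $\C^d$, so it defines a class. Two reductions then give the range $1\le r\le(d-1)/2$: moving a $\partial$ from one trace to the other by integration by parts, combined with the symmetry exchanging the two trace factors, identifies the class at $r$ with (a sign times) the class at $d-r$; and the degenerate value $r=d$ produces a factor $\op{Tr}(\mathrm{Id})$, which is zero for $\gl(N\mid N)$ --- this is exactly the mod-$N$ reduction we work in --- so that class, and hence (by the identification $r\leftrightarrow d-r$) the one at $r=0$, vanish. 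Since $d$ is odd the remaining involution on $\{1,\dots,d-1\}$ is fixed-point free, leaving exactly $(d-1)/2$ independent classes. The step I expect to be genuinely delicate is the cohomological-degree and $S^1$-weight bookkeeping in the middle: one must carry the several shifts (the $[h(d-3)]$ built into $\Oo'_{adm}$, the $[2]$ on $\PV$, the shift for cyclic cochains, and the $[2d]$ from the Spencer resolution) and the $S^1$-weights faithfully through the symmetric square, in order to be certain that no further classes --- in particular those involving positive powers of $t$ --- land in the degree and weight relevant to $H^{\le 1}(V_{0,2,0})$.
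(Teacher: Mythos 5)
Your reduction of $V_{0,2,0}$ to the weight-two open-string functionals with differential $\dbar+\{\CS_\infty,-\}^O$ is right, and so is the endgame (the $S_2$-symmetry pairing $r$ with $d-r$, $d$ odd), but the central computation has a genuine gap. The cohomology of the single-trace jet complex $W_{1,0}$ --- the local cyclic cochain complex of $\C[[z_i]]$ \emph{before} tensoring with $\Omega^{d,0}_0$ over $D_0=\C[\partial_{z_i}]$ --- is not spanned by constant polyvector-field classes times powers of $t$. Writing $W_{1,0}=D_0[[\alpha_i,t]]$ with residual differential $t\,\d_{dR}^\vee$, the truncation at $t^0$ means that besides the classes $\C\, t^m$ the cohomology contains the infinite-dimensional $D_0$-modules $\ker\bigl(\d_{dR}^\vee:(\Omega^{n}_{\what{D}^d})^\vee\to(\Omega^{n-1}_{\what{D}^d})^\vee\bigr)$ sitting at $t^0$; these Koszul cycles are not generated by the constants (indeed $\alpha_I$ is not $\d^\vee$-closed for $|I|\ge 1$). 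This is not a removable bookkeeping issue: the anomaly functionals themselves live there. The cochain $\op{Tr}\bigl((\partial A)^{d-r}\bigr)$ is $\d_{dR}^\vee$ of a constant class --- as a weight-one \emph{local} functional it dies by Stokes, but as a class in $H^\ast(W_{1,0})$ it is nonzero (its would-be primitive sits at $t^{-1}$), and in weight two it pairs nontrivially against the other trace. So if you first replace each tensor factor by your claimed $H^\ast(W_{1,0})$ and only then form $\Sym^2$ and run the Spencer/de Rham complex, you discard precisely the classes $\int\op{Tr}(A(\partial A)^r)\op{Tr}((\partial A)^{d-r})$ that you are trying to exhibit, while $\Omega^{d,0}_0\otimes^{\mbb L}_{D_0}$ of the trivial module produces extra Tor classes you would then have to dispose of.

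The order of operations must be reversed, and this is where the paper's argument does real work: keep each factor as the free resolution $D_0[[\alpha_i,t]]$ and use the absorption isomorphism $\Omega^{d,0}_0\otimes_{D_0}(D_0\otimes D_0)\iso\Omega^{d,0}_0\otimes_{\C}D_0$, special to weight two, so that one factor retains its $\partial_{z_i}$'s while the other is reduced to constants. After imposing $U(d)$-invariance and the scaling weight, the complex in each degree is finite-dimensional, spanned by elements $\eta_{r,k_1,k_2}$ (degree $2+2k_1+2k_2$, no $\partial_{z_i}$) and $\mu_{r,k_1,k_2}$ (degree $1+2k_1+2k_2$, exactly one $\partial_{z_i}$), with $\d\eta_{r,k_1,k_2}=\mu_{r,k_1,k_2+1}\pm\mu_{r,k_1+1,k_2}$; emptiness in degrees $\le 0$ and the closed classes $\mu_{r,0,0}$ then give the statement. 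A secondary point: your disposal of the $r=0$ class is too quick --- integration by parts moves a $\partial$ between traces but cannot change the number of inputs in a trace, so $\int\op{Tr}(A)\op{Tr}((\partial A)^d)$ is not literally identified with $\int\op{Tr}(A(\partial A)^d)\op{Tr}(1)$; what removes it is the sign with which the $S_2$-swap (composed with the $\otimes_{D_0}$ relation that moves $\partial_{z_i}$ across the tensor symbol) acts on $\mu_{d,0,0}$.
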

\begin{proof}
  The relevant obstruction-deformation complex is the holomorphically translation-invariant local functionals just of the open-string fields, which are admissible of weight $2$. The differential is $\dbar+\{I_{0,1,0},-\}^O$. 
  
Let us first describe the complex of holomorphically translation-invariant local functionals for the open string sector.  The space of fields in this case is just $\Omega^{0,\ast}(\C^d,\gl_N)[1]$.  The space of jets at the origin of fields is 
$$
J_0 (\Omega^{0,\ast}(\C^d,\gl_N) ) [1] = \C[[z_i, \zbar_j, \d \zbar_k]] \otimes \gl_N[1]. 
$$
Using the discussion in the proof of theorem \ref{theorem-vanishing-flat-space}, we find that for the $\gl_N$ holomorphic Chern-Simons gauge theory the complex of holomorphically translation-invariant local functionals is 
\begin{equation*}
\Omega^{d,0}_0 \otimes_{\C\left[\partial_{z_i} \right]}^{\mbb{L}} C^\ast_{red} (\C[[z_i, \zbar_j, \d \zbar_k]] \otimes \gl_N )[d] \tag{$\dagger$} 
\end{equation*}
where $C^\ast_{red}$ indicates the reduced Chevalley-Eilenberg cochain complex of the dg Lie algebra $\C[[z_i,\zbar_j, \d \zbar_k]] \otimes \gl_N$. The differential of the dg Lie algebra is the jet of the $\dbar$ operator which sends $\zbar_j \to \d \zbar_j$.

Since $\C[[z_i]] \otimes \gl_N$ is a quasi-isomorphic sub dg Lie algebra which is preserved by the operators $\dpa{z_i}$, we find that a complex quasi-isomorphic to ($\dagger$) is
\begin{equation*}
\Omega^{d,0}_0 \otimes_{\C\left[\partial_{z_i} \right]}^{\mbb{L}} C^\ast_{red}( \C[[z_i]] \otimes \gl_N )[d] \tag{$\ddagger$}.  
\end{equation*}
Since we are looking at admissible functionals (as $N$ varies), the Tsygan-Loday-Quillen theorem tells us that we can replace Lie algebra cochains by the symmetric algebra of cyclic cochains of $\C[[z_i]]$.  Let $\Omega^{-\ast}_{\what{D}^d}$ denote the algebra of differential forms on the formal $d$-disc, graded so that $i$-forms are in degree $-i$.    The Hochschild-Kostant-Rosenberg theorem tells us that the cyclic cohomology of $\C[[z_i]]$ is the dual of the complex $\Omega^{-\ast}(\what{D}^d)[t^{-1}]$, with differential $t \d_{dR}$. Here $t$ is a parameter of degree $2$. 

We thus find that the complex of all admissible holomorphically translation-invariant functionals is 
\begin{equation*}
\Omega^{d,0}_0 \otimes_{\C\left[\partial_{z_i} \right]}^{\mbb{L}} \Oo( \Omega^{-\ast}_{\what{D}^d}[t^{-1}][1])[d] 
\end{equation*}
where $\Oo$ indicates functionals modulo constants.  

Our obstruction-deformation complex is the subcomplex of weight $2$, namely 
$$
\Omega^{d,0}_0 \otimes_{\C\left[\partial_{z_i} \right]}^{\mbb{L}} \Sym^2 \left(\Omega^{-\ast}_{\what{D}^d} [t^{-1}][1]  \right)^\vee[d].   
$$

The linear dual of the space $\C[[z_i]]$ is the algebra $D_0 = \C[\dpa{z_i}]$ of constant-coefficient holomorphic differential operators.  It follows that the linear dual of $\Omega^{-\ast}_{\what{D}^d}$ is $D_0[\alpha_i]$ where the variables $\alpha_i$ have weight $+1$. Therefore we can write our obstruction-deformation complex as
$$
\Omega^{d,0}_0 \otimes_{D_0}^{\mbb L} \Sym^2 \left( D_0[[\alpha_i,t]][-1] \right)[d]. 
$$
Since tensor powers of $D_0$ are flat as $D_0$-modules we can replace the derived tensor product by an actual tensor product.  Let us also replace the symmetric square by the tensor square for now; we will take the invariants for the group $S_2$ at the end. 

There is a natural isomorphism
$$
\Omega^{d,0}_0 \otimes_{D_0} \left(D_0 \otimes D_0\right) \iso \Omega^{d,0}_0 \otimes_{\C} D_0. 
$$
Thus, we find that our obstruction complex is the $S_2$-invariants in the complex
\begin{equation}
\Omega^{d,0}_0 \otimes_{\C} \left(\C[[\alpha_i,t]] \otimes_{\C} D_0[[\alpha_i,t]] \right)[d-2]. \label{equation_sym2}
\end{equation}
This complex has a differential arising from that on the $D_0$-module $D_0[[\alpha_i,t]]$ dual to $\Omega^{-\ast}_{\what{D}^d} [t^{-1}]$. We will ignore this differential for now, and just describe the elements which are $U(d)$-invariant (which, of course, is what we're interested in). 

Under the $U(d)$-action, the odd variables $\alpha_i$ transform like $\dpa{z_i}$, so that the exterior algebra $\C[\alpha_i]$ is the constant poly-vector fields.  Thus, under the $S^1 \subset U(d)$ which scales $\C^d$, the variables $\alpha_i$ have weight $-1$ and the variables $\dpa{z_i}$ generating $D_0 = \C[\dpa{z_i}]$ also have weight $-1$.  The basis $\d z_1  \dots \d z_d$ of $\Omega^{d,0}_0$ has weight $d$. Therefore, the total number of $\alpha_i$ and $\dpa{z_j}$ that must appear is $d$.  

Invariant theory for $SU(d)$ tells us that the space of invariants in the $d$th tensor power of the defining representation $\C^d$ is of rank one, and spanned by the top exterior power.  Therefore, to be invariant under $SU(d)$, an element in $\C[[\alpha_i]]^{\otimes 2} \otimes \C[\dpa{z_i}]$ of weight $d$ must be totally antisymmetric. This means that at most one $\dpa{z_i}$ can appear, because the representation $\C[\dpa{z_i}]$ of $SU(d)$ is $\Sym^\ast \C^d$ and $SU(d)$ invariants in $\Sym^k \C^d \otimes (\C^d)^{d-k}$ is zero if $2 \le k \le d$.

We can now describe the $U(d)$ invariants in the complex in equation \ref{equation_sym2}. For any subset $I$ of the set $\{1,\dots,d\}$ let $\alpha_I$ denote the product $\prod_{i \in I} \alpha_i$, and let $\alpha_{I^c}$ denote the corresponding object associated to the complement $I^c$ of $I$.  A basis for the $U(d)$ invariants is given by the elements  
\begin{align*} 
\eta_{r,k_1,k_2} &= \sum_{I \subset \{1,\dots,d\}, \text{ with }  \abs{I} = r} \pm \d Vol \otimes \alpha_I t^{k_1}  \otimes \alpha_{I^c} t^{k^2}\\ 
\mu_{r,k_1,k_2} &= \sum_{\substack{I \subset \{1,\dots,d\} \text{ with } \abs{I} = r\\ i\in I}} \pm \d Vol \otimes \alpha_{I \setminus i} t^{k_1} \otimes \alpha_{I^c} t^{k_2} \dpa{z_i}.
 \end{align*}
The elements $\eta_{r,k_1,k_2}$ are of degree $2+2k_1 + 2k_2$, and the elements $\mu_{r_1,k_1,k_2}$ are of degree $1+2k_1 + 2k_2$.  The differential is of the form
$$
\d ( \eta_{r,k_1,k_2} ) = \mu_{r,k_1,k_2+1} \pm \mu_{r,k_1+1,k_2}. 
$$
We are only interested in the cohomology in degrees $\le 1$. In degrees $\le 0$, there are no $U(d)$-invariant elements, and in degree $1$ the $U(d)$-invariant elements are $\mu_{r,0,0}$ for $0 \le r \le d$.  The elements $\mu_{r,0,0}$ are closed, and the $S_2$-symmetry (which we have been neglecting) sends $\mu_{r,0,0} \mapsto \pm \mu_{d-r,0,0}$. It follows that the $\Z/2$ and $U(d)$-invariant cohomology in degree $1$ is of dimension $(d-1)/2$, as desired. It is direct to check that these elements corresponds to the functionals described in the proposition. 
\end{proof}

\subsection{Cancelling of the anomaly}
What we have  shown so far is that there is a possible anomaly to quantizing the theory to the annulus level, but that if the anomaly vanishes the quantization is unique.  

 In this subsection, we compute directly that the obstruction to constructing  $I_{0,2,0}[L]$ satisfying the relevant master equations vanishes.

We need to recall some details of the construction of quantum field theories in \cite{Cos11}, and some combinatorial facts about how to write the RG flow in terms of graphs.

We say a \emph{labelled graph} is a graph $\Gamma$ with the following extra structures:
\begin{enumerate}
\item Every edge (internal or external) is labelled by $c$ or $o$, indicating whether it is a closed-string or open-string graph.
\item Every vertex $v$ is labelled by numbers $(g,h,n)$. Also, $n$ is the number of germs of closed-string edges incident to the vertex $v$. The set of germs of open-string edges incident to $v$ is partitioned into $h$ non-empty subsets, and each of these subsets is given a cyclic order. 
\end{enumerate}
We should imagine the vertex $v$ as being associated to the topological type of Riemann surface of genus $g$ with $h$ boundary components and $n$ marked points in the interior. Each of the boundary components has a certain number of marked points. By gluing together these Riemann surfaces according to how they are marked, one finds a new topological type of a Riemann surface which in turn has some genus, number of boundary components, and number of interior marked points. This topological data is labelled by numbers $g(\Gamma)$, $h(\Gamma)$ and $n(\Gamma)$ which are determined by the combinatorics of $\Gamma$. We call $(g(\Gamma),h(\Gamma),n(\Gamma))$ the \emph{type} of $\Gamma$. 

For instance, if all vertices of $\Gamma$ are discs and there are no closed-string edges, then $\Gamma$ is a ribbon graph and $g(\Gamma)$ and $h(\Gamma)$ are obtained by the standard combinatorial algorithm for thickening a ribbon graph into a topological surface.  

Suppose that
$$
I \in \Oo_{adm}(\E_{c,\infty})[[\lambda]]_+
$$
is a series 
$$
I = \sum I_{g,h,n} \lambda^{2g+h}
$$
where $I_{g,h,n}$ is admissible of weight $h$ as a function of the open string fields and homogeneous of weight $n$ as a function of closed string fields.

Then for every labelled graph $\Gamma$ we can construct
$$
W_{\Gamma}(P(\eps,L)^C + P(\eps,L)^{O,\infty},I) \in \Oo_{adm}(\E_{c,\infty})[[\lambda]]_+
$$
by placing, on each internal open-string edge, the open-string propagator $P^0(\eps,L)$; on each internal closed-string edge the closed string propagator $P^C(\eps,L)$; and on each vertex of type $(g,h,n)$ the interaction $I_{g,h,n}$.  The external lines are labelled with open or closed string fields to produce an admissible function of the fields.   

Also, as explained in section \ref{quantum-BCOV}, we can apply the open-closed RG flow
$$
W_\eps^L( I ) = \lambda^2 \log\left\{ \exp \left(\lambda \pa_{P^O(\eps,L)} + \lambda^2 \pa_{P^C(\eps,L)}\right) \exp \left( I/\lambda^2 \right)\right\}. 
$$
This RG flow map can be expanded as a sum of graphs:
$$
W_\eps^L( I) = \sum_{\Gamma \text{ connected labelled graphs} }\lambda^{2 g(\Gamma) + h(\Gamma)}  \frac{1}{\abs{\Aut(\Gamma)} } W_{\Gamma}(P(\eps,L)^C + P(\eps,L)^{O,\infty}, I). 
$$

\subsection{Constructing the QFT explicitly}
Let us now recall a little about how to construct QFTs using the technology of \cite{Cos11}.  We will explain how one constructs our BCOV theory up to the annulus level. We start with our classical action, which in this case is
$$I^{classical} = \lambda I_{0,1,0} + \lambda I_{0,1,1}.$$
One constructs $I_{0,1,0}[L]$ and $I_{0,1,1}[L]$ by
$$
I_{0,1,0}[L] = \lim_{\eps \to 0} \sum_{\Gamma \text{ of type } (0,1,0)} W_{\Gamma} (P(\eps,L)^C + P(\eps,L)^{O,\infty}, I^{classical} ). 
$$
Since every graph appearing in this sum is a tree, this limit exists.  One constructs $I_{0,1,1}[L]$ by the same kind of formula, and again every graph in the sum is a tree.  It is automatic that the quantum master equation holds for $I_{0,1,0}[L]$ and $I_{0,1,1}[L]$ modulo higher-order terms.

Next, we can try to construct $I_{0,2,0}[L]$.  The idea is that we first define
$$
  I^{naive}_{0,2,0}[L]=\lim_{\epsilon\to 0}\sum_{\Gamma\text{ of type }(0,2,0)} W_\Gamma(P(\epsilon, L)^C+P(\epsilon, L)^{O, \infty}, \lambda I_{0,1,0}+ \lambda I_{0,1,1} +\lambda^2 I_{0,2,0}^{CT}(\epsilon)), 
$$
where $I^{CT}_{0,2,0}(\epsilon)$ is a counter-term introduced to make the $\eps \to 0$ limit exist. As explained in \cite{Cos11}, the functional $I^{CT}_{0,2,0}(\eps)$ is local, and is uniquely determined by the requirements that the $\eps \to 0$ limit exists and that, as a functional of $\eps$, $I_{0,2,0}^{CT}(\eps)$ is ''purely singular''. (There is, generally, a choice involved in what it means for a function of $\eps$ to be purely singular, this choice is called a renormalization scheme in \cite{Cos11}).

One then checks whether or not $I^{naive}_{0,2,0}[L]$ satisfies the quantum master equation modulo higher order terms.  If it does, then we are done. If not, the failure to satisfy the QME will be given by some obstruction $O_{0,2,0}[L]$. The limit $\lim_{L \to 0} O_{0,2,0}[L]$ exists, and is a local functional which we call $O_{0,2,0}$. The element $O_{0,2,0}$ will be a closed element of degree $1$ in the obstruction-deformation complex (built from local functionals) controlling quantizations of our theory. 

If the cohomology class of $O_{0,2,0}$ is zero, so that $O_{0,2,0} = \d J_{0,2,0}$ for some local functional $J_{0,2,0}$, then we set
$$
I_{0,2,0}[L] =\lim_{\epsilon\to 0}\sum_{\Gamma\text{ of type }(0,2,0)} W_\Gamma(P(\epsilon, L)^C+P(\epsilon, L)^{O, \infty}, \lambda I_{0,1,0}+ \lambda I_{0,1,1} +\lambda^2 I_{0,2,0}^{CT}(\epsilon)- \lambda^2 J_{0,2,0}). 
$$
The fact that $\d J_{0,2,0} = O_{0,2,0}$ implies that $I_{0,2,0}[L]$ now satisfies the master equation, and we can proceed. 

On the other hand, if the cohomology class of $O_{0,2,0}$ is non-zero, then we can not proceed any further. 

Our goal in this section is to prove the following.
\begin{theorem*}
The counter-terms $I_{0,2,0}^{CT}(\eps)$ vanish, and $I^{naive}_{0,2,0}[L]$ satisfies the quantum master equation. 
\end{theorem*}
This will tell us that we can construct the theory uniquely on $\C^d$ at the annulus level, and so proceed to construct the full theory uniquely by the cohomology cancellation argument.

In fact, what we find is that if we just consider the purely open theory, there is an annulus anomaly, meaning that pure holomorphic Chern-Simons theory does not exist. However, the open string anomaly is cancelled by a contribution from the closed string sector.

\subsubsection{Vanishing of counter term}
\begin{lemma} \label{lemma-finite}The following limit exists 
$$
  \lim_{\epsilon\to 0}\sum_{\Gamma: \text{annulus}} W_\Gamma(P(\epsilon, L)^C+P(\epsilon, L)^{O, \infty}, I_{0,1,0}+I_{0,1,1}).
$$
Therefore we can choose $I^{CT}(\epsilon)=0$.
\end{lemma}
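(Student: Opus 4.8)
The plan is to show that every annulus graph $\Gamma$ built from the vertices $I_{0,1,0}$ and $I_{0,1,1}$ and the propagators $P(\epsilon,L)^C$, $P(\epsilon,L)^{O,\infty}$ gives a weight on the space of fields which admits a finite $\epsilon\to 0$ limit, so that no counter-term is needed. First I would enumerate the relevant graphs: an annulus ($g=0$, $h+n=2$) arising from tree-like vertices $I_{0,1,0}$ (trivalent, purely open) and $I_{0,1,1}$ (one interior marked point, arbitrarily many boundary points) must have exactly one loop, and the loop must be either a single cycle of open-string edges through a cyclic chain of $I_{0,1,0}$ and $I_{0,1,1}$ vertices, or a cycle involving one closed-string propagator joining two $I_{0,1,1}$ vertices. (A graph with more than one loop would not have Euler characteristic $0$ with these vertices, and a graph whose only loop passes through no vertex-data beyond the trees attached to it reduces to one of these after contracting trees.) The point of the standard analysis in \cite{Cos11} is that all potential divergences come from the behaviour of the single loop integral as the length parameters of the loop edges go to zero; trees are always finite.

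The key estimate is a power-counting bound on the loop integral. Using the explicit Gaussian kernels written out at the start of this section, each open propagator $P(\epsilon,L)^{O}$ contributes, on the locus where the $d$ holomorphic coordinates of the two endpoints are identified, a factor of order $t^{-d}\cdot t$ (one inverse power of $t$ from $(4\pi t)^{-d}$ and one power from the $(\bar z - \bar w)/4t$ numerator, combined with the $\bar z$-measure), i.e.\ $t^{1-d}$ worth of $t$-scaling after the Gaussian integration over the $d$ complex directions transverse to the vertex; the closed propagator $P(\epsilon,L)^C$ behaves analogously but with an extra $\partial$, so it is even softer. Crucially, the $\partial$-operators at the $I_{0,1,1}$ vertices and the fact that $d$ is \emph{odd} force the form-degree and polyvector-degree bookkeeping to work out so that the leading would-be divergence, which would be a scale-invariant (logarithmically divergent) contribution, is accompanied by an odd number of $\d\bar z_i$'s around the loop and therefore vanishes by the antisymmetry of the wedge product — exactly the same parity phenomenon that makes the cohomology computation in Proposition \ref{proposition_annulus_cohomology} give classes only in odd degree. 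Concretely, I would change variables in the loop integral to the relative coordinates around the cycle, perform the Gaussian integral over the ``center of mass'' and over the transverse directions, and check that the resulting integrand in the remaining proper-time variables is integrable down to $\epsilon=0$ uniformly in $L$.

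The main obstacle I anticipate is the single log-divergent term: naive power counting for the one-loop annulus in a $d$-complex-dimensional theory with first-order (Dolbeault-type) kinetic terms is exactly marginal, so finiteness is not automatic from dimensional analysis and one genuinely has to exhibit the cancellation of the log. The resolution is to organize the loop integrand by its dependence on the odd variables $\d\bar z_i$ (equivalently, by $U(d)$-weight) and observe that the only term with the critical scaling is totally antisymmetric in more than $d$ of these odd variables — hence zero — while all other terms are strictly convergent; this is the translation-invariant, local version of the statement proved cohomologically above, and it is where the hypothesis that $d$ is odd is used. Once finiteness is established, the uniqueness clause of \cite{Cos11} (the counter-term is the unique purely-singular local functional making the limit exist) forces $I^{CT}(\epsilon)=0$, since the limit already exists with no counter-term at all.
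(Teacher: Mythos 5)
Your overall strategy coincides with the paper's: separate the annulus graphs into the diagram with a closed-string propagator joining two copies of $I_{0,1,1}$ and the purely open-string wheels, kill the critical wheel by an antisymmetry argument, and control the remaining wheels by power counting. One small correction first: the closed-propagator diagram is not a cycle at all. Each $I_{0,1,1}$ vertex has a single closed-string input, so that graph is a tree, and its $\eps\to 0$ limit exists trivially; no "softness" estimate on $P(\eps,L)^C$ is needed.

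The substantive problem is your mechanism for the vanishing of the marginal wheel, which is the crux of the lemma. The wheel $\Gamma_m$ with $m+1$ vertices carries $(m+1)(d-1)$ factors of $\d\bar w^{(\alpha)}_j$ from the propagators, living on the $dm$-dimensional space of relative coordinates around the loop; this count already kills every wheel with $m<d-1$, and the critical (log-divergent) case is $m=d-1$, where $(m+1)(d-1)=d(d-1)=dm$ is exactly the top antiholomorphic degree --- an \emph{even} number. The reason the $m=d-1$ wheel vanishes is that, after moving the contraction operators to the left, the top-degree component reduces to the iterated contraction of the $d$ antiholomorphic vectors $V^{(1)},\dots,V^{(d-1)}$ and $\sum_{\alpha=1}^{d-1}V^{(\alpha)}$ against a $(0,d)$-form; since the last vector is the sum of the first $d-1$, the resulting determinant is zero. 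Neither the parity of the number of $\d\bar z_i$'s nor the hypothesis that $d$ is odd enters this lemma at any point (oddness of $d$ is used elsewhere, e.g.\ in the degree bookkeeping and in Proposition \ref{proposition_annulus_cohomology}, but not here); the "odd number of $\d\bar z_i$'s around the loop" that your argument requires does not exist, so as written your cancellation of the log divergence would not go through. For the convergent range $m\ge d$ your power counting is essentially correct, but to make it rigorous you need the integration-by-parts step that trades each numerator $\bar w^{(\alpha)}_{i}/t_\alpha$ for a holomorphic derivative of the compactly supported test function with coefficients bounded uniformly in the $t_\alpha$; the Gaussian integration then leaves $\prod_{\alpha}\int_0^L \d t_\alpha\,(\sum_\beta t_\beta)^{-d}$, which converges precisely because $d/(m+1)<1$.
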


\begin{proof} There are two types of annulus diagrams. The first one contains a closed string propagator connecting two $I_{0,1,1}$

\includegraphics[width=3in]{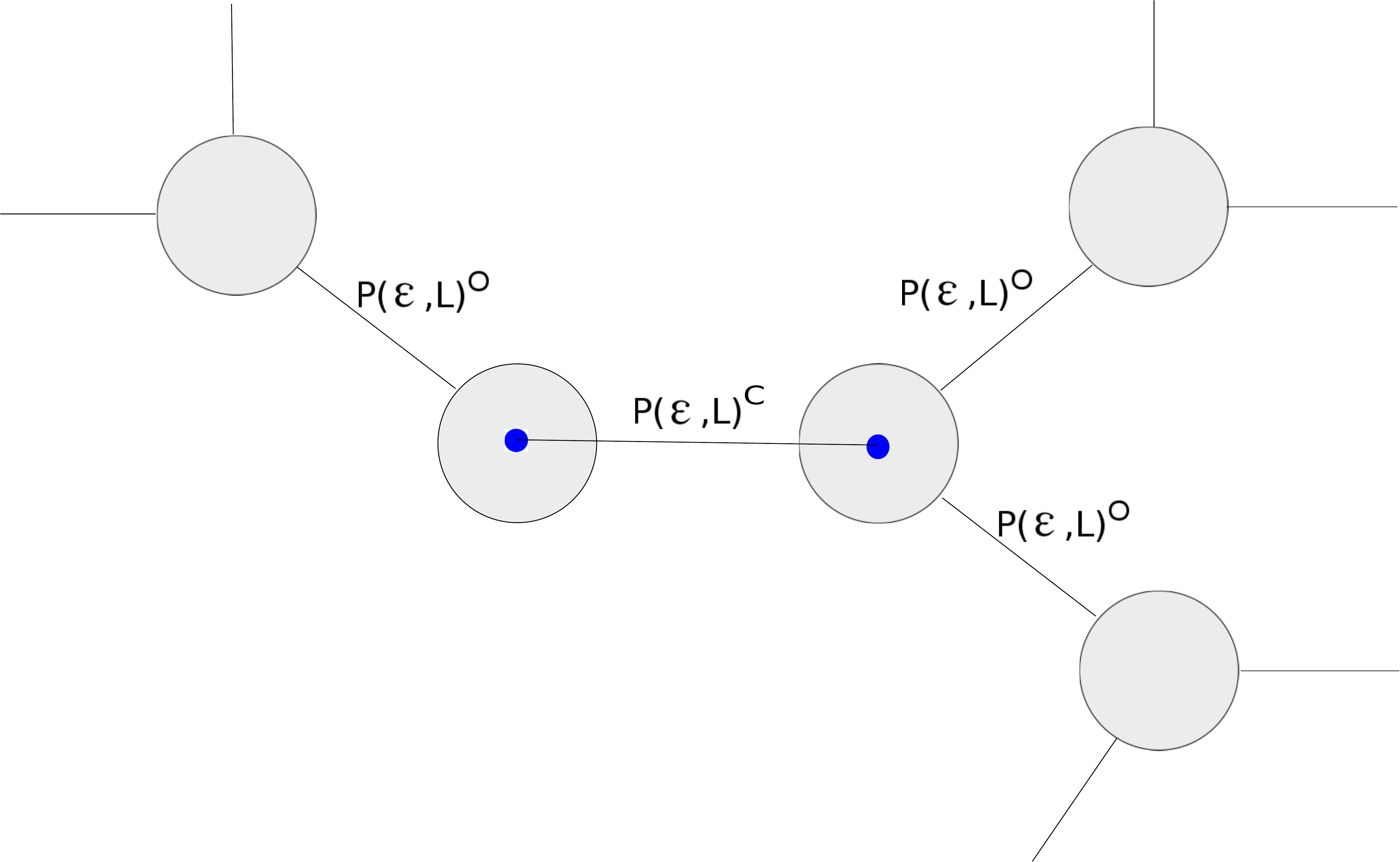}

\noindent Since there is no loop in the diagram, the limit under $\epsilon\to 0$ exists.

The second type is the one-loop diagram from open string sector.

\includegraphics[width=4in]{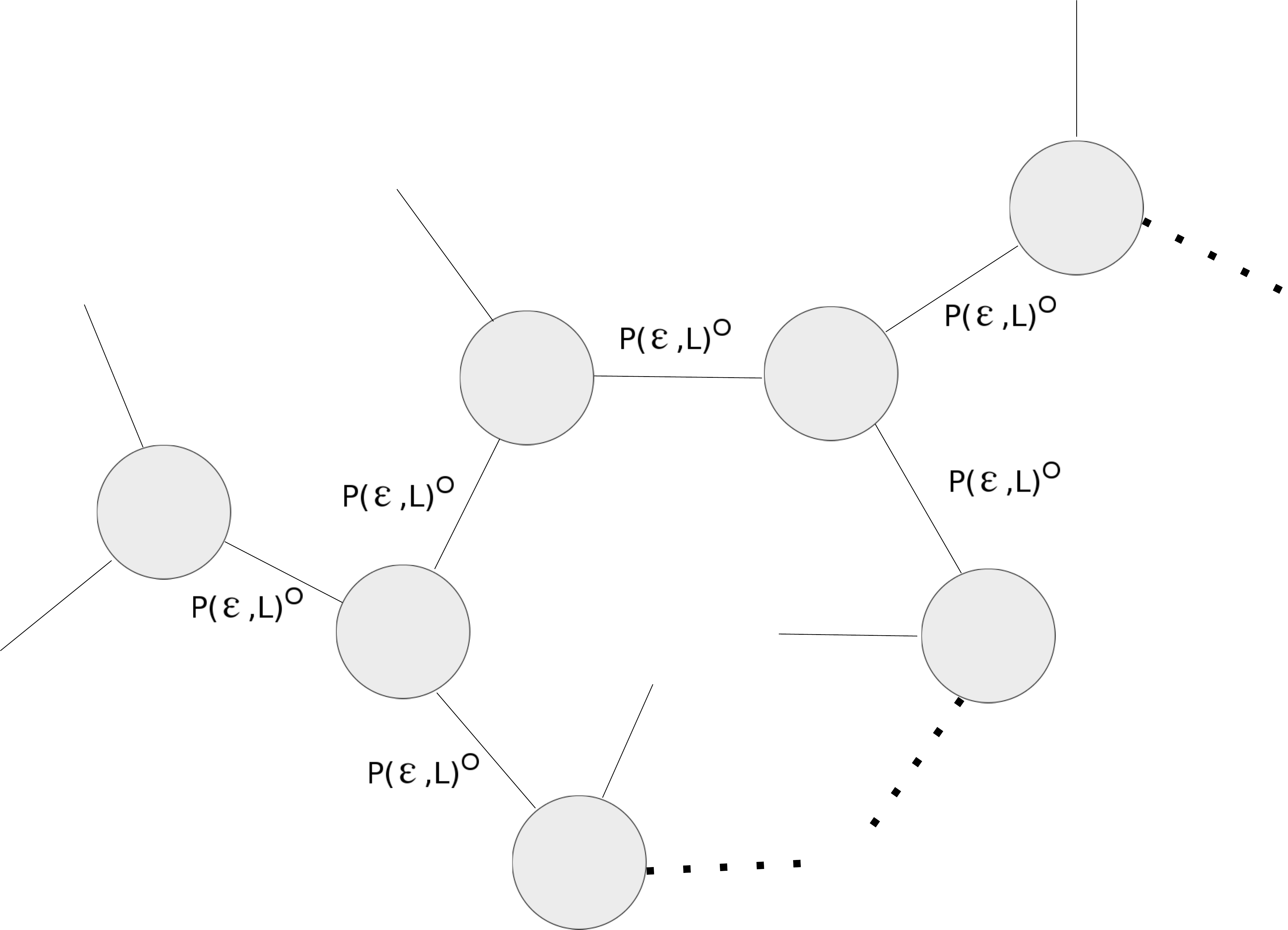}

\noindent To show that the counter-terms from the open-string sector vanish, it suffices to show this for diagrams which are wheels, because every one-loop diagram is a wheel with some trees grafted on and trees can not contribute any singularities. Recall also that our Feynman diagrams are ribbon graphs.  However, to show that the counter-terms vanish, the ribbon structure on the graph does not play any role, so we will ignore this for now. 

Consider the wheel diagram $\Gamma_m$ with $m+1$ vertices (and with any of ribbon graph structures)

\includegraphics[width=3in]{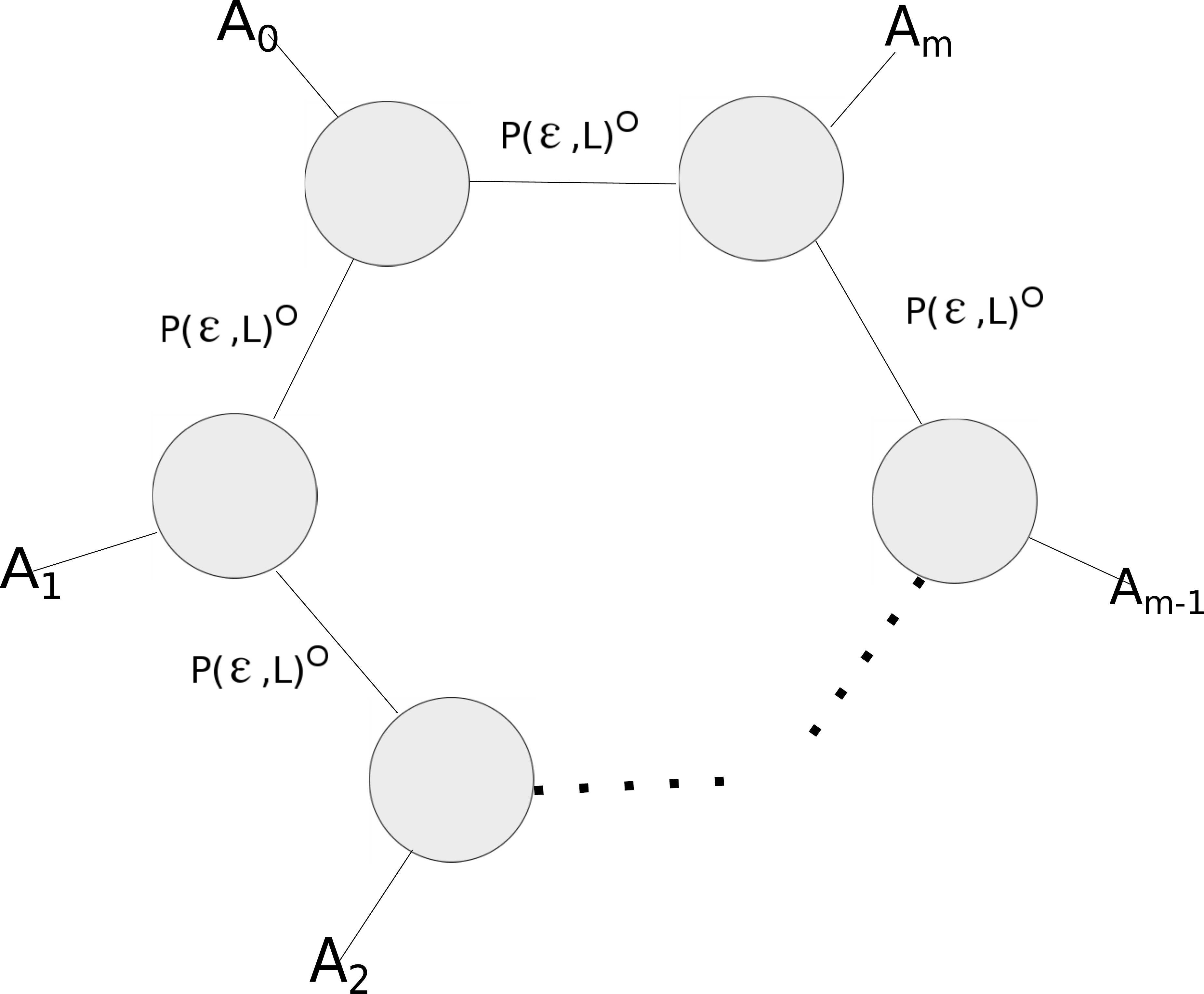}

Its graph integral can be represented by
$$
   W_{\Gamma_m}(P(\epsilon, L)^O, A_0, \cdots, A_m)={ \int_{\mathbb C^d\times\cdots\times \mathbb C^d} \prod_{\alpha=0}^mA_\alpha\wedge \d^dz^{(\alpha)}\prod_{\alpha=0}^m P(\epsilon, L)^O_{\alpha, \alpha+1}}
$$
where $z^{(\alpha)}=\{z^{(\alpha)}_i\}_{1\leq i\leq d}$ are copies of holomorphic coordinates on $\mathbb C^d$, $\d^d z^{(\alpha)}=\prod\limits_{i=1}^d \d z_i^{(\alpha)}$ is the corresponding holomorphic $n$-form. $A_\alpha\in \Omega_c^{0,*}(\mathbb C^n)$ lies on the copy of $\mathbb C^d$ with coordinate $z^{(\alpha)}$.  $P(\epsilon, L)^O_{\alpha, \alpha+1}$ is the regularized open sector propagator connecting the vertices $z^{(\alpha)}, z^{(\alpha+1)}$, where we have identified $z^{(m+1)}$ with $z^{(0)}$.

\begin{claim}
  $W_{\Gamma_m}(P(\epsilon, L)^O, A_0, \cdots, A_m)=0$ if $m<d$.
\end{claim}
In fact,  $\prod\limits_{\alpha=0}^m P(\epsilon, L)^O_{\alpha, \alpha+1}=0$ if $m<d-1$ for degree reasons.%  The point is that each propagator is a $d-1$-form, and the propagators are all pulled back from the quotient of $\C^{d(m+1)}$ by overall translation. So, we can view the propagators as being $(0,d-1)$-forms on $\C^{dm}$.  There are $m+1$ propagators, whose wedge is of degree $(m+1)(d-1)$.  Since $(m+1)(d-1) > dm$ if $m < d-1$, the $(0,\ast)$ form obtained by wedging all the propagators together is zero.

To see this, consider the following factor in $\prod\limits_{\alpha=0}^m P(\epsilon, L)^O_{\alpha, \alpha+1}$
$$
    \prod_{\alpha=0}^m \bracket{\sum_{i=1}^d \bracket{\bar z^{(\alpha)}_i-\bar z^{(\alpha+1)}_i}\prod_{j\neq i}\bracket{\d\bar z^{(\alpha)}_j-\d\bar z^{(\alpha+1)}_j}}.
$$
If we change coordinates $w^{(0)}=z^{(0)}, w^{(\alpha)}=z^{(\alpha)}-z^{(\alpha-1)}, 1\leq \alpha\leq m$, then up to a constant, it becomes
$$
   \bracket{\bracket{\sum_{\alpha=1}^m V^{(\alpha)}}\vdash\prod_{i=1}^d\bracket{\sum_{\alpha=1}^m \d\bar w_i^{(\alpha)}}}\prod_{\alpha=1}^m\bracket{V^{(\alpha)}\vdash \prod_{i=1}^d \d\bar w_i^{(\alpha)}}
$$
where $V^{(\alpha)}=\sum\limits_{i=1}^d \bar w_i^{(\alpha)}{\pa\over \pa \bar w^{(\alpha)}_i}$, and $\vdash$ is the contraction. This contains $(m+1)(d-1)$ $\d\bar w_i^{(\alpha)}$'s, which vanishes for $m<d-1$ because it is a form of type $(0,(m+1)(d-1))$ on $\C^{dm}$ and $(m+1)(d-1) > dm$. 

If $m=d-1$, it is a top $(0,d(d-1))$-form. We can move $V^{(\alpha)}$ to the left and
\begin{align*}
  &\bracket{\bracket{\sum_{\alpha=1}^m V^{(\alpha)}}\vdash\prod_{i=1}^d\bracket{\sum_{\alpha=1}^m \d\bar w_i^{(\alpha)}}}\prod_{\alpha=1}^m\bracket{V^{(\alpha)}\vdash \prod_{i=1}^d \d\bar w_i^{(\alpha)}}\\
  =&  \pm\bracket{V^{(1)}\vdash V^{(2)}\vdash\cdots \vdash V^{(d-1)}\vdash \bracket{\sum_{\alpha=1}^{d-1} V^{(\alpha)}}\vdash\prod_{i=1}^d\bracket{\sum_{\alpha=1}^{d-1} \d\bar w_i^{(\alpha)}}}\prod_{\alpha=1}^{d-1}\bracket{\prod_{i=1}^d \d\bar w_i^{(\alpha)}}
  =0.
\end{align*}
This proves the claim.

\begin{claim} $\lim\limits_{\epsilon\to 0}W_{\Gamma_m}(P(\epsilon, L)^O, A_0, \cdots, A_m)$ exists if $m\geq d$.
\end{claim}
We use the coordinates $w^{(\alpha)}$ above. Let $\d^{2d}w^{(\alpha)}=\prod_{i=1}^d \d w^{(\alpha)}_i \d\bar w^{(\alpha)}_i$. The graph integral can be written as
\begin{align*}
&W_{\Gamma_m}(P(\epsilon, L)^O, A_0, \cdots, A_m)\\
=&\int_{\mathbb C^d}\d^{2d}w^{(0)} \int_{\bracket{\mathbb C^d}^m}\bracket{\prod_{\alpha=1}^{m}\d^{2d}w^{(\alpha)}}\prod_{\alpha=1}^{m+1} \int_\epsilon^L {\d t_\alpha\over (4\pi t_\alpha)^d} \\& \sum_{i_1, \cdots, i_m=1}^d {\bar w_{i_1}^{(1)}\over t_1} \cdots {\bar w^{(m)}_{i_m}\over t_m}{\sum\limits_{\alpha=1}^m \bar w^{(\alpha)}_{i_{m+1}}\over t_{m+1}}e^{-\sum\limits_{\alpha=1}^m|w^{(\alpha)}|^2/4t_\alpha- |\sum\limits_{\alpha=1}^m w^{(\alpha)}|^2/4t_{m+1}}\Phi^{i_1\cdots i_{m+1}}
\end{align*}
where $\Phi^{i_1\cdots i_{m+1}}$'s are compactly supported functions on $\bracket{\mathbb C^d}^{m}$. Using integration by parts (or Wick's theorem), the ${\bar w^{(\alpha)}\over t_\alpha}$ factor becomes holomorphic derivative acting on $\Phi$ whose coefficients are uniformly bounded for $t_\alpha$'s (see \cite{Li11b}). For example,
$$
   \Phi^{i_1\cdots i_{m+1}} {\bar w^{(1)}_{i_1}\over 4t_1}\to\bracket{\pa_{w^{(1)}_{i_1}}-\sum\limits_{\alpha=1}^m{t_{\alpha}\pa_{w^{(\alpha)}_{i_1}}\over t_1+\cdots +t_{m+1}}} \Phi^{i_1\cdots i_{m+1}}.
$$
Therefore to prove the existence of limit $\epsilon\to 0$, we only need to prove the convergence of the integral
$$
\int_{\bracket{\mathbb C^d}^m}\bracket{\prod_{\alpha=1}^{m}\d^{2d}w^{(\alpha)}}\prod_{\alpha=1}^{m+1} \int_0^L {\d t_\alpha\over (4\pi t_\alpha)^d} e^{-\sum\limits_{\alpha=1}^m|w^{(\alpha)}|^2/4t_\alpha- |\sum\limits_{\alpha=1}^m w^{(\alpha)}|^2/4t_{m+1}}.
$$
Performing the Gaussian integral on $w^{(\alpha)}$'s, we find (up to a constant)
\begin{align*}
      &\prod_{\alpha=1}^{m+1}\int_0^L {\d t_\alpha\over (4\pi t_\alpha)^d}{1\over \bracket{t_1^{-1}\cdots t_{m+1}^{-1}\sum\limits_{\alpha=1}^{m+1}{ t_{\alpha}}}^d}\\
      =&\prod_{\alpha=1}^{m+1}\int_0^L {\d t_\alpha\over \bracket{4\pi \sum\limits_{\alpha=1}^{m+1}t_\alpha}^d}\leq C \prod_{\alpha=1}^{m+1}\int_0^L \d t_\alpha {1\over \bracket{t_1\cdots t_{m+1}}^{d/(m+1)}}<\infty
\end{align*}
since ${d/(m+1)}<1$. This proves the claim.

The lemma follows from the above two claims.
\end{proof}

\subsubsection{Quantum master equation and anomaly cancellation}
The vanishing of the counter terms implies that we can define the naive quantization for $(g,h,n)=(0,2,0)$ by
$$
    I_{0,2,0}[L]=\lim_{\epsilon\to 0} \sum_{\Gamma: \text{annulus}}W_\Gamma\bracket{P(\epsilon,L)^C+P(\epsilon, L)^O, I_{0,1,0}+I_{0,1,1}}
$$

Let
$$
   O[L]=Q I_{0,2,0}[L]+\fbracket{I_{0,1,0}[L], I_{0,2,0}[L]}_{L}^O+{1\over 2}\fbracket{I_{0,1,1}[L], I_{0,1,1}[L]}_{L}^C+ \Delta_L^{O, \infty}I_{0,1,0}[L].
$$
By \cite{Cos11}, $O[L]$ satisfies a version of classical renormalization group flow equation, and
$$
    O=\lim_{L\to 0}O[L]
$$
exists as a local functional. This is the annulus anomaly for the quantum master equation at $(g,h,n)=(0,2,0)$. The main result in this section is the following
\begin{theorem}\label{annulus-cancellation}
Under a suitable rescaling of $I_{0,1,1}$ by a nonzero constant, the annulus anomaly $O$ vanishes.
\end{theorem}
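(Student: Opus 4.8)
By the obstruction theory of \cite{Cos11}, $O=\lim_{L\to0}O[L]$ is a cocycle of cohomological degree $1$ in the obstruction--deformation complex $V_{0,2,0}$ of holomorphically translation-invariant, $SU(d)$-invariant local functionals of weight $2$ with no closed-string inputs; and changing $I^{naive}_{0,2,0}[L]$ by a counterterm only changes $O$ by an exact term. Hence it suffices to compute the class $[O]\in H^1(V_{0,2,0})$. By Proposition~\ref{proposition_annulus_cohomology} this group is $(d-1)/2$-dimensional, spanned by the functionals $\alpha_r(A)=\int\op{Tr}\big(A(\pa A)^r\big)\op{Tr}\big((\pa A)^{d-r}\big)$ for $1\le r\le (d-1)/2$, and $H^0(V_{0,2,0})=0$. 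So the plan is: write $[O]=\sum_r a_r[\alpha_r]$, compute the $a_r$, and show a single rescaling of $I^{1-disk}_\infty$ kills all of them; the vanishing of $H^0$ then gives uniqueness of the annulus quantization, which by Theorem~\ref{theorem-vanishing-flat-space} upgrades to uniqueness of the whole theory on $\C^d$.

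\textbf{Splitting the anomaly.} Every annulus graph of type $(0,2,0)$ built from the classical vertices $I_{0,1,0}=\CS_\infty$ and $I_{0,1,1}=I^{1-disk}_\infty$ must pair up all of its closed-string half-edges; since $\CS_\infty$ carries no closed half-edge and $I^{1-disk}_\infty$ exactly one, such a graph has either zero or exactly two $I^{1-disk}_\infty$-vertices. Thus, writing $I^{1-disk}_\infty\mapsto c\,I^{1-disk}_\infty$ for the rescaling parameter (the constant $c$ appearing in the proof surrounding Proposition~\ref{proposition-D-module-vanishing}), the anomaly is quadratic: $O=O^{op}+c^2O^{cl}$, where $O^{op}$ is precisely the one-loop annulus anomaly one would obtain for $\gl(N\mid N)$ holomorphic Chern--Simons theory on its own, and $O^{cl}=\lim_{L\to0}\tfrac12\{I^{1-disk}_\infty[L],I^{1-disk}_\infty[L]\}^C_L$ is the closed-string anomaly. (One checks, using the classical master equation of Lemma~\ref{lem-CME}, that the mixed contributions to $Q_\infty I_{0,2,0}[L]$ and $\{\CS_\infty,I_{0,2,0}[L]\}^O_L$ reorganize into exactly this closed-string bracket term, with no surviving term carrying closed external legs.)

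\textbf{Computing the two pieces.} For $O^{op}$: every one-loop annulus graph is a wheel with trees grafted on, and trees produce no singularities, so only the wheel graphs $\Gamma_m$ contribute; by the degree count already carried out in the proof of Lemma~\ref{lemma-finite}, $W_{\Gamma_m}$ vanishes for $m<d-1$ and has a convergent $\eps\to0$ limit for $m\ge d-1$, so the only \emph{local} contribution to $\lim_{L\to0}O^{op}[L]$ comes from the $d$-gon wheel $\Gamma_{d-1}$. Keeping track of the ribbon structure on $\Gamma_{d-1}$, this evaluates to a definite combination $\sum_{r}a^{op}_r\,\alpha_r$, plus a term proportional to $\op{Tr}(1)\int\op{Tr}\big(A(\pa A)^d\big)$ that vanishes because we use the supergroup $\gl(N\mid N)$ (this is the extra, uncancellable anomaly of $\gl(N)$ theory mentioned in the introduction). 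For $O^{cl}$: using the explicit formula of Definition~\ref{defn-1-disk} for $I^{1-disk}_\infty$ and the $L\to0$ closed-string kernel $(\pa\otimes1)\delta_{\mathrm{Diag}}$, which pairs the $\Omega^{b,\ast}$-component of one copy of $I^{1-disk}_\infty$ (coupling to $d+1-b$ copies of $A$) with the $\Omega^{d+1-b,\ast}$-component of the other, one obtains a second combination $\sum_r a^{cl}_r\,\alpha_r$ (for $d=3$ these are the two brackets $\{I^{1-disk}_2,I^{1-disk}_2\}^C$ and $\{I^{1-disk}_3,I^{1-disk}_1\}^C$ displayed in the introduction).

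\textbf{The main point and conclusion.} The crux — and the step I expect to be the real obstacle — is to show that the coefficient vectors $(a^{op}_r)$ and $(a^{cl}_r)$ are proportional with a nonzero ratio. Since there is only one rescaling parameter but a priori $(d-1)/2$ independent anomaly classes, the cancellation can succeed only if the open- and closed-string annulus anomalies span the \emph{same} ray in $H^1(V_{0,2,0})$; this is the ``marvelous cancellation'', a holomorphic avatar of the Green--Schwarz mechanism. I would establish it either by the head-on arithmetic of matching the combinatorial constants from the $\Gamma_{d-1}$ integral against those from the $(\pa\otimes1)\delta_{\mathrm{Diag}}$-bracket, or, more structurally, by recognizing both as the image of one natural class under the connecting map whose nonvanishing constant $c$ was isolated in Proposition~\ref{proposition-D-module-vanishing} (see the remark following it) — i.e.\ the same obstruction to splitting the periodic cyclic complex that governs the higher-genus cohomology cancellation. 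Granting proportionality, one sets $c^2=-a^{op}_r/a^{cl}_r$ (which is independent of $r$); over $\C$ this has a nonzero square root, so $[O]=0$, $I^{naive}_{0,2,0}[L]$ satisfies the quantum master equation, and the annulus quantization is unique since $H^0(V_{0,2,0})=0$.
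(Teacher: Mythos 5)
Your outline matches the paper's strategy at the structural level: split $O$ into an open-string piece $O_o$ (the one-loop wheel anomaly) and a closed-string piece $O_c=\tfrac12\{I^{1-disk},I^{1-disk}\}^C$, observe that rescaling $I^{1-disk}$ scales only one of them quadratically, and conclude provided the two pieces are proportional inside the $(d-1)/2$-dimensional anomaly group of Proposition \ref{proposition_annulus_cohomology}. But your proof is conditional exactly where all the content lies: you write ``granting proportionality'' and offer two possible routes without executing either. In the paper that proportionality \emph{is} the theorem. It is established by showing that both sides evaluate to the same local functional $\pm C\int_{\C^d}(A_0\wedge\pa A_1\wedge\cdots\wedge\pa A_k)\wedge(\pa B_{k+1}\wedge\cdots\wedge\pa B_d)$ with a constant depending only on $d$ and not on the splitting $k$ — for $O_c$ by direct computation with the kernel $(\pa\otimes 1)\delta_{Diag}$ (Lemma \ref{lemma_closed_string_annulus_anomaly}), and for $O_o$ by a several-page Wick-theorem/Gaussian-integral analysis of the wheel carrying one heat kernel $K^O_\eps$ and the rest propagators (Corollary \ref{local m>n} and Lemma \ref{local n=m}). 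The ``more structural'' route you suggest, via the connecting map of the periodic cyclic complex, is consistent with the remark following Proposition \ref{proposition-D-module-vanishing}, but the paper's logic runs the other way: the nonzero constant in that connecting map is \emph{determined by} the annulus cancellation condition, not available as an independent input to prove it. So as written the proposal does not establish the theorem.

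A smaller but genuine error: the wheel producing the local anomaly has $d+1$ vertices ($\Gamma_d$ in the paper's indexing), not the $d$-vertex wheel $\Gamma_{d-1}$ you name. The degree count you cite from Lemma \ref{lemma-finite} concerns graphs built entirely from propagators, which is what shows the counterterms vanish; for the anomaly one edge carries $K^O_\eps$ instead, the degree bound becomes $m\ge d$, the wheels with $m>d$ die in the $L\to 0$ limit, and only $m=d$ survives. This is consistent with the fact that the anomaly functionals $\int\op{Tr}(A(\pa A)^r)\op{Tr}((\pa A)^{d-r})$ have $d+1$ inputs, whereas a $d$-vertex wheel would produce a functional of only $d$ inputs.
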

If follows that $O[L]=0$ and the quantum master equation at $I_{0,2,0}[L]$ holds.

\begin{lemma}
$\lim\limits_{L\to 0}\fbracket{I_{0,1,0}[L], I_{0,2,0}[L]}_{L}^O=0$
\end{lemma}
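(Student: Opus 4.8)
The plan is to split $I_{0,2,0}[L]$ into the two kinds of annulus contribution appearing in the proof of Lemma \ref{lemma-finite} --- the ``mixed'' part $I^{A}_{0,2,0}[L]$ built from two copies of $I_{0,1,1}[L]$ joined by a closed-string propagator, and the ``pure open'' part $I^{O}_{0,2,0}[L]$ built from one-loop ribbon graphs of $\CS_\infty$-vertices --- and to treat the two corresponding pieces of $\{I_{0,1,0}[L],I_{0,2,0}[L]\}^{O}_{L}$ separately. Since the counter-terms vanish (Lemma \ref{lemma-finite}) everything in sight is defined for each $L>0$; moreover the estimates in that lemma give $I_{0,2,0}[L]\to 0$ as $L\to 0$, so $Q_\infty I_{0,2,0}[L]\to 0$, and since the terms $\tfrac12\{I_{0,1,1}[L],I_{0,1,1}[L]\}^{C}_{L}$ and $\Delta^{O,\infty}_{L}I_{0,1,0}[L]$ have local $L\to 0$ limits and $O=\lim_{L\to 0}O[L]$ is local, it follows that the remaining term $\lim_{L\to 0}\{I_{0,1,0}[L],I_{0,2,0}[L]\}^{O}_{L}$ is a local functional too.

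\emph{The mixed piece.} Write $I^{A}_{0,2,0}[L]=\tfrac12\,\pa_{P(\epsilon,L)^{C}}\bracket{I_{0,1,1}[L]^{2}}$ (with $\epsilon\to 0$; the self-contractions vanish because $I_{0,1,1}[L]$ is linear in the closed field). As $\{-,-\}^{O}_{L}$ acts only on the open fields, it commutes with $\pa_{P(\epsilon,L)^{C}}$, so, using the scale-$L$ classical master equation $\{I_{0,1,0}[L],I_{0,1,1}[L]\}^{O}_{L}=-Q_\infty I_{0,1,1}[L]$ (valid exactly at this order by Lemma \ref{lem-CME} and the compatibility of the renormalization group flow with the master equation) and the parity of $I_{0,1,1}$ ($\deg I_{0,1,1}=3-d$ is even since $d$ is odd), one finds $\{I_{0,1,0}[L],I^{A}_{0,2,0}[L]\}^{O}_{L}=-\tfrac12\,\pa_{P(\epsilon,L)^{C}}Q_\infty\bracket{I_{0,1,1}[L]^{2}}$. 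The standard heat-kernel identity $[Q_\infty,\pa_{P(\epsilon,L)^{C}}]=\Delta^{C}_{\epsilon}-\Delta^{C}_{L}$, together with $\Delta^{C}_{\epsilon}I_{0,1,1}[L]=\Delta^{C}_{L}I_{0,1,1}[L]=0$ (one closed leg), rewrites this as $-\tfrac12\,Q_\infty\pa_{P(\epsilon,L)^{C}}(I_{0,1,1}[L]^{2})\pm\tfrac12\bracket{\{I_{0,1,1}[L],I_{0,1,1}[L]\}^{C}_{\epsilon}-\{I_{0,1,1}[L],I_{0,1,1}[L]\}^{C}_{L}}$. Sending $\epsilon\to 0$ and then $L\to 0$, the first summand vanishes because $P(\epsilon,L)^{C}\to 0$, and by the classical limit axiom $\lim_{L\to 0}I_{0,1,1}[L]=I^{1-disk}_\infty$ together with continuity of $\{-,-\}^{C}$ the last two summands each tend to $\pm\tfrac12\{I^{1-disk}_\infty,I^{1-disk}_\infty\}^{C}$ and cancel. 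Hence the mixed piece has $L\to 0$ limit zero.

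\emph{The pure open piece.} The form-degree vanishing of Lemma \ref{lemma-finite} shows that any one-loop ribbon graph contributing to $I^{O}_{0,2,0}[L]$ has a loop of at least $d+1$ trivalent $\CS_\infty$-vertices, so $I^{O}_{0,2,0}[L]$ is homogeneous of polynomial degree $\ge d+1$ in the open field $A$; since $I_{0,1,0}[L]$ is at least cubic and the open bracket lowers the total polynomial degree by two, $\{I_{0,1,0}[L],I^{O}_{0,2,0}[L]\}^{O}_{L}$ is homogeneous of degree $\ge d+2$ in $A$. Its $L\to 0$ limit equals $\lim_{L\to 0}\{I_{0,1,0}[L],I_{0,2,0}[L]\}^{O}_{L}$ minus the (vanishing) mixed limit, hence is the local functional from the first paragraph; being $SU(d)$-invariant, of the correct $S^{1}\subset U(d)$ weight, of weight $2$, with no closed-string inputs, and of cohomological degree $1$, it must, by the classification carried out in the proof of Proposition \ref{proposition_annulus_cohomology}, be a combination of the functionals $A\mapsto\int\op{Tr}\bracket{A(\pa A)^{r}}\op{Tr}\bracket{(\pa A)^{d-r}}$, all of which are homogeneous of polynomial degree exactly $d+1$. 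A functional which is at once of degree $\ge d+2$ and such a combination is zero, so the pure open piece has $L\to 0$ limit zero as well, and adding the two pieces proves the lemma.

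The main obstacle is the pure open piece: one must establish, by a Gaussian-integral estimate in the spirit of Lemma \ref{lemma-finite}, that the relevant one-loop graph integrals --- a loop of $\ge d+1$ vertices, with $\CS_\infty$-trees grafted on and one heat-kernel edge inserted by the open bracket --- converge, and converge to a \emph{local} functional, in the $L\to 0$ limit; this is where the explicit propagators on $\C^{d}$ and the inequality $d/(m+1)<1$ used in Lemma \ref{lemma-finite} enter. The secondary points to verify are the heat-kernel commutator identity for the full differential $Q_\infty=\dbar+t\pa$ (the $t\pa$ term needs care) and the legitimacy of interchanging the $\epsilon\to 0$ and $L\to 0$ limits with the algebraic steps in the mixed piece.
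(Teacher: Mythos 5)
The paper's own proof of this lemma is a single line: by the estimates in the proof of Lemma \ref{lemma-finite}, $\lim_{L\to 0} I_{0,2,0}[L]=0$ (the bounds of the form $\int_{[\epsilon,L]^{m}}\prod \d t_\alpha(\cdots)$ vanish as $L\to 0$), and this already forces the bracket to vanish in the limit. Your two-part detour is therefore not the paper's route, and, more importantly, its second half has a genuine gap.

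The gap is in the pure open piece. You argue that its $L\to 0$ limit is a local, $SU(d)$-invariant, weight-two, degree-one functional which is homogeneous of polynomial degree $\ge d+2$ in $A$, and then invoke Proposition \ref{proposition_annulus_cohomology} to conclude it must be a linear combination of the functionals $\int\op{Tr}\left(A(\pa A)^{r}\right)\op{Tr}\left((\pa A)^{d-r}\right)$. But that proposition computes the \emph{cohomology} $H^\ast(V_{0,2,0})$ after a chain of quasi-isomorphisms; it does not classify degree-one cochains. The full complex of weight-two, $U(d)$-invariant, holomorphically translation-invariant local functionals contains many degree-one elements of polynomial degree $\ge d+2$: for instance $\bracket{\dbar+\{I_{0,1,0},-\}^{O}}\Psi$ with $\Psi=\int\op{Tr}\left(A^{2}(\pa A)^{r}\right)\op{Tr}\left((\pa A)^{d-r}\right)$, a degree-zero, weight-two functional with $d+2$ inputs and the correct scaling weight, which need not vanish. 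So ``polynomial degree $\ge d+2$'' plus the invariances does not force the functional to be zero; at best (and only after you additionally prove the limit is closed, which you do not) it would force its cohomology class to vanish, which is weaker than the lemma. To close the argument you would need either the paper's direct estimate --- the wheel-plus-one-heat-kernel graph integrals produced by the bracket tend to zero as $L\to 0$, exactly in the spirit of the bounds in Lemma \ref{lemma-finite} --- or a separate vanishing statement for the subcomplex of polynomial degree $\ge d+2$. (Your mixed-piece computation is plausible and consistent with Lemma \ref{lemma_closed_string_annulus_anomaly}, though it rests on limit interchanges you acknowledge but do not justify; in any case it is superseded by the simpler observation that the mixed part of $I_{0,2,0}[L]$ itself already tends to zero.)
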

\begin{proof} This follows from Lemma \ref{lemma-finite} that $\lim\limits_{L\to 0}I_{0,2,0}[L]=0$.
\end{proof}

We will let
$$
 O_{c}[L]={1\over 2}\fbracket{I_{0,1,1}[L], I_{0,1,1}[L]}_{L}^C, \quad O_{o}[L]=QI_{0,2,0}[L]+\Delta_L^{O, \infty}I_{0,1,0}[L]
$$

\begin{lemma}\label{lemma_closed_string_annulus_anomaly}
$O_c=\lim\limits_{L=0}O_{c}[L]$ exists, and is given by
$$
   O_c={1\over 2}\fbracket{I_{0,1,1}, I_{0,1,1}}^C
$$
Explicitly, we have
$$
    O_c(A_0, A_1, \cdots, A_k; B_{k+1},\cdots, B_d)=\pm C_1\int_{\C^d}\bracket{A_0\wedge \pa A_1\wedge\cdots \pa A_k}\wedge \bracket{\pa B_{k+1}\wedge\cdots\wedge \pa B_d}
$$
for two cyclic tensors $\{A_i\}, \{B_j\}$ of $\Omega^{0,*}_c(\mathbb C^d)$. Here $C_1$ is a nonzero constant which only depends on the dimension $d$.
\end{lemma}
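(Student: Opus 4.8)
The plan is to reduce the statement to a finite, regularization-free computation. The crucial structural fact is that $I_{0,1,1}[L]$ is built entirely from tree diagrams (disks with one interior marked point), so that $\lim_{L\to 0} I_{0,1,1}[L] = I^{1-disk}_\infty =: I_{0,1,1}$ exists as a local functional with no counterterms; this is part of the classical limit axiom, but what matters here is that the convergence is compatible with the operations below. Now $O_c[L] = \tfrac12 \{I_{0,1,1}[L], I_{0,1,1}[L]\}_L^C$ is, diagrammatically, a tree --- two vertices joined by a single closed-string contact kernel $(\pa\otimes 1)K_L^C$ --- so exactly as in the construction of $I_{0,1,1}[L]$ itself no loop appears and the $L\to 0$ limit exists with no regularization. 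Since $(\pa\otimes 1)K_L^C$ converges, as a distribution on $X\times X$, to the Poisson kernel $(\pa\otimes 1)\delta_{Diag}$ underlying $\{-,-\}^C$, this gives $O_c = \lim_{L\to 0} O_c[L] = \tfrac12 \{I^{1-disk}_\infty, I^{1-disk}_\infty\}^C$, and the latter is a genuine local functional: it is the pullback to the diagonal $X\subset X\times X$ (against $\delta_{Diag}$), after applying the finite-order holomorphic operator $\pa$, of the exterior product of the two local integrands of $I^{1-disk}_\infty$. Alternatively one can invoke the general local-limit theorem of \cite{Cos11} for the full obstruction $O[L]$, combined with the vanishing of $\lim_{L\to 0}\{I_{0,1,0}[L], I_{0,2,0}[L]\}_L^O$ established above; but the direct diagrammatic argument is cleaner.

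It then remains to evaluate $\{I^{1-disk}_\infty, I^{1-disk}_\infty\}^C$ explicitly. First I note that the Poisson kernel $(\pa\otimes 1)\delta_{Diag}$ lies in $\PV(X)\otimes\PV(X)$ with no powers of $t$, so only the undifferentiated sub-interactions $I^{1-disk}_a$ --- the term of $I^{1-disk}$ with interior input in $t^0\Omega^{a,*}(X)$ --- can contribute. Under Serre duality the component of $\delta_{Diag}$ available to the bracket pairs an $\Omega^{a,*}$-slot on one side with the $\Omega^{b,*}$-slot on the other, and a count of degrees (consistent with the $d=3$ identities $\{I^{1-disk}_3, I^{1-disk}_1\}^C(A) = \int\Tr A\,\Tr((\pa A)^3)$ and $\{I^{1-disk}_2, I^{1-disk}_2\}^C(A) = \int\Tr(A\pa A)\,\Tr((\pa A)^2)$ recorded earlier) forces $a+b = d+1$; moreover the factor $\pa$ in the kernel is absorbed onto exactly one of the two legs, endowing that factor with one extra $\pa$. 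Substituting the formula of Definition \ref{defn-1-disk} for each $I^{1-disk}_a$ (on its $\bullet\notin\Gamma$, $t^0$ part it sends $\mu\otimes A_0\otimes\cdots\otimes A_a$ to $\int_X \Tr(A_0\wedge\pa A_1\wedge\cdots\wedge\pa A_a)\wedge(\mu\lrcorner\Omega_X)$, with a companion $\bullet\in\Gamma$ term involving $\pa\mu$) and performing the contraction against $(\pa\otimes 1)\delta_{Diag}$, one finds that $\{I^{1-disk}_\infty, I^{1-disk}_\infty\}^C$ carries exactly two traces --- the two open-string traces are never joined, since the contracted slots are closed-string slots --- and on a pair of cyclic tensors $\{A_i\}_{i=0}^{k}$, $\{B_j\}_{j=k+1}^{d}$ equals $\pm C_1\int_{\C^d}(A_0\wedge\pa A_1\wedge\cdots\wedge\pa A_k)\wedge(\pa B_{k+1}\wedge\cdots\wedge\pa B_d)$, where $C_1$ depends only on $d$. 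This has the shape predicted by Proposition \ref{proposition_annulus_cohomology} (with $k$ playing the role of $r$), and one checks that $C_1\neq 0$.

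\emph{Main obstacle.} The conceptual content is light; the real work is the bookkeeping in the last step. I expect the crux to be: organizing the sum over subsets $\Gamma\subset\Sigma_m$, in particular the interplay between the $\bullet\in\Gamma$ and $\bullet\notin\Gamma$ contributions and the $\pa$ coming from the Poisson kernel (this is precisely what produces the ``extra'' $\pa$ on the $B$-factor rather than on the $A$-factor); keeping track of the Koszul signs from permuting the odd generators $\d\zbar_i$, $\pa_{z_j}$ and the shifted fields; and --- most consequentially --- verifying that the single combinatorial constant $C_1$ is nonzero. The nonvanishing of $C_1$ is exactly what allows the ``suitable rescaling of $I_{0,1,1}$'' in Theorem \ref{annulus-cancellation} to cancel $O_c$ against the open-string anomaly $O_o$, so pinning down the normalization precisely, not merely its structure, is the essential point. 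One should also confirm that no spurious contributions survive (e.g. terms in which $A_0$ acquires a $\pa$, or in which $\pa\mu$ vanishes for degree reasons) and that the descendant ($t^{>0}$) components of $I_{0,1,1}[L]$ genuinely drop out, which they do since the Poisson kernel only pairs $t^0$-slots.
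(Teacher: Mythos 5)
Your proposal is correct and follows essentially the same route as the paper: identify the closed-string bracket kernel as $(\partial\otimes 1)\delta_{Diag}$, observe that it only pairs the $t^0$ components of $I^{1-disk}$ so the descendants drop out, note that the relevant diagram is a tree so the $L\to 0$ limit exists without counterterms, and then contract the explicit $t^0$ formula $I_{0,1,1}(\mu,A)=c\int_X\op{Tr}(A\wedge\pa A\wedge\cdots\wedge\pa A)\wedge(\mu\vdash\Omega_X)$ against the kernel to produce the double-trace functional with the extra $\pa$ landing on one factor. The paper's own proof is terser but identical in substance; your additional bookkeeping (the constraint $a+b=d+1$, the non-joining of the two traces) is consistent with the identities recorded for $d=3$.
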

\begin{proof} The closed string BV bracket $\{-,-\}^C$ is represented by the kernel of the operator $\pa: \PV(X)\to \PV(X)$, which is $(\partial \otimes 1) \delta_{Diag}$ where $\delta_{Diag}$ is the $\delta$-current on the diagonal of $X \times X$, viewed as a polyvector field via the isomorphism between forms and polyvector fields. In particular, it only involves the component $t^0\PV(X)$ of the full closed string fields $\PV(X)[[t]]$. For $\mu \in \PV(X)$, 
$$
   I_{0,1,1}(\mu, A)=c \int_X \Tr (A\wedge \pa A\wedge\cdots \wedge \pa A)\wedge (\mu\vdash \Omega_X).
$$
It follows that $\{I_{0,1,1}, I_{0,1,1}\}^C$ represents the functional
\begin{align*}
   A\to c\int_X \Tr  (A\wedge \pa A\wedge\cdots \wedge \pa A)\wedge  \Tr(\pa A\wedge\cdots \wedge \pa A).
\end{align*}

The lemma follows immediately. 
\end{proof}

We are left to analyze the local functional
$
    O_{o}=\lim\limits_{L\to 0} O_{o,[L]}.
$
Using the relation $[Q, P(\epsilon, L)^O]=K_\epsilon^O-K_L^O$, we find that $O_o[L]$ is given by the following two types of Feynman diagrams

\includegraphics[width=5in]{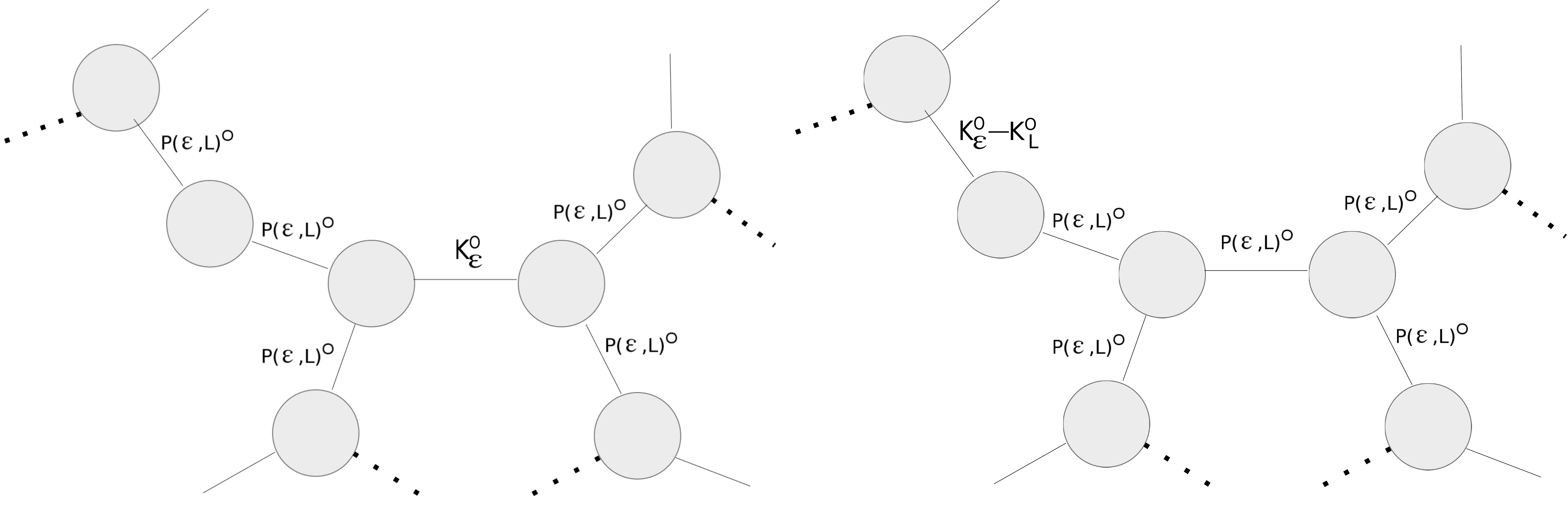}

The second graph vanishes under $L\to 0$. The nontrivial contribution to the local obstruction comes from the following wheel diagram with $m+1$ vertices ($m\geq d$ by degree reasons. See the proof of Lemma \ref{lemma-finite}).

\includegraphics[width=3in]{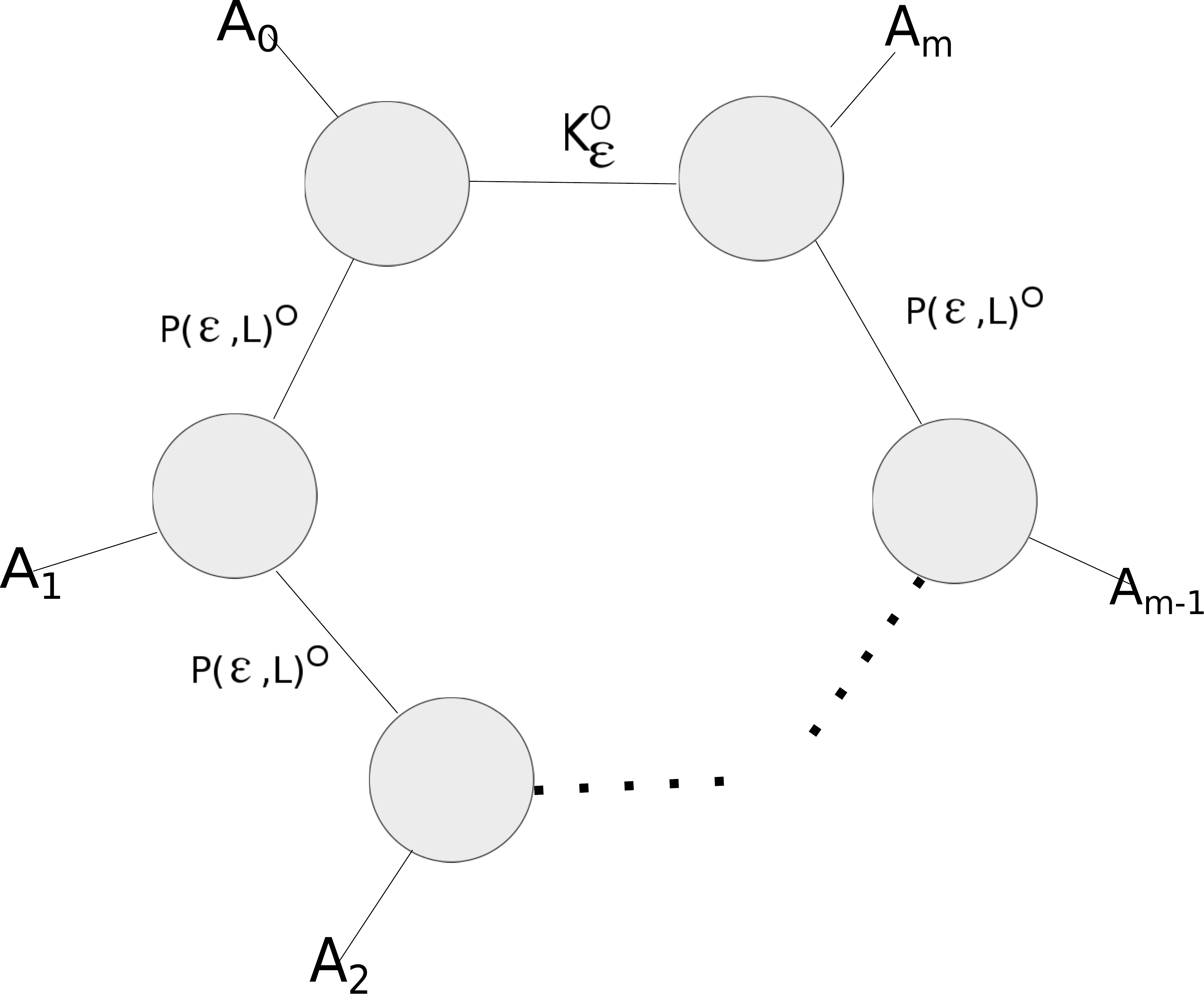}

We put $K_\epsilon^O$ on one edge and $P(\epsilon, L)^O$ on the other edges, and consider the following graph integral
$$
W_{\Gamma_m}(K_\epsilon^O, P(\epsilon, L)^O, A_0, \cdots, A_m)={ \int_{\mathbb (C^d)^{m+1}} \bracket{\prod_{\alpha=0}^mA_\alpha\wedge \d^d z^{(\alpha)}}K_{\epsilon,0,m}^O\prod_{\alpha=1}^m P(\epsilon, L)^O_{\alpha-1, \alpha}}
$$
where we have kept the same notations as in the proof of Lemma \ref{lemma-finite}, and $K_{\epsilon,0,m}^O$ denotes the corresponding BV kernel connecting the vertices $z^{(0)}$ and $z^{(m)}$.

\begin{lemma}
Let $\Phi$ be a smooth function on $(\mathbb C^d)^m$ with compact support, where $m>d$. Let $W(\epsilon, L)$ be the integral
\begin{align*}
   W(\epsilon, L)=&{1\over (4\pi \epsilon)^d}\bracket{\prod_{\alpha=0}^m \int_{\mathbb C^d} \d^{2d}z^{(\alpha)}} \bracket{\prod_{\alpha=1}^m \int_\epsilon^L {\d t_{\alpha}\over (4\pi t_\alpha)^{d}}} \Phi \\ & \prod_{\alpha=1}^m \bracket{{\bar z_{i_\alpha}^{(\alpha)}-\bar z_{i_\alpha}^{(\alpha-1)}\over 4t_\alpha}} e^{-\sum\limits_{\alpha=1}^m |z^{(\alpha)}-z^{(\alpha-1)}|^2/4t_\alpha-|z^{(m)}-z^{(0)}|^2/4\epsilon}
\end{align*}
where $1\leq i_1,\cdots, i_m\leq n$. Then $W(0, L)\equiv \lim\limits_{\epsilon\to 0}W(\epsilon, L)$ exists, and
$$
\lim\limits_{L\to 0}W(0,L)=0.
$$
\end{lemma}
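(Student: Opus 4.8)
The plan is to reduce $W(\epsilon,L)$ to a completely explicit $t$-integral of the same shape as the one that controls Lemma~\ref{lemma-finite}, and then deduce both assertions from its behaviour under the rescaling $t_\alpha=Ls_\alpha$. The essential first move is to remove the singular prefactors $\bar w^{(\alpha)}_{i_\alpha}/t_\alpha$ by integration by parts. Passing, as in the proof of Lemma~\ref{lemma-finite}, to the relative variables $w^{(\alpha)}=z^{(\alpha)}-z^{(\alpha-1)}$ ($1\le\alpha\le m$), the exponent becomes the positive Hermitian form $Q=\tfrac14\sum_{\alpha,\beta}M_{\alpha\beta}\abracket{w^{(\alpha)},w^{(\beta)}}$ with $M_{\alpha\beta}=\delta_{\alpha\beta}/t_\alpha+1/\epsilon$. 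Since $\pa_{w^{(\alpha)}_i}e^{-Q}=-\tfrac14\bracket{\sum_\beta M_{\alpha\beta}\bar w^{(\beta)}_i}e^{-Q}$, and the Sherman--Morrison formula gives $(M^{-1})_{\alpha\gamma}=\delta_{\alpha\gamma}t_\alpha-t_\alpha t_\gamma/\bracket{\epsilon+\sum_\beta t_\beta}$, under the Gaussian weight the multiplier $\bar w^{(\alpha)}_{i_\alpha}/(4t_\alpha)$ can be traded for the holomorphic first-order operator
$$
\mathcal D_\alpha=\sum_{\gamma=1}^m\bracket{\delta_{\alpha\gamma}-\frac{t_\gamma}{\epsilon+\sum_\beta t_\beta}}\pa_{w^{(\gamma)}_{i_\alpha}},
$$
all of whose coefficients lie in $[-1,1]$, \emph{uniformly in $\epsilon$ and $t_1,\dots,t_m$}. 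Integrating by parts $m$ times --- no boundary terms occur, $\Phi$ being compactly supported, and the $\mathcal D_\alpha$ commute since they have $w$-independent coefficients --- replaces $\Phi\prod_\alpha \bar w^{(\alpha)}_{i_\alpha}/(4t_\alpha)$ by $\psi:=\bracket{\prod_\alpha\mathcal D_\alpha}\Phi$, a smooth function supported in $\op{supp}\Phi$ with $\|\psi\|_\infty\le C_{m}\max_{|\beta|\le m}\|\pa^\beta\Phi\|_\infty$ bounded independently of $\epsilon,t_1,\dots,t_m$. This is the uniform estimate of \cite{Li11b}, and it is the heart of the matter.

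Next I will perform the Gaussian integral over the positions and count powers. Using $\det M=\bracket{\epsilon+\sum_\alpha t_\alpha}/\bracket{\epsilon\prod_\alpha t_\alpha}$ one gets $\int_{(\C^d)^m}e^{-Q}\prod_\alpha\d^{2d}w^{(\alpha)}=(4\pi)^{md}\bracket{\epsilon\prod_\alpha t_\alpha\big/(\epsilon+\sum_\alpha t_\alpha)}^{d}$; bounding $|\psi|$ by $\|\psi\|_\infty$ on its fixed compact support, the prefactor $(4\pi\epsilon)^{-d}$ and the $\prod_\alpha(4\pi t_\alpha)^{-d}$ cancel the $\epsilon^d\prod_\alpha t_\alpha^d$ in the numerator, leaving
$$
|W(\epsilon,L)|\ \le\ C\int_{[\epsilon,L]^m}\frac{\d t_1\cdots\d t_m}{\bracket{\epsilon+\sum_\alpha t_\alpha}^d}\ \le\ C\int_{[0,L]^m}\frac{\d t_1\cdots\d t_m}{\bracket{\sum_\alpha t_\alpha}^d},
$$
with $C$ proportional to $\|\psi\|_\infty$ and \emph{independent of $\epsilon$}. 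By the arithmetic--geometric mean inequality $\sum_\alpha t_\alpha\ge m\bracket{\prod_\alpha t_\alpha}^{1/m}$, the last integral is at most $m^{-d}\bracket{\int_0^L t^{-d/m}\,\d t}^{m}$, which is finite precisely because $m>d$. Hence $W(\epsilon,L)$ is bounded uniformly in $\epsilon\in(0,L)$.

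Finally, the existence of $W(0,L)=\lim_{\epsilon\to0}W(\epsilon,L)$ will follow from a standard approximate-identity argument: writing $v:=\sum_\alpha w^{(\alpha)}=z^{(m)}-z^{(0)}$, the weight $(4\pi\epsilon)^{-d}e^{-|v|^2/4\epsilon}$ is a $\delta$-sequence in $v\in\C^d$, while the remaining $w$- and $t$-integrations --- with the lower limits $\epsilon$ in the $t_\alpha$ sent to $0$ and $\psi$ replaced by its uniform $\epsilon\to0$ limit --- converge by dominated convergence against the integrable majorant $C\bracket{\sum_\alpha t_\alpha}^{-d}$ of the previous step. Since that bound is uniform in $\epsilon$ it also bounds $W(0,L)$, and the substitution $t_\alpha=Ls_\alpha$ yields
$$
|W(0,L)|\ \le\ C\,L^{\,m-d}\int_{[0,1]^m}\frac{\d s_1\cdots\d s_m}{\bracket{\sum_\alpha s_\alpha}^d}\,;
$$
the $s$-integral is finite by the same arithmetic--geometric mean bound and $m-d\ge1$, so $W(0,L)=O(L^{m-d})\to0$ as $L\to0$. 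The one genuinely delicate point is the integration by parts of the first step: the $m$ factors $\bar w^{(\alpha)}_{i_\alpha}/t_\alpha$ each blow up individually as $t_\alpha\to0$, and the content of the argument is that they recombine into holomorphic derivatives with \emph{bounded} coefficients --- the Sherman--Morrison form of $M^{-1}$ above makes this transparent, the coefficient being $\delta_{\alpha\gamma}-t_\gamma/(\epsilon+\sum_\beta t_\beta)\in[-1,1]$. Everything after that is the power counting already used for Lemma~\ref{lemma-finite}, refined only by the homogeneity rescaling.
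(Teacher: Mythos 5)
Your proposal is correct and follows essentially the same route as the paper's proof: pass to relative coordinates, integrate by parts to convert the $\bar w^{(\alpha)}_{i_\alpha}/t_\alpha$ factors into holomorphic derivatives with coefficients bounded uniformly in $\epsilon$ and the $t_\alpha$, perform the Gaussian integral using $\det M=(\epsilon+\sum_\alpha t_\alpha)/(\epsilon\prod_\alpha t_\alpha)$, and conclude from the bound $C\int_{[\epsilon,L]^m}(\epsilon+\sum_\alpha t_\alpha)^{-d}\,\d t$ together with $d/m<1$. Your explicit Sherman--Morrison form of $M^{-1}$ and the rescaling $t_\alpha=Ls_\alpha$ giving $W(0,L)=O(L^{m-d})$ only make more transparent what the paper leaves implicit.
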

\begin{proof} We consider the change of variables
$$
   w^{(0)}=z^{(0)},  w^{(\alpha)}=z^{(\alpha+1)}-z^{(\alpha)}, 1\leq \alpha\leq m.
$$
Then
\begin{align*}
   W(\epsilon, L)=&{1\over (4\pi \epsilon)^d}\bracket{\prod_{\alpha=0}^m \int_{\mathbb C^d} \d^{2d}w^{(\alpha)}} \bracket{\prod_{\alpha=1}^m \int_\epsilon^L {\d t_{\alpha}\over (4\pi t_\alpha)^{d}}} {\Phi}  \bracket{\prod_{\alpha=1}^m { \bar w^{(\alpha)}_{i_\alpha}\over 4t_\alpha}} e^{-\sum\limits_{\alpha=1}^m |w^{(\alpha)}|^2/4t_\alpha-|w^{(1)}+\cdots+ w^{(m)}|^2/4\epsilon}.
\end{align*}
We define the holomorphic derivative
$$
   \pa^*_{w^{(\alpha)}_{i_\alpha}}\equiv \pa_{w^{(\alpha)}_{i_\alpha}}-{\sum\limits_{\beta=1}^m t^{(\beta)}\pa_{w^{(\beta)}_{i_\beta}}\over \epsilon+t_1+\cdots+t_m}.
$$
Then integration parts gives
\begin{align*}
   W(\epsilon, L)=&{1\over (4\pi \epsilon)^d}\bracket{\prod_{\alpha=0}^m \int_{\mathbb C^d} \d^{2d}w^{(\alpha)}} \bracket{\prod_{\alpha=1}^m \int_\epsilon^L {\d t_{\alpha}\over (4\pi t_\alpha)^{d}}} \bracket{\prod_{\alpha=1}^m \pa^*_{w^{(\alpha)_{i_\alpha}}}\Phi} e^{-\sum\limits_{\alpha=1}^m |w^{(\alpha)}|^2/4t_\alpha-|w^{(1)}+\cdots+ w^{(m)}|^2/4\epsilon}
\end{align*}

 Since the dependence of ${\prod\limits_{\alpha=1}^m \pa^*_{w^{(\alpha)_{i_\alpha}}}\Phi}$ on $t_\alpha$ is uniformly bounded, we have
\begin{align*}
   |W(\epsilon, L)|\leq {C \over (4\pi \epsilon)^d} \bracket{\prod_{\alpha=1}^m \int_{\mathbb C^d} \d^{2d}w^{(\alpha)}} \bracket{\prod_{\alpha=1}^m \int_\epsilon^L {\d t_{\alpha}\over (4\pi t_\alpha)^{d}}} e^{-\sum\limits_{\alpha=1}^m |w^{(\alpha)}|^2/4t_\alpha-|w^{(1)}+\cdots+ w^{(m)}|^2/4\epsilon}
\end{align*}
where $C$ is a constant which depends only on $\Phi$. Performing the Gaussian integral on $w^{(\alpha)}$'s, we find
\begin{align*}
  |W(\epsilon, L)&|\leq {C \over (4\pi \epsilon)^d}\bracket{\prod_{\alpha=1}^m \int_\epsilon^L {\d t_{\alpha}\over (4\pi t_\alpha)^{d}}} {1\over \bracket{{\epsilon+t_1+\cdots+t_m\over \epsilon t_1\cdot t_m}}^d}\\
  &=C \int_{[\epsilon, L]^m} {\d t_1\cdots \d t_m \over (4\pi)^d \bracket{\epsilon+t_1+\cdots+t_m}^d}\\
  &\leq C \int_{[\epsilon, L]^m}{\d t_1\cdots \d t_m \over (4\pi)^n \bracket{t_1\cdots t_m}^{d/m}}
\end{align*}
It follows immediately that $W(0,L)$ exists and
$$
   \lim_{L\to 0}W(0,L)=0
$$
\end{proof}

\begin{corollary}\label{local m>n} If $m>d$, then
$$
  \lim\limits_{L\to 0}\lim\limits_{\epsilon\to 0}W_{\Gamma_m}(K_\epsilon^O, P(\epsilon, L)^O, A_0, \cdots, A_m)=0
$$
\end{corollary}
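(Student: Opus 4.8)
The plan is to unwind the graph integral $W_{\Gamma_m}(K_\epsilon^O, P(\epsilon,L)^O, A_0,\dots,A_m)$ explicitly, contract away all of the differential forms so that the integrand becomes a purely scalar Gaussian expression, and then recognize the result as a finite sum of integrals of exactly the type $W(\epsilon,L)$ treated in the preceding Lemma; the Corollary follows at once by invoking that Lemma, whose hypothesis $m>d$ is precisely the range under consideration here.

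In more detail: I would place $K_\epsilon^O$ on the edge of the wheel $\Gamma_m$ joining the vertices $z^{(0)}$ and $z^{(m)}$, and $P(\epsilon,L)^O$ on each of the remaining $m$ edges $z^{(\alpha-1)}\to z^{(\alpha)}$, and substitute the explicit kernels written at the start of this section. The integrand becomes a sum over index choices $1\le i_1,\dots,i_m\le d$ of terms which are a product of antiholomorphic forms in the $z^{(\alpha)}$ — namely the $A_\alpha$ wedged with $\prod_i(\d\bar z_i^{(0)}-\d\bar z_i^{(m)})$ coming from $K_\epsilon^O$ and, for each $\alpha$, $\prod_{j\ne i_\alpha}(\d\bar z_j^{(\alpha-1)}-\d\bar z_j^{(\alpha)})$ coming from the propagator — times the scalar Gaussian factor $\frac{1}{(4\pi\epsilon)^d}\prod_{\alpha=1}^m\frac{\bar z^{(\alpha)}_{i_\alpha}-\bar z^{(\alpha-1)}_{i_\alpha}}{4t_\alpha}\,e^{-\sum_\alpha |z^{(\alpha)}-z^{(\alpha-1)}|^2/4t_\alpha-|z^{(m)}-z^{(0)}|^2/4\epsilon}$. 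Wedging the antiholomorphic forms with the holomorphic volume forms $\d^d z^{(\alpha)}$ and integrating over the fundamental class of $(\C^d)^{m+1}$ leaves a smooth compactly supported scalar function $\Phi^{i_1\cdots i_m}$ built from components of the $A_\alpha$; as in the proof of Lemma \ref{lemma-finite}, the degree bookkeeping here is what forces $m\ge d$ for a nonzero contribution, and the ribbon structure of $\Gamma_m$ affects only which $A_\alpha$ enters which factor and is irrelevant to the estimate. Changing coordinates to $w^{(0)}=z^{(0)}$ and $w^{(\alpha)}=z^{(\alpha)}-z^{(\alpha-1)}$ turns this into the integral $W(\epsilon,L)$ of the preceding Lemma (the $w^{(0)}$-integration, which the Gaussian factors do not see, being absorbed harmlessly into $\Phi$). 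Since $m>d$, that Lemma gives that $\lim_{\epsilon\to 0}W(\epsilon,L)=W(0,L)$ exists and $\lim_{L\to 0}W(0,L)=0$; summing over the finitely many index choices $i_1,\dots,i_m$ yields the claim.

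The only genuine work — which is exactly why I isolate the preceding Lemma as the load-bearing step rather than repeat it — is the estimate inside that Lemma: trading each factor $\bar w^{(\alpha)}_{i_\alpha}/t_\alpha$, via integration by parts against the Gaussians (Wick's theorem), for a holomorphic derivative $\partial^{*}_{w^{(\alpha)}_{i_\alpha}}$ whose coefficients are uniformly bounded in $\epsilon$ and the $t_\alpha$, and then carrying out the Gaussian integral over the $w^{(\alpha)}$ to reduce to the scaling integral $\int_{[\epsilon,L]^m}\d t_1\cdots\d t_m\,(4\pi)^{-d}(\epsilon+t_1+\cdots+t_m)^{-d}\le C\int_{[\epsilon,L]^m}\d t_1\cdots\d t_m\,(t_1\cdots t_m)^{-d/m}$, which is finite and tends to $0$ as $L\to 0$ precisely because $d/m<1$. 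The strict inequality $m>d$ is exactly what is used; the borderline wheel $m=d$ is the one that produces the nonzero local annulus anomaly analyzed in the remainder of the section, and is deliberately not covered by this Corollary.
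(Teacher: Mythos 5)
Your proposal is correct and is essentially the argument the paper intends: the corollary is presented as an immediate consequence of the preceding lemma, obtained exactly as you describe by contracting the form parts of the kernels, passing to the variables $w^{(\alpha)}=z^{(\alpha)}-z^{(\alpha-1)}$, and recognizing the resulting scalar Gaussian integrals as finite sums of the integrals $W(\epsilon,L)$ covered by that lemma. You also correctly identify that the strict inequality $m>d$ is precisely what makes the lemma applicable and that the borderline case $m=d$ is excluded because it carries the anomaly.
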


This implies that the local functional $O_o$ has precisely $d+1$ inputs. Now we compute the case for $m=d$.

\begin{lemma}\label{local n=m}
When $m=d$, we have
$$
 \lim_{\epsilon\to 0}W_{\Gamma_m}(K_\epsilon^O, P(\epsilon, L)^O, A_0, \cdots, A_m)=C\int_{\mathbb C^d} A_0\wedge \pa A_1\wedge\cdots\wedge \pa A_d 
$$
where $C$ is a non-zero constant which only depends on $d$.
In particular, it doesn't depend on $L$ and survives in the $L\to 0$ limit.
\end{lemma}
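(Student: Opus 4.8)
The plan is to compute the relevant Feynman integral directly and read off its local part.

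\emph{Reduction.} By the results of \cite{Cos11} the obstruction $O_o=\lim_{L\to0}O_o[L]$ is a local functional of the open-string fields. Combined with Corollary \ref{local m>n} (wheels with more than $d+1$ vertices do not contribute in the $L\to0$ limit) and the vanishing $\prod_\alpha P(\epsilon,L)^O_{\alpha,\alpha+1}=0$ for $m<d$ noted in the proof of Lemma \ref{lemma-finite}, the whole computation reduces to evaluating $\lim_{\epsilon\to0}W_{\Gamma_d}(K_\epsilon^O,P(\epsilon,L)^O,A_0,\dots,A_d)$ and identifying it with $C\int_{\C^d}A_0\wedge\pa A_1\wedge\cdots\wedge\pa A_d$ for a nonzero constant $C=C(d)$.

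\emph{Extracting the local density.} I would pass to the coordinates $w^{(0)}=z^{(0)}$, $w^{(\alpha)}=z^{(\alpha)}-z^{(\alpha-1)}$ ($1\le\alpha\le d$) used elsewhere in this section. The Gaussian factor is independent of $w^{(0)}$, so the $w^{(0)}$-integral produces $\int_{\C^d}\prod_\alpha A_\alpha\bigl(w^{(0)}+w^{(1)}+\cdots+w^{(\alpha)}\bigr)$; since $O_o$ is local, each $A_\alpha$ may be replaced by its Taylor expansion about $w^{(0)}$ and only finitely many terms matter. The prefactors $\bar w^{(\alpha)}_{i_\alpha}/4t_\alpha$ are removed by the integration-by-parts (Wick) manoeuvre of Lemma \ref{lemma-finite}, becoming holomorphic derivatives $\pa^*_{w^{(\alpha)}_{i_\alpha}}$ which, on the Taylor-leading contributions, act as $\pa_{z_{i_\alpha}}$ on the inputs. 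Wedging the antiholomorphic forms $\bigwedge_\alpha\bigwedge_{j\ne i_\alpha}d\bar w^{(\alpha)}_j$ carried by the $d$ propagators with $\bigwedge_j\bigl(\sum_\beta d\bar w^{(\beta)}_j\bigr)$ carried by $K_\epsilon^O$, the top-degree part on $(\C^d)^d$ is nonzero only when $(i_1,\dots,i_d)$ is a permutation of $(1,\dots,d)$, and the resulting sum over permutations assembles the inputs into $\int_{\C^d}A_0\wedge\pa A_1\wedge\cdots\wedge\pa A_d$.

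\emph{The scalar, the limits, and nonvanishing.} After the reductions above the remaining scalar is a Gaussian integral over $(w^{(1)},\dots,w^{(d)})$ against $\exp(-\sum_\alpha|w^{(\alpha)}|^2/4t_\alpha-|\sum_\beta w^{(\beta)}|^2/4\epsilon)$ weighted by $\frac{1}{(4\pi\epsilon)^d}\prod_\alpha\int_\epsilon^L\frac{dt_\alpha}{(4\pi t_\alpha)^d}$. The covariance matrix has Sherman--Morrison form, with determinant proportional to $(\epsilon+\sum_\alpha t_\alpha)/(\epsilon\prod_\alpha t_\alpha)$ on each component, so the Gaussian integration collapses the scalar to an explicit integral in the $t_\alpha$ alone; rescaling $t_\alpha=\epsilon s_\alpha$ turns it into an integral over $[1,L/\epsilon]^d$ of a scale-invariant integrand, which simultaneously shows that the $\epsilon\to0$ limit exists and that it is independent of $L$ (hence survives the subsequent $L\to0$ limit). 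The genuinely delicate point, which I expect to be the main obstacle, is that $C\ne0$: one must check that the combinatorial constant from the wedge of $d\bar w$-forms (together with the sum over the permutations $(i_1,\dots,i_d)$) does not cancel against the numerical $t$-integral, and in particular that any would-be logarithmically divergent contribution is a $\pa$-exact density that integrates to zero, leaving a finite nonzero remainder. I would pin this down by testing against convenient inputs — for instance $A_0=f_0(z)$ and $A_\alpha=f_\alpha(z)\,d\bar z_\alpha$ — which isolates a single permutation, reduces the sign bookkeeping to an explicit computation, and leaves a manifestly nonzero Gaussian integral. For the application (Theorem \ref{annulus-cancellation}) only $C\ne0$ is needed: together with the nonzero constant $C_1$ of Lemma \ref{lemma_closed_string_annulus_anomaly} and the cohomology computation of Proposition \ref{proposition_annulus_cohomology}, it allows the rescaling of $I_{0,1,1}$ to be chosen so that the double-trace part of $O_o$ cancels $O_c$ while the single-trace part is a coboundary.
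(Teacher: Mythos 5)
Your proposal follows essentially the same route as the paper's proof: pass to the difference coordinates $w^{(\alpha)}$, integrate by parts to convert the factors $\bar w^{(\alpha)}_{i_\alpha}/4t_\alpha$ into the modified holomorphic derivatives $\pa^*_{w^{(\alpha)}_{i_\alpha}}$, observe that the product of the $d$ propagators with $K^O_\epsilon$ is a top antiholomorphic form proportional to the totally antisymmetric tensor $\epsilon_{i_1\cdots i_d}$ (which is exactly what assembles the inputs into $A_0\wedge\pa A_1\wedge\cdots\wedge\pa A_d$), and reduce the remaining scalar to a Gaussian integral over the $w^{(\alpha)}$ followed by a $t$-integral. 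The two points you flag as delicate are precisely where the paper does its work, and both close more easily than you anticipate. First, there is no cancellation to worry about in establishing $C\neq 0$: the sum over $(i_1,\dots,i_d)$ weighted by $\epsilon_{i_1\cdots i_d}$ is not a scalar that could vanish against the $t$-integral --- it \emph{is} the wedge product --- while the scalar factor, after the Gaussian integration (using $\det M=(\epsilon+\sum_\alpha t_\alpha)/(\epsilon\prod_\alpha t_\alpha)$), collapses to $\int_{[\epsilon,L]^d}\epsilon\,\d t_1\cdots \d t_d/(\epsilon+t_1+\cdots+t_d)^{d+1}$, whose integrand is manifestly positive; your rescaling $t_\alpha=\epsilon s_\alpha$ then gives $\int_{[1,L/\epsilon]^d}\d s/(1+\sum s)^{d+1}$, which converges to a positive, $L$-independent constant. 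In particular there is no logarithmically divergent piece to subtract: Lemma \ref{lemma-finite} already shows the $\epsilon\to 0$ limit exists with vanishing counterterm, so your worry about a $\pa$-exact divergent density is moot. Second, your phrase ``only finitely many terms matter'' does not by itself dispose of the subleading Taylor/Wick contributions; the paper handles these by localizing with a cutoff $\chi_\delta$ and bounding each higher Wick contraction using $|M^{-1}_{\alpha\beta}|\le 2t_\alpha$, so that every such term carries extra positive powers of the $t_\alpha$ and its $t$-integral is $O(\epsilon\log(L/\epsilon))$, hence vanishes in the limit. With those two estimates supplied, your argument coincides with the paper's.
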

\begin{proof} Recall that
$$
W_{\Gamma_m}(K_\epsilon^O, P(\epsilon, L)^O, A_0, \cdots, A_d)={ \int_{(\mathbb C^d)^{d+1}} \bracket{\prod_{\alpha=0}^d A_\alpha\wedge \d^dz^{(\alpha)}}K_{\epsilon,0,d}^O\prod_{\alpha=1}^d P(\epsilon, L)^O_{\alpha-1, \alpha}}.
$$
Again, we change variables
$$
   w^{(0)}=z^{(0)},  w^{(\alpha)}=z^{(\alpha+1)}-z^{(\alpha)}, 1\leq \alpha\leq d.
$$
Then it is easy to see that
\begin{align*}
K_{\epsilon,0,d}^O\prod_{\alpha=1}^d P(\epsilon, L)^O_{\alpha-1, \alpha}=& \pm {1\over (4\pi\epsilon)^d}\prod_{\alpha=1}^d \int_{\epsilon}^L {\d t_\alpha\over (4\pi t_\alpha)^d}
\\ &\sum_{i_1,\cdots,i_d=1}^n \epsilon_{i_1,\cdots,i_d}\bracket{\prod_{\alpha=1}^d {\bar w^{(\alpha)}_{i_\alpha}\over 4t_\alpha}}e^{-\sum\limits_{\alpha=1}^d |w^{(\alpha)}|^2/4t_\alpha-|w^{(1)}+\cdots+ w^{(d)}|^2/4\epsilon} \prod_{i,\alpha=1}^d \d^d\bar w^{(\alpha)}_i
\end{align*}
where $\epsilon_{i_1,\cdots,i_n}$ is the totally anti-symmetric tensor. In particular, it is a top anti-holomorphic form in $w_1,\cdots, w_n$, and we can replace the form index in $A_{\alpha}$ to $w_0$
$$
  A_\alpha\equiv A_\alpha(z_\alpha, \bar z_\alpha)_I \bracket{\d \bar z^{(\alpha)}}^I\to A_\alpha(z_\alpha, \bar z_\alpha)_I \bracket{\d\bar w^{(0)}}^I.
$$
 $W_{\Gamma_m}(K_\epsilon^O, P(\epsilon, L)^O, A_0, \cdots, A_d)$ can be written in the form (simply denoted by $W(\epsilon, L)$)
\begin{align*}
   W(\epsilon, L)=&{1\over (4\pi\epsilon)^d}\bracket{\prod_{\alpha=0}^m \int_{\mathbb C^d} \d^{2d}w^{(\alpha)}} \bracket{\prod_{\alpha=1}^m \int_\epsilon^L {\d t_{\alpha}\over (4\pi t_\alpha)^{d}}} \Phi \\ &\sum_{i_1,\cdots,i_d}\epsilon_{i_1,\cdots,i_d} \bracket{\prod_{\alpha=1}^d { \bar w^{(\alpha)}_{i_\alpha}\over 4t_\alpha}} e^{-\sum\limits_{\alpha=1}^d |w^{(\alpha)}|^2/4t_\alpha-|w^{(1)}+\cdots+ w^{(d)}|^2/4\epsilon}
\end{align*}
where $\Phi$ is a compactly supported smooth function on $\bracket{\mathbb C^d}^{d+1}$ arising from $A_\alpha$'s. Let us introduce the matrix $M_{\alpha\beta}$
$$
   \sum\limits_{\alpha\beta=1}^d M_{\alpha\beta}w^{(\alpha)}\cdot\bar w^{(\beta)}=\sum\limits_{\alpha=1}^d |w^{(\alpha)}|^2/t_\alpha+|w^{(1)}+\cdots+ w^{(n)}|^2/\epsilon
$$
where $w^{(\alpha)}\cdot\bar w^{(\beta)}=\sum\limits_{i=1}^n w^{(\alpha)}_i\bar w^{(\beta)}_i$. $M$ depends on $t_\alpha$ and $\epsilon$. We still denote the holomorphic derivative
$$
   \pa^*_{w^{(\alpha)}_{i_\alpha}}\equiv {\sum\limits_{\beta=1}^d M^{-1}_{\alpha\beta}\pa_{w^{(\beta)}_{i_\beta}}\over t_\alpha} =\pa_{w^{(\alpha)}_{i_\alpha}}-{\sum\limits_{\beta=1}^d t^{(\beta)}\pa_{w^{(\beta)}_{i_\beta}}\over \epsilon+t_1+\cdots+t_d}.
$$
where $M^{-1}$ is the inverse matrix of $M$. Then integration by parts gives
\begin{align*}
   W(\epsilon, L)=&{1\over (4\pi\epsilon)^d}\bracket{\prod_{\alpha=0}^m \int_{\mathbb C^d} \d^{2d}w^{(\alpha)}} \bracket{\prod_{\alpha=1}^m \int_\epsilon^L {\d t_{\alpha}\over (4\pi t_\alpha)^{d}}} \\ &\sum_{i_1,\cdots,i_d}\epsilon_{i_1,\cdots,i_d} \bracket{\prod_{\alpha=1}^d \pa^*_{w^{(\alpha)}_{i_\alpha}} \Phi} e^{-\sum\limits_{\alpha=1}^d |w^{(\alpha)}|^2/4t_\alpha-|w^{(1)}+\cdots+ w^{(n)}|^2/4\epsilon}.
\end{align*}

To compute the $\epsilon\to 0$ limit, we introduce a cut-off function
$$
    \chi_\delta= \rho\bracket{\sum\limits_{\alpha=1}^n |w^{(\alpha)}|^2}
$$
where $\rho(x)$ is a smooth non-negative function for $x\geq 0$ such that
$$
  \rho(x)=1, \mbox{if}\ 0\leq x\leq \delta, \quad \rho(x)=0, \mbox{if}\ x\geq 2\delta
$$
where $\delta$ is a small enough positive number. It is easy to see that
\begin{align*}
\lim_{L\to 0}   \lim_{\epsilon\to 0}W(\epsilon, L)=&\lim_{L\to 0}\lim_{\epsilon\to 0}{1\over (4\pi\epsilon)^d}\bracket{\prod_{\alpha=0}^d \int_{\mathbb C^d} \d^{2d}w^{(\alpha)}} \bracket{\prod_{\alpha=1}^d \int_\epsilon^L {\d t_{\alpha}\over (4\pi t_\alpha)^{d}}} \chi_\delta \\ &\sum_{i_1,\cdots,i_d}\epsilon_{i_1,\cdots,i_d} \bracket{\prod_{\alpha=1}^d \pa^*_{w^{(\alpha)}_{i_\alpha}} \Phi} e^{-\sum\limits_{\alpha=1}^d |w^{(\alpha)}|^2/4t_\alpha-|w^{(1)}+\cdots+ w^{(d)}|^2/4\epsilon}.
\end{align*}
This expression allows us to use Wick's theorem to approximate the $\epsilon\to 0$ limit. The leading term is
\begin{align*}
   &{1\over (4\pi\epsilon)^d}\bracket{\prod_{\alpha=0}^d \int_{\mathbb C^d} \d^{2d}w^{(\alpha)}} \bracket{\prod_{\alpha=1}^d \int_\epsilon^L {\d t_{\alpha}\over (4\pi t_\alpha)^{d}}} \\ &\sum_{i_1,\cdots,i_d}\epsilon_{i_1,\cdots,i_d} \left.\bracket{\prod_{\alpha=1}^d \pa^*_{w^{(\alpha)}_{i_\alpha}} \Phi}\right |_{w_1=\cdots=w_d=0} e^{-\sum\limits_{\alpha=1}^d |w^{(\alpha)}|^2/4t_\alpha-|w^{(1)}+\cdots+ w^{(d)}|^2/4\epsilon}\\
   =&{1\over (4\pi\epsilon)^d}\int_{\mathbb C^d}\d^{2d}w^{(0)} \sum_{i_1,\cdots,i_d}\epsilon_{i_1,\cdots,i_d} \left.\bracket{\prod_{\alpha=1}^d \pa_{w^{(\alpha)}_{i_\alpha}} \Phi}\right |_{w_1=\cdots=w_d=0} \det\bracket{M^{-1}_{\alpha\beta}\over t_\alpha}\\
   & \bracket{\prod_{\alpha=1}^d \int_{\mathbb C^d} \d^{2d}w^{(\alpha)}} \bracket{\prod_{\alpha=1}^d \int_\epsilon^L {\d t_{\alpha}\over (4\pi t_\alpha)^{d}}}e^{-\sum\limits_{\alpha\beta=1}^d M_{\alpha\beta}w^{(\alpha)}\cdot\bar w^{(\beta)}/4}
\end{align*}
Using $\det(M_{\alpha\beta})={\epsilon +\sum\limits_{\alpha=1}^n t_\alpha\over \epsilon\prod\limits_{\alpha=1}^n t_\alpha}$, we can integrate out $w_1,\cdots, w_n$ and get
\begin{align*}
{1\over (4\pi)^d}\int_{\mathbb C^d}\d^{2d}w^{(0)} \sum_{i_1,\cdots,i_d}\epsilon_{i_1,\cdots,i_d} \left.\bracket{\prod_{\alpha=1}^d \pa_{w^{(\alpha)}_{i_\alpha}} \Phi}\right |_{w_1=\cdots=w_d=0}\int_{[\epsilon,L]^d}{\epsilon \d t_1\cdots \d t_d\over \bracket{\epsilon+t_1+\cdots+t_d}^{d+1}}
\end{align*}
We integrate over $t_\alpha$'s. Under the limit $\epsilon\to 0$, it is proportional to
\begin{align*}
\int_{\mathbb C^d}d^{2d}w^{(0)} \sum_{i_1,\cdots,i_n}\epsilon_{i_1,\cdots,i_n} \left.\bracket{\prod_{\alpha=1}^n \pa_{w^{(\alpha)}_{i_\alpha}} \Phi}\right |_{w_1=\cdots=w_n=0}
\end{align*}
If we get back $A_\alpha$'s, we find
$$
  C\int_{\mathbb C^d} A_0(y,\bar y)\wedge \pa A_1(y,\bar y)\wedge\cdots\wedge \pa A_d(y,\bar y)
$$
where $C$ is a non-zero constant which only depends on $d$.

Now we consider higher order terms. It is of the following form
\begin{align*}
U(\epsilon, L)=&{1\over (4\pi\epsilon)^d}\bracket{\prod_{\alpha=0}^d \int_{\mathbb C^d} \d^{2d}w^{(\alpha)}} \bracket{\prod_{\alpha=1}^d \int_\epsilon^L {\d t_{\alpha}\over (4\pi t_\alpha)^{d}}}e^{-\sum\limits_{\alpha=1}^d |w^{(\alpha)}|^2/4t_\alpha-|w^{(1)}+\cdots+ w^{(d)}|^2/4\epsilon} \\ &\sum_{i_1,\cdots,i_d}\epsilon_{i_1,\cdots,i_d} \left.\bracket{\prod_{k=1}^N\bracket{M^{-1}_{\alpha_k\beta_k}{\bar \pa_{\bar w^{(\alpha_k)}_{j_k}}\pa_{w^{(\beta_k)}_{j_k}}}} \prod_{\alpha=1}^d \pa^*_{w^{(\alpha)}_{i_\alpha}} \Phi}\right |_{w_1=\cdots=w_d=0}
\end{align*}
for some $\alpha_k, \beta_k, j_k, 1\leq k\leq N$. Note that
$$
    M^{-1}_{\alpha\beta}=t_\alpha\delta_{\alpha\beta}-{t_\alpha t_\beta\over \epsilon+t_1+\cdots+t_n}
$$
therefore
$$
    |M^{-1}_{\alpha\beta}|\leq 2t_\alpha
$$
It follows that
\begin{align*}
  |U(\epsilon, L)|\leq C &{1\over (4\pi\epsilon)^d}\bracket{\prod_{\alpha=1}^d \int_{\mathbb C^d} \d^{2d}w^{(\alpha)}} \bracket{\prod_{\alpha=1}^d \int_\epsilon^L {\d t_{\alpha}\over (4\pi t_\alpha)^{d}}}\\
  &\det\bracket{{M^{-1}_{\alpha\beta}\over t_\alpha}}\bracket{\prod_{k=1}^N t_{j_k}}  e^{-\sum\limits_{\alpha=1}^d |w^{(\alpha)}|^2/4t_\alpha-|w^{(1)}+\cdots+ w^{(d)}|^2/4\epsilon}\\
  =& C \int_{[\epsilon,L]^d}\d t_1\cdots \d t_d{\epsilon \prod\limits_{k=1}^N t_{j_k}\over \bracket{\epsilon+t_1+\cdots+t_d}^{d+1}}
\end{align*}
for some constant $C$. In particular,
$$
  \lim_{L\to 0}\lim_{\epsilon\to 0}U(\epsilon, L)=0.
$$
The lemma now follows.
\end{proof}

Now we have all the necessary ingredients to compute the local obstruction $O_o$
\begin{lemma}
Given two cyclic tensors $\{A_i\}, \{B_j\}$ of $\Omega^{0,*}_c(\mathbb C^d)$, we have
$$
    O_o(A_0, A_1, \cdots, A_k; B_{k+1},\cdots, B_d)=\pm C_2\int_{\C^d} \bracket{A_0\wedge \pa A_1\wedge\cdots \pa A_k}\wedge \bracket{\pa B_{k+1}\wedge\cdots\wedge \pa B_n}
$$
Here $C_2$ is a nonzero constant which only depends on the dimension $d$.
\end{lemma}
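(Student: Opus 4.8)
\emph{Proof plan.} This lemma is the open-string counterpart of Lemma \ref{lemma_closed_string_annulus_anomaly}: the goal is to compute the local functional
$$
O_o=\lim_{L\to 0}\bracket{Q I_{0,2,0}[L]+\Delta_L^{O,\infty}I_{0,1,0}[L]}
$$
precisely enough that comparison with $O_c$ forces the cancellation asserted in Theorem \ref{annulus-cancellation}. The plan is to combine the graphical reduction and integral estimates already established in this section with one new ingredient, the bookkeeping of the ribbon structure.

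First I would invoke the reduction carried out above. Using $[Q,P(\epsilon,L)^O]=K_\epsilon^O-K_L^O$ together with the classical master equations of Lemma \ref{lem-CME} for $I_{0,1,0}$ and $I_{0,1,1}$, every graph that survives in the $\epsilon\to 0$ and then $L\to 0$ limit is a wheel $\Gamma_m$ with $m+1$ trivalent $I_{0,1,0}$-vertices in which a single open propagator has been replaced by the kernel $K_\epsilon^O$; the $K_L^O$-terms and the residual tree pieces drop out. By the degree count in the proof of Lemma \ref{lemma-finite}, $W_{\Gamma_m}$ vanishes identically for $m<d$, and by Corollary \ref{local m>n} its limit vanishes for $m>d$. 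Hence only $\Gamma_d$ contributes, so $O_o$ is homogeneous of degree $d+1$ in the open-string field, and by Lemma \ref{local n=m} the analytic part of $W_{\Gamma_d}$ — which does not see the ribbon structure — equals a nonzero constant $C=C(d)$ times $\int_{\C^d}A_0\wedge\pa A_1\wedge\cdots\wedge\pa A_d$, with exactly one undifferentiated leg, namely the one adjacent to the $K_\epsilon^O$-edge.

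Next I would reinstate the ribbon structure. Thickening $\Gamma_d$ produces an annulus with two boundary circles, and a ribbon structure amounts to a distribution of the $d+1$ external legs over these two circles; since $O_o$ has weight $2$, its value on such a configuration is $\op{Tr}(\text{legs on one circle})\,\op{Tr}(\text{legs on the other})$. Because we work modulo $N$, i.e.\ with $\gl(N\mid N)$, a boundary circle with no leg contributes a factor $\op{Tr}(1)=0$, so only ribbon structures with both circles non-empty survive. Evaluating $O_o$ on a fixed pair of cyclic tensors $\{A_0,\dots,A_k\}$ and $\{B_{k+1},\dots,B_d\}$ then isolates the wheel and ribbon structure whose two faces read off these two cyclic words, with $K_\epsilon^O$ placed so that $A_0$ is the undifferentiated leg; and since the holomorphic volume form $\d z_1\wedge\cdots\wedge\d z_d$ occurring in the integrand of Lemma \ref{local n=m} is the wedge of the holomorphic factor carried by the first trace with that carried by the second, one gets
$$
O_o(A_0,\dots,A_k;B_{k+1},\dots,B_d)=\pm C_2\int_{\C^d}\bracket{A_0\wedge\pa A_1\wedge\cdots\wedge\pa A_k}\wedge\bracket{\pa B_{k+1}\wedge\cdots\wedge\pa B_d},
$$
with $C_2=C(d)\neq 0$ inherited from Lemma \ref{local n=m}. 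Nonvanishing survives a finite check of Koszul signs for the at most two ribbon structures related by the wheel's reflection; alternatively it is forced a priori, since pure $\gl(N\mid N)$ holomorphic Chern-Simons on $\C^d$ is genuinely anomalous at the annulus level.

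The main obstacle is precisely this ribbon combinatorics: verifying that thickening $\Gamma_d$ distributes the external legs onto the two boundary components exactly so as to reproduce the double-trace expression — in particular that each surviving configuration carries one undifferentiated leg, matching the shape of Lemma \ref{local n=m} — and that the accompanying signs do not conspire to kill $C_2$. Everything else is a direct application of the lemmas proved above.
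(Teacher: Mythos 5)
Your proposal is correct and follows essentially the same route as the paper: reduce $O_o$ to the wheel diagrams with one $K_\epsilon^O$ edge via $[Q,P(\epsilon,L)^O]=K_\epsilon^O-K_L^O$, kill $m\neq d$ by Corollary \ref{local m>n} and the degree count of Lemma \ref{lemma-finite}, evaluate the $m=d$ wheel by Lemma \ref{local n=m}, and read the double-trace structure off the two boundary faces of the thickened wheel. The paper compresses the ribbon bookkeeping into the single sentence ``the two cyclic tensors are two sides of the wheel diagram,'' so your more explicit discussion of how the legs distribute over the two boundary circles is a faithful elaboration rather than a different argument.
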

\begin{proof} The two cyclic tensors are two sides of the wheel diagram

\includegraphics[width=2.5in]{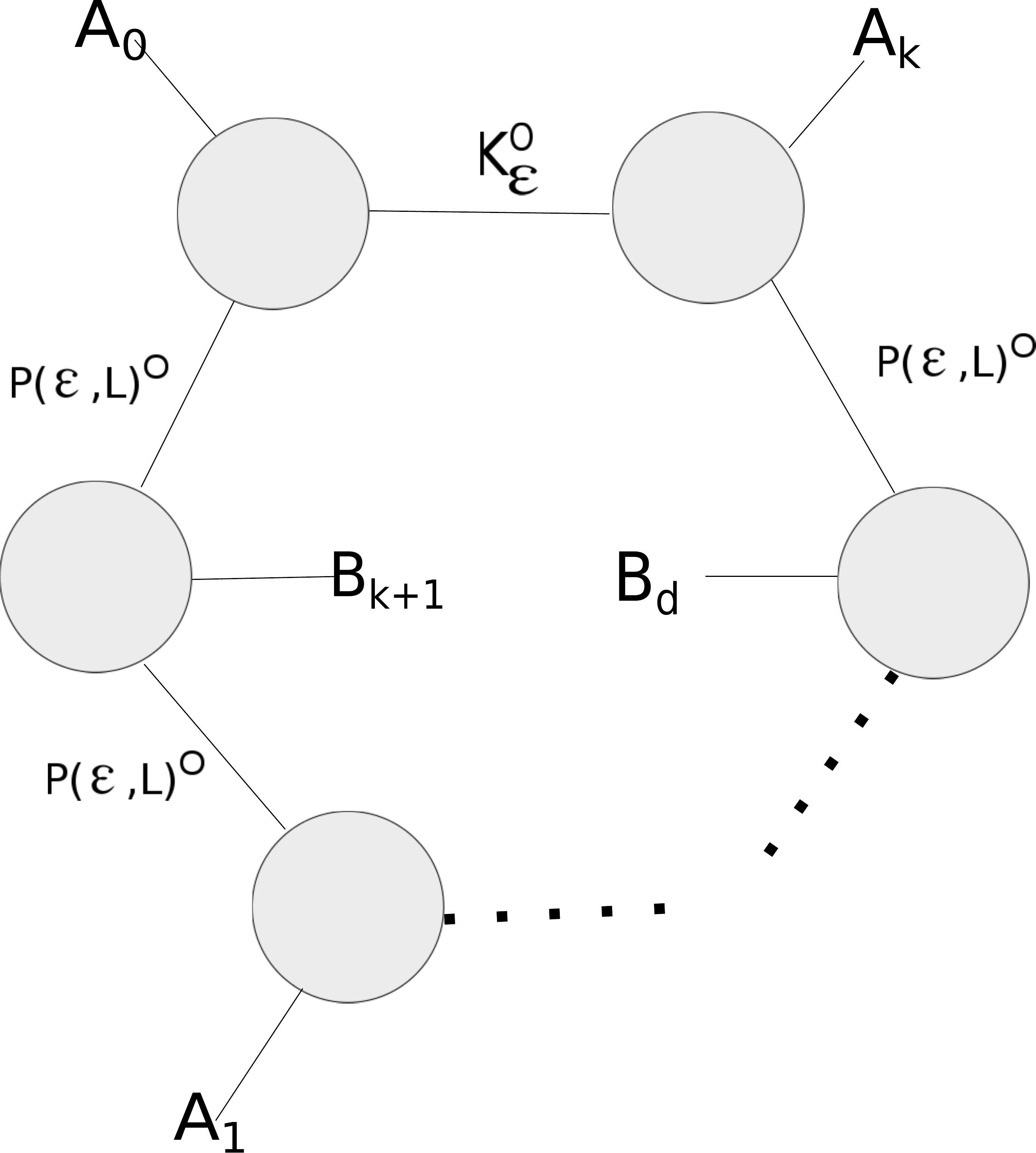}

The lemma follows from Corollary \ref{local m>n} and Lemma \ref{local n=m}.
\end{proof}

Therefore the total anomaly is
$$
  O=O_c+O_o.
$$
$O_c$ and $O_o$ differ by the term 
$$
A\to \int_{\C^d} \Tr (A\wedge (\pa A)^d)\wedge \Tr (1). 
$$
If we work with the quantum theory modulo $N$, then $O_c$ and $O_o$ are equal up to a nonzero constant. Recall the classical master equation 
\begin{align*}
  & QI_{0,1,0} +{1\over 2}\{I_{0,1,0}, I_{0,1,0}\}^O=0\\
  & Q I_{0,1,1}+\{I_{0,1,0}, I_{0,1,1}\}^O=0. 
\end{align*}
There is a rescaling symmetry preserving the master equation
$$
    I_{0,1,1}\to \lambda  I_{0,1,1}, \quad \lambda \in \C^*,
$$
which results in the rescaling of the anomaly
$$
   O\to \lambda^2 O_c+O_o.
$$
Therefore under a suitable rescaling of $I_{0,1,1}$ by a nonzero constant, 
$O=0$. This rescaling constant is uniquely fixed (up to $\pm$) by the annulus anomaly cancellation condition. Theorem \ref{annulus-cancellation} is proved.

\section{Classical BCOV interaction}
The quantum open-closed BCOV theory we constructed in this paper in particular includes a classical BCOV interaction, defined by
$$
I_{0,0,n} = \lim_{L \to 0} I_{0,0,n}[L].
$$ 
The fact that $I_{0,0,n}[L]$ satisfies the purely closed string classical RG flow and equation implies that this limit exists and is a local functional. We let
$$
I_{0,0,\ast}(\alpha) = \sum_{n \ge 3} I_{0,0,n}(\alpha). 
$$ 
The master equation for $I_{0,0,n}[L]$ implies that $I_{0,0,\ast}$ satisfies the classical master equation. 

In \cite{CosLi11} we described a classical interaction for BCOV theory on any Calabi-Yau $X$. Let us call this classical interaction $J$.   It is defined as follows. If $\alpha \in \PV^{\ast,\ast}_c(\C^d)[[t]][2]$, let $\alpha_k$ denote the coefficient of $t^k$. Then, we set
$$
J_n(\alpha) = \sum_{k_1, \dots, k_n \text{ with } \sum k_i = n-3} c_{k_1,\dots,k_n}  \int \alpha_{k_1} \wedge \dots \wedge \alpha_{k_n} 
$$
where the combinatorial constant $c_{k_1,\dots,k_n}$ can be defined by
$$
c_{k_1,\dots,k_n} = \tfrac{1}{n!}\int_{\mbar_{0,n}} \psi_1^{k_1} \dots \psi_n^{k_n}.  
$$
We then define the interaction $J$ by saying that
$$
J(\alpha) = \sum_{n \ge 3} J_n(\alpha). 
$$
One can check \cite{CosLi11} that $J(\alpha)$ satisfies the classical master equation. In particular, it defines a $L_\infty$ structure 
$$
l_n : (\PV(\C^d)[[t]])^{\otimes n} \to \PV(\C^d)[[t]]
$$  
where $Q+\{J,-\}^C$ is the corresponding Chevalley-Eilenberg differential. Such defined $L_\infty$-structure is $L_\infty$-equivalent to the standard dg Lie algebra structure on $\PV(\C^d)[[t]]$ via the nonlinear transformation 
$$
     \PV(\C^d)[[d]] \to \PV(\C^d)[[t]], \quad \mu \to \pi_+(te^{\mu/t}-t).
$$
Here $\pi_+: \PV(\C^d)((t))\to \PV(\C^d)[[t]]$ is the projection by picking up the non-negative powers of $t$. 

In the introduction we made the following conjecture. 
\begin{conjecture}
The functionals $J$ and $I_{0,0,\ast}$ are equivalent solutions to the classical master equation. 
\end{conjecture}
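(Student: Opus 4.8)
The plan is to reduce the conjecture to a tree-level cyclic formality statement together with the classical rigidity already established in this paper. Both $J$ and $I_{0,0,\ast}$ solve the closed-string classical master equation on $\PV(\C^d)[[t]][2]$, and \emph{equivalent} here means related by a symplectomorphism of the odd Poisson structure, i.e.\ by a cyclic $L_\infty$-homotopy of the associated $L_\infty$ algebras. The first point is that the closed-string interaction of the unique open-closed theory is already \emph{classically} rigid: the complexes $V_R^{(k)}$ of Proposition~\ref{proposition-D-module-vanishing} are acyclic for $k\neq 0$, and on $\C^d$ the weight that $U(d)$-equivariance forces on $I_{0,0,n}$ is $(2-n)d\neq 0$ for $n\geq 3$. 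Hence $I_{0,0,\ast}$ is the unique --- up to a contractible space of choices --- closed-string interaction that fits, alongside $I^{CS}$, $I^{1-disk}$ and the dynamically generated higher disk terms $I_{0,1,n}$ $(n\geq 2)$, into a solution of the coupled classical master equation. It therefore suffices to exhibit \emph{one} coupled solution whose closed-string part is $J$; any such solution is equivalent to the canonical data, and the conjecture follows.

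Constructing such a solution is precisely the content of a cyclic formality theorem: one must show that the cochain map $I^{1-disk}$ of Definition~\ref{defn-1-disk} extends to a morphism of cyclic $L_\infty$ algebras from $\bigl(\PV(\C^d)[[t]],\ Q+\{J,-\}\bigr)$ --- whose higher brackets encode the gravitational descendants through the intersection numbers $\int_{\mbar_{0,n}}\psi_1^{k_1}\cdots\psi_n^{k_n}$ --- to the local cyclic cochain complex of $\Omega^{0,\ast}(\C^d)$. This is the holomorphic, odd-dimensional analogue of Willwacher--Calaque's theorem \cite{WilCal08}. One route is to adapt their construction directly, replacing their algebraic-de-Rham configuration-space integrals by the Dolbeault and heat-kernel integrals natural here --- the explicit description of $I^{1-disk}$ via the forms $\eta,\omega,\omega_\Gamma$ on the spaces $C_m$ is already written with this in mind. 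A second route is to compute $I_{0,0,n}=\lim_{L\to 0}I_{0,0,n}[L]$ as a sum over genus-zero ribbon graphs with $n$ external closed legs and internal open propagators $P(\epsilon,L)^O$, show using the heat-kernel estimates of Lemma~\ref{lemma-finite} and its neighbours that the limit exists with trivial counterterms, and identify the resulting configuration-space integral with a pushforward to $\mbar_{0,n}$ of a form that collapses to $\prod_i\psi_i^{k_i}$.

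A third and softer route, which I would try first because it avoids the hardest geometry, is to characterise $I_{0,0,\ast}$ by recursion. Inserting the closed fields $1$ and $t\cdot 1$ --- whose couplings are the elementary Lagrangians isolated in the cohomology-cancellation discussion --- into the closed-string master equation for $I[L]$ should produce the string and dilaton equations for $I_{0,0,\ast}$, while factorisation along the boundary strata of the configuration spaces (which correspond to data of lower $(g,h,n)$) should produce the genus-zero WDVV relations. Since the numbers $\int_{\mbar_{0,n}}\psi_1^{k_1}\cdots\psi_n^{k_n}$ are determined by these relations together with $\int_{\mbar_{0,3}}1=1$, and the cubic comparison $I_3\sim\what{I}_3$ modulo quartic terms is already in the introduction, matching these recursions would force $I_{0,0,\ast}\sim J$ to all orders.

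The main obstacle cannot be avoided, and the excerpt flags it: the unnumbered proposition following the conjecture shows that the conjecture \emph{implies} Willwacher--Calaque's cyclic formality theorem, so any honest proof must contain a proof of that theorem in the present setting. In the configuration-space approach the crux is the pushforward computation itself --- verifying that the heat-kernel/Willwacher--Calaque form on the glued moduli space represents the product of $\psi$-classes. In the recursion approach the crux is that at closed-string weights $R=1$ and $R=2$ the relevant deformation groups are no longer acyclic (these are exactly the weights at which the cohomology-cancellation argument of Theorem~\ref{theorem-vanishing} fails), so pinning down the discrepancy $I_{0,0,n}-J_n$ there requires combining $U(d)$-invariance with the explicit low-weight obstruction classes. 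Either way, this is where the real work lies.
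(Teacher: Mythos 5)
First, a point of calibration: the statement you are proving is left as a \emph{conjecture} in the paper. The authors explicitly write that they cannot currently prove it; what they do establish is only (i) the cubic comparison of Lemma \ref{classical-interaction-lemma}, namely that $I_{0,0,3}$ and $J_3$ are equivalent modulo quartic terms, and (ii) the unnumbered proposition that the conjecture \emph{implies} the holomorphic Willwacher--Calaque cyclic formality theorem. So there is no proof in the paper to compare against, and any complete argument would be a genuinely new result.

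Your reduction step is sound and consistent with the paper's own machinery: by Proposition \ref{proposition-D-module-vanishing} and the scaling weights forced by $U(d)$-equivariance, the purely closed sector $(0,0,n)$ with $n\geq 3$ has $2g-2+h+n=n-2>0$, so the classical coupled solution extending $I^{CS}$ and $I^{1-disk}$ is rigid, and exhibiting \emph{any} coupled classical solution whose closed-string part is $J$ would indeed force $J\sim I_{0,0,\ast}$. But this reduction is exactly the content of the paper's unnumbered proposition read in reverse: the conjecture is essentially \emph{equivalent} to the holomorphic cyclic formality theorem, i.e.\ to extending the cochain map of Definition \ref{defn-1-disk} to a cyclic $L_\infty$-morphism from $\bigl(\PV(\C^d)[[t]],\,Q+\{J,-\}\bigr)$ to local cyclic cochains. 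That extension is the entire difficulty, and none of your three routes supplies it. Route one defers to an unproven adaptation of Willwacher--Calaque's configuration-space integrals to the Dolbeault/heat-kernel setting; the crux --- that the pushforward of the glued form to $\overline{M}_{0,n}$ yields $\prod_i\psi_i^{k_i}$ --- is asserted, not computed. Route two requires showing the genus-zero counterterms vanish and identifying the resulting integrals with $\psi$-class intersection numbers; Lemma \ref{lemma-finite} controls only annulus diagrams and does not give this. Route three needs the string, dilaton, and WDVV relations to be \emph{derived} from the quantum master equation and the boundary combinatorics of the $C_m$'s; you state what should happen but verify nothing, and the asserted uniqueness of the $\psi$-intersection numbers from these relations is itself only useful once the relations are established. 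Your own closing paragraph concedes that "this is where the real work lies," which is an accurate self-assessment: the proposal is a reasonable research plan, not a proof, and the gap it leaves open is precisely the gap the paper leaves open.
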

Let us now prove the result we stated in the introduction:
\begin{proposition}
This conjecture implies the holomorphic analog of Willwacher-Calaque's refinement of Kontsevich's formality theorem.
\end{proposition}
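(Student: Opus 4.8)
The plan is to extract from the quantum open-closed BCOV theory a \emph{cyclic} $L_\infty$-quasi-isomorphism landing in the (local) cyclic cochain complex of $\Omega^{0,\ast}(\C^d)$, and then to use the conjecture to replace the source of this morphism by polyvector fields equipped with their standard homotopy Lie structure. The disk amplitudes of the theory constructed in this paper already are, by construction, the leading-order piece of Willwacher--Calaque's cyclic formality map; the only thing keeping them from being a complete formality statement is that the closed-string $L_\infty$-structure that appears is the dynamically-generated one $I_{0,0,\ast}$ rather than the BCOV structure $J$. That gap is exactly the content of the conjecture.

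First I would assemble the disk amplitudes. The tree-level disk interactions $I^{n-disk} = I_{0,1,n}$ of the quantum theory, of which $I^{1-disk}_\infty$ is the linear term, package together into an $L_\infty$-morphism
$$
F : \bigl(\PV(\C^d)[[t]][2],\ Q + \{I_{0,0,\ast},-\}^C\bigr) \longrightarrow \op{Cyc}^\ast_{loc}\bigl(\Omega^{0,\ast}(\C^d)[1]\bigr),
$$
where the target is the local cyclic cochain complex of $\Omega^{0,\ast}(\C^d)$, i.e.\ (via the Tsygan--Loday--Quillen theorem recalled above) the weight-one admissible local functionals on $\Omega^{0,\ast}(\C^d)[1]\otimes\gl_\infty$ with differential $\dbar + \{\CS_\infty,-\}^O$. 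Concretely, the Taylor components $F_n$ are the multilinear pieces of $I^{n-disk}$ read as maps $(\PV(\C^d)[[t]][2])^{\otimes n}\to \op{Cyc}^\ast_{loc}(\Omega^{0,\ast}(\C^d)[1])$, and the statement that $F$ is an $L_\infty$-morphism is precisely the classical master equation for the full classical action coupling $\CS_\infty$, the disk couplings, and the closed-string interaction $I_{0,0,\ast}$ (the classical limit of the quantum master equation; cf.\ Lemma \ref{lem-CME}).

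Second I would check the two properties that turn $F$ into a formality statement. \emph{Quasi-isomorphism}: the linear term $F_1$ is the composite of the Hochschild--Kostant--Rosenberg map $\PV(\C^d)[[t]] \to CC^\ast(\Omega^{0,\ast}(\C^d))$ described in Section~2 with $I^{1-disk}_\infty$; that this induces an isomorphism on cohomology is exactly the HKR computation already performed in the proof of Proposition \ref{proposition-D-module-vanishing}, which identifies the cyclic complex of $\Omega^{0,\ast}(\C^d)$ with the appropriate $\PV((t))$-type data. \emph{Cyclicity}: Willwacher--Calaque's refinement is the compatibility of the morphism with the natural pairings — the degenerate Poisson kernel $\pi = (\partial\otimes 1)\delta_{Diag}$ on the closed-string side and the cyclic/negative-cyclic pairing on cochains, with the loop parameter identified with $t$. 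This is encoded by the fact that the classical action satisfies the \emph{combined} open-closed master equation with respect to the full BV structure (not just the closed-string part), together with the weight and degree constraints; I would deduce it by the same bookkeeping used for Lemma \ref{lem-CME}, tracking how $\{I^{1-disk}_\infty, I^{n-disk}\}^O$ terms reproduce the cyclic-invariance of the components $F_n$.

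Finally, the conjecture enters to correct the source. Given the conjecture, $I_{0,0,\ast}$ and $J$ are equivalent solutions of the classical master equation, so there is an $L_\infty$-isomorphism $(\PV(\C^d)[[t]][2], I_{0,0,\ast}) \simeq (\PV(\C^d)[[t]][2], J)$; and $J$ is in turn $L_\infty$-equivalent to the standard dg Lie algebra $\PV(\C^d)[[t]]$ via the explicit transformation $\mu \mapsto \pi_+(t e^{\mu/t}-t)$ recorded above. Precomposing $F$ with these two equivalences yields a cyclic $L_\infty$-quasi-isomorphism from $\PV(\C^d)[[t]]$ with its standard structure to the local cyclic cochain complex of $\Omega^{0,\ast}(\C^d)$; since the latter is quasi-isomorphic to the cyclic complex of the holomorphic functions $\mathcal{O}(\C^d)$, this is the holomorphic analog of the Willwacher--Calaque cyclic formality theorem (in dimension $d=3$ as stated, and in every odd $d$ after the evident strengthening of the conjecture). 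The main obstacle is the cyclicity step: making precise the cyclic structure on the negative-cyclic side, matching the descendant variable $t$ with the cyclic loop parameter, and pinning down the signs — this is the substantive content of the "cyclic refinement" and is not a formal consequence of $F$ merely being an $L_\infty$-morphism, so it must be checked directly from the open-closed master equation.
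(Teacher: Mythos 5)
Your proposal follows essentially the same route as the paper's own sketch: the disk amplitudes $I_{0,1,n}$ assemble into an $L_\infty$-morphism from the closed-string fields (with the dynamically-generated structure $I_{0,0,\ast}$) to cyclic cochains, and the conjecture is used to replace the source by $J$ and hence by the standard Schouten structure on polyvector fields. Your additional attention to the quasi-isomorphism and cyclicity checks is a reasonable elaboration of points the paper's sketch leaves implicit, but the underlying argument is the same.
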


\begin{proof}[Sketch of proof of the proposition]
The functionals $I$ and $J$ each define an $L_\infty$-structure on the space $\PV(\C^d)[[t]][1]$, with the feature that a solution to the equations of motion is a Maurer-Cartan element in this $L_\infty$ algebra structure.  For both $I$ and $J$, the $L_\infty$ structures are local, meaning that the maps
$$
l_n : (\PV(\C^d)[[t]])^{\otimes n} \to \PV(\C^d)[[t]]
$$ 
are multi-differential operators. 

Further, the operations $I_{0,1,n}[L]$ satisfy a classical renormalization group equation, and so have an $L \to 0$ limit which is a local functional which we call $I_{0,1,n}$. Let us further refine this functional, and let $I_{0,(1|k), n}$ denote the component coming from discs with $k$ marked points on the boundary and $n$ on the interior. 

The local functional $I_{0,(1|k), n}$ defines a multi-differential operator
$$
\left( \PV(\C^d)[[t]] [2]\right)^{\otimes n} \times \Omega^{0,\ast}(\C^d)^{\otimes k-1} \to \Omega^{0,\ast}(\C^d)
$$

The $L_\infty$ structure associated to $J$, hence $I$, is equivalent to the standard one, with the only non-zero higher bracket $l_n$ being $l_2$ which is the Schouten bracket. The content of the proposition is the statement that the $L_\infty$ structure associated to ${I}$ is equivalent to the $L_\infty$ structure on cyclic cochains, which is in fact realized by the $L_\infty$-morphism $I_{0,1,n}$. 

\end{proof}

Next, we will prove the following lemma, stated in the introduction. 
\begin{lemma}\label{classical-interaction-lemma}
$I_{0,0,3}$ and $J_3$ define equivalent solutions of the classical master equation modulo quartic terms. 
\end{lemma}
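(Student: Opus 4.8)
The plan is to compare the two cubic interactions $I_{0,0,3}$ and $J_3$ directly and exhibit an explicit $\mathrm{GL}$-invariant, local degree-zero functional whose bracket with the linearized BRST operator $Q=\dbar+t\pa$ interpolates between them modulo quartic terms. First I would write down $J_3$ explicitly from its definition: on $\alpha=\sum_k t^k\alpha_k\in\PV(\C^d)[[t]][2]$ one has $J_3(\alpha)=\sum_{k_1+k_2+k_3=0}c_{k_1,k_2,k_3}\int\alpha_{k_1}\wedge\alpha_{k_2}\wedge\alpha_{k_3}$, and since $\sum k_i=0$ with $k_i\ge 0$ forces all $k_i=0$, the constant being $c_{0,0,0}=\tfrac1{3!}\int_{\overline{M}_{0,3}}1=\tfrac16$; so $J_3(\alpha)=\tfrac16\int\alpha_0^3$ is the ``undescended'' cubic vertex, the classical Yukawa term. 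Thus the task reduces to showing that the cubic part of the dynamically generated interaction, $I_{0,0,3}=\lim_{L\to0}I_{0,0,3}[L]$, differs from $\tfrac16\int\alpha_0^3$ by a term of the form $Q\Phi+\{I_{0,1,0},\Phi\}^O+\cdots$ restricted to the purely closed sector, i.e.\ by an exact term in the closed-string obstruction–deformation complex, up to quartic contributions.

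The key steps, in order: (1) Identify $I_{0,0,3}$ as the tree-level amplitude obtained by gluing two copies of $I^{1-disk}_\infty=I_{0,1,1}$ along an open-string propagator $P(\eps,L)^{O,\infty}$ and closing up — this is the unique tree of type $(0,0,3)$ built from the classical data, plus possibly a direct contribution if $I_{0,2,1}$ or higher were part of the classical input (they are not). In fact, as the excerpt notes, the closed-string interactions $I^{n-sphere}$ are \emph{generated}; at cubic order the only diagram is the one with a single open-string loop decorated by three insertions of $I_{0,1,1}$, or equivalently, a disk with three interior marked points obtained from the Willwacher–Calaque cochain map. (2) Use the explicit formula for $I^{1-disk}_N$ from Definition~\ref{defn-1-disk} together with the $L\to0$ limit to extract the leading (cubic-in-$\alpha$, zero-in-$A$) piece; this is where the configuration-space integrals over $C_m$ collapse, because with no open-string external legs and genus zero the relevant $m$ is small. (3) Expand $I_{0,0,3}$ in powers of $t$ and compare with $\tfrac16\int\alpha_0^3$; the discrepancy will be supported on terms with at least one positive power of $t$, hence of the form $\int\alpha_{k_1}\alpha_{k_2}\alpha_{k_3}$ with $\sum k_i>0$, and I would show each such term is $Q$-exact modulo quartic terms by writing it as $\{I_{0,1,0},\Phi\}^O+Q\Phi$ for an explicit local $\Phi$ quadratic in the closed fields — concretely, $\Phi$ built from pairings $\int\alpha_{k_1}\alpha_{k_2}$ absorbed into the $t\pa$ part of $Q$, using that $t\pa$ applied to $t^{k-1}\mu$ gives $t^k\pa\mu$. (4) Verify the homotopy is $\mathrm{GL}(N\mid N)$-invariant and local, so that it is a legitimate element of the deformation complex $V_{0,0,3}^{(0)}$, and that the correction to the master equation it induces at quartic order is precisely what ``modulo quartic terms'' allows us to discard.

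The main obstacle I anticipate is step (2)–(3): pinning down the exact combinatorial constants in $I_{0,0,3}$ from the Feynman-diagram/configuration-space description and showing they reorganize, after the $t$-expansion, into something cohomologous to the single monomial $\tfrac16\int\alpha_0^3$. The subtlety is that the generated $I_{0,0,3}$ a priori involves descendant insertions (powers of $t$) with coefficients given by integrals over $\overline{M}_{0,3}$-like spaces or over the $C_m$ configuration spaces, and one must check these are exactly the coefficients $c_{k_1,k_2,k_3}$ of $J_3$ on the nose for the $t^0$ part while the remainder is manifestly exact. Here I would lean on the identity $\psi_i=0$ on $\overline{M}_{0,3}$ (all $\psi$-classes vanish on the three-pointed sphere) to see that $J_3$ has no genuine descendant contributions, reducing the problem to showing $I_{0,0,3}$'s descendant part is a coboundary — which follows from the contractibility, established in Proposition~\ref{proposition-D-module-vanishing} and its flat-space analog, of the relevant piece of $V_R$ in nonzero weight, together with the fact that the two cocycles $I_{0,0,3}$ and $J_3$ have the same image in $H^\ast(V_{0,0,3}^{(0)})=H^\ast_{dR}(\C^d)[2d]\otimes\Sym^3(\C((t))[2-d]^\vee)$, namely the class of $t^0\wedge t^0\wedge t^0$ times the fundamental class.
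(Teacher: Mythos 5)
Your computation of $J_3$ is correct: since $\sum k_i=0$ forces all $k_i=0$ and $\mbar_{0,3}$ is a point, $J_3(\alpha)=\tfrac16\int\alpha_0^{\wedge 3}$. But the route you propose for handling $I_{0,0,3}$ has a genuine gap at its first step. You identify $I_{0,0,3}$ as a tree amplitude obtained by gluing copies of $I^{1-disk}$ along open-string propagators. No such diagram exists: each vertex $I_{0,1,1}$ is a disk ($\chi=1$), and gluing $k$ disks along $e$ open-string edges produces a surface of Euler characteristic $k-e$, which for a connected tree is always $1$ — a disk with several interior marked points, i.e.\ type $(0,1,n)$, never the sphere $(0,0,3)$ with $\chi=2$. (Gluing two $I_{0,1,1}$'s along a closed-string propagator gives type $(0,2,0)$, the annulus.) So $I_{0,0,3}$ is not a Feynman amplitude of the classical data at all; it is new data whose existence and uniqueness is forced indirectly by the obstruction theory of Theorem A, and there is no configuration-space integral to evaluate in your steps (2)--(3).

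Your step (3) also misidentifies where the ambiguity lives. The paper shows, by combining holomorphic translation invariance, $SU(d)$-invariance, the scaling weight $2d$, and a cohomological-degree count in the jet complex $\Omega^{d,0}_0\otimes_{D_0}\Sym^3(D_0[\alpha_i,t^{-1}])[-d]$, that \emph{no} descendant terms (positive powers of $t$) can appear in $I_{0,0,3}$: exactly $d$ of the odd generators $\alpha_i$ must occur and only $t^0$ survives. Hence the discrepancy with $J_3$ is not a $Q$-exact descendant tail, but the undetermined relative coefficients $A_{c_1,c_2,c_3}$ of the components $\rho_{c_1,c_2,c_3}(\alpha)=\int(\alpha_0^{c_1}\alpha_0^{c_2}\alpha_0^{c_3}\vdash\Omega)\wedge\Omega$ across the different polyvector bidegrees with $c_1+c_2+c_3=d$. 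These are \emph{not} fixed by exactness in the deformation complex; the paper fixes them by observing that the bracket $\{\alpha,\beta\}_{I_{0,0,3}}=\sum A_{c_1,c_2,d-c_1-c_2}\{\alpha^{c_1},\beta^{c_2}\}$ must, at the level of cohomology, agree with the Lie bracket on cyclic cohomology of $\Omega^{0,\ast}$, which by Hochschild--Kostant--Rosenberg is the Schouten bracket — forcing all $A_{c_1,c_2,c_3}$ equal. Your closing appeal to ``the two cocycles have the same class in $H^\ast(V^{(0)}_R)$'' asserts exactly what must be proven; the HKR identification of the induced bracket is the missing argument that supplies it.
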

\begin{proof}
$I_{0,0,3}$ is holomorphically translation invariant, $SU(d)$ invariant, and of weight $2d$ under the action of the $S^1$ which scales $\C^d$. We will use these facts to constrain the form of $I_{0,0,3}$. 

The classical master equation modulo quartic terms simply says that $I_{0,0,3}$ is a closed element of the complex of local functionals, equipped with the differential coming from the linear differential on the fields of BCOV theory.

The arguments we used earlier allow us to express the relevant cohomology group in terms of jets of holomorphic polyvector fields. We have 
$$
J_0(\PV[[t]][2]) = \C[[z_i, \partial_i, t]][2]
$$
where the variables $\partial_i$ have degree $1$ and $t$ has degree $2$.   The differential is as usual $t \partial$. Let $D_0 = \C[\dpa{z_i}]$ be the algebra of constant-coefficient differential operators.  Then, the dual to jets of polyvector fields is
$$
J_0(\PV[[t]][2])^\vee = D_0 [\alpha_i, t^{-1}][-2]
$$
where the variables $\alpha_i$ are dual to $\partial_i$ and have degree $-1$, and $t^{-1}$ has degree $-2$.  The differential on the dual is the operator $\sum_i \alpha_i \dpa{z_i} t$.  Under the action of $U(d)$, the $\alpha_i$ transform as  covectors $\d z_i$, so that the algebra generated by the $\alpha_i$ is that of constant-coefficient differential forms. 

The complex where possible interactions $I_{0,0,3}$ live is
$$
\Omega^{d,0}_0 \otimes^{\mbb L}_{D_0} \Sym^3 ( J_0(\PV[[t]])^\vee ) [-d]
$$
where the cohomological shift is so that $I_{0,0,3}$ is of degree zero. Here $\Omega^{d,0}_0$ refers to the fibre at $0$ of the bundle of $(d,0)$ forms. 

Let us ignore, for now, the internal differential on $ J_0(\PV[[t]])$.  Note that $ \Sym^3 ( J_0(\PV[[t]])^\vee )$ is flat as a $D_0$ module.  So we can take the actual tensor product instead of the derived one.  Let us also, for now, investigate the tensor cube rather than the symmetric cube, and we will take symmetric invariants at the end.

We find, after these considerations, that we are looking at
\begin{align*} 
 \Omega^{d,0}_0  \otimes_{D_0} \left(D_0[\alpha_i, t^{-1}] \right)^{\otimes 3} [-d]
= \Omega^{d,0}_0 \otimes_{\C} \C[\alpha_i,t^{-1}] \otimes_{\C} \left(D_0[\alpha_i,t^{-1}] \right) ^{\otimes 2} [ -d].
\end{align*}
Now, the $\alpha_i$ have weight $+1$ under the action of $S^1 \subset U(d)$, and we are interested in something of total weight $2d$.  Since the $\d z_1 \dots \d z_d \in \Omega^{d,0}_0$ has weight $d$ and elements in $D_0$ all have non-positive weights, this means we must have at least $d$ $\alpha_i$'s.  Now, if we have more than $d$ $\alpha_i$'s, then we are pushed into negative cohomological degree. It follows that we must have precisely $d$ $\alpha_i$'s. Further, cohomological degree consideration tells us that the only powers of $t^{-1}$ that appear is $t^{0} = 1$.

Next, $SU(d)$ invariance tells us that the only possibilities are of the form
$$
\mu_{c_1,c_2,c_3} = \sum_{\abs{I_1} = c_1, \ \abs{I_2} = c_2, \ \abs{I_3} = c_3} \pm  \d z_1 \dots \d z_d \alpha_{I_1} \otimes \alpha_{I_2} \otimes \alpha_{I_3} .
$$   
Here $\{I_1, I_2, I_3\}$ is a partition of $\{1,\dots,d\}$ and $\alpha_I$ means $\prod_{i \in I} \alpha_i$. 

As a functional on $\PV_{c}(\C^d)[[t]]$, $\mu_{c_1,c_2,c_3}$ sends a field 
$$
\alpha = \sum t^k \alpha_k = \sum t^k \alpha_k^{l}
$$
where $\alpha_{k}^l \in \PV^{l,\ast}(\C^d)$ to
$$
\int (\alpha_{0}^{c_1} \alpha_0^{c_2} \alpha_0^{c_3}\vdash \Omega)\wedge \Omega, \quad \Omega=dz_1\cdots dz_d.
$$
Let us call this functional $\rho_{c_1,c_2,c_3}$.  Our action functional $I_{0,0,3}$ must be a sum
$$
I_{0,0,3} = \sum_{c_1,c_2,c_3} A_{c_1,c_2,c_3} \rho_{c_1,c_2,c_3} 
$$
for some constants $A_{c_1,c_2,c_3}$.   It remains to show that these constants are all $1$.

To check this, note that $I_{0,0,3}$ defines a bracket on $\PV(\C^3)$ with the feature that the Maurer-Cartan equation for this bracket is the equations of motion. Let us call this $\{-,-\}_{I_{0,0,3}}$. Explicitly, this bracket is of the form
$$
\{\alpha,\beta\}_{I_{0,0,3}} = \sum A_{c_1,c_2, d-c_1-c_2} \{\alpha^{c_1}, \beta^{c_2}\}
$$
where $\alpha^{c_1}$ is the component in $\PV^{c_1,\ast}$ and similarly for $\beta^{c_2}$, and $\{-,-\}$ is the Schouten bracket.

Now, at the cohomological level, the argument in the proof of proposition \ref{classical-interaction-lemma} tells us that this must be Lie bracket on the cyclic cohomology of $\Omega^{0,\ast}(\C^3)$.  The Hochschild-Kostant-Rosenberg theorem then tells us that the bracket $\{-,-\}_{I_{0,0,3}}$ must be (at the cohomological level) the usual Schouten bracket, and this fixes all the constants $A_{c_1,c_2,c_3}$. 
\end{proof}

\section{$(1,0)$ BCOV theory}
In the introduction we described a variant of BCOV theory which we call $(1,0)$ BCOV theory. Let us recall the definition and explain how the techniques we have developed so far allow us to quantize this theory on a variety of Calabi-Yau $3$-folds. 

The fields of $(1,0)$ BCOV theory on a Calabi-Yau $3$-fold $X$ form the complex
$$
\Omega^{0,\ast}(X,TX)[1] \oplus \Omega^{0,\ast}(X)
$$  
with differential $\dbar + t \partial$, where
$$
\partial : \Omega^{0,\ast}(X,TX)\to \Omega^{0,\ast}(X)
$$
is the holomorphic divergence map.  We can view this complex of fields as the subspace
$$
\PV^{1,\ast}(X) [1] \oplus t \PV^{0,\ast}(X) \subset \PV^{\ast,\ast}(X)[[t]][2]
$$ 
of the fields of BCOV theory.

This theory is a degenerate theory in the BV formalism, just like ordinary BCOV theory.  Therefore, instead of writing down a quadratic interaction we will write down a kernel describing the BV anti-bracket. This, together with the differential in the complex of fields, allows us to construct such things as the propagator and to quantize the theory.

In ordinary BCOV theory on a CY $3$-fold $X$, the kernel for the BV bracket is  $(\partial \otimes 1) \delta_{Diag}$ where $\delta_{Diag}$ is the $\delta$-current on the diagonal of $X \times X$, viewed as a polyvector field via the isomorphism between forms and polyvector fields.  This kernel $\pi$ is a sum of the form
$$
\pi = \sum_{j+l = 3 \ i + k = 2} \pi_{(i,j),(k,l)}
$$ 
where
$$
\pi_{(i,j),(k,l)} \in \br{PV}^{i,j}(X) \what{\otimes} \br{PV}^{k,l}(X).
$$
To get the kernel for the $(1,0)$ BCOV theory, we simply take the components of this kernel which are of type $(1,j) \times (1,l)$ where $j+l=3$. 

In a similar way, the closed string propagator $P(\eps,L)^C$ and regularized BV kernel $\tr_L^{C}$ are the projections of those for the full BCOV theory onto polyvector fields of type $(1,\ast)$.  

This theory couples to holomorphic Chern-Simons for the group $\mf{sl}(N \mid N)$ via an interaction we call $I_{0,1,1}$, as before. This interaction is simply the restriction of the interaction between the fields of the full BCOV theory with holomorphic Chern-Simons to those fields which are in the subspace consisting of the fields of $(1,0)$ BCOV  theory.  One can check, from our earlier formula for $I_{0,1,1}$, that if $\mu \in \Omega^{0,\ast}(X,TX)$, $\phi \in \Omega^{0,\ast}(X)$, and  $A \in \Omega^{0,\ast}(X) \otimes \mf{sl}(N \mid N)$
\begin{align*} 
 I_{0,1,1}(\phi + \mu + A) = \tfrac{1}{2} \int_{X} \op{Tr} (A \del A)\wedge (\mu \vdash \Omega) + \tfrac{1}{3}  \op{Tr} (A^3)\wedge  \phi \Omega.
\end{align*}
In other words, we have the holomorphic Chern-Simons action for the complex structure deformed by $\mu$ with volume form scaled by $\phi$.

We can define, exactly as before, the notion of quantization of the theory coupling $(1,0)$ BCOV theory to $\mf{sl}(N \mid N)$ holomorphic Chern-Simons, in a way compatible with the inclusion maps $\mf{sl}(N \mid N) \into \mf{sl}(N + k \mid N + k)$.  Let us refer to this simply as open-closed $(1,0)$ BCOV theory. 

Let us now explain the analogs of our theorems A and B from earlier, concerning cohomological cancellation for quantization beyond the annulus level and the annulus anomaly cancellation.  

Let $X$ be a conical Calabi-Yau, meaning that there is a $\C^\times$ action on $X$ which scales the holomorphic volume form by some positive weight $w$.  As before, let $\T^{<(G,R)}(X)$ and $\T^{\le (G,R)}(X)$ refer to the simplicial sets for quantizing the theory up to level $(G,R)$, in a way compatible with the $\C^\times$ action on the conical Calabi-Yau.  
\begin{theorem}
\label{theorem_(1,0)_cohomology_cancellation}
\begin{enumerate}
\item For any  Calabi-Yau 3-fold $X$, if $R > 1$ then the map $\T^{\le (G,R)} (X) \to \T^{< (G,R)}(X)$ is a weak equivalence of simplicial sets. 
\item 
Let $X$ be a conical Calabi-Yau manifold. Further, let us assume that $H^0_{\dbar}(X, \Oo_X)$ and $H^1_{\dbar}(X, \Oo_X)$ are spanned by elements of non-negative weight under the $\C^\times$ action. (This happens, for instance, if $X$ is the total space of the canonical bundle over a complex surface where the $\C^\times$ action scales the fibres of the canonical bundle). 

Then if $2G-2+R > 0$ the map
$$
\T^{\le (G,R)} (X) \to \T^{< (G,R)}
$$ 
is a weak equivalence of simplicial sets.  
\item
If we consider holomorphically translation invariant theories on $\C^3$, then if $2G-2+R > 0$ the map 
$$
\T^{\le (G,R)}(\C^3) \to \T^{<(G,R)}(X)
$$ 
is a weak equivalence.  The same result holds when we consider theories on $Y \times \C^d$ where $Y$ is a Calabi-Yau $(3-d)$-fold and we ask for quantizations which are holomorphically translation invariant in the $\C^d$ factor. 
\end{enumerate}
\end{theorem}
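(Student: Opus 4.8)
The plan is to follow closely the proofs of Theorems \ref{theorem-vanishing} and \ref{theorem-vanishing-flat-space}, the only structural changes being that the closed-string obstruction--deformation complex is now built from the truncated space of fields $\PV^{1,\ast}(X)[1]\oplus t\,\PV^{0,\ast}(X)$, and the open-string one uses $\mf{sl}(N\mid N)$ in place of $\gl(N\mid N)$. First I would record, as before, that the relevant simplicial sets $\T^{\le(G,R)}$ and $\T^{<(G,R)}$ are Kan complexes with Kan fibrations between them, and that lifting from order $<(G,R)$ to $\le(G,R)$ is controlled by the cohomology --- in the cohomological degree $(d-3)(2G-2)+1$ with $d=3$, and in the $S^1$-weight required for $\C^\times$-equivariance --- of a complex $V_R^{(1,0)}$ of local functionals on the coupled fields $(\PV^{1,\ast}_c(X)[1]\oplus t\,\PV^{0,\ast}_c(X))\oplus(\Omega^{0,\ast}_c(X)\otimes\mf{sl}(N\mid N)[1])$, with differential $Q+\{I_{0,1,0},-\}^O+\{I_{0,1,1},-\}^C$, where $I_{0,1,0}$ is $\mf{sl}(N\mid N)$ holomorphic Chern--Simons and $I_{0,1,1}$ is the restriction of $I^{1-disk}$ to the $(1,0)$ fields written down explicitly in the previous section. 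Using the $D$-module description, one obtains a quasi-isomorphism $V_R^{(1,0)}\simeq \Omega^{\ast,\ast}(X,W_R^{(1,0)})[2d]$ with $W_R^{(1,0)}$ built from jets of these fields (and, in the $\C^3$ and $Y\times\C^d$ cases, the holomorphic-de-Rham version with a factor $\C[\d z_i]$, exactly as in the proof of Theorem \ref{theorem-vanishing-flat-space}).

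The heart of the argument is the local computation of $H^\ast(W_R^{(1,0)})$, paralleling Proposition \ref{proposition-D-module-vanishing}. I would filter by closed-string polyvector degree; on the associated graded the term $\{I_{0,1,1},-\}$ drops out. On the open sector, $\{I_{0,1,0},-\}$ gives the cyclic cochain differential, but since we are working with $\mf{sl}(\infty\mid\infty)$ rather than $\gl(\infty\mid\infty)$ this is the \emph{reduced} cyclic cochain complex of $J_0\Omega^{0,\ast}\simeq\C[[z_i]]$; by Hochschild--Kostant--Rosenberg this is dual to a truncation of $\Omega^{-\ast}_{\widehat D^3}[t^{-1}]$. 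On the closed sector, jets of the $(1,0)$ fields contribute only polyvector fields of wedge degree $\le 1$ carrying at most one power of $t$, i.e.\ a truncated $\PV(\widehat D^3)[[t]]$. Then, as in Proposition \ref{proposition-D-module-vanishing}, the differential on the next page of the spectral sequence is $\{I^{1-disk},-\}^C=c\,t\partial$ (the constant $c$ being fixed by the annulus anomaly cancellation, Theorem \ref{annulus-cancellation}), and I would check that it splices the truncated negative-cyclic side to the truncated closed-field side so that, after this page, what survives is a small graded piece concentrated in $S^1$-weight $0$: heuristically $\Sym^R$ of a one- or two-dimensional piece, together with (on a general $X$) a contribution from the $\dbar$-cohomology of $\Oo_X$ and $T_X$.

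With this in hand the three parts follow by different global inputs. For part (1) --- any Calabi--Yau $3$-fold, $R>1$, planar level --- the truncation makes the spliced local complex acyclic in the degrees that matter once $R>1$, so $H^\ast(V_R^{(1,0)})$ vanishes and the lift is unique up to contractible choice. For part (2), unlike the full BCOV case the $D$-module $W_R^{(1,0)}$ is \emph{not} trivial: its cohomology genuinely involves $H^0_{\dbar}(X,\Oo_X)$ and $H^1_{\dbar}(X,\Oo_X)$, which is precisely why one imposes that these be spanned by elements of non-negative $\C^\times$-weight; granting this, the global spectral sequence with the de Rham differential places all of $H^\ast(V_R^{(1,0)})$ in non-negative $S^1$-weight, so the obstruction/deformation group in weight $-(2G-2+R)w$ vanishes whenever $2G-2+R>0$. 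Part (3) is the holomorphically-translation-invariant version: combine the above with the reduction of Theorem \ref{theorem-vanishing-flat-space} (replacing the de Rham complex of $\C^d$ by its translation-invariant holomorphic de Rham complex, a factor $\C[\d z_i]$), and for $Y\times\C^d$ take the external tensor product of the two computations; the same weight positivity gives the result. The main obstacle I anticipate is exactly the local cohomology computation of $W_R^{(1,0)}$: one must make precise the reduced-cyclic version of Loday--Quillen--Tsygan for $\mf{sl}(N\mid N)$, check its compatibility with the $(1,0)$ truncation on the closed side, and verify that the connecting differential $c\,t\partial$ cancels the truncated negative-cyclic cochains against exactly the truncated closed fields (so that ``periodicity'' survives truncation), all while tracking the $\C^\times$-weights through the forms--polyvector-fields identification and isolating the precise rôle of $H^{0,1}_{\dbar}(X,\Oo_X)$; the bookkeeping of which $\Omega^{i,\ast}$ survive in which $t$-degree is where the real work lies.
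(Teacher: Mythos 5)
Your overall strategy is the paper's: express the obstruction--deformation complex via jets, use the $\mf{sl}$-variant of the Loday--Quillen--Tsygan theorem and HKR on the open sector, the truncated fields on the closed sector, and splice them with the connecting differential coming from $I^{1-disk}$; parts (2) and (3) then follow from weight positivity essentially as you describe, with the long exact sequence associated to $0 \to \C \to \Oo_X \to \Omega^1_{cl,hol}\to 0$ isolating $H^{0}_{\dbar}(X,\Oo_X)$ and $H^1_{\dbar}(X,\Oo_X)$ as the only possible sources of negative weight. (Note that the hypothesis involves only $\Oo_X$, not $T_X$.)

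The one place your argument as written would not go through is part (1). The spliced local complex is \emph{not} acyclic in the relevant degrees: after splicing, the surviving $D$-module is $J\Omega^{\le 2,\ast}[5]\oplus\bigoplus_{i=3,7,9,\dots}\cinfty_X[i]$, and the associated complex of functionals has nonzero cohomology in degrees $\le 1$ (for instance $H^{\le 4}_{dR}(X)\,\eps_3$, $H^{\le 1}(X,\Omega^1_{cl,hol})$, and $H^0(X)\,\eps_7$). What makes part (1) hold for \emph{every} Calabi--Yau with no weight hypothesis is that all of this low-degree cohomology sits in the \emph{first} symmetric power of the dual jet module --- i.e.\ corresponds to functionals that are either linear in the closed-string fields or single-trace in the open-string fields --- and therefore cannot contribute to $V_R$ when $R>1$. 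Establishing this requires a separate lemma: for $k\ge 2$, the complex $\Omega^{3,3}\otimes_D\Sym^k\bigl((J\Omega^{\le 2,\ast})^\vee[-5]\bigr)$ has no cohomology below degree $2$, which is proved using the short exact sequence $0\to J\Omega^{3,\ast}[2]\to J\Omega^{\ast,\ast}[5]\to J\Omega^{\le 2,\ast}[5]\to 0$ together with a degree count on $\cinfty_X[-5]\oplus D_{hol}[-3]$. Attributing the vanishing to ``the truncation making the spliced complex acyclic'' skips this mechanism, and without it the claim that $H^{\le 1}(V_R^{(1,0)})=0$ for $R>1$ is unsupported; the rest of your outline (the $\mf{sl}$-reduced cyclic complex, the identification of the connecting map with $c\,t\partial$, and the weight bookkeeping for parts (2) and (3)) matches the paper's proof.
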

\begin{theorem}
\label{theorem_(1,0)_annulus_anomaly}
For any Calabi-Yau manifold $X$, there is a unique quantization of open-closed $(1,0)$ BCOV theory to the annulus level. 
\end{theorem}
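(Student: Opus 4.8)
**Theorem (final statement): For any Calabi–Yau manifold $X$, there is a unique quantization of open-closed $(1,0)$ BCOV theory to the annulus level.**

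The plan is to follow the same two-part structure used for the full BCOV theory: first control the obstruction-deformation complex at the annulus level $(g,h,n)=(0,2,0)$, then compute the actual annulus anomaly and show it vanishes. Since we are working with $\mf{sl}(N\mid N)$ rather than $\gl(N\mid N)$, the term $\op{Tr}(1)$ is already absent, so the only potential open-string anomaly is the single remaining functional $A \mapsto \int_X \op{Tr}(A\,\pa A)\,\op{Tr}\big((\pa A)^2\big)$ on $\C^3$; this is the one we must cancel. First I would set up the obstruction theory exactly as in the proof of Proposition \ref{proposition_annulus_cohomology}: the relevant complex is the local functionals of weight $2$ in the open-string field $A\in\Omega^{0,\ast}(X)\otimes\mf{sl}(N\mid N)$, with differential $\dbar + \{I_{0,1,0},-\}^O$, identified via Tsygan–Loday–Quillen and HKR with a symmetric square of the appropriate cyclic-cochain complex. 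For a general $X$ the computation is no longer purely translation-invariant, so I would instead argue that the obstruction $O$, which by \cite{Cos11} is always a \emph{local} functional, must be a sum of the universal local functionals cut out by weight, degree, and dimension constraints — and on a $3$-fold the only candidate annulus anomaly is $\int_X \op{Tr}(A\,\pa A)\wedge\op{Tr}\big((\pa A)^2\big)$, exactly as recorded in the introduction.

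Next I would compute the open-string contribution $O_o = \lim_{L\to 0}\big(Q I_{0,2,0}[L] + \Delta_L^{O,\infty} I_{0,1,0}[L]\big)$ by the same Feynman-diagram analysis as in the proof of Theorem \ref{annulus-cancellation}: the counterterm vanishes (the wheel integrals with $m<d=3$ vanish for degree reasons and those with $m\ge 3$ converge), and the only diagram surviving the $L\to 0$ limit is the wheel with exactly $d+1=4$ vertices, giving $O_o$ proportional to $\int_X \op{Tr}(A\,\pa A)\wedge\op{Tr}\big((\pa A)^2\big)$ — the term corresponding to $\{I^{1-disk}_2, I^{1-disk}_2\}^C$ in the introduction's notation. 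In parallel, the closed-string contribution is $O_c = \tfrac12\{I_{0,1,1}, I_{0,1,1}\}^C$, where now $I_{0,1,1}$ is the \emph{restricted} interaction living only on $\PV^{1,\ast}(X)[1]\oplus t\PV^{0,\ast}(X)$, so the only propagating closed field is in $\Omega^{2,\ast}\cong\PV^{1,\ast}$; hence $O_c = \{I^{1-disk}_2, I^{1-disk}_2\}^C = C_1\int_X \op{Tr}(A\,\pa A)\wedge\op{Tr}\big((\pa A)^2\big)$ for a nonzero constant $C_1$ depending only on the metric data. The term $\{I^{1-disk}_3, I^{1-disk}_1\}^C = \int_X\op{Tr}A\,\op{Tr}\big((\pa A)^3\big)$, which caused the $\gl(N)$-versus-$\gl(N\mid N)$ subtlety in the full theory, is now \emph{automatically zero} twice over: $\Omega^{1,\ast}$ and $\Omega^{3,\ast}$ are not among the fields of $(1,0)$ BCOV theory, and $\op{Tr}(1)=0$ for the supergroup anyway.

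Finally I would invoke the rescaling symmetry $I_{0,1,1}\mapsto \kappa\,I_{0,1,1}$ for $\kappa\in\C^\times$, which preserves the classical master equation and sends the total anomaly $O = O_c + O_o$ to $\kappa^2 O_c + O_o$; since $O_c$ and $O_o$ are both nonzero multiples of the \emph{same} local functional $\int_X \op{Tr}(A\,\pa A)\wedge\op{Tr}\big((\pa A)^2\big)$, there is a unique $\kappa$ (up to sign) for which $O=0$. Then $O[L]=0$ for all $L$, the quantum master equation holds at $(0,2,0)$, and the absence of nontrivial cohomology in degree $\le 0$ of the weight-$2$ open-string obstruction complex on a $3$-fold gives uniqueness of the annulus-level quantization. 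I expect the main obstacle to be the general-$X$ statement: on $\C^3$ everything reduces to explicit Gaussian integrals as in Section 7, but for an arbitrary Calabi-Yau $3$-fold one must argue more abstractly that the local annulus obstruction, despite the metric-dependence of intermediate kernels, still lands in the one-dimensional space spanned by $\int_X\op{Tr}(A\,\pa A)\wedge\op{Tr}\big((\pa A)^2\big)$ and that the relevant constants $C_1, C_2$ remain nonzero — this requires knowing that the local obstruction is built from the universal cyclic data and is insensitive to the global geometry, which is precisely where the results of \cite{Cos11} on locality of counterterms and obstructions do the heavy lifting.
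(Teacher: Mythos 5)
Your overall architecture matches the paper's: compute the annulus obstruction group, evaluate the open-string wheel contribution and the closed-string contribution $\tfrac12\{I_{0,1,1},I_{0,1,1}\}^C = \{I^{1-disk}_{1,0},I^{1-disk}_{1,0}\}^C$, observe that the $\op{Tr} A\,\op{Tr}(\pa A)^3$ term is absent both because $\Omega^{3,\ast}$ and $\Omega^{1,\ast}$ are not $(1,0)$ fields and because $\op{Tr}(1)=0$, and cancel the surviving term by fixing the normalization of $I_{0,1,1}$. On $\C^3$ this is exactly the paper's argument.

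The genuine gap is in your reduction from a general Calabi--Yau $X$ to $\C^3$. You assert that on a general $3$-fold ``the only candidate annulus anomaly is $\int_X \op{Tr}(A\,\pa A)\wedge\op{Tr}\bigl((\pa A)^2\bigr)$,'' but this is not true: the paper's $D$-module computation shows the anomaly group is only contained in $H^0\bigl(X, \Omega^3_{hol}\otimes_{D_{hol}}\Sym^2\op{Diff}_{hol}(\Omega^1,\Oo)\bigr)$, which for a general $X$ is a large (typically infinite-dimensional) space of holomorphic data, not a one-dimensional space of ``universal'' functionals. Your fallback appeal to locality of obstructions from \cite{Cos11} is also not enough on its own: locality says the restriction of the anomaly class to $U\subset X$ is determined by the theory on $U$, but to conclude global vanishing from local vanishing you additionally need that the restriction map on the anomaly cohomology group, $H^1(X,\Omega^{3,3}\otimes_D\Sym^2 M^\vee)\to H^1(U,\cdot)$, is \emph{injective}. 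The paper supplies exactly this missing step by embedding the anomaly group into global sections of a sheaf built from complex-analytic data (sections of $\Omega^3_{hol}\otimes_{D_{hol}}\Sym^2\op{Diff}_{hol}(\Omega^1,\Oo)$), for which restriction to any open set is injective; only then does the polydisc computation finish the argument. A parallel (smaller) omission: the uniqueness claim for general $X$ requires showing the deformation group vanishes in degrees $\le 0$, which the paper gets from the spectral sequence for $\Sym^2(M')^\vee$ concentrated in degrees $\ge 4$; your citation of Proposition \ref{proposition_annulus_cohomology} only covers the translation-invariant, $U(d)$-equivariant situation on $\C^d$ and does not directly apply to a general $X$.
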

\begin{corollary}
For any conical Calabi-Yau manifold satisfying the conditions stated above, there is a unique quantization of $(1,0)$ open-closed BCOV theory. In particular, this leads to a quantization of $(1,0)$ BCOV theory. The same holds for holomorphically translation invariant on $\C^3$ or theories on $Y \times \C^d$ which are holomorphically translation invariant in the $\C^d$ direction. 
\end{corollary}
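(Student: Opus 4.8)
The plan is to assemble the corollary from Theorems~\ref{theorem_(1,0)_cohomology_cancellation} and~\ref{theorem_(1,0)_annulus_anomaly} in exactly the way the $\gl(N\mid N)$ statement was assembled from Theorem~\ref{theorem_cohomology_cancellation} and Theorem~\ref{annulus-cancellation}. First I would recall, from the obstruction-theoretic setup of \cite{Cos11} reviewed in Section~\ref{quantum-BCOV}, that a quantization of open-closed $(1,0)$ BCOV theory is built inductively along the lexicographically-ordered pairs $(g,h+n)$, that the simplicial sets $\T^{\le(G,R)}(X)$ are Kan complexes and the maps $\T^{\le(G,R)}(X)\to\T^{<(G,R)}(X)$ are Kan fibrations, and that the only data not forced by the axioms is the classical action (the disk amplitudes $I_{0,1,0}$, $I_{0,1,1}$) together with the annulus amplitude $I_{0,2,0}[L]$. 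Thus the problem splits into (a) constructing the theory at the annulus level, and (b) lifting from the annulus level to all higher $(g,h,n)$; these are addressed by the two theorems respectively.

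For the lifting step I would invoke Theorem~\ref{theorem_(1,0)_cohomology_cancellation}: under the stated hypotheses on $X$ --- conical, with $H^0_{\dbar}(X,\Oo_X)$ and $H^1_{\dbar}(X,\Oo_X)$ spanned by elements of non-negative weight for the $\C^\times$-action --- the map $\T^{\le(G,R)}(X)\to\T^{<(G,R)}(X)$ is a weak equivalence for every $(G,R)$ with $2G-2+R>0$, and likewise in the holomorphically translation invariant setting on $\C^3$ or on $Y\times\C^d$ by part~(3) of that theorem. Since $\T(X)$ is the inverse limit of the tower of these Kan fibrations, each of which is a weak equivalence, the map $\T(X)\to\T^{\le(0,2)}(X)$ is a weak equivalence, exactly as in the corollary to Theorem~\ref{theorem_cohomology_cancellation}. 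Hence the whole quantization is determined, uniquely up to contractible choice, by its restriction to the annulus level.

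For the annulus step I would quote Theorem~\ref{theorem_(1,0)_annulus_anomaly}, which asserts that on any Calabi-Yau $3$-fold the open-closed $(1,0)$ theory admits a unique quantization to the annulus level. The point specific to the $(1,0)$ theory is that the only propagating closed-string fields lie in $\Omega^{2,\ast}(X)$ (equivalently $\PV^{1,\ast}(X)$), so the closed-string contribution to the annulus anomaly is $\{I^{1-disk}_{2},I^{1-disk}_{2}\}^{C}(A)=\int_X\op{Tr}(A\partial A)\wedge\op{Tr}((\partial A)^2)$, while the would-be open-string term $\int_X\op{Tr}(A)\wedge\op{Tr}((\partial A)^3)$ vanishes because the gauge Lie algebra is $\mf{sl}(N\mid N)$ rather than $\gl(N\mid N)$; the surviving open-string anomaly is then proportional to $\int_X\op{Tr}(A\partial A)\wedge\op{Tr}((\partial A)^2)$ and is cancelled by the closed-string term after the rescaling of $I^{1-disk}$ fixed in Theorem~\ref{annulus-cancellation}. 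Combining (a) and (b): $\T^{\le(0,2)}(X)$ is non-empty and weakly contractible, and $\T(X)$ is weakly equivalent to it, so a quantization exists and is unique up to contractible choice; restricting $I[L]$ to the closed-string fields produces the canonical quantum $(1,0)$ BCOV theory on its own. The holomorphically translation invariant cases on $\C^3$ and on $Y\times\C^d$ follow verbatim from the corresponding parts of Theorems~\ref{theorem_(1,0)_cohomology_cancellation} and~\ref{theorem_(1,0)_annulus_anomaly}.

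The main obstacle is not in this assembly but inside the two cited theorems; the only subtlety at the level of the corollary is to be careful that the tower of simplicial sets behaves correctly under inverse limits --- this needs precisely the Kan complex / Kan fibration input from \cite{Cos11} --- and to note that the non-negative-weight hypothesis on $H^{0,\ast}(X,\Oo_X)$ is exactly what makes the $S^1$-weight vanishing argument in the proof of Theorem~\ref{theorem_(1,0)_cohomology_cancellation} valid on a non-compact conical $X$, so the hypotheses in the corollary are genuinely used rather than decorative.
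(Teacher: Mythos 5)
Your proposal is correct and follows essentially the same route the paper intends: the corollary is the assembly of Theorem~\ref{theorem_(1,0)_cohomology_cancellation} (the tower of Kan fibrations $\T^{\le(G,R)}\to\T^{<(G,R)}$ being weak equivalences for $2G-2+R>0$, so that $\T(X)\to\T^{\le(0,2)}(X)$ is a weak equivalence) with Theorem~\ref{theorem_(1,0)_annulus_anomaly} (existence and uniqueness at the annulus level), exactly parallel to how the corollary to Theorem~\ref{theorem-vanishing} was deduced. Your remarks on the $\mf{sl}(N\mid N)$ cancellation mechanism and on where the non-negative-weight hypothesis enters are consistent with the paper's proofs of the two cited theorems.
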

\begin{corollary}
There is a unique quantization of open-closed $(1,0)$ BCOV theory on any Calabi-Yau $X$ at genus $0$, i.e. in the planar limit.  
\end{corollary}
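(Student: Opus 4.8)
The plan is to deduce this by running the obstruction-theory tower of \cite{Cos11} in the genus-zero sector only, assembling Theorem \ref{theorem_(1,0)_annulus_anomaly} with part (1) of Theorem \ref{theorem_(1,0)_cohomology_cancellation}. The genus-zero (planar) open-closed $(1,0)$ BCOV theory on $X$ is, by definition, the inverse limit $\T^{\mathrm{pl}}(X) := \lim_R \T^{\le(0,R)}(X)$ of the simplicial sets of Definition \ref{definition_theories_(G,R)}: since $(g,h+n) \le (0,R)$ already forces $g = 0$, each $\T^{\le(0,R)}(X)$ consists entirely of genus-zero quantizations truncated at $h+n \le R$, and by the observation following Definition \ref{definition_theories_(G,R)} the RG flow and quantum master equation for these functionals involve no data outside this range, so the tower is well defined and consists, by the results of \cite{Cos11}, of Kan complexes with Kan fibrations $\T^{\le(0,R)}(X) \to \T^{<(0,R)}(X) = \T^{\le(0,R-1)}(X)$. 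It therefore suffices to prove that each $\T^{\le(0,R)}(X)$ is weakly contractible.

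First I would treat the base of the tower, $R \le 2$. The only genus-zero data there that is not already pinned down by the classical-limit axiom is the annulus functional $I_{0,2,0}[L]$: the disk amplitudes $I_{0,1,0}$ (the $\mf{sl}(N\mid N)$ holomorphic Chern-Simons vertex) and $I_{0,1,1}$ (the one-disk coupling to $(1,0)$ BCOV theory) are fixed, and the purely closed-string interactions $I_{0,0,n}$ are cubic and higher. Theorem \ref{theorem_(1,0)_annulus_anomaly} asserts exactly that on any Calabi-Yau manifold $X$ the annulus anomaly vanishes and there is no ambiguity to quantizing to the annulus level; in the language of \cite{Cos11} this says the degree-$1$ obstruction class and the degree-$0$ deformation group both vanish, so $\T^{\le(0,2)}(X)$ is weakly contractible. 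The point worth emphasizing is that no conical structure and no hypothesis on $H^{0,\bullet}_{\dbar}(X)$ enters here, which is precisely why the genus-zero statement holds for an \emph{arbitrary} Calabi-Yau while the preceding all-genus corollary needs such hypotheses (to construct the $(G,1)$-level data at positive genus).

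The inductive step is then immediate from part (1) of Theorem \ref{theorem_(1,0)_cohomology_cancellation}: for an arbitrary Calabi-Yau $3$-fold and any $R > 1$, the map $\T^{\le(G,R)}(X) \to \T^{<(G,R)}(X)$ is a weak equivalence; specializing to $G = 0$ and $R \ge 3$ gives that $\T^{\le(0,R)}(X) \to \T^{\le(0,R-1)}(X)$ is a weak equivalence for every $R \ge 3$. Combined with the base case this yields $\T^{\le(0,R)}(X) \simeq \ast$ for all $R \ge 2$ by induction, and then, exactly as in the corollary to Theorem \ref{theorem-vanishing}, $\T^{\mathrm{pl}}(X) = \lim_R \T^{\le(0,R)}(X)$ is an inverse limit of an infinite sequence of fibrations each of which is a weak equivalence between weakly contractible Kan complexes, hence is itself weakly contractible. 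This is the statement that a genus-zero quantization of open-closed $(1,0)$ BCOV theory exists and is unique up to contractible choice; restricting to the closed-string sector then also gives the planar quantization of $(1,0)$ BCOV theory on its own.

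The substance of the argument is carried entirely by Theorems \ref{theorem_(1,0)_cohomology_cancellation} and \ref{theorem_(1,0)_annulus_anomaly}, so there is no serious obstacle in the corollary itself; the only things to be careful about are (i) that passing to $g = 0$ is a genuine consistent truncation of the $(1,0)$ RG flow and master equation --- which is the same observation that makes Definition \ref{definition_theories_(G,R)} work --- and (ii) that ``to the annulus level'' in Theorem \ref{theorem_(1,0)_annulus_anomaly} is read as weak contractibility of the simplicial set $\T^{\le(0,2)}(X)$ (vanishing of both the obstruction and the deformation torsor), not merely as the existence of one quantization, so that the inverse-limit argument outputs a contractible space.
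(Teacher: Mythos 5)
Your proposal is correct and is exactly the argument the paper intends (the paper leaves this corollary's proof implicit as an immediate consequence of the two preceding theorems): at genus zero the only induction steps needed are the annulus level, handled by Theorem \ref{theorem_(1,0)_annulus_anomaly} on an arbitrary Calabi--Yau, and the steps with $R\geq 3$, handled by part (1) of Theorem \ref{theorem_(1,0)_cohomology_cancellation}, so the conical hypotheses of parts (2) and (3) --- which are only needed for the $(G,0)$ and $(G,1)$ data at positive genus --- never enter. Your two cautionary remarks (that fixing $g=0$ is a consistent truncation of the RG flow and QME, and that the annulus theorem must be read as contractibility of the relevant obstruction--deformation data rather than mere existence) are both the right points to check and are borne out by the paper.
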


\begin{proof}[Proof of \ref{theorem_(1,0)_cohomology_cancellation}]
The proof is very much along the lines of the proof of theorem \ref{theorem_cohomology_cancellation}. In particular, the obstruction-deformation complex can be expressed in terms of certain $D$-modules. The $D$-module of jets of closed-string fields in this case is the jets of the complex of fields of $(1,0)$ BCOV theory. This complex is equivalent to the complex 
$$\Omega^{2,\ast}(X)[1] \xto{\del} \Omega^{3,\ast} = F^2 \Omega^{\ast,\ast}(X)[3],$$
which is the cochain level version of $F^2$ in the Hodge filtration. The differential is $\dbar + \del$.  Locally, the cohomology of this complex consists of closed holomorphic $2$-forms on $X$.  Thus, from the closed string sector jets of fields are $J F^2 \Omega^{\ast,\ast}(X) [3]$ and functionals on the jets of fields are
$$
\Omega^{3,3} \otimes_{D} \Oo (  J F^2 \Omega^{\ast,\ast} [3] ) 
$$
where $\Oo$ refers to functionals modulo constants, and we are tensoring over all differential operators (not just holomorphic ones).

For the open string sector, we can use the argument from before to express possible admissible local functionals in terms of functions on the cyclic homology of jets of $\Omega^{0,\ast}(X)$ (shifted by $[1]$).   The cyclic homology of $J \Omega^{0,\ast}(X)$ is the complex $J \Omega^{-\ast,\ast}(X)[t^{-1}]$ with differential $t \del$.  There is a direct summand in this consisting of $J \Omega^{0,\ast}(X)$, and this corresponds to the part of the Lie algebra homology of $J \Omega^{0,\ast}(X) \otimes \gl(N \mid N)$ which comes from the Abelian sub-algebra consisting of $J \Omega^{0,\ast}(X) \otimes \op{Id}$.  It follows that the variant of cyclic homology we need to take which describes $\mf{sl}(N \mid N)$ Lie algebra homology is simply the complement of this summand. Let us call this $\mf{sl}$-cyclic homology. We find that
$$
HC_{\mf{sl}}( J \Omega^{0,\ast}(X))[1] = \oplus_{i \ge 1} J \Omega^{i,\ast}(X)[i+1] \oplus t^{-1} J \Omega^{-\ast,\ast}(X)[t^{-1}][1]. 
$$ 

As in the proof of theorem \ref{theorem-vanishing} the possible local functionals that can arise in the open-closed theory are described as functionals on a complex which is a direct sum of this cyclic homology complex together with the jets of closed string fields, with a differential connecting them coming from the disk with one interior marked point.  In sum, we find that the jet complex is described by the double complex which is the sum of the rows in the following table
$$
 \xymatrix{\text{Sector} & closed & closed & open & open & open & \cdots \\
\text{Degree} & 0 & -1 & -2 & -3 & -4 & \cdots   \\  
\text{Jet complex}  &J  \Omega^{3,\ast} & \ar[l]_{\partial} J \Omega^{2,\ast} & \ar[l]_{\partial} 
J\Omega^{1,\ast} & \ar[l]_{t \partial} t^{-1}J \Omega^{0,\ast}  
&   & \\
&  &  &  & J\Omega^{2,\ast} & \ar[l]_{t \partial} t^{-1}J \Omega^{1,\ast} & \ar[l]_{t \partial} \cdots
 \\
& & & & & J \Omega^{3,\ast}&\ar[l]_{t \partial} \cdots \\
& & & & &                  & \cdots } 
$$ 
We have included here the map from the connecting differential from the closed string sector to the open string sector on the third row. 

We see that all the rows of this complex are simply the de Rham complex with a shift, with the exception of that in the fourth row, which does not include $\Omega^{3,\ast}$.

If $M$ denotes the complex of $D$-module appearing in this diagram, then our obstruction-deformation complex is $\Omega^{3,3} \otimes_{D} \Oo(M)$ where $\Oo$ indicates functions modulo constants and $D$ is the algebra of $\cinfty$ differential operators.

Since jets of the de Rham complex is quasi-isomorphic to just the trivial $D$-module $\cinfty_X$ (i.e. the sheaf of smooth functions with its trivial flat connection) we find that there is a quasi-isomorphism
$$
M \simeq J \Omega^{\le 2,\ast}[5] \bigoplus  \oplus_{i = 3,7,9\dots}\cinfty_{X}[i]. $$ 

Introduce variables $\eps_3,\eps_7,\eps_9\dots$ where $\eps_i$ is of degree $i$.  Then, we find that 
\begin{multline*} 
 \Omega^{3,3} \otimes_{D}^{\mbb L} \Oo(M) = \Omega^{3,3} \otimes_{D} \left\{ \left( \Sym^{> 0} (J \Omega^{\le 2,\ast})^\vee [-5]\right) [\eps_3,\eps_7,\eps_9\dots] \right\} \\
\oplus  \Omega^{3,3} \otimes_{D}^{\mbb L} \cinfty_X \otimes_{\C} ( \C[\eps_3,\eps_7,\eps_9\dots]/\C ).  
\end{multline*}
We can compute the relevant tensor products over $D$. First we have 
$$
\Omega^{3,3} \otimes_{D}^L \cinfty_X \simeq \Omega^{\ast,\ast}(X)[6]. 
$$
We also have
$$
\Omega^{3,3} \otimes_{D} (J \Omega^{\le 2,\ast})^\vee \simeq \Omega^{\ge 1,\ast}(X)[6]. 
$$ 
After all, the complex on the left hand side is a way of writing linear local functionals of $\Omega^{\le 2,\ast}(X)$, and of course every local functional is given by integrating against an element of $\Omega^{\ge 1,\ast}(X)$.

Note that the cohomology of the truncated de Rham complex $\Omega^{\ge 1,\ast}(X)[1]$ is the cohomology of $X$ with coefficients in the sheaf of closed holomorphic $1$-forms, so we will refer to it as $H^\ast(X, \Omega^1_{cl,hol})$. Therefore
$$
  H^\ast(\Omega^{3,3} \otimes_{D} (J \Omega^{\le 2,\ast})^\vee[-5])=H^\ast(X, \Omega^1_{cl,hol}).
$$

For the cohomology of $\Omega^{3,3} \otimes_{D}\Sym^k \left(J \Omega^{\le 2,\ast})^\vee [-5] \right)$, we need a lemma.
\begin{lemma}
If $k > 1$ then  
$$ H^i \left( \Omega^{3,3}\otimes_{D} \Sym^k \left(J \Omega^{\le 2,\ast})^\vee [-5] \right) \right)$$
is zero if $i < 2$. 
\end{lemma}
\begin{proof}
There is a short exact sequence of $D$-modules
$$
0 \to J\Omega^{3,\ast}[2] \to J\Omega^{\ast,\ast}[5] \to J\Omega^{\le 2,\ast}[5]\to 0.
$$
Applying this short exact sequence yields a spectral sequences converging to the cohomology groups we are computing whose first term involves replacing $J \Omega^{\le 2,\ast}[5]$ by $J \Omega^{\ast,\ast} [5] \oplus J \Omega^{3,\ast}[3]$. As above, the cohomology of jets of the de Rham complex is just the trivial $D$-module $\cinfty_X$, but now situated in degree $-5$. The cohomology sheaves of the dual of $J \Omega^{3,\ast}[3]$ is the $D$-module $D_{hol}$ of smooth sections of the bundle of holomorphic differential operators, situated in degree $3$. 

The functor which sends a $D$-module $M$ to $\Omega^{3,3} \otimes^{\mbb L}_D M$ can be modelled by the de Rham complex of $X$ with coefficients in $M[6]$. So if $M$ is in degree $k$ this functor yields something concentrated in degrees between $k$ and $k - 6$. If $k > 7$ then we find nothing in degrees $\le 1$.   

Now, the higher symmetric powers of $\cinfty_X[-5] \oplus D_{hol}[-3]$ has summands in degrees $6,8,\dots$. Only the term in degree $6$, namely $\wedge^2 D_{hol}$, can possibly contribute. However, $\wedge^2 D_{hol}$ is a summand of $D_{hol}^{\otimes 2}$, and the sheaf $\Omega^{3,3} \otimes_{D} D_{hol}^{\otimes 2}$ has cohomology in degrees $\ge -3$. Since we are shifting up by $6$, we find no cohomology in degrees $\le 1$. 
\end{proof}

Putting this together tells us that our obstruction-deformation complex has cohomology:
\begin{multline*}
 H^\ast (\Omega^{3,3} \otimes_D \Oo(M)) = H_{dR}^\ast(X)[6] \otimes\left( \C[\eps_3,\eps_7,\dots ] / \C \right)  \\
\oplus H^\ast(X, \Omega^1_{cl,hol}) [\eps_3,\eps_7,\dots, ] \\
\oplus \text{ things in degree } 2 \text{ and higher}. 
\end{multline*}
Since $d=3$, the relevant pieces to our computation are in degrees $\le 1$. We have
\begin{align*}
H^{i}(\Omega^{3,3} \otimes_D \Oo(M)) & = H^{3+i}(X)\eps_3\text{ if } i < 0 .\\
H^0 ( \Omega^{3,3} \otimes_D \Oo(M)) &= H^3(X) \eps_3 \oplus  H^0(X,\Omega^1_{cl,hol}) \\
H^1(\Omega^{3,3} \otimes_D \Oo(M)) &= H^4(X) \eps_3  \oplus H^0(X) \eps_7 \oplus  H^1(X, \Omega^{1}_{cl,hol}) 
\end{align*}

Note that all of these cohomology groups are built from functionals which are in the first symmetric power of the dual of our $D$-module of jets. This means they correspond either to functionals which are linear of the closed string fields and independent of the open string fields, or to functionals which do not depend on the closed string fields but which are single-trace on the open string fields. 

This implies that the map $\T^{\le(g,R)}(X) \to \T^{<(g,R)}(X)$ is a weak equivalence if $R > 1$, for any $X$.

Next, let us see what happens if $X = \C^3$ and we consider holomorphically translation invariant quantizations.  In this case, our discussion in the proof of theorem \ref{theorem-vanishing-flat-space} tells us that we should replace the $D$-module $M$ by its fibre $M$ at $0$, which we call $M_0$; and the algebra $D$ of differential operators by the algebra $D_0^{hol} = \C[\dpa{z_i}]$ of constant-coefficient holomorphic differential operators. The obstruction-deformation complex is
$$
\Omega^{3,0}_0\otimes_{D_0^{hol}}^{\mbb L} \Oo(M_0)[3].
$$
The arguments described above that calculate the cohomology of the $D$-module $M$ also calculate that of the $D_0^{hol}$-module $M_0$.  We find that we should replace every occurrence of the cohomology of $X$ by the algebra $\Omega_0^{*,0}=\C[\d z_i]$ of holomorphic differential forms with constant coefficients, and every occurrence of the cohomology of $X$ with coefficients in $\Omega^1_{cl,hol}$ by the complex $\Omega_0^{\geq 1,0}[1]=\d z_i \C[d z_i] [1]$, of holomorphic differential forms of degree $1$ and higher, with grading such that $i$ forms are in degree $i-1$. 

So we find that the obstruction-deformation complex for holomorphic translation invariant quantizations on $\C^3$ has cohomology groups:
\begin{align*}
H^{i} ( \Omega^{3,0}_0\otimes_{D_0^{hol}}^{\mbb L} \Oo(M_0)[3]) & = \Omega_0^{3+i,0} \eps_3\text{ if } i < 0 \\
H^0(\Omega^{3,0}_0\otimes_{D_0^{hol}}^{\mbb L} \Oo(M_0)[3])&= \C \d z_1 \d z_2 \d z_3 \eps_3 \oplus \Omega_0^{1,0}\\ 
H^1(  \Omega^{3,0}_0\otimes_{D_0^{hol}}^{\mbb L} \Oo(M_0)[3])&=   \C \eps_7 \oplus \Omega_0^{2,0}. 
\end{align*}
If we also ask for $SU(3)$-invariants we find
\begin{align*}
H^{-3} ( \Omega^{3,0}_0\otimes_{D_0^{hol}}^{\mbb L} \Oo(M_0)[3])^{SU(3)} & = \C \eps_3\\
H^{i}  ( \Omega^{3,0}_0\otimes_{D_0^{hol}}^{\mbb L} \Oo(M_0)[3])^{SU(3)} &= 0 \text{ if } i = -1,-2\\ 
H^0(\Omega^{3,0}_0\otimes_{D_0^{hol}}^{\mbb L} \Oo(M_0)[3])^{SU(3)} &= \C \d z_1 \d z_2 \d z_3 \eps_3 \\ 
H^1(  \Omega^{3,0}_0\otimes_{D_0^{hol}}^{\mbb L} \Oo(M_0)[3])^{SU(3)} &=   \C \eps_7. 
\end{align*}
That is, in this case there is a single possible obstruction, a single possible deformation, and no symmetries. 

Let's now discuss the possible deformations and obstructions explicitly. We will do this in the case of a general Calabi-Yau, but the same formulae apply to the holomorphically translation invariant situation on $\C^3$. 
\begin{enumerate} 
 \item A class $\alpha \in H^3(X)$ gives a deformation.  Let us write explicitly the corresponding Lagrangian of the theory. Choose a cochain representative of $\alpha$ as a sum of $(i,j)$ forms $\alpha^{i,j}$. Let $A \in \Omega^{0,\ast}(X) \otimes \mf{sl}(N \mid N)$ be an open string field, $v \in  \Omega^{0,\ast}(X,TX)$ and $\phi \in \Omega^{0,\ast}(X)$ be the closed string fields. The deformation is  by the Lagrangian
$$
\int \alpha^{3,0}\left( \tfrac{1}{3} \op{Tr} A^3 + \tfrac{1}{2}\op{Tr} \A \dbar A\right) +\tfrac{1}{2} \int \alpha^{2,1} \op{Tr} A \partial A + \int \alpha^{1,2} (v \vdash \Omega) + \int \alpha^{0,3} \phi \Omega. 
$$
The terms coming from $\alpha^{3,0}$ and $\alpha^{2,1}$ are simply the deformation of the holomorphic Chern-Simons functional when we change the complex structure and holomorphic volume form.  In the holomorphically translation invariant case on $\C^3$, $\alpha$ is of course of type $(i,0)$. 
\item Similarly, a class $\alpha \in H^i(X) \eps_3$ gives an anomaly if $i = 4$ and a symmetry if $i = 2$.  These are represented by Lagrangians of cohomological degree $1$ and $-1$ respectively.  If we expand, as above, a cochain representative of $\alpha$ in terms of $(p,q)$-forms $\alpha^{p,q}$ where $p+q = i$, the corresponding Lagrangian is 
$$
\int \alpha^{3,i-3}\left( \tfrac{1}{3} \op{Tr} A^3 + \tfrac{1}{2}\op{Tr} \A \dbar A\right) +\tfrac{1}{2} \int \alpha^{2,i-2} \op{Tr} A \partial A + \int \alpha^{1,i-1} (v \vdash \Omega) + \int \alpha^{0,i} \phi \Omega. 
$$
Note that if $i = 2$ then the first term does not appear and if $i = 4$ the last term does not appear.
\item A closed holomorphic $(1,0)$-form  $\beta \in H^0(X,\Omega^1_{cl,hol})$ gives rise to a deformation.  This is the functional
$$
A \mapsto \int \beta \op{Tr}\left( A (\partial A)^2\right).  
$$
This functional is the variation in the holomorphic Chern-Simons functional as we make $X$ non-commutative in a way coming form the holomorphic bi-vector whose contraction with $\Omega$ is $\beta$. 
\item A class $\beta \in H^1(X,\Omega^1_{cl,hol})$ gives a potential anomaly.  If, at the cochain level, we write $\beta$ as a sum of $\beta^{1,1}$ and $\beta^{2,0}$, then the functional is of the form
$$
A \mapsto c_1 \int \beta^{1,1} \op{Tr} \left( A (\partial A)^2 \right) + c_2 \int \del \beta^{1,1} \op{Tr} \left(A^2 \partial A \right)  + c_3 \int \beta^{2,0} \op{Tr} ( A^3 \partial A).  
$$
This is the functional $I^{1-disk}(\beta^{1,1} + t \beta^{2,0}, A)$ where  we think of $\beta^{1,1}$ as a polyvector field in $\PV^{2,1}$ and $\beta^{2,0}$ as in $\PV^{1,0}$, and where $I^{1-disk}$ indicates the interaction for the full BCOV theory corresponding to a disk with a single marked point in the interior.    
\item The class $1 \eps_7 \in H^0(X) \eps_7$ gives a potential anomaly of the form
$$
\int \op{Tr} \left(A (\partial A)^3 \right). 
$$
\end{enumerate}
If $X$ is a conical Calabi-Yau, then we can analyze how these functionals scale when we scale our fields using the $\C^\times$ action on $X$ and use this analysis to complete the proof of the theorem.  We will scale polyvector fields using the natural scaling on the de Rham complex and the isomorphism between polyvector fields and the de Rham complex. If we do this, then we find that the functionals corresponding to classes in $H^\ast(X)$ and in $H^\ast(X,\Omega^1_{cl,hol})$ scale according to the natural action of $\C^\times$ on these cohomology groups.  Note that $\C^\times$ acts trivially on $H^\ast(X)$ but possibly non-trivially on $H^\ast(X,\Omega^1_{cl,hol})$.  

The axiom of compatibility with the scaling action tells the obstruction-deformation group controlling lifts from $\T^{<(G,R)}(X)$ to $\T^{\le (G,R)}(X)$ consists of Lagrangians of weight $(-2G +2 -R)w$, where $w$ is the weight of the holomorphic volume form on $X$ under scaling. To prove our theorem, we need to show that there are no elements in the obstruction-deformation complex of negative weight.   Elements in this complex which come from de Rham cohomology of $X$ are of weight zero, so don't contribute. It suffices to show that elements coming from $H^\ast(X,\Omega^1_{cl,hol})$ also don't contribute.  

The short exact sequence
$$
  0\to \C\to \Oo_X\stackrel{d}{\to} \Omega^1_{cl,hol}\to 0
$$
leads to a long exact sequence of the form
\begin{equation*}
\dots \to H^i_{\dbar}(X,\Oo_X) \to H^{i}(X,\Omega^1_{cl,hol}) \to H^{i+1}_{dR}(X) \to \dots \tag{$\dagger$}
\end{equation*}
and the maps in the exact sequence commute with the $\C^\times$ action on everything.  Since everything in $H^{i+1}_{dR}(X)$ is $\C^\times$-invariant, any elements of $H^{i+1}(X,\Omega^{1}_{cl,hol})$ which are not $\C^\times$ invariant come from $H^i_{\dbar}(X,\Oo_X)$. In the statement of our theorem, we assume that every element of $H^0(X,\Oo_X)$ and $H^1_{\dbar}(X,\Oo_X)$ is of non-negative weight under the $\C^\times$-action, so that it can not contribute to the obstruction-deformation group, thus completing the proof of the theorem. 

Next, let us discuss the holomorphically translation invariant case on $\C^3$. In this case, once we also impose $SU(3)$-invariance, all possible obstructions and also higher symmetries (coming in this case from $H^{-3}$ of the obstruction-deformation complex) are scale invariant. The one possible deformation -- corresponding to the Lagrangian $\int \d z_1 \d z_2 \d z_3 \op{Tr} A^3$ -- has positive weight, so also can not contribute. The result follows. 

Finally let us discuss the case of a Calabi-Yau of the form $Y \times \C^d$, where $Y$ is a Calabi-Yau of dimension $3-d$.  In this case we are interested in quantizations which are holomorphically translation-invariant in the $\C^d$ directions, and also satisfying the scaling axiom with respect to scaling of $\C^d$. A small variant of the arguments discussed above shows that, in the obstruction-deformation complex, there is nothing of negative weight, so again the result follows.  

\end{proof}

Now we will prove theorem \ref{theorem_(1,0)_annulus_anomaly}, which we restate now for convenience.  
\begin{theorem*}
For any Calabi-Yau manifold $X$ of dimension $3$, there is a unique quantization of open-closed BCOV theory to the annulus level. 
\end{theorem*}
\begin{proof}
The strategy is the following.  We will start by examining the cohomology groups containing potential obstructions and deformations at the annulus level. We will find that for any Calabi-Yau, there are no possible deformations but there might be obstructions. Next, we will see that to show the obstruction vanishes it suffices to show it locally on $X$. This reduces the problem to the case of $\C^3$, where we can apply a  variant of the annulus anomaly cancellation argument for the full BCOV theory.

The cohomology calculation we start with is almost identical to that of proposition \ref{proposition_annulus_cohomology} where we analyzed the corresponding groups for $\gl(N \mid N)$ holomorphic Chern-Simons.  The cohomology group we are computing is described in terms of the $\mf{sl}$-version of the cyclic homology of jets of functions on $X$.  Recall that the $\mf{sl}$ cyclic homology is
$$
 HC_{\mf{sl}}( J \Omega^{0,\ast}(X))[1] = \oplus_{i \ge 1} J \Omega^{i,\ast}(X)[i+1] \oplus t^{-1} J \Omega^{-\ast,\ast}(X)[t^{-1}][1]
$$
with a differential which is the jet of $\dbar + t \partial$.  We will denote this $D$-module (in the model used on the right hand side) by $M$. There is a filtration on $M$ by the powers of $t$ occurring, whose associated graded is the same object with differential $\dbar$. Let us denote the associated graded by $M'$. 
 
The obstruction-deformation complex is global sections of the sheaf of complexes on $X$
$$
\Omega^{3,3} \otimes_{D} \Sym^2 M^\vee.  
$$
We will first show that global sections of this has no cohomology in degrees $\le 0$.   

There is a spectral sequence converging to this cohomology whose first page is the cohomology of 
$$
\Omega^{3,3} \otimes_{D} \Sym^2 (M')^\vee. 
$$
Now,
$$
(M')^\vee = \oplus_{i \ge 1} (J \Omega^{i,\ast})^\vee [-i-1] \oplus t (J \Omega^{0,\ast})^\vee[[t]][-1]
$$
where $t$ has degree $2$.  Thus, $(M')^\vee$ is a direct sum of $D$-modules of the form $(J \Omega^{k,\ast})^\vee[-r]$ where the shift is by $r \ge 2$.  The cohomology of $(J \Omega^{k,\ast})^\vee$ is the bundle whose holomorphic differential operators from holomorphic $k$-forms $\Omega^{k}_{hol}$ to $\Oo$. Call this bundle $\op{Diff}_{hol}(\Omega^k,\Oo)$. We thus find
$$
(M')^\vee \simeq  \oplus_{i \ge 1} (\op{Diff}_{hol}(\Omega^i, \Oo) [-i-1] \oplus t \op{Diff}_{hol}(\Oo,\Oo)[[t]][-1].
$$   

A holomorphic $D$-module is a (possibly infinite rank) holomorphic bundle with a flat holomorphic connection, and so is in particular a $\cinfty$ $D$-module.  The $\cinfty$ $D$-module $\op{Diff}_{hol}(\Omega^i,\Oo)$ arises, of course, from a holomorphic $D$-module of the same name.  Further, for any holomorphic $D$-module $V$, if $V^{\cinfty}$ refers to the corresponding $\cinfty$ $D$-module, then there is a quasi-isomorphism of sheaves
$$
\Omega^3_{hol} \otimes_{D_{hol}}^{\mbb L} V[3] \iso \Omega^{3,3} \otimes_{D}^{\mbb L} V^{\cinfty}.
$$
Applying this to our situation, we find that 
$$
\Omega^3_{hol} \otimes^{\mbb L}_{D_{hol}} \Sym^2 (\mc{H}^\ast ((M')^\vee)[3] \simeq \Omega^{3,3} \otimes^{\mbb L}_{D} \Sym^2 (M')^\vee
$$
where $\mc{H}^\ast ((M')^\vee)$ refers to the cohomology sheaves, which are holomorphic $D$-modules. 

Now, the $D_{hol}$-modules $\op{Diff}_{hol}(\Omega^k, \Oo)\otimes \op{Diff}_{hol}(\Omega^l,\Oo)$ are flat. It follows that $\Sym^2 \mc{H}^\ast ( (M')^\vee)$ is a flat $D_{hol}$-module, and that there is a quasi-isomorphism
 $$
\Omega^3_{hol} \otimes_{D_{hol}} \Sym^2 (\mc{H}^\ast ((M')^\vee)[3] \simeq \Omega^{3,3} \otimes^{\mbb L}_{D} \Sym^2 (M')^\vee.
$$
Since the cohomology sheaves of $\Sym^2 (M')^\vee$ are in degrees $4$ and higher, we find that the sheaf on the left hand side of this equation has cohomology sheaves in degrees $1$ and higher. This proves that there is no ambiguity in quantizing open-closed $(1,0)$-BCOV theory on $X$ to the annulus level (although there might be an anomaly).  

The only possible contribution to the group controlling anomalies comes from the piece of $\Sym^2 (\mc{H}^\ast((M')^\vee)$ in degree $4$, which is $\Sym^2 \op{Diff}_{hol} ( \Omega^1,\Oo)$.  We find that 
$$ 
 H^1 (X, \Omega^{3,3} \otimes_{D} \Sym^2 (M')^\vee ) = H^0(X, \Omega^3_{hol} \otimes_{D_{hol}} \Sym^2 \op{Diff}_{hol}(\Omega^1,\Oo) ).  
$$
Since we have a spectral sequence converging to our anomaly group from the group in the displayed equation, we have
$$
H^1 (X, \Omega^{3,3} \otimes_{D} \Sym^2 M^\vee) \subset H^0(X, \Omega^3_{hol} \otimes_{D_{hol}} \Sym^2 \op{Diff}_{hol}(\Omega^1,\Oo) ).  
$$ 
(On the left hand side of this is the group containing possible anomalies).

Next, we will show that for any $U \subset X$, the map
$$
H^1 (X, \Omega^{3,3} \otimes_{D} \Sym^2 M^\vee) \to H^1 (U, \Omega^{3,3} \otimes_{D} \Sym^2 M^\vee)  
$$
given by restricting an anomaly to $U$ is injective. To show this, it suffices to show that the map
$$
 H^0(X, \Omega^3_{hol} \otimes_{D_{hol}} \Sym^2 \op{Diff}_{hol}(\Omega^1,\Oo) ) \to  H^0(U, \Omega^3_{hol} \otimes_{D_{hol}} \Sym^2 \op{Diff}_{hol}(\Omega^1,\Oo) )   
$$ 
is injective. This is immediate, since the sheaf in question is built from complex analytic data.  

One of the results of \cite{Cos11} is that anomalies are local, that is, the restriction of the anomaly cohomology class to some open subset $U \subset X$ is determined by the restriction of the theory on $X$ to $U$. To show that the annulus anomaly vanishes, we now need to do so on any open subset $U \subset X$, which we can take to be a polydisc.  

Further, since the annulus anomaly on a polydisc is the restriction of the annulus anomaly on $\C^3$, it suffices to show that the anomaly on $\C^3$ vanishes. For this, we will use a variant of our argument for the full BCOV theory.  

For the full BCOV theory, we found that the open-string anomaly was of the form
$$
\int \op{Tr} (A \partial A) \op{Tr} ( (\partial A)^2) + \int \op{Tr} A \op{Tr} (\partial A)^3. 
$$
There, we were using $\gl(N \mid N)$ as our gauge group. Further, we saw that this anomaly cancelled with a contribution from the closed-string sector. 

Clearly the second term in this expression vanishes if we use $\mf{sl}(N \mid N)$.  Since our $(1,0)$ BCOV theory has fewer fields, it suffices to show that only the first term in the open-string anomaly cancels with the contribution from $(1,0)$ BCOV theory.

Recall that the kernel for the BV odd Poisson bracket for the full BCOV theory is
$$
\pi = (\del \otimes 1) \delta_{Diag} \in \br{\PV}^{\ast,\ast}(\C^3) \what{\otimes} \br{\PV}^{\ast,\ast}(\C^3).
$$
Here $\br{\PV}$ indicates polyvector fields with distributional coefficients, and $\what{\otimes}$ is the completed topological tensor product. We can view the right hand side of this as being polyvector fields with distributional coefficients on $\C^3 \times \C^3$. Also, $\delta_{Diag}$ is the delta-current on the diagonal. Normally this would be a distributional form on $\C^3 \times \C^3$, but we  view it as a polyvector field using the isomorphism between forms and polyvector fields.

We let $\pi^{i,j}$ denote the component of this Kernel which lives in $\br{\PV}^{i,\ast} \what{\otimes} \br{\PV}^{j,\ast}$.  The kernel for the odd Poisson bracket for $(1,0)$ BCOV theory is $\pi^{1,1}$. 

The interaction $I^{1-disk}$ for the full BCOV theory is linear in the closed string fields of full BCOV theory, and so can be written as a sum of terms 
$$I^{1-disk} = \sum_{i,k\ge 0} I^{1-disk}_{i,k}$$
where  $I^{1-disk}_{i,k}$ only depends on the fields in $t^k \PV^{i,\ast}$.  The corresponding interaction for $(1,0)$ BCOV theory is 
$$I^{1-disk, \ (1,0)} = I^{1-disk}_{1,0} + I^{1-disk}_{0,1}.$$ 

Since the BV Poisson bracket only lives in the subspace of fields which have no powers of $t$, we find that the closed string BV Poisson bracket of $I^{1-disk}$ can be written as a sum
\begin{align*} 
 \{I^{1-disk}, I^{1-disk}\}^{C} &= \sum_{i,j, = 0 \dots 3} \{ I^{1-disk}_{i,0}, I^{1-disk}_{j,0} \}^{C}\\ 
&= \sum_{i,j = 0\dots 3} \ip{ \pi^{i,j}, I^{1-disk}_{i,0} \boxtimes I^{1-disk}_{j,0} }
\end{align*}
where $\ip{-,-}$ indicates contraction between polyvector fields on $\C^3 \times \C^3$ and elements of the dual space.

Further,
$$
 \{I^{1-disk,\ (1,0) }, I^{1-disk, \ (1,0) }\}^{C} =  \{ I^{1-disk}_{1,0}, I^{1-disk}_{1,0} \}^{C}. 
$$
Now, $I^{1-disk}_{i,0}$ depends on $i+1$ open-string fields $A \in \Omega^{0,\ast}(\C^3)\otimes \gl(N \mid N)$ and has a single trace.  We showed in \ref{lemma_closed_string_annulus_anomaly} that
$$
\{I^{1-disk}, I^{1-disk}\}^{C} (A) =  \int \op{Tr} (A \partial A) \op{Tr} (\partial A)^2 + \op{\int} \op{Tr} (A) \op{Tr} ( (\partial A)^3)   
$$
(up to some non-zero constant that can be scaled away by redefinition of the fields).
 
It follows from this and from the dependence of $I^{1-disk}_{i,0}$ on the open string fields that
$$
\{I^{1-disk}_{1,0}, I^{1-disk}_{1,0}\}^{C}(A) =   \int \op{Tr} (A \partial A) \op{Tr} (\partial A)^2.  
$$
Since this is the closed-string BV bracket of $I^{1-disk, \ (1,0)}$ with itself, we find that the closed-string annulus anomaly for $(1,0)$ BCOV theory precisely cancels that form the open string sector,  as desired.  
\end{proof} 

One question we have not yet addressed is the following. 
\begin{theorem*}
On any Calabi-Yau manifold $X$, the dynamically-generated classical interaction for $(1,0)$ BCOV theory is equivalent to the restriction of the classical interaction of \cite{CosLi11} for the full BCOV theory to the fields of $(1,0)$ BCOV theory. Explicitly, this interaction is the following. Let us denote the fields of $(1,0)$ BCOV theory by $\alpha \in \PV^{1,\ast}(X)[1]$ and $\phi \in t PV^{0,\ast}(X)$. Then the interaction is
$$
I(\alpha,\phi) = \sum_{n \ge 3}   \int \alpha^3 \phi^{n-3}  
$$
where we are using the integration map on polyvector fields specified earlier. 

\end{theorem*}
\begin{proof}
Recall that $(1,0)$ BCOV theory can be quantized at genus $0$ on any Calabi-Yau manifold, so that the statement makes sense.

The argument goes as follows. First, we will show that any possible classical interactions on $X$ are determined by their behaviour on an arbitrary open subset of $X$. By taking this subset to be a polydisc, and then embedding this polydisc in $\C^3$, this reduces us to the case $X = \C^3$. In that case, $SU(3)$ and scale invariance, as well as the classical master equation, fix the form of the classical interaction uniquely up to a scale. This scale can be fixed using an argument similar to that given in \ref{classical-interaction-lemma}.

Let us analyze the obstruction-deformation complex controlling possibly classical interactions on $X$.  As usual, this is built from jets of fields, this time of only the closed-string sector.  These fields can be identified with the double complex 
$$
\E = \Omega^{2,\ast}[1] \to \Omega^{3,\ast}.
$$ 
The obstruction-deformation complex is
$$
\Omega^{3,3} \otimes_{D} \Oo( J (\E)) 
$$
as usual. 
 
Now, the complex $\E$ of fields has cohomology the sheaf of closed holomorphic $2$-forms, with a shift of $[1]$. There is a quasi-isomorphism between the sheaf of closed holomorphic $2$-forms and the complex of sheaves 
$$
\C[2] \to \Oo[1] \to \Omega^1_{hol}
$$
where $\C$ is the constant sheaf and $\Omega^1_{hol}$ is the sheaf of holomorphic $1$-forms.  Replacing each term in this complex by a resolution, we find that $\E$ is quasi-isomorphic to the total complex of the double complex of the form
$$
\Omega^{\ast,\ast}[3] \to \Omega^{0,\ast}[2] \to \Omega^{1,\ast}[1]
$$
where the first map is projection onto $(0,\ast)$-forms and the second map is the operator $\partial$. 

It follows that $\E$ (in this model) has a filtration whose associated graded is the direct sum 
$$
\op{Gr} \E = \Omega^{\ast,\ast}[3] \oplus \Omega^{0,\ast}[2] \oplus \Omega^{1,\ast}[1].
$$
This filtration leads to a convergent spectral sequence for our obstruction-deformation complex, where the first page of this spectral sequence is built from jets of $\op{Gr} \E$. 

The $D$-module $J \op{Gr} \E$ of jets of $\E$ is quasi-isomorphic to the direct sum
$$
J \op{Gr} \E \simeq \cinfty_X[3] \oplus J \Oo_{hol}[2] \oplus J \Omega^1_{hol}[1].
$$
Here $\cinfty_X$ is the trivial $D$-module, and $J \Oo_{hol}$, $J \Omega^1_{hol}$ refer to jets of holomorphic functions and one-forms.  It follows that the $D$-module $J \op{Gr} \E$ arises from a $D_{hol}$-module. An argument on $D_{hol}$-module we discussed in the proof of Theorem \ref{theorem_(1,0)_cohomology_cancellation} tells us that for any holomorphic $D$-module $M$ with associated $\cinfty$ $D$-module $M^{\cinfty}$, we have a quasi-isomorphism of sheaves
$$
\Omega^{3,3} \otimes^{\mbb L}_{D} M^{\cinfty} \simeq \Omega^{3,0}_{hol} \otimes^{\mbb L}_{D_{hol}} M[3]. 
$$
Applying this to our situation yields a quasi-isomorphism
$$
\Omega^{3,3} \otimes_D^{\mbb L} \Sym^k (J \op{Gr} \E)^\vee \simeq 
\Omega^{3,0}_{hol} \otimes^{\mbb L}_{D_{hol}} \Sym^k \left( \Oo_{hol}[-3] \oplus  J \Oo_{hol}^{\vee}[-2] \oplus (J \Omega^1_{hol})^{\vee}[-1] \right)[3].
$$
Now, the dual of $J \Oo_{hol}$ is $D_{hol}$, and the dual of $J \Omega^1_{hol}$ is the space of holomorphic differential operators $\op{Diff}_{hol} (\Omega^1, \Oo)$ from $\Omega^1_{hol}$ to $\Oo_{hol}$.   It follows  that the complex $\Omega^{3,3} \otimes_D^{\mbb L} \Sym^k (J \op{Gr} \E)^\vee$ is a direct summand of
$$
\Omega^{3,0}_{hol} \otimes^{\mbb L}_{D_{hol}} \left\{ \oplus_{a + b + c = k}\Oo^{\otimes a} \otimes D_{hol}^{\otimes b} \otimes \op{Diff}_{hol}(\Omega^1,\Oo)^{\otimes c} [-3a -2b -c +3]\right\}.  
$$
Also, $a \le 1$ for symmetry reasons. We are only interested in the case when $k \ge 3$ (corresponding to cubic and higher interactions).  The $D$-module inside the tensor product is flat over $D_{hol}$ when $b + c \ge 1$, which is automatic since $k \ge 3$.  Therefore, we can replace the derived tensor product  by the underived tensor product. 

Each summand in the above expression is a $D$-module concentrated in a single degree, $3a+2b+c-3$.   The only summands that can contribute to cohomology in degree $0$ occur when $3a+2b+c -3 \le 0$, which can only happen (given $k \ge 3$) when $a=b=0$, $c = 3$.  

It follows from this discussion that
\begin{align*} 
 H^0\left(X, \Omega^{3,3} \otimes_{D} \Sym^k (( J \E)^\vee \right) & = 0 \text{ if } k > 3 \\
 H^0\left(X, \Omega^{3,3} \otimes_{D} \Sym^3 (( J \E)^\vee \right) & \subset H^0\left(X, \Omega^{3,0}_{hol} \otimes_{D_{hol}} \wedge^3  ( \op{Diff}_{hol}(\Omega^1,\Oo ) ) \right). 
\end{align*}
From this it follows that, for every $U \subset X$, the map
$$
 H^0\left(X, \Omega^{3,3} \otimes_{D} \Sym^k (( J \E)^\vee \right) \to  H^0\left(U, \Omega^{3,3} \otimes_{D} \Sym^k (( J \E)^\vee \right)  
$$
is injective when $k \ge 3$.  

Therefore, to identify our classical interaction on $X$, we just need to do it locally on $X$, and so on $\C^3$.   

On $\C^3$, we are, as usual, interested in $SU(3)$-invariant and holomorphically translation invariant interactions.  In this case, the obstruction-deformation complex is the $SU(3)$ invariants in
$$
\prod_{k \ge 3} \Omega^{3,0}_0 \otimes_{D_0} \Sym^k J_0(\E)^\vee[3]
$$  
where $\Omega^{3,0}_0$ is the fibre at zero of the bundle of $(3,0)$ forms, $D_0$ indicates constant-coefficient holomorphic differential operators, and $J_0(\E)$ is the fibre at zero of the jet bundle.  

A variant of the above calculation tells us that $H^0$ of the obstruction-deformation complex for constructing the interaction $I_k$ is zero if $k > 3$, so that we are only concerned with the cubic interaction. 

The argument we used in the proof of \ref{classical-interaction-lemma} shows that there is up to a scale precisely one cubic interaction which is $SU(3)$-invariant and of the correct scaling weight.  Because the cohomology groups where the interactions $I_k$ for $k > 3$ live are zero, there is (up to a scale) a single $SU(3)$ invariant solution of the classical master equation with the correct scaling dimension.   It follows that, up to scaling, the dynamically-generated classical interaction for $(1,0)$ BCOV theory on $\C^3$ is equivalent to this unique solution, and so, up to a constant, it is equivalent to the interaction $I_n$ specified in the statement of the theorem.  

The constant is fixed, as in the proof of \ref{classical-interaction-lemma},  by the fact that at the level of cohomology we know that the Lie bracket on the closed-string fields coming from the classical interaction must be the standard one on divergence-free holomorphic  vector fields. 
\end{proof}

%%%%%%%%%%%%%%%%%%%%%%%%%%%%%%
%% ’˝Œƒ≤ø∑÷
%%%%%%%%%%%%%%%%%%%%%%%%%%%%%%
%\mainmatter

\def\cprime{$'$}

\end{document}